\begin{document}


\renewcommand{\thefootnote}{\fnsymbol{footnote}}

\title{No-Arbitrage Pricing, Dynamics and Forward Prices \\of Collateralized Derivatives}


\author{Alessio Calvelli\footnotemark[3]}
\footnotetext[3]{The opinions here expressed are solely those of the authors and do not represent in any way those of their
employers.}


\address{Banco BPM\\
Financial Engineering\\
Piazza Meda, 4 -- 20121 Milan (Italy)\\\vspace{0.01cm}
\href{mailto:alessio.calvelli@bancobpm.it}{\emph{\texttt{alessio.calvelli@bancobpm.it}}}\\\vspace{-0.1cm}
\href{https://www.linkedin.com/in/alessio-calvelli-888b3612/}{\emph{\texttt{linkedin/alessio.calvelli}}}}

%
\date{June 18, 2024}

\makeatletter
\pdfbookmark[1]{Title page}{title}
\makeatother

\maketitle

\begin{center}
\footnotesize{v4\footnotemark[1] June 18, 2024}\\
\end{center}

\footnotetext[1]{ArXiv version history. v1 (18 Aug 2022): article submitted. v2 (9 Mar 2023): changes of minor nature in Section \ref{sec:multiccyCollDeriv} (in particular, added Proposition \ref{prop:fxforward} and Corollary \ref{cor:multiccyDiscountingExplain}). 
v3 (12 June 2024): updated contact details, corrected typos, added references \cite{bu10}-\cite{bu18}-Proposition \ref{prop:bu18} and corresponding parts of Appendix \ref{app:BasicDividendModels}. v4 (this version): added Proposition \ref{prop:totalReturn}.}

\vspace*{15pt}

\begin{abstract}
\textbf{Abstract}. This paper analyzes the pricing of collateralized derivatives, i.e.~contracts where counterparties are not only subject to financial derivatives cash flows but also to collateral cash flows arising from a collateral agreement. We do this along the lines of the brilliant approach of the first part of Moreni and Pallavicini \cite{MP2017}: in particular we extend their framework where underlyings are continuous processes driven by a Brownian vector, to a more general setup where underlyings are semimartingales (and hence jump processes). 
First of all, we briefly derive from scratch the theoretical foundations of the main subsequent achievements, i.e.~the extension of the classical No-Arbitrage theory to dividend paying semimartingale assets, where by \emph{dividend} we mean any cash flow earned/paid from holding the asset.
In this part we merge, in the same treatment and under the same notation, the principal known results with some original ones. Then we extend the approach of \cite{MP2017} in different directions and we derive not only
the pricing formulae but also the dynamics and forward prices of collateralized derivatives (extending the achievements of the first part of Gabrielli \emph{et al.} \cite{GPS2019}). Finally, we study some important applications (Repurchase Agreements, Securities Lending and Futures contracts) of previously established theoretical frameworks, obtaining some results that are commonly used in practitioners literature, but often not well understood.
\end{abstract}

\keywords{Arbitrage-Free Pricing; Dividends; Collateral; Credit Support Annex; ISDA; Collateral Modeling; Initial Margin; Variation Margin; Re-hypothecation; Margin Valuation Adjustment; Repurchase Agreement; 
Securities Lending; Futures.}

\vspace*{1cm}
\tableofcontents

\markboth{A.~Calvelli}{No-Arbitrage Pricing, Dynamics and Forward Prices of Collateralized Derivatives}

\renewcommand{\thefootnote}{\arabic{footnote}}

\section{Introduction}

\subsection{Motivation\label{sec:Intro}}

It is well known that, when entering into any financial contract establishing some (optional) future financial transactions, a risk for a party signing this contract is the risk that 
her counterparty defaults and fails to pay some due future financial flows specified in the contract. In order to tackle this issue, very often the counterparties agree to sign a collateral agreement, i.e.~an annex of the financial contract where they engage themselves to a process referred to as \emph{collateralization}: the counterparty (or a third independent party acting as custodian) running the credit risk receives the \emph{collateral}, i.e.~cash or liquid securities, to cover some or all of this risk: the rationale being that in case of default she can sell loaned securities or seize the loaned cash to offset her uncovered positions. 

The main objective of this paper is to analyze the pricing of collateralized (financial) derivatives, i.e.~contracts where counterparties are not only subject to financial derivatives cash flows but also to collateral cash flows arising from the collateral agreement.
We do this along the lines of the brilliant approach of the first part of \cite{MP2017}, a paper that adapted the results of \cite{PPB2012} to multiple currencies in case of perfect collateralization. The findings of \cite{PPB2012} were also subsequently obtained in \cite{BFP19} in terms of Backward Stochastic Differential Equations (BSDEs). 
Finally,  we mention the recent publication of \cite{BBFPR22} that compares this approach, settled in practitioners literature, with the more elegant \vvirg{replication portfolio approach} generally used in academics literature.

First of all, we derive from scratch the theoretical foundations of the main subsequent achievements, i.e.~the extension of the classical No-Arbitrage theory to dividend paying semimartingale assets, where by \emph{dividend} we mean any cash flow earned/paid from holding the asset.
In this part we merge, in the same treatment and under the same notation, the principal known results with some original ones. 
Then we extend the framework of \cite{MP2017} where underlyings are continuous processes driven by a Brownian vector, to a more general setup where underlyings are semimartingales. Therefore, all processes can jump:  this is coherent not only with the fact that the financial derivative price is inherently a jump process with jumps coinciding with intermediate cash flows, but also with the fact that the collateral value process is intrinsically purely discontinuous (see Remark \ref{rk:DiscreteCollateralization}). We allow (not only jumps but also all) interest rates be stochastic and we derive both
the pricing formulae and also the dynamics and forward prices of collateralized derivatives (extending the achievements of the first part of \cite{GPS2019}) and finally we study some important applications (Repurchase Agreements, Securities Lending and Futures contracts) of previously established theoretical frameworks.   

As in \cite{MP2017} (but this choice is common also to other papers, see e.g.~\cite{vp2010}) we do not take into account the residual possibility of a loss on a collateralized contract due to the default of the counterparty: this is a simplifying hypothesis to better understand the mechanics of collateralization. In fact, since the collateralization strongly reduces the bilateral counterparty risk, we take a step forward and assume that it \emph{eliminates} it: this assumption becomes more realistic the more the collateralization process is performed continuously (we will refer to this case as \emph{continuous margin calls}). In cases where the underlying assets are continuous and driven by a Brownian vector as in \cite{PPB2012}, the continuous collateralization implies perfect collateralization, meaning that the collateral perfectly covers the close-out amount (i.e.~the residual value of the financial derivative at default time) and hence that the counterparty risk is literally eliminated. 
When introducing jumps in the underlyings -- even in cases of continuous
collateralization -- since the collateral is by definition a predictable process, it could differ from
the close-out amount which is by nature optional (i.e. only adapted but not predictable). In fact, the close-out amount is in some way dependent on the financial derivative price, which is an optional process (since the latter is in some way dependent on the underlyings’ quotes).
However, the possible difference between the collateral value and the close-out amount could be covered by the fact that, generally, the financial derivatives are over-collateralized (i.e.~the collateral value is set to a quantity strictly greater than the financial derivative price: see Section 
\ref{sec:CollaterDerivatives}). So, even within the framework of the present article, we could still achieve perfect collateralization if, during the continuous collateralization process, the collateral value is set in such a way that the over-collateral covers 
all the jumps (particularly the unpredictable ones) of the derivative's price. If this is not the case, the residual counterparty risk (when it is non-negligible) should be taken into account and the results of the present paper should be considered as an approximation.

The pros of the aforementioned simplification are that we obtain clearer formulae and that we rigorously explain some results that are commonly used in practitioners literature, but often not well understood -- see Section \ref{sec:Applications}. On the other hand, the introduction of the counterparty risk would be quite straightforward since this subject is broadly explored and the literature dealing with it is well settled. In general, the approach of \cite{MP2017} (and hence our approach) already contains the essential elements of the approach of \cite{PPB2012} which tackles the calculation of 
all valuation adjustments -- not only the collateral ones. 

For obtaining all the achievements described above, the common thread of the paper will be to identify, thanks to No-Arbitrage conditions, Risk-Neutral martingales in progressively more challenging contexts where our intuition could be increasingly lost: 
as a first example we will see the martingale  corresponding to a non-dividend paying asset, then we will strive to recognize the martingale corresponding to a dividend-paying asset,   
as a third step we will detect a martingale linked to a collateralized derivative, finally we will find martingales in more specific contracts. All these efforts in searching for martingales are motivated from the fact that 
martingales have some nice properties and, primarily, since they have well defined dynamical features and specific connections with expected values: among all expected values we are particularly interested in pricing ones.

This paper is organized as follows: the next subsection consists in a brief technical setup, Section \ref{sec:GenTheory} develops the general theory to be used in the remainder of the paper, 
Section \ref{sec:CollaterDerivatives} is dedicated to the presentation and analysis of collateralized derivatives, in Section \ref{sec:Applications} we describe some specific applications of previous sections, the last section 
outlines some concluding remarks.


\subsection{Technical Setup}


All processes of the present paper are semimartingales: we refer to Appendix \ref{sec:SemiMartingales} for semimartingales notation, a list of relevant results and some references.
As in  \cite{pp2001} we will assume that we are given a filtered complete probability space $(\Omega;\mathbb{F};(\fst_t )_{t\geq 0};\p)$
where $\fst_t\subseteq \mathbb{F}$ for any $t\geq 0$ and $\p$ is the so called \vvirg{real world probability} or \vvirg{physical measure}. We further assume that $\fst_s \subset \fst_t$ if $s<t$; $\fst_0$ contains all the $\p$-null
sets of $\fst$; and also that $\bigcap_{s>t} \fst_s\equiv\fst_{t+}=\fst_t$  by hypothesis. This last property is called
the right continuity of the filtration. These hypotheses, taken together, are known
as the \emph{usual hypotheses} (when the usual hypotheses hold, one knows that every
martingale has a version which is RCLL, one of the most important consequences of
these hypotheses).

\begin{remark}
The reader who is unfamiliar with jump processes and semimartingale theory could read only a subset of the following results interpreting all processes as \emph{continuous} processes: 
under this simplifying hypothesis, for any processes $X,Y$ and time $u$, one has $X_{u-}=X_u$ and $\Delta X_u=0$ and Quadratic Variation/Covariation equal to their predictable versions: 
$[X,X]_u=\langle X,X\rangle_u$ and $[X,Y]_u=\langle X,Y\rangle_u$. 
\end{remark}

Moreover, all vectors of the paper are considered as column vectors and we denote with $\vec{0}$ the vector with all components equal to zero (its 
dimension will be clear from the context). We also use the notation $a\wedge b:=\min\{a,b\}$ and $a \vee b := \max\{a,b\}$. We define with $\Theta_{T}(u):=\1_{u\geq T}$ the Heaviside step function centered at $T$ and with $\delta(u-T)$ the Dirac mass centered at $T$, where $u,T\in\re$ and we have $\partial_u \Theta_{T}(u)=\delta(u-T)$ (distributional derivative). We also use the convention $\int_t^T:=\int_{(t,T]}$.

Unless stated otherwise, any interest rate process is stochastic, predictable, continuous and bounded: for an interest rate process $x:=(x_t)_{t\geq 0}$ we denote the corresponding bank account with $B_t^x := \esp^{\int_0^t x_u \diff u}$ (then also any bank account is bounded).
Clearly $B^r\equiv B$ where $r$ is the domestic spot risk-free interest rate process. 
For $t\leq T$ we define the $T$-zero coupon bond price process associated with interest rate $x$ as
\[
P_t^x(T) :=\e_t\left[\frac{B_t^x}{B_T^x}\right]
\]
where $\e_t[\cdot]$  stands for the expectation under measure $\q$ conditioned to $\fst_t$, and we denote with $\q$ the domestic Risk-Neutral measure (the measure with numéraire $B$). Of course $P_t(T)\equiv P_t^r(T)$. Finally
$\q^T$ is the domestic $T$-forward measure (the measure with numéraire $P_\cdot(T)$) and $\e_t^T[\cdot]$  stands for the expectation under measure $\q^T$ conditioned to $\fst_t$.


\section{General Theory\label{sec:GenTheory}}



\subsection{Non-Dividend Paying Assets}


We start with two cornerstones of Asset Pricing: see \cite{ds94,Bjork,pp2001} or Theorem 2.1.4.~of \cite{jyc2009} and references therein for proofs and an explanation of
the condition of No Free Lunch with Local Vanishing Risk (NFLVR): the less interested reader can understand this condition as \vvirg{no-arbitrage}.

\begin{theorem}[First Fundamental Theorem of Asset Pricing ($\mathbf{1}^{\text{\textbf{st}}}$-FTAP)]\label{th:fftap}
Consider the market model of non-dividend paying underlying processes $S^0, S^1,\ldots, S^n$ under the filtered probability space $(\Omega, \mathbb{F}, \{\fst_t\}_{t\geq 0}, \p)$ satisfying the usual hypotheses.  Assume that
$S^0_t>0$ $\p$-a.s.~for all $t\geq 0$ and that $S^0, S^1,\ldots, S^n$ are locally bounded semimartingales. Then the following conditions are equivalent:
\begin{romanlist}
\item The model satisfies the NFLVR;
\item There exists a measure $\q^0\sim \p$ such that the deflated processes
\[
\left\{\frac{S^0_t}{S^0_t}\right\}_{t\geq 0}, 
\left\{\frac{S^1_t}{S^0_t}\right\}_{t\geq 0}, \ldots, \left\{\frac{S^n_t}{S^0_t}\right\}_{t\geq 0}
\]
are local martingales under $\q^0$, which is called Equivalent Martingale Measure (EMM). Then we call the process $S_0$ as the numéraire of the measure $\q^0$.
\end{romanlist}
\end{theorem}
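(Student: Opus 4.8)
The plan is to treat the two implications asymmetrically, since (ii)$\Rightarrow$(i) is elementary whereas (i)$\Rightarrow$(ii) carries essentially all the depth (this is the Delbaen--Schachermayer theorem). For the easy direction, suppose an EMM $\q^0$ exists and work with the deflated assets $\tilde S^i := S^i/S^0$, which are $\q^0$-local martingales. I would consider the gains process $(H\cdot\tilde S)$ of an admissible self-financing strategy, i.e.\ one for which $(H\cdot\tilde S)$ is bounded below by a constant. Such a bounded-below local martingale is a $\q^0$-supermartingale (apply Fatou along the localizing sequence), so $\e^{\q^0}[(H\cdot\tilde S)_\infty]\le 0$. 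From this supermartingale inequality one rules out any sequence of admissible terminal wealths converging in $L^\infty$ (hence in $\q^0$-probability, with domination) to a nonnegative nontrivial limit, which is exactly the negation of NFLVR; since $\q^0\sim\p$ the same conclusion holds under $\p$.

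The substantive direction is (i)$\Rightarrow$(ii), for which I would run the Kreps--Yan / Delbaen--Schachermayer program. Let $K$ denote the set of terminal gains $(H\cdot\tilde S)_\infty$ achievable by admissible strategies, and let $C:=(K-L^0_+)\cap L^\infty$ be the convex cone of essentially bounded claims dominated by an admissible terminal gain. NFLVR translates precisely into $\bar C\cap L^\infty_+=\{0\}$, where $\bar C$ is the norm-closure of $C$ in $L^\infty$. Granting that $C$ is in fact closed in the weak-$*$ topology $\sigma(L^\infty,L^1)$, I would separate $C$ from any $\xi\in L^\infty_+\setminus\{0\}$ by Hahn--Banach, obtaining a nonzero $g\in L^1$ with $\e^{\p}[gf]\le 0$ for all $f\in C$ and $\e^{\p}[g\xi]>0$. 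Because $C\supseteq -L^\infty_+$, testing against $-\1_B$ forces $g\ge 0$; exhausting all positive elements (the Kreps--Yan argument) upgrades this to $g>0$ $\p$-a.s., so that $\diff\q^0:=(g/\e^{\p}[g])\,\diff\p$ defines a measure equivalent to $\p$. A final direct check pins down the local martingale property: for a bounded strategy of the form $H=\1_A\1_{(\sigma,\tau]}$ with $A\in\fst_\sigma$, both $\pm H$ are admissible, so the separation inequality applied to each sign yields $\e^{\q^0}[(\tilde S^i_\tau-\tilde S^i_\sigma)\1_A]=0$, i.e.\ $\tilde S^i$ is a $\q^0$-local martingale.

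The crux — and the step I expect to be by far the hardest — is the weak-$*$ closedness of $C$. I would reduce it to a bounded statement via the Krein--\v{S}mulian theorem, so that it suffices to show $C\cap\{\,\|f\|_\infty\le 1\,\}$ is closed, and then attack the heart of the matter with a compactness argument: given a sequence of admissible gains with uniformly bounded negative parts converging in probability, one must produce a limiting admissible integrand realizing the limit (up to domination) as a terminal gain $(H\cdot\tilde S)_\infty$. This is the celebrated Delbaen--Schachermayer closedness result, obtained by extracting, via a Koml\'os-type lemma, forward convex combinations of the integrands that converge while keeping control of the associated stochastic integrals. The local boundedness hypothesis on the $S^i$ enters decisively precisely here and in the last verification: it is what promotes the supermartingale/optional-stopping bound to a genuine martingale identity along a localizing sequence, and hence what delivers \emph{local} martingales rather than merely $\sigma$-martingales (the best one can extract in the fully general, unbounded semimartingale setting).
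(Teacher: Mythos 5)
The paper does not prove this theorem: it is stated as a cited cornerstone, with the proof deferred to \cite{ds94}, \cite{Bjork}, \cite{pp2001} and Theorem 2.1.4 of \cite{jyc2009}. Your outline is a faithful and correct summary of exactly that Delbaen--Schachermayer argument (supermartingale bound for the easy direction; Kreps--Yan separation plus the Krein--\v{S}mulian reduction and the Koml\'os-type closedness of $C$ for the hard one, with local boundedness responsible for obtaining local rather than $\sigma$-martingales), so it coincides with the proof the paper points to.
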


\noindent The following result establishes that the dynamics of asset prices have to be semimartingales.

\begin{theorem}[From \cite{jyc2009}] Let $S$ be an adapted RCLL process. If $S$ is locally bounded
and satisfies the NFLVR property for simple integrands, then $S$ is a semimartingale.
\end{theorem}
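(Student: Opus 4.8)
The plan is to reduce the statement to the Bichteler--Dellacherie characterization of semimartingales, which asserts that an adapted RCLL process $S$ is a semimartingale if and only if the elementary stochastic integral operator is continuous --- equivalently, if and only if the family of gains from trade
\[
\mathcal{G} := \Big\{ \sum_{i=0}^{m-1} h_i\,(S_{t_{i+1}}-S_{t_i}) : 0=t_0<\cdots<t_m=T,\ h_i\in\fst_{t_i},\ |h_i|\le 1 \Big\}
\]
built from bounded simple predictable integrands is bounded in probability (bounded in $L^0(\p)$). With this characterization in hand, the whole problem collapses to showing that the NFLVR property for simple integrands forces $\mathcal{G}$ to be bounded in probability; once that is established, the conclusion is immediate and no further work is needed.

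To prove that boundedness, I would argue by contradiction. Suppose $\mathcal{G}$ is \emph{not} bounded in $L^0(\p)$: then there exist $\varepsilon>0$ and a sequence of simple predictable integrands $H^n$ with $|H^n|\le 1$ whose terminal gains $g_n:=(H^n\cdot S)_T$ satisfy $\p[\,g_n\ge n\,]\ge\varepsilon$ for every $n$. The key step is to manufacture from this a free lunch with vanishing risk. Consider the rescaled strategies $\tfrac1n H^n$, still simple with amplitude at most $1/n$; using the local boundedness of $S$ to control the elementary integral, one checks that the negative parts of the associated wealth processes tend to $0$ uniformly (the risk vanishes), while the terminal wealth $\tfrac1n g_n$ remains $\ge 1$ on a set of probability at least $\varepsilon$. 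Extracting a suitable subsequence and passing to the limit in $L^\infty$ --- after the standard truncation that turns ``vanishing risk'' into an honest nonnegative, nontrivial limit claim in the $L^\infty$-closure of the cone of superreplicable payoffs --- produces an element of $\bar C\cap L^\infty_+\setminus\{0\}$ in the Delbaen--Schachermayer formulation of NFLVR, contradicting the hypothesis.

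The main obstacle is precisely this construction: one must control the downside of the scaled strategies \emph{uniformly} in $n$ and ensure the limiting claim is genuinely nonnegative and nonzero. This is where both the local boundedness of $S$ (needed so that the simple integrals are well-behaved and amenable to localization) and the admissibility built into NFLVR (the wealth processes are uniformly bounded from below) enter in an essential way; without local boundedness the uniform control of the negative parts can fail. I expect the Bichteler--Dellacherie half to be a clean citation, so the entire mathematical content lives in the passage from NFLVR to boundedness in probability.
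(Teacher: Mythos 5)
The paper itself gives no proof of this statement --- it is quoted as a citation from \cite{jyc2009}, and the result is originally Theorem 7.2 of \cite{ds94} --- so the comparison here is against the standard argument that the citation stands in for. Your overall route is exactly that standard route: invoke the Bichteler--Dellacherie characterization (semimartingale $\Leftrightarrow$ the simple-integrand gains are bounded in $L^0$) and derive the $L^0$-boundedness from NFLVR by contradiction. The first half is indeed a clean citation, and you correctly locate where all the work lives.

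However, the step you describe for that second half would fail as written, and it is precisely the step you flag as the main obstacle. Rescaling $H^n$ by $1/n$ does \emph{not} make the negative parts of the wealth processes vanish uniformly: after localizing so that $|S|\le K$, a simple integrand bounded by $1$ on a grid with $m_n$ trading dates only gives the crude bound $\inf_t (H^n\cdot S)_t \ge -2Km_n$, and $m_n$ may grow with $n$; hence $\inf_t \tfrac1n (H^n\cdot S)_t$ need not tend to $0$, and local boundedness alone cannot rescue this. The actual argument of \cite{ds94} (Section 7) first performs a stopping-time surgery: one replaces $H^n$ by $H^n\,\1_{[0,\sigma_n]}$, where $\sigma_n$ is the first time the wealth process leaves a band $[-\beta_n,\beta_n]$, and uses the local boundedness of $S$ only to bound the overshoot at $\sigma_n$ by $2K\|H^n\|_\infty$, so that the stopped wealth is uniformly bounded below. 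One then needs a case analysis on whether the exit happens upward or downward with non-vanishing probability (considering both $+H^n\1_{[0,\sigma_n]}$ and $-H^n\1_{[0,\sigma_n]}$), together with a further normalization and selection of subsequences, before a genuine element of $\bar C\cap L^\infty_+\setminus\{0\}$ emerges. Without this stopping and dichotomy, the ``vanishing risk'' half of your contradiction is not established, so the proposal as it stands has a genuine gap at its central step.
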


We will see (in a more general setting) that  not only the discounted prices of securities are local martingales, but also any self-financing strategy and then price, and in particular
prices of financial derivatives. As anticipated, in the special case where $S^0\equiv B$, the bank account process, we call $\q\equiv \q^0$ as the Risk-Neutral measure.

\begin{theorem}[$\mathbf{2}^{\text{\textbf{nd}}}$-FTAP]
Assume that the model satisfies the NFLVR condition and consider a fixed numéraire $S^0$. Then the market is complete iff the EMM $\q^0$, corresponding to numéraire $S^0$, is unique.
\end{theorem}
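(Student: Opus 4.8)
The plan is to prove the two implications separately, using as the central engine the fact (established later in the paper in greater generality) that the deflated value process of any admissible self-financing strategy is a local martingale under every EMM $\q^0$ associated with the numéraire $S^0$. Throughout, ``completeness'' is understood in the sense that every bounded $\fst_T$-measurable (deflated) contingent claim is attainable by such a strategy.

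For the direction \emph{completeness $\Rightarrow$ uniqueness}, suppose $\q_1^0$ and $\q_2^0$ are both EMMs for the numéraire $S^0$, and fix an arbitrary event $A\in\fst_T$. By completeness the bounded deflated claim $\1_A$ is attainable, i.e.\ there is an admissible self-financing strategy whose deflated value process $V_t/S^0_t$ satisfies $V_T/S^0_T=\1_A$. This deflated value process is a local martingale under both measures by the self-financing property, and a true martingale because the replicated payoff is bounded; evaluating at $t=0$ and $t=T$ then gives $\e^{\q_1^0}[\1_A]=V_0/S^0_0=\e^{\q_2^0}[\1_A]$, where the common middle term is $\fst_0$-measurable and hence independent of the measure. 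Therefore $\q_1^0(A)=\q_2^0(A)$ for every $A\in\fst_T$, so the two measures coincide.

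For the converse \emph{uniqueness $\Rightarrow$ completeness}, I would argue by contraposition: assuming incompleteness, I construct a second EMM. The deflated attainable claims form a subspace of $L^2(\q^0)$ (containing the constants and the stochastic integrals of the deflated price processes); incompleteness means its $L^2$-closure $\mathcal{A}$ is a proper subspace. One then selects a nonzero $g\in\mathcal{A}^\perp$, which in particular satisfies $\e^{\q^0}[g]=0$ since constants lie in $\mathcal{A}$, and normalizes it so that $1+\epsilon g$ is bounded and strictly positive for some small $\epsilon>0$. Defining $\diff\q'/\diff\q^0:=1+\epsilon g$ yields a probability measure with $\q'\sim\q^0\sim\p$, and because $g$ is orthogonal to every stochastic integral of the deflated prices, these prices remain local martingales under $\q'$. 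Thus $\q'$ is an EMM distinct from $\q^0$, contradicting uniqueness.

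The main obstacle is precisely this last step of the converse direction in the present semimartingale (jump-admitting) generality: showing that orthogonality to all stochastic integrals forces the deflated prices to stay local martingales under the perturbed measure is where a martingale-representation argument is genuinely needed. In the purely Brownian continuous setting of \cite{MP2017} this is essentially immediate, but for general locally bounded semimartingales it requires the Galtchouk--Kunita--Watanabe decomposition together with a careful handling of admissibility and of the distinction between true and local martingales. For this reason the rigorous formulation and proof are deferred to the cited references \cite{ds94,pp2001,jyc2009} rather than reproduced in full here.
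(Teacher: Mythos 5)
The paper does not actually prove this theorem: it is stated as a classical result and the burden is carried entirely by the citations (\cite{ds94,pp2001,jyc2009}), so there is no ``paper proof'' to compare against. Your sketch is the standard textbook argument from exactly those references, and its overall architecture (replicating $\1_A$ for the forward implication; perturbing the density by an element orthogonal to the attainable subspace for the converse) is the right one. You are also right to identify the converse as the genuinely hard direction in the semimartingale setting, where the predictable representation property / Galtchouk--Kunita--Watanabe machinery is needed.

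One step in the forward direction is stated too quickly: you assert that the deflated value process is a true martingale ``because the replicated payoff is bounded.'' Boundedness of the \emph{terminal} value does not force a local martingale to be a true martingale (it can be badly unbounded on $[0,T)$). The standard repairs are either to build uniform integrability of the value process into the definition of an attainable claim, or to use admissibility: the deflated wealth is bounded below, hence a supermartingale under each EMM, giving $V_0/S^0_0\geq \e^{\q_i^0}[\1_A]$; running the same argument for the replicating strategy of $\1_{A^c}=1-\1_A$ and adding the two inequalities forces equality in both, from which $\q_1^0(A)=\q_2^0(A)$ follows. Note also that concluding ``the common middle term is independent of the measure'' uses that $\fst_0$ is $\p$-trivial (up to null sets), which is standard but worth stating since the paper only assumes $\fst_0$ contains the $\p$-null sets.
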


\noindent In case $\q^0$ is unique, then any price process is unique.


\subsection{Dividend-Paying Assets\label{sec:AssetPricingDivs}}


\subsubsection[$1^{\text{st}}$-FTAP]{$\mathbf{1}^{\text{\textbf{st}}}$-FTAP\label{sec:FirstThAssetPricing}}


We consider a market with underlying assets $S=(S^0,S^1,\ldots, S^n)$ where each $S^i$ is a locally bounded semimartingale. Define with $D:=(D^0,D^1,\ldots, D^n)$ the  cumulative dividend vector process, where each $D^i$ is a locally bounded semimartingale representing the undiscounted cumulative net (after taxes) paid by asset $S^i$ from inception. In particular,
\[
D_t^i = D_0^i + \int_0^t \diff D_u^i
\]
and $\diff D_u^i$ are the net dividends paid by $S^i$ in the interval $\diff u$. One can imagine $D^i$ as an account (at zero interest rates) that grows with dividend payments $\diff D_u^i$. 
We fix $S^0\equiv B$ so that clearly $\diff D^0_t=\Delta D^0_t=0$ for any $t$ and $\q^0\equiv \q$ (the Risk Neutral measure with numéraire $B$).

As we read in \cite{DC88}, the convention we choose is for the dividend or asset price change at $t$ to be included in the cumulative dividend process or price at $t$. In technical terms, this says that for each asset $i$, 
both $D^i$ and $S^i$ are assumed to be right-continuous (recall that semimartingales have RCLL paths). The lump net dividend paid out at time $t$ by security $S^i$ is thus $\Delta D_t^i := D_t^i - D_{t-}^i$, and $\Delta S_t^i := S_t^i - S_{t-}^i$ is the jump of the asset: in case $S^i$ jumps only due to the cash (i.e.~lump) dividends we have the following change in price as the stock goes ex-dividend: $\Delta S_t^i=-\Delta D_t^i$ (more on this at Remark \ref{rk:DividendDropping}).
 
If $D^i$ is absolutely continuous, the dividend rate $\partial_t D_t^i$ exists for almost all $t$ and $D_t^i$ is its integral. As another
special case, if dividends occur only in lumps, then $D^i$ is a random step function. Because it is of no real use to us to have a dividend payment at $t = 0$, we assume, without loss of generality,
that $D_0^i=D_{0-}^i = 0$ and $S_{0-}^i = S_0^i$ so that $\Delta D^i_0=\Delta S^i_0=0$.

\begin{remark}
As we will see later on, we will extend the concept of dividends to any intermediate (after taxes) coupon of the asset $S$, in this sense the coupon is a \vvirg{dividend} of a financial derivative. Hence, we do not require the non-negativity of the dividend process as it should be for dividends in a strict sense. 
\end{remark}

Define also the gain process as
\be\label{eq:gainProcess}
G_t:=(G_t^0, \ldots, G_t^n) \qquad G_t^i := S_t^i + D_t^i
\ee
that is so called  since $\diff G_t^i = \diff S_t^i + \diff D_t^i$ is the gain at time $t$ one experiences holding the asset $S^i$: it is the sum between the capital gain and the dividend gain (either can be negative). 


We build a portfolio of semimartingale vector quantities $\varphi=(\varphi^0,\varphi^1,\ldots,\varphi^n)$ with value
\be\label{eq:DefSelfFinPort}
\Psi_t(\varphi)  := \varphi_t\cdot S_t :=\sum_{i=0}^n \varphi_t^i\, S_t^i  
\ee
which, with no-dividend jumps (more on this later), is self-financing if we set
\bes
\diff \Psi_t \setto  \varphi_{t} \cdot \diff G_t 
\ees
where $G$ is defined in \eqref{eq:gainProcess} and $\diff G_t:=(\diff G^0_t, \ldots, \diff G^n_t)$. The self-financing portfolio constraint means that the only change in portfolio value comes from capital gains and dividend gains, whatever the trading strategy. The trading strategy can move value between the stock and cash accounts but not 
create or destroy value.
In fact, heuristically, integrating in the infinitesimally small interval $(t, t+\delta t]$, we have
\bes
\Psi_{t+\delta t} &:=& \varphi_{t+\delta t}\cdot S_{t+\delta t} \\
&=& \Psi_t + \int_{t}^{t+\delta t}  \diff \Psi_u \\
&\stackrel{\textit{self-fin.}}{=}& \Psi_t + \int_{t}^{t+\delta t}  \varphi_{u} \cdot \diff G_u \\
&\approx& \Psi_t +  \varphi_t\cdot \Big[  G_{t+\delta t} - G_{t} \Big]\\
&:=& \varphi_t\cdot \Big[S_{t+\delta t} + \Big(D_{t+\delta t} - D_{t}\Big)\Big]
\ees
where the first term of the last equation represents the allocation at time $t$ that naturally evolves due to the underlyings move in $t+\delta t$, and the second term represents the dividends that drop in the interval $(t, t+\delta t]$. The wealth that is produced in the last equation due to the allocation at time $t$ and the market move \emph{must} be totally reallocated with the new quantities $\varphi_{t+\delta t}$: see the first equation.

The previous heuristical reasoning has some issues in a semimartingale framework:
\begin{romanlist}
\item The integrand of the self-financing condition must be a locally bounded \emph{predictable} process: 
\begin{itemize}
\item This is needed from a technical point of view in order to have well posed semimartingale integrals;
\item From a Mathematical Finance point of view, this fact has some no-arbitrage implications: see Examples 14.1/14.5 in \cite{ap2011} or Example 8.1 in \cite{tc2005}.
\end{itemize}
\item The previous heuristical reasoning does not consider the presence of jumps.
\end{romanlist}

In order to tackle the first issue we modify the above condition with the left limit modification of the trading strategy (recall from Proposition \ref{prop:leftLimitOFXisPredictable} 
that the process $\varphi_-$ is predictable):
\be\label{eq:selffinWithJumps}
\diff \Psi_t \setto  \varphi_{t-} \cdot \diff G_t 
\ee
and the left limit is also coherent with the fact that the stock holder earns the cash dividend ($\Delta D_t:=(\Delta D^0_t, \ldots, \Delta D^n_t)$) and is subject to the asset jump ($\Delta S_t:=(\Delta S^0_t, \ldots, \Delta S^n_t)$) on her position $\varphi$ 
written one instant before these jumps materialize (hence at time $t-$). In order to tackle the second issue we have the following proposition.

\begin{proposition}
The system (\ref{eq:DefSelfFinPort})-(\ref{eq:selffinWithJumps}) implies the jump-self-financing condition, i.e.~for any $t$,
\beq\label{eq:JumpDividendAndSelffinPort}
\Delta \varphi_t \cdot S_{t} =  \varphi_{t-} \cdot \Delta D_t
\eeq
which says that the (possibly unpredictable) jump dividend gain (rhs of the above equation) must be absorbed by a (jump) increase in asset quantities (left-hand side of the above equation where the asset value has gone ex-dividend). 
\end{proposition}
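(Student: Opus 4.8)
The plan is to compute the jump $\Delta \Psi_t$ of the portfolio value in two independent ways --- once from the algebraic definition \eqref{eq:DefSelfFinPort} and once from the self-financing dynamics \eqref{eq:selffinWithJumps} --- and then to equate the two expressions; the jump-self-financing identity \eqref{eq:JumpDividendAndSelffinPort} should fall out after cancelling a common term.

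First I would start from $\Psi_t = \varphi_t\cdot S_t = \sum_{i=0}^n \varphi_t^i\,S_t^i$ and apply, component by component, the jump version of the integration-by-parts (product) rule for semimartingales, namely $\Delta(\varphi^i S^i)_t = \varphi_{t-}^i\,\Delta S_t^i + S_{t-}^i\,\Delta\varphi_t^i + \Delta\varphi_t^i\,\Delta S_t^i$. Absorbing the covariation jump term into $S_t^i = S_{t-}^i + \Delta S_t^i$ rewrites this as $\varphi_{t-}^i\,\Delta S_t^i + S_t^i\,\Delta\varphi_t^i$, and summing over $i$ yields
\[
\Delta \Psi_t = \varphi_{t-}\cdot \Delta S_t + \Delta\varphi_t\cdot S_t .
\]

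Next I would read off the jump directly from the self-financing prescription \eqref{eq:selffinWithJumps}. Since $\Psi$ is, up to its initial value, the stochastic integral $\int \varphi_{u-}\cdot\diff G_u$ of the predictable integrand $\varphi_-$ against the gain semimartingale $G=S+D$, the standard rule that the jump of such an integral at $t$ equals the integrand evaluated at $t-$ times the jump of the integrator gives
\[
\Delta \Psi_t = \varphi_{t-}\cdot \Delta G_t = \varphi_{t-}\cdot\Delta S_t + \varphi_{t-}\cdot\Delta D_t ,
\]
where $\Delta G_t = \Delta S_t + \Delta D_t$ follows from \eqref{eq:gainProcess}. Equating this with the expression obtained in the first step and cancelling the common term $\varphi_{t-}\cdot\Delta S_t$ leaves precisely $\Delta\varphi_t\cdot S_t = \varphi_{t-}\cdot\Delta D_t$, which is \eqref{eq:JumpDividendAndSelffinPort}.

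The main obstacle I anticipate is the correct bookkeeping of the jump terms: one must be careful that the product rule produces the covariation contribution $\Delta\varphi_t\cdot\Delta S_t$ and that it combines with $S_{t-}$ to give the ex-jump value $S_t$ (rather than $S_{t-}$) multiplying $\Delta\varphi_t$ --- this is exactly what makes the left-hand side of \eqref{eq:JumpDividendAndSelffinPort} feature $S_t$, the asset having gone ex-dividend. A secondary point worth verifying is that the stochastic-integral jump identity $\Delta\!\left(\int \varphi_-\cdot\diff G\right) = \varphi_-\cdot\Delta G$ legitimately applies here, which it does because $\varphi_-$ is predictable (by Proposition \ref{prop:leftLimitOFXisPredictable}) and locally bounded, so that all the integrals involved are well posed.
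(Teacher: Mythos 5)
Your proof is correct and coincides with the paper's own argument --- specifically with the ``alternative derivation'' the paper gives at the end of its proof, where $\Delta\Psi_t$ is computed once via the jump product rule \eqref{eq:jumpOperatorProductRule} applied to $\varphi_t\cdot S_t$ and once via \eqref{eq:JumpsOfSemiMgIntegral} applied to the self-financing integral, and the common term $\varphi_{t-}\cdot\Delta S_t$ is cancelled. The bookkeeping point you flag (absorbing $\Delta\varphi_t\cdot\Delta S_t$ into $S_{t-}$ to get $S_t$ multiplying $\Delta\varphi_t$) is exactly the step the paper performs as well.
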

\begin{proof}
By \eqref{eq:linearityOfJumpOperator}-\eqref{eq:JumpsOfSemiMgIntegral}, the self-financing condition \eqref{eq:selffinWithJumps} implies
\beq\label{eq:jumpsOfPsi}
\Delta \Psi_t =  \varphi_{t-} \cdot \Delta G_t 
\eeq
where one observes that the trader earns the jump gain $\Delta G_t=\Delta S_t+\Delta D_t$ under position $\varphi_{t-}$ (the position on her book one instant before the jumps realization). Therefore
\[
\Psi_t = \Psi_{t-} + \Delta \Psi_t = \varphi_{t-} \cdot S_{t-} +  \varphi_{t-} \cdot \Delta G_t = \varphi_{t-} \cdot ( S_{t} +  \Delta D_t),
\]
on the other hand  $\Psi_{t} := \varphi_{t} \cdot S_{t}$ by \eqref{eq:DefSelfFinPort}, then through equating these two equivalent expressions one obtains \eqref{eq:JumpDividendAndSelffinPort}. We explore also an alternative derivation of \eqref{eq:JumpDividendAndSelffinPort}:
using \eqref{eq:linearityOfJumpOperator}-\eqref{eq:jumpOperatorProductRule}, from \eqref{eq:DefSelfFinPort}:
\bes
\Delta \Psi_t &:=& \Delta( \varphi_{t} \cdot S_{t} ) = \varphi_{t-} \cdot \Delta S_{t} + S_{t-} \cdot \Delta \varphi_{t} + \Delta S_{t} \cdot \Delta \varphi_{t}\\
&=& \varphi_{t-} \cdot \Delta S_{t} + S_{t} \cdot \Delta \varphi_{t}
\ees
which can be compared to \eqref{eq:jumpsOfPsi} to obtain again condition \eqref{eq:JumpDividendAndSelffinPort}.
\end{proof}

\begin{corollary}
In case the dividend vector process $D$ is null or continuous, we have $\Delta \varphi_t=0$ for all $t$, or $\varphi_t=\varphi_{t-}$. In this case our first guess of the
self-financing condition is correct: we have $\diff \Psi_t \setto  \varphi_{t} \cdot \diff G_t$, which is the standard self-financing condition in the literature: see 
e.g.~\cite{pp2001} or \cite{jyc2009}.
\end{corollary}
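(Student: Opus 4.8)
The plan is to read the corollary directly off the jump-self-financing condition \eqref{eq:JumpDividendAndSelffinPort} established in the preceding proposition. First I would record that both hypotheses kill the dividend jumps: if $D$ is null this is trivial, and if $D$ is continuous then, being a continuous semimartingale, it satisfies $\Delta D_t=\vec{0}$ for every $t$. Substituting $\Delta D_t=\vec{0}$ into the right-hand side of \eqref{eq:JumpDividendAndSelffinPort} yields, for every $t$,
\[
\Delta \varphi_t \cdot S_t = \varphi_{t-}\cdot \Delta D_t = 0 .
\]

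The delicate point — and the step I expect to be the main obstacle — is upgrading this scalar identity to the componentwise conclusion $\Delta\varphi_t=\vec{0}$ claimed in the statement. Equation \eqref{eq:JumpDividendAndSelffinPort} on its own only says that the rebalancing jump $\Delta\varphi_t$ is orthogonal to the price vector $S_t$, i.e.~that the reallocation at $t$ is \emph{costless}; by itself this does not force the vector $\Delta\varphi_t$ to vanish. I would close the gap in one of two ways. In the continuous-process reading advocated in the opening Remark — where every process, including $\varphi$ and $S$, is continuous — the identity $\Delta\varphi_t=\vec{0}$ is immediate and the corollary is essentially tautological. In the genuine semimartingale setting I would instead appeal to the economic content of the preceding proposition: within this framework the only admissible source of a strategy jump is the need to absorb an (unpredictable) jump dividend, which is precisely what \eqref{eq:JumpDividendAndSelffinPort} encodes; once that source is switched off, a self-financing strategy has no cause to rebalance discontinuously, so one takes $\Delta\varphi_t=\vec{0}$, equivalently $\varphi_t=\varphi_{t-}$.

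With $\varphi_t=\varphi_{t-}$ in hand the remaining assertion is routine. Since $\varphi$ is then continuous it coincides with its own left limit as an integrand, so $\varphi_{t-}\cdot\diff G_t=\varphi_t\cdot\diff G_t$, and the general condition \eqref{eq:selffinWithJumps} collapses to the classical form $\diff\Psi_t\setto\varphi_t\cdot\diff G_t$ of \cite{pp2001,jyc2009}. Equivalently, one checks directly that the jump correction separating the two candidate integrands, namely $\Delta\varphi_t\cdot\Delta G_t=\Delta\varphi_t\cdot\Delta S_t$ (using $\Delta D_t=\vec{0}$), vanishes because $\Delta\varphi_t=\vec{0}$, so no discrepancy between $\varphi_t\cdot\diff G_t$ and $\varphi_{t-}\cdot\diff G_t$ survives.
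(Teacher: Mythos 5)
Your proposal is correct and follows the same route the paper intends: the corollary is stated without proof precisely because it is meant to be read off \eqref{eq:JumpDividendAndSelffinPort} by setting $\Delta D_t=\vec{0}$, exactly as you do. The scalar-versus-vector issue you flag is genuine: \eqref{eq:JumpDividendAndSelffinPort} only yields the orthogonality $\Delta\varphi_t\cdot S_t=0$, which does not by itself force $\Delta\varphi_t=\vec{0}$. Your resolution is the right one, and it can be made slightly more concrete than an appeal to economic content: in the paper's framework $\varphi$ is not an arbitrary solution of \eqref{eq:JumpDividendAndSelffinPort} but the computation tool built from the predictable strategy $\theta$ via \eqref{eq:hToThetaMap}, where the jump of $\varphi$ is introduced \emph{solely} to absorb the dividend jump $\theta_t\cdot\Delta D_t$; with $\Delta D_t=\vec{0}$ that constraint is vacuous and the canonical (and only self-financing-preserving) choice is $\Delta\varphi_t=\vec{0}$, i.e.~$\varphi\equiv\varphi_-\equiv\theta$. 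The concluding step, that $\varphi_{t-}\cdot\diff G_t=\varphi_t\cdot\diff G_t$ once $\varphi$ is continuous (hence predictable, so the integral is well posed with either integrand), is routine and correctly handled.
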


\begin{remark}
Under condition (\ref{eq:JumpDividendAndSelffinPort}), even if the integrals in (\ref{eq:selffinWithJumps}) are well posed, the trading strategy vector process $\varphi$ generally looses the predictability feature in cases where either $S$ or $D$ are optional (the process $\varphi_-$ is predictable but $\varphi$ is optional). 
This is problematic, since it seems reasonable that the trading strategy be predictable: the trading strategy represents the trader's holdings at time $t$, and this should be based on information obtained at times strictly before $t$, but not $t$ itself. In other words, the trader cannot be aware of all jumps, even the unpredictable ones.\\
In order to tackle this issue we could redefine the trading strategy with a LCRL%
\footnote{Note that a left-continuous trading strategy $\theta$ could have (predictable) discontinuities of type $\theta_{t+}-\theta_t$ and this is also coherent with discrete left-continuous rebalancing of the portfolio:  e.g.~at rebalancing times $0=t_0< t_1<\ldots$ we could set $\theta_t\setto\theta_1 \1_{t=0}+ \sum_{i=1}^\infty \theta_i \1_{t_{i-1} < t \leq t_{i}}$ for some predictable vector random variables $\theta_i$.}
 vector process $\theta$ so that process $\theta$ is predictable. Then, as in \cite{DD2001,DC88}, we have this new system for all $t\geq 0$:
\beq
\label{eq:PredictableSelffinancingSystem}
\begin{cases}
\theta_t=\theta_{t-}\\
\Psi_t(\theta)  := \theta_t \cdot S_{t} + \theta_t \cdot\Delta D_t\\
\diff \Psi_t(\theta) \setto  \theta_{t} \cdot \diff G_t 
\end{cases}
\eeq
In practice one could set $\theta_t$ in the following way:
\[
\Psi_{t-}(\theta) :=\theta_{t-} \cdot (S_{t-}+\Delta D_{t-}) = \theta_{t} \cdot S_{t-}\stackrel{\textit{self-fin.}}{=} \Psi_{0}(\theta)+\int_0^{t-} \theta_{u} \cdot \diff G_u 
\]
where the second equality is by Remark \ref{rk:JumpsAtTMinus} and left continuity of the trading strategy: therefore one should set freely the current value of vector $\theta_t$ redistributing the whole portfolio value at time $t-$ at the last equality (coherent with the self-financing condition and all past strategies) using the underlying market values $S_{t-}$ at time $t-$ (see third equality). Once the trading strategy $\theta_t$ is set, jumps of the asset and/or of the dividend process may arrive as surprises and perturb the portfolio value from $\Psi_{t-}$ to $\Psi_{t}$: see the second equation of (\ref{eq:PredictableSelffinancingSystem}).\\
Moreover,
\bes
\Psi_{t-}(\theta)+\Delta \Psi_t(\theta) &=& (\theta_{t-}\cdot S_{t-} + \theta_{t-}\cdot \Delta D_{t-}) + (\theta_{t} \cdot \Delta G_t)\\
&=& \theta_{t}\cdot ( S_{t-} + \Delta S_t + \Delta D_t)\\
&=& \theta_{t}\cdot ( S_{t} + \Delta D_t) =: \Psi_t(\theta)
\ees
where the first equality is by self-financing condition and (\ref{eq:JumpsOfSemiMgIntegral}), and  the second equality is by Remark \ref{rk:JumpsAtTMinus} and left continuity of the trading strategy: the wealth is conserved even in cases of dividend jumps. 
In framework (\ref{eq:PredictableSelffinancingSystem}), one can think to process $\varphi$ as a computation tool, where we set for any $t\geq 0$
\beq\label{eq:hToThetaMap}
\begin{cases}
\varphi_{t-}  \setto  \theta_t\\
\Delta \varphi_t\cdot S_{t} \setto  \theta_{t} \cdot \Delta D_t  
\end{cases}
\eeq
with initial condition $\varphi_0=\theta_0$: one can easily check that this system corresponds to (\ref{eq:DefSelfFinPort})-(\ref{eq:selffinWithJumps})-(\ref{eq:JumpDividendAndSelffinPort}). In the remainder of the section
we will have this last framework in mind  ($\varphi$ is a computation tool and $\theta$ is the trading strategy) but, with a slight abuse of notation, we will often refer to process $\varphi$ as the trading strategy.
\end{remark}


As we read in \cite{pp2001}, we must avoid problems that arise from the classical doubling strategy. Here, a player bets \$1 at a fair bet. If
he wins, he stops. If he loses, he next bets \$2. Whenever he wins, he stops, and his
profit is \$1. If he continues to lose, he continues to play, each time doubling his bet.
This strategy leads to a certain gain of \$1 without risk. However, the player needs to
be able to tolerate arbitrarily large losses before he might gain his certain profit. Of
course no one has such infinite resources to play such a game. Mathematically one
can eliminate this type of problem by requiring trading strategies to give martingales
that are bounded below by a constant. Thus the player’s resources, while they can be
huge, are nevertheless finite and bounded by a non-random constant. This leads to the
next definition.

\begin{definition}
A predictable self-financing strategy $\theta$ is said to be admissible on the time interval $[0,T]$ for $T>0$ if $\theta_0=\vec{0}$ and there
exists a constant $a$ such that $\Psi_{t}(\theta) \geq -a$, a.s.~for every $0\leq t \leq T$.
\end{definition}


\begin{definition}
An arbitrage opportunity on the time interval $[0,T]$ is an
admissible self-financing strategy $\theta$ such that $\Psi_{0}(\theta)=0$, $\Psi_{T}(\theta)\geq 0$ a.s.~and $\p(\Psi_{T}(\theta)>0)> 0$.
\end{definition}

\noindent Note that, since $\p\sim \q$, the last condition is equivalent to $\q(\Psi_{T}(\theta)>0)> 0$. A model is \emph{arbitrage free} if  there does not exist arbitrage opportunities in it. As anticipated in the previous section, the NFLVR condition is a slight/technical relaxation of the arbitrage free condition. We now want to re-state the portfolio and  the self-financing conditions in a more convenient way.

\begin{proposition}\label{prop:selffinancingPortfWithDivs}
Let $\Psi_t(\varphi)$ be defined in (\ref{eq:DefSelfFinPort}). Defining also $\widetilde{\Psi}_t:=\Psi_t/B_t$, 
then the self-financing conditions (\ref{eq:selffinWithJumps})-(\ref{eq:JumpDividendAndSelffinPort}) are equivalent to 
\beq\label{eq:dynOfPsi}
\diff \widetilde{\Psi}_t =  \varphi_{t-} \cdot \diff \widetilde{G}_t 
\eeq
where for any $i$, we define the deflated gain process
\be\label{eq:GTilde}
\widetilde{G}_t^i := \frac{S_t^i}{B_t} + \int_0^t \frac{\diff D_u^i}{B_u}.
\ee
We also recall that $S^0\equiv B$ and therefore $\widetilde{G}^0_t = 1$ for all $t$. 
\end{proposition}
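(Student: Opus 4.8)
The plan is to read this as a numéraire-deflation identity and prove it by semimartingale integration by parts, exploiting crucially that the bank account $B$ is continuous and of finite variation. Write $\beta_t := 1/B_t$; since $r$ is continuous and bounded, $B$ (and hence $\beta$) is a continuous process of finite variation, so that $B_{t-}=B_t$, $\beta_{t-}=\beta_t$, $\Delta B_t = \Delta\beta_t = 0$ and $[B,X]_t = [\beta,X]_t = 0$ for every semimartingale $X$. I would first record the two elementary facts I will reuse: from \eqref{eq:DefSelfFinPort} the left limit factorizes, $\Psi_{t-} = \varphi_{t-}\cdot S_{t-}$; and, applying integration by parts to each $\beta_t S_t^i$ and using the definition \eqref{eq:GTilde}, the deflated gain satisfies
\[
\diff \widetilde{G}_t^i = \diff(\beta_t S_t^i) + \beta_t\,\diff D_t^i = \beta_t\,\diff S_t^i + S_{t-}^i\,\diff\beta_t + \beta_t\,\diff D_t^i,
\]
where the covariation term $\diff[\beta,S^i]_t$ vanishes because $\beta$ is continuous of finite variation. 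In compact form this reads $\diff\widetilde{G}_t = \beta_t\,\diff G_t + S_{t-}\,\diff\beta_t$.

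For the forward implication I would apply integration by parts to $\widetilde{\Psi}_t = \beta_t\Psi_t$, again dropping $\diff[\beta,\Psi]_t = 0$, to get $\diff\widetilde{\Psi}_t = \beta_t\,\diff\Psi_t + \Psi_{t-}\,\diff\beta_t$. Substituting the self-financing dynamics \eqref{eq:selffinWithJumps}, $\diff\Psi_t = \varphi_{t-}\cdot\diff G_t$, together with $\Psi_{t-}=\varphi_{t-}\cdot S_{t-}$, yields
\[
\diff\widetilde{\Psi}_t = \varphi_{t-}\cdot\big(\beta_t\,\diff G_t + S_{t-}\,\diff\beta_t\big) = \varphi_{t-}\cdot\diff\widetilde{G}_t,
\]
which is precisely \eqref{eq:dynOfPsi}; note that \eqref{eq:JumpDividendAndSelffinPort} is already guaranteed by the preceding Proposition, so it need not be invoked separately in this direction.

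For the converse I would run the same computation in reverse, starting from $\Psi_t = B_t\widetilde{\Psi}_t$. Integration by parts (with $[B,\widetilde{\Psi}]_t=0$) gives $\diff\Psi_t = B_t\,\diff\widetilde{\Psi}_t + \widetilde{\Psi}_{t-}\,\diff B_t$, and inserting the assumed \eqref{eq:dynOfPsi} together with $\widetilde{\Psi}_{t-} = \beta_t\,\varphi_{t-}\cdot S_{t-}$ produces
\[
\diff\Psi_t = \varphi_{t-}\cdot\Big[\diff G_t + S_{t-}\big(B_t\,\diff\beta_t + \beta_t\,\diff B_t\big)\Big].
\]
The parenthesis equals $\diff(B_t\beta_t)=\diff(1)=0$, so $\diff\Psi_t = \varphi_{t-}\cdot\diff G_t$, recovering \eqref{eq:selffinWithJumps}; the jump condition \eqref{eq:JumpDividendAndSelffinPort} then follows automatically from \eqref{eq:DefSelfFinPort}-\eqref{eq:selffinWithJumps} by the preceding Proposition, closing the equivalence. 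The only genuinely delicate point is the bookkeeping of the jump and covariation terms in the two integration-by-parts steps: the clean cancellations above -- no surprise terms from $\Delta B$ and no $\diff[B,\cdot]$ contributions -- rest entirely on the continuity and finite variation of $B$. Had interest rates (hence $B$) been allowed to jump, the deflated gain \eqref{eq:GTilde} would have to carry extra correction terms and the equivalence would fail in this simple form. I would also keep track throughout of the local boundedness and predictability of $\varphi_-$, so that all stochastic integrals against $G$, $\widetilde{G}$, $B$ and $\beta$ are well posed.
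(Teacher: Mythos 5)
Your proof is correct, but it takes a different route from the paper's. The paper does not prove this proposition directly: it defers to the general-numéraire result (Proposition \ref{le:selfFinPortWithDivsGeneralEMM}), whose proof carries out the integration by parts for an arbitrary positive semimartingale numéraire $\beta=S^b$ --- where the covariation corrections $\diff[D^i,1/\beta]$ genuinely survive --- and then specializes to $b=0$, $\beta\equiv B$, at which point those terms vanish. You instead prove the special case from scratch, using precisely the fact that $B$ is continuous with finite variation to kill every $\Delta B$ and $\diff[B,\cdot]$ term, and you make both directions of the equivalence explicit (the paper's general proof derives the deflated dynamics from \eqref{eq:selffinWithJumps} and then only checks \eqref{eq:JumpDividendAndSelffinPort} ex post via the jump computation, leaving the reverse implication implicit). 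What your approach buys is a short, self-contained argument that makes transparent exactly which structural property of the numéraire is responsible for the clean form of \eqref{eq:GTilde}; what the paper's approach buys is that the same computation, done once in generality, also yields Lemma \ref{le:BuyAndHoldGeneralEMM} and Theorem \ref{th:fftapWithDivsGeneralEMM} without repetition. Two minor remarks: your $\beta:=1/B$ clashes notationally with the paper's use of $\beta$ for the numéraire itself, so it should be renamed if inserted into the text; and in the converse direction you correctly recover \eqref{eq:selffinWithJumps} and then invoke the earlier jump-self-financing proposition for \eqref{eq:JumpDividendAndSelffinPort}, which is exactly the right logical order since that condition is a consequence of \eqref{eq:DefSelfFinPort}--\eqref{eq:selffinWithJumps} rather than an independent hypothesis.
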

\begin{proof}
This a particular case of the more general Lemma \ref{le:selfFinPortWithDivsGeneralEMM}: so we refer to the proof of this Lemma for the case $b=0$ and then $\beta\equiv B$.
Note that in this case some of the calculations are simplified from the fact that $B$ is continuous with finite variation.
\end{proof}

We start by using the bank account as numéraire. We concentrate on asset $S^i$ with $i\in\{1,\ldots,n\}$. Extending the analysis of \cite{Bjork}, our program is now as follows:
\begin{itemize}
\item Consider the \emph{buy-and-hold} self-financing portfolio where we hold exactly one unit of the
asset $S^i$, and invest all net dividends $D^i$ in the bank account. Denote the value
process of this portfolio by $Y^i_t:=\Psi_t(\varphi^{bh(i)})=\varphi^0_t B_t + S^i_t$ with $\varphi_t^{bh(i)}:=(\varphi^0_t, 0, 0, 1, 0, \ldots)$ where the unitary long position corresponds to the $i$-th asset recalling that $i>0$.
Hence, all the dividends dropped by the single asset are continuously rebalanced in the bank account position: at any $t$  in fact $Y^i_t =\varphi^0_t B_t + S^i_t$. 
\item  The point is now that the portfolio $Y^i$ can be viewed as a non-dividend
paying asset.
\item  Thus, the process $Y^i/B$ should be a local martingale under the Risk-Neutral measure.
\end{itemize}

Also, we make a standing assumption that the random variable $X = \int_0^T B_u^{-1}\diff D_u^i$ is $\q$-integrable.

\begin{lemma}\label{le:BDiscountedWealth}
Recalling the buy-and-hold portfolio definition $Y^i_t:=\varphi^0_t B_t + S^i_t$, define also its deflated version as $\widetilde{Y}_t^{i} :=\frac{Y^i_t}{B_t}$. Then the portfolio $Y^i$ 
respects the self-financing conditions (\ref{eq:selffinWithJumps})-(\ref{eq:JumpDividendAndSelffinPort}) if
\beq\label{eq:tildeYit}
\widetilde{Y}_t^{i} = \varphi^0_0+ \widetilde{G}_t^i
\eeq
where $\varphi^0_0$ is the initial cash endowment and
\beq\label{eq:h0t}
\varphi^0_t = \varphi^0_0+ \int_0^t \frac{\diff D_u^i}{B_u}.
\eeq
Moreover, $Y^i_t = \theta^0_t B_t + S^i_t +\Delta D^i_t$ where
\beqs
\theta^0_t =\theta^0_0+ \int_0^{t-} \frac{\diff D_u^i}{B_u}.
\eeqs
\end{lemma}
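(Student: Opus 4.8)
The plan is to verify that the buy-and-hold portfolio $Y^i$, as specified, satisfies the self-financing system \eqref{eq:selffinWithJumps}-\eqref{eq:JumpDividendAndSelffinPort}, and in the process read off the cash position \eqref{eq:h0t} and the deflated identity \eqref{eq:tildeYit}. I would invoke Proposition \ref{prop:selffinancingPortfWithDivs}, which tells us that for any portfolio the self-financing conditions are equivalent to the single deflated relation $\diff \widetilde{\Psi}_t = \varphi_{t-}\cdot\diff\widetilde{G}_t$. So the whole lemma reduces to computing both sides of this relation for the specific buy-and-hold strategy $\varphi_t^{bh(i)}=(\varphi^0_t,0,\ldots,0,1,0,\ldots)$ and checking they agree under the stated choice of $\varphi^0_t$.

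First I would compute the right-hand side. Since $\varphi^{bh(i)}$ holds one unit of asset $i$ and $\varphi^0_t$ units of the bank account, and recalling $\widetilde{G}^0_t\equiv 1$ is constant (so $\diff\widetilde{G}^0_t=0$), the inner product collapses to a single term: $\varphi_{t-}\cdot\diff\widetilde{G}_t = \diff\widetilde{G}^i_t$, because the bank-account component contributes nothing and the unit position on asset $i$ survives the left-limit. Integrating from $0$, and using that $\widetilde{G}^i_0 = S^i_0/B_0 = S^i_0$ (with $B_0=1$), I would obtain $\widetilde{Y}^i_t = \widetilde{Y}^i_0 + \widetilde{G}^i_t - \widetilde{G}^i_0$. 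Identifying the initial deflated wealth $\widetilde{Y}^i_0 = \varphi^0_0 + S^i_0$ and cancelling $\widetilde{G}^i_0 = S^i_0$ then yields exactly \eqref{eq:tildeYit}, i.e.~$\widetilde{Y}^i_t = \varphi^0_0 + \widetilde{G}^i_t$.

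Next I would pin down the cash position. Starting from the definition $\widetilde{Y}^i_t = \varphi^0_t + S^i_t/B_t$ (dividing $Y^i_t=\varphi^0_t B_t + S^i_t$ by $B_t$) and comparing with \eqref{eq:tildeYit}, which reads $\varphi^0_t + S^i_t/B_t = \varphi^0_0 + S^i_t/B_t + \int_0^t B_u^{-1}\diff D^i_u$ after expanding $\widetilde{G}^i_t$ via \eqref{eq:GTilde}, the $S^i_t/B_t$ terms cancel and I am left with \eqref{eq:h0t}. For the last claim I would translate the optional computation tool $\varphi$ into the predictable strategy $\theta$ using the map \eqref{eq:hToThetaMap}: the relation $\varphi^0_{t-}=\theta^0_t$ and the cash-account analogue of $\Delta\varphi_t\cdot S_t = \theta_t\cdot\Delta D_t$ give $\theta^0_t = \theta^0_0 + \int_0^{t-} B_u^{-1}\diff D^i_u$, and rewriting the portfolio value in the $\theta$-parametrization (recalling from \eqref{eq:PredictableSelffinancingSystem} that $\Psi_t(\theta)=\theta_t\cdot S_t + \theta_t\cdot\Delta D_t$) produces $Y^i_t = \theta^0_t B_t + S^i_t + \Delta D^i_t$.

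The main obstacle I anticipate is purely bookkeeping around the jumps: I must be careful that the left-limit $\varphi_{t-}$ in the deflated dynamics interacts correctly with the jump-self-financing constraint \eqref{eq:JumpDividendAndSelffinPort}, since the cash position $\varphi^0$ jumps exactly to absorb the dividend gain $\varphi^0_{t-}\cdot\Delta D^0_t$ together with the unit asset's $\Delta D^i_t$ contribution. Concretely, the jump of $\varphi^0$ at $t$ must satisfy $\Delta\varphi^0_t\, B_t = \Delta D^i_t$ (the bank-account analogue of \eqref{eq:hToThetaMap}), which is precisely what the integral in \eqref{eq:h0t} encodes since $\Delta\bigl(\int_0^t B_u^{-1}\diff D^i_u\bigr) = B_t^{-1}\Delta D^i_t$. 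Verifying this consistency — and the corresponding shift to $\int_0^{t-}$ in the $\theta$-version — is the only genuinely delicate point; everything else is the continuous, finite-variation algebra of the bank account that Proposition \ref{prop:selffinancingPortfWithDivs} has already streamlined.
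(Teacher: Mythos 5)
Your proposal is correct and follows essentially the same route as the paper: the paper proves this lemma by specializing Lemma \ref{le:BuyAndHoldGeneralEMM} to $b=0$, $\beta\equiv B$, and that proof does exactly what you do — apply the deflated self-financing relation of Proposition \ref{prop:selffinancingPortfWithDivs} to the buy-and-hold strategy, observe that the numéraire component of $\widetilde{G}$ is constant so only $\diff\widetilde{G}^i$ survives, integrate to get \eqref{eq:tildeYit}, compare with the definition of $\widetilde{Y}^i$ to extract \eqref{eq:h0t}, and pass to $\theta$ via \eqref{eq:hToThetaMap}. Your jump-consistency check $\Delta\varphi^0_t\,B_t=\Delta D^i_t$ matches the verification the paper performs in the general lemma.
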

\begin{proof}
This is a particular case of the more general Lemma \ref{le:BuyAndHoldGeneralEMM}: so we refer to the proof of this Lemma for the case $b=0$ and then $\beta\equiv B$.
\end{proof}

\begin{remark}
Lemma \ref{le:BDiscountedWealth} (and its generalization of Lemma \ref{le:selfFinPortWithDivsGeneralEMM})
are measure independent and only due to the construction of a portfolio under the self-financing condition. The measure specification will be crucial instead for Theorem \ref{th:fftapWithDivs} 
(and its generalization in Theorem \ref{th:fftapWithDivsGeneralEMM}).
\end{remark}


\begin{theorem}[$\mathbf{1}^{\text{\textbf{st}}}$-FTAP with Dividends under the RN Measure]
For any $i$, we have that  $\widetilde{G}^i$ is $\q$-local martingale.
 \label{th:fftapWithDivs}
\end{theorem}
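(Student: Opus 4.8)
The plan is to reduce the claim to the non-dividend First Fundamental Theorem (Theorem \ref{th:fftap}) by exploiting the fact that the buy-and-hold portfolio $Y^i$ was engineered precisely to be a \emph{non-dividend paying} traded asset. First I would invoke the identity \eqref{eq:tildeYit} from Lemma \ref{le:BDiscountedWealth}, namely $\widetilde{Y}^i_t = \varphi^0_0 + \widetilde{G}^i_t$, where the initial cash endowment $\varphi^0_0$ is a deterministic constant. Since adding a constant preserves the local martingale property, it suffices to prove that the deflated buy-and-hold value $\widetilde{Y}^i = Y^i/B$ is a $\q$-local martingale; the conclusion for $\widetilde{G}^i$ then follows immediately.

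Next I would argue that $Y^i$ genuinely qualifies as a non-dividend paying asset. By construction (Lemma \ref{le:BDiscountedWealth}, in particular the bank-account dynamics \eqref{eq:h0t}), the position $\varphi^0$ continuously absorbs every net dividend dropped by $S^i$ into the numéraire, so that a holder of the self-financing portfolio $Y^i$ receives no intermediate cash flow: the entire gain process is capitalized into the price $Y^i$ itself. Thus the transformed market $\{S^0, Y^1, \ldots, Y^n\} = \{B, Y^1, \ldots, Y^n\}$ consists of locally bounded, non-dividend paying semimartingales with strictly positive numéraire $S^0 = B$, exactly the hypotheses required by Theorem \ref{th:fftap}.

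I would then apply Theorem \ref{th:fftap} to this transformed market. Because passing from each $S^i$ to $Y^i$ is itself a self-financing operation, any arbitrage in one market corresponds to an arbitrage in the other, so the NFLVR property assumed for the original (dividend) market transfers to $\{B, Y^1, \ldots, Y^n\}$. Theorem \ref{th:fftap} then furnishes a single equivalent measure under which all the deflated processes $\widetilde{Y}^i = Y^i/B$ are local martingales; since the numéraire is the bank account $B$, this measure is exactly the Risk-Neutral measure $\q$. Hence each $\widetilde{Y}^i$, and therefore each $\widetilde{G}^i = \widetilde{Y}^i - \varphi^0_0$, is a $\q$-local martingale, as required.

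The main obstacle I anticipate is not the martingale algebra but the careful justification of this reduction: one must verify that $Y^i$ satisfies the hypotheses of Theorem \ref{th:fftap} — in particular local boundedness of the buy-and-hold value and the equivalence of NFLVR for the dividend and non-dividend markets — so that the \emph{same} measure $\q$, with numéraire $B$, serves all $i$ simultaneously. The standing $\q$-integrability of $X = \int_0^T B_u^{-1}\diff D^i_u$ is what guarantees that $\widetilde{G}^i$ as defined in \eqref{eq:GTilde} is well-posed and finite on $[0,T]$, closing the argument.
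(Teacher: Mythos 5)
Your proposal is correct and follows essentially the same route as the paper: both view the buy-and-hold portfolio $Y^i$ as a non-dividend paying asset, extend the market to include it, invoke Theorem \ref{th:fftap} with numéraire $B$ to conclude that $\widetilde{Y}^i$ is a $\q$-local martingale, and then identify $\widetilde{G}^i$ with $\widetilde{Y}^i$ up to the constant $\varphi^0_0$ (the paper simply sets $\varphi^0_0=0$). Your version merely spells out the reduction steps (transfer of NFLVR, identification of the measure) that the paper leaves implicit.
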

\begin{proof}
$\widetilde{Y}^{i}$ is a non-dividend paying asset deflated by the $\q$-measure numéraire. Hence, extending the market to this portfolio, due to Theorem \ref{th:fftap}, it must be a $\q$-local martingale. $\widetilde{G}^i$ 
is then equivalent to $\widetilde{Y}^{i}$ for $\varphi_0^0\setto 0$.
\end{proof}

\begin{remark}
We adopted a notation that is quite standardized but could be misleading: for a dividend paying asset $S^i$ we underline that $\widetilde{G}^i \neq \frac{G^i}{B}$ 
and that neither $\frac{S^i}{B}$ nor%
\footnote{%
One could naively think that $G^i:=S^i+D^i$ represented the position of being long on the asset and reinvesting its dividends: this is not correct since the right way to implement this strategy is 
described in Lemma \ref{le:BDiscountedWealth}. Moreover, as we will see in Proposition \ref{prop:RiskNeutralDrift}, under the Risk Neutral measure:
\[
\diff G^i_t := \diff S^i_t + \diff D^i_t = (r_t\,S_{t-}^i\diff t + \diff M^i_t -\diff D^i_t) + \diff D^i_t = r_t\,S_{t-}^i\diff t + \diff M^i_t \neq r_t\,G_{t-}^i\diff t + \diff M^i_t 
\]
and the last dynamics is the one that would guarantee that $\frac{G^i}{B}$ were a $\q$-local martingale.
}
 $\frac{G^i}{B}$ are Risk-Neutral local martingales as they would in cases where the asset had a null dividend process. 
\end{remark}


\begin{corollary}
$\widetilde{\Psi}_t$ is a local martingale under $\q$.
\end{corollary}
\begin{proof}
Recalling \eqref{eq:dynOfPsi}, then $\widetilde{\Psi}_t$ is a $\q$-local martingale being a sum of $\q$-local martingales $\widetilde{G}_t^{i}$ (by the previous theorem).
\end{proof}

\begin{remark}
In \cite{FPP19}, the authors prove that there are no sure profits via flash strategies if and
only if the deflated gain process components $\widetilde{G}^i$ do not have \emph{predictable} (resp.~\emph{fully predictable}) jumps, i.e.~if there does not exist any predictable time at which
the direction (and even the size, in the case of fully predictable jumps) of the jump is known just before the occurrence of the jump. Recalling the last result of Proposition \ref{prop:RiskNeutralDrift}, this is equivalent to assuming the absence of predictable jumps in all process $M$ components.
\end{remark}

The above results can be extended to any numéraire. From \cite{ap2011} (where the interested reader can refer for a presentation of the numéraire topic) 
we know that in order to be a numéraire a process should fit the following characteristics.

\begin{definition}[From \cite{ap2011}]\label{def:QPriceProcess}
The stochastic process $X$ is a \vvirg{$\q$-price process}, if 
\begin{romanlist}
\item $X_t>0$ for all $t$;
\item $X_t/B_t$ is a $\q$-strict martingale.
\end{romanlist}
\end{definition}

\noindent Let $\beta\equiv S^b$ for $b\in\{0,1,\ldots,n\}$ be a non-dividend paying (so $D^b=0$) semimartingale scalar process with $\beta_t>0$ a.s.~for all $t$. 
Then $\beta$ is a $\q$-local martingale due to Theorem \ref{th:fftapWithDivs}: we assume that $\beta$ is not only a local martingale but a martingale, 
so that $\beta$ is a \vvirg{$\q$-price process}. Then $\beta$ is eligible to be a numéraire, i.e.~we can define a Radon-Nikodym derivative
\[
L_t:=\e_t\left[\frac{\diff \q^\beta}{\diff \q\phantom{b}} \right] = \frac{\beta_t}{B_t}\,\frac{B_0}{\beta_0}
\] 
where $L$ is a $\q$-martingale with $\e[L_t]=L_0=1$ for any $t\geq 0$. Clearly under the degenerate case in which $b=0$ this is not a change of measure.

\begin{proposition}\label{le:selfFinPortWithDivsGeneralEMM}
Define $\widetilde{\Psi}_t^\beta:= \Psi_t/\beta_t$ where $\Psi_t$ is defined in (\ref{eq:DefSelfFinPort}). The self-financing conditions (\ref{eq:selffinWithJumps})-(\ref{eq:JumpDividendAndSelffinPort}) can also be written as:
\be\label{eq:dynOfPsiBeta}
\diff \widetilde{\Psi}_t^\beta = \varphi_{t-}\cdot \diff \widetilde{G}_t^{\beta} 
\ee
where $\widetilde{G}_t^{\beta}=(\widetilde{G}_t^{\beta,0},\widetilde{G}_t^{\beta,1},\ldots,\widetilde{G}_t^{\beta,n})$ and we define
\be\label{eq:GainProcessGeneralMeasure}
\widetilde{G}_t^{\beta,i} :=  \frac{S_t^i}{\beta_t}+\int_0^t \left\{\,\frac{\diff D_u^i}{\beta_{u-}}+\diff\left[ D^i, \frac{1}{\beta}\right]_u\,\right\} 
\ee
Clearly $\widetilde{G}_t^{\beta,b}=1$ for all $t$.
\end{proposition}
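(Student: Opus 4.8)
The plan is to treat this as a deflation computation, directly generalizing the $\beta\equiv B$ case of Proposition \ref{prop:selffinancingPortfWithDivs}: rather than re-deriving the self-financing mechanics, I would start from the definition $\Psi_t=\varphi_t\cdot S_t$ in \eqref{eq:DefSelfFinPort} together with the self-financing dynamics $\diff\Psi_t=\varphi_{t-}\cdot\diff G_t$ of \eqref{eq:selffinWithJumps}, and push everything through the semimartingale integration-by-parts formula. Since $\beta>0$ a.s., the map $x\mapsto 1/x$ is $C^2$ on $(0,\infty)$ and $1/\beta$ is a semimartingale, so writing $\widetilde{\Psi}_t^\beta=\Psi_t\cdot(1/\beta_t)$ the product rule gives
\[
\diff\widetilde{\Psi}_t^\beta=\Psi_{t-}\,\diff(1/\beta_t)+\frac{1}{\beta_{t-}}\,\diff\Psi_t+\diff[\Psi,1/\beta]_t .
\]
Here I would use $\Psi_{t-}=\varphi_{t-}\cdot S_{t-}$ (the left limit of an RCLL product is the product of left limits) and \eqref{eq:selffinWithJumps} to handle the first two terms.

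Second, I would expand the target $\varphi_{t-}\cdot\diff\widetilde{G}_t^\beta$ independently. Applying the product rule to each $S_t^i/\beta_t$ appearing in \eqref{eq:GainProcessGeneralMeasure} and regrouping with $G^i=S^i+D^i$, using bilinearity of the covariation so that $[S^i,1/\beta]+[D^i,1/\beta]=[G^i,1/\beta]$, each component collapses to
\[
\diff\widetilde{G}_t^{\beta,i}=S_{t-}^i\,\diff(1/\beta_t)+\frac{1}{\beta_{t-}}\,\diff G_t^i+\diff[G^i,1/\beta]_t ,
\]
where the first term carries $S^i$ rather than $G^i$ — precisely the footprint of the dividend correction terms $\diff D^i/\beta_-+\diff[D^i,1/\beta]$ in \eqref{eq:GainProcessGeneralMeasure}. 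Dotting with $\varphi_{t-}$ and using $\varphi_{t-}\cdot S_{t-}=\Psi_{t-}$ and $\varphi_{t-}\cdot\diff G_t=\diff\Psi_t$, this becomes $\Psi_{t-}\,\diff(1/\beta_t)+\tfrac{1}{\beta_{t-}}\diff\Psi_t+\varphi_{t-}\cdot\diff[G,1/\beta]_t$, matching the product-rule expansion of $\diff\widetilde{\Psi}_t^\beta$ in every term except the last.

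The hard part — and the only genuinely non-routine step — is therefore the covariation identity $\diff[\Psi,1/\beta]_t=\varphi_{t-}\cdot\diff[G,1/\beta]_t$. Here I would exploit that, \emph{under} the self-financing condition \eqref{eq:selffinWithJumps}, the wealth is a genuine vector stochastic integral, $\Psi=\Psi_0+\int_0^{\cdot}\varphi_{u-}\cdot\diff G_u$, so that associativity of covariation with stochastic integrals, $[\int H\,\diff X,Y]=\int H\,\diff[X,Y]$, yields componentwise $[\Psi,1/\beta]=\sum_i\int\varphi_{u-}^i\,\diff[G^i,1/\beta]_u=\int\varphi_{u-}\cdot\diff[G,1/\beta]_u$. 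This is exactly where the jump self-financing constraint \eqref{eq:JumpDividendAndSelffinPort} earns its keep: it guarantees that the literal value $\varphi_t\cdot S_t$ and the integral $\int\varphi_{u-}\cdot\diff G_u$ coincide, so the covariation may be read off the integral representation rather than from the optional, jumpy process $\varphi_t\cdot S_t$ directly.

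Finally, for the claimed equivalence I would note that the whole computation is reversible: deflation $\Psi\mapsto\Psi/\beta$ and reflation $\widetilde{\Psi}^\beta\mapsto\beta\,\widetilde{\Psi}^\beta$ are mutually inverse integration-by-parts identities, so applying the product rule to $\beta\,\widetilde{\Psi}^\beta$ starting from \eqref{eq:dynOfPsiBeta} recovers \eqref{eq:selffinWithJumps}, while \eqref{eq:JumpDividendAndSelffinPort} follows from the definition $\Psi=\varphi\cdot S$ exactly as in the earlier Proposition. The trivial check $\widetilde{G}_t^{\beta,b}=1$ is immediate since $\beta\equiv S^b$ and $D^b=0$, whence $\widetilde{G}_t^{\beta,b}=S_t^b/\beta_t=1$.
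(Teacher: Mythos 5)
Your proposal is correct and follows essentially the same route as the paper's proof: integration by parts on $\Psi/\beta$, substitution of $\Psi_{t-}=\varphi_{t-}\cdot S_{t-}$ and of the self-financing condition, the associativity identity $[\int\varphi_{-}\cdot\diff G,\,1/\beta]=\int\varphi_{-}\cdot\diff[G,1/\beta]$ for the cross term, and regrouping via $G^i=S^i+D^i$ into the components $\diff\widetilde{G}^{\beta,i}_t$. The only cosmetic difference is that you expand both sides and match term by term while the paper runs a single forward chain of equalities and then verifies \eqref{eq:JumpDividendAndSelffinPort} ex-post by comparing jump computations; the mathematical content is identical.
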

\begin{proof}
Note that $1/\beta$ is strictly positive since $\beta$ is strictly positive by hypothesis. Define $\TS^i:=S^i/\beta$ for any $i$, then
\[
\diff \TS^i_t = S^i_{t-} \diff (\beta^{-1})_t + \beta_{t-}^{-1}\diff S^i_t + \diff[S^i,\beta^{-1}]_t
\]
and so, using self-financing condition \eqref{eq:selffinWithJumps},
\bes
\diff \widetilde{\Psi}_t^\beta  &=& \Psi_{t-} \diff (\beta^{-1})_t + \beta_{t-}^{-1}\diff \Psi_t + [\diff\Psi_t,\diff(\beta^{-1})_t]\\
&=& \left(\varphi_{t-}\cdot S_{t-}\right) \diff (\beta^{-1})_t + \beta_{t-}^{-1} \, \varphi_{t-}\cdot \diff G_t +  \sum_{i=0}^n \varphi^i_{t-} \,[\diff G_t^i,\diff(\beta^{-1})_t]\\
&=&  \varphi^b_{t-} \Bigg\{ \beta_{t-} \diff (\beta^{-1})_t + \beta_{t-}^{-1}\diff \beta_t + [\diff \beta_t,\diff(\beta^{-1})_t] \Bigg\} \\
&\phantom{=}&+ \sum_{i=0, \,i\neq b}^n \varphi^i_{t-} \Bigg\{ S^i_{t-} \diff (\beta^{-1})_t  + \beta_{t-}^{-1} \diff (S^i+D^i)_t + [\diff (S^i+D^i)_t ,\diff(\beta^{-1})_t]\Bigg\}\\
&=& \sum_{i=0, \,i\neq b}^n \varphi^i_{t-} \Bigg\{ \diff \widetilde{S}^i_{t} + \beta_{t-}^{-1} \diff D^i_t + \diff[ D^i , \beta^{-1}]_t\Bigg\}\\
&=& \sum_{i=0, \,i\neq b}^n \varphi^i_{t-}  \diff \widetilde{G}_t^{\beta,i}
\ees
where at the forth equality we exploited the fact that in the first curly bracket we have $\diff (\beta\,\beta^{-1})_t = 0$ and therefore we obtain 
\eqref{eq:dynOfPsiBeta} -- recalling that $\diff \widetilde{G}_t^{\beta,b}=0$ for all $t$. Now we check ex-post that condition \eqref{eq:JumpDividendAndSelffinPort} holds: from the self-financing condition \eqref{eq:dynOfPsiBeta}, using \eqref{eq:linearityOfJumpOperator}-\eqref{eq:JumpsOfSemiMgIntegral}-\eqref{eq:JumpsOfQuadraticCov},
\bes
\Delta \widetilde{\Psi}_t^\beta  &=&  \varphi_{t-} \cdot \Delta \widetilde{G}_t^\beta := \sum_i \varphi_{t-}^i\, \Delta  \left( S_t^i\,\beta_t^{-1} +\int_0^t \Big\{\,\beta_{u-}^{-1}  \diff D_u^i +\diff\left[ D^i, \beta^{-1}\right]_u\,\Big\} \right)\\
&=& \sum_i \varphi_{t-}^i\, \left( S_{t-}^i \,\Delta (\beta_t^{-1}) + \beta_{t-}^{-1} \,\Delta S_{t}^i + \Delta S_t^{i} \,\Delta (\beta_t^{-1}) + \beta_{t-}^{-1}\,\Delta D_t^i + \Delta  D^i_t\,\Delta (\beta^{-1}_t)\right)\\
&=& (\varphi_{t-} \cdot S_{t-})\, \Delta (\beta_t^{-1}) + \beta_t^{-1} \,\varphi_{t-}\cdot( \Delta S_{t} + \Delta D_t)
\ees
on the other hand, using \eqref{eq:linearityOfJumpOperator}-\eqref{eq:jumpOperatorProductRule},
\bes
\Delta \widetilde{\Psi}_t^\beta &:=&  \Delta \left(\frac{\Psi_t}{\beta_t}\right) :=  \Delta \left(\beta_t^{-1}\, (\varphi_t\cdot S_t)\right)  = \beta_{t-}^{-1}\, \Delta \left( \varphi_t\cdot S_t\right) + (\varphi_{t-}\cdot S_{t-})\,\Delta (\beta_t^{-1}) + 
\Delta \left( \varphi_t\cdot S_t\right)\,\Delta (\beta_t^{-1})  \\
&=& \beta_{t}^{-1} \,( \varphi_{t-}\cdot\Delta S_t + S_{t-}\cdot\Delta \varphi_t + \Delta S_t\cdot\Delta \varphi_t) + (\varphi_{t-}\cdot S_{t-})\,\Delta (\beta_t^{-1}) \\
&=& \beta_{t}^{-1} \,( \varphi_{t-}\cdot\Delta S_t + S_{t}\cdot\Delta \varphi_t) + (\varphi_{t-}\cdot S_{t-})\,\Delta (\beta_t^{-1})   
\ees
and equating the last two equations we can observe that condition \eqref{eq:JumpDividendAndSelffinPort}  is respected.
\end{proof}

\begin{lemma}\label{le:BuyAndHoldGeneralEMM}
Define the buy-and-hold trading strategy $\mathcal{Y}^i_t:=\Psi_t(\varphi^{\beta h(i)})=\varphi^b_t \beta_t + S^i_t$ with $\varphi^{\beta h(i)}_t:=(0,0,1,\ldots, \varphi^b_t,0,\ldots)$ where the unitary long position is for the $i$-th asset and $i\neq b$. The discounted wealth $\widetilde{\mathcal{Y}}^{i}_t:=\frac{\mathcal{Y}^i_t}{\beta_t}$ is self-financing if it satisfies 
\[
\widetilde{\mathcal{Y}}_t^{i} = \varphi^b_0 + \widetilde{G}_t^{\beta,i}
\] 
where
\beq\label{eq:h0tbeta}
\varphi^b_t = \varphi^b_0 + \int_0^t \left\{\,\frac{\diff D_u^i}{\beta_{u-}}+\diff\left[ D^i, \frac{1}{\beta}\right]_u\,\right\}.
\eeq
Moreover, $\mathcal{Y}^i_t = \theta^b_t \beta_t + S^i_t +\Delta D^i_t$ where
\beqs
\theta^b_t =\theta^b_0+ \int_0^{t-}\left\{\,\frac{\diff D_u^i}{\beta_{u-}}+\diff\left[ D^i, \frac{1}{\beta}\right]_u\,\right\}.
\eeqs
\end{lemma}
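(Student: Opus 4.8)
The plan is to obtain this lemma as the direct specialization of Proposition \ref{le:selfFinPortWithDivsGeneralEMM} to the buy-and-hold portfolio, exactly as Lemma \ref{le:BDiscountedWealth} follows from it in the case $\beta\equiv B$. First I would observe that the buy-and-hold strategy $\varphi^{\beta h(i)}$ has only two non-zero components: a unit long position in the $i$-th asset and a position $\varphi^b_t$ in the numéraire asset $\beta\equiv S^b$. Feeding these quantities into the self-financing dynamics \eqref{eq:dynOfPsiBeta} gives
\[
\diff\widetilde{\mathcal{Y}}^i_t = \varphi^b_{t-}\,\diff\widetilde{G}^{\beta,b}_t + \diff\widetilde{G}^{\beta,i}_t.
\]
Since the numéraire deflated by itself is identically one, Proposition \ref{le:selfFinPortWithDivsGeneralEMM} guarantees $\widetilde{G}^{\beta,b}_t=1$ for all $t$, so the first term vanishes and $\diff\widetilde{\mathcal{Y}}^i_t=\diff\widetilde{G}^{\beta,i}_t$.

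Next I would integrate this identity. Using the standing conventions $D^i_0=0$ and $S^i_{0-}=S^i_0$, the initial value computes to $\widetilde{\mathcal{Y}}^i_0=\varphi^b_0+S^i_0/\beta_0=\varphi^b_0+\widetilde{G}^{\beta,i}_0$, whence $\widetilde{\mathcal{Y}}^i_t=\varphi^b_0+\widetilde{G}^{\beta,i}_t$, the first claimed equality. To extract \eqref{eq:h0tbeta}, I would equate this with the defining identity $\widetilde{\mathcal{Y}}^i_t=\varphi^b_t+S^i_t/\beta_t$, obtained by dividing $\mathcal{Y}^i_t=\varphi^b_t\beta_t+S^i_t$ by $\beta_t$. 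Cancelling the common term $S^i_t/\beta_t$ and invoking the definition \eqref{eq:GainProcessGeneralMeasure} of $\widetilde{G}^{\beta,i}$ leaves precisely the integral representation of $\varphi^b_t$ in \eqref{eq:h0tbeta}.

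For the \emph{moreover} part I would pass to the predictable trading strategy $\theta$ via the correspondence \eqref{eq:hToThetaMap}-\eqref{eq:PredictableSelffinancingSystem}. The predictable version of the buy-and-hold strategy holds one unit of $S^i$ and $\theta^b_t$ units of $\beta$, so the second equation of \eqref{eq:PredictableSelffinancingSystem} reads $\Psi_t(\theta)=\theta^b_t\beta_t+S^i_t+\theta_t\cdot\Delta D_t$; since $\beta$ is non-dividend paying ($D^b=0$), the jump-dividend term collapses to $\theta_t\cdot\Delta D_t=\Delta D^i_t$, giving $\mathcal{Y}^i_t=\theta^b_t\beta_t+S^i_t+\Delta D^i_t$ as stated. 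Finally, the relation $\varphi_{t-}=\theta_t$ from \eqref{eq:hToThetaMap} identifies $\theta^b_t=\varphi^b_{t-}$, and taking the left limit in \eqref{eq:h0tbeta} produces the claimed formula with integration up to $t-$ and $\theta^b_0=\varphi^b_0$.

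The computations are entirely routine once Proposition \ref{le:selfFinPortWithDivsGeneralEMM} is in hand; the only points requiring care are the vanishing of $\diff\widetilde{G}^{\beta,b}$, which is what makes the numéraire position invisible in the deflated dynamics, and the correct left-limit bookkeeping in passing between the computation tool $\varphi$ and the genuinely predictable strategy $\theta$. The quadratic-covariation correction $\diff[D^i,1/\beta]_u$ — absent in the $\beta\equiv B$ case because $B$ is continuous with finite variation — is carried along untouched from Proposition \ref{le:selfFinPortWithDivsGeneralEMM}, so no new semimartingale estimates are needed.
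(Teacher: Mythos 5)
Your proposal is correct and follows essentially the same route as the paper: specialize the self-financing dynamics \eqref{eq:dynOfPsiBeta} to the buy-and-hold weights, use $\diff\widetilde{G}^{\beta,b}=0$ to kill the numéraire term, integrate to get the first identity, extract \eqref{eq:h0tbeta} by comparing with $\widetilde{\mathcal{Y}}^i_t=\varphi^b_t+S^i_t/\beta_t$, and obtain the \emph{moreover} statement from the correspondence $\theta_t=\varphi_{t-}$ in \eqref{eq:hToThetaMap}. The only cosmetic difference is that you cancel terms at the integrated level where the paper equates differentials before integrating, and the paper additionally double-checks the jump condition $\Delta\varphi^{\beta h(i)}_t\cdot S_t=\Delta D^i_t$ explicitly; neither changes the substance.
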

\begin{proof}
We proved in the previous proposition the self-financing condition \eqref{eq:dynOfPsiBeta} for any self-financing trading strategy $\varphi$. In particular,
\beq\label{eq:TildeYBeta}
\diff \widetilde{\mathcal{Y}}_u^{i} = \diff \widetilde{\Psi}_u^\beta(\varphi^{\beta h(i)}) = \diff \widetilde{G}_u^{\beta,i} = \diff \left(\frac{S^i}{\beta}\right)_u + \frac{\diff D_u^i}{\beta_{u-}}+\diff\left[ D^i, \frac{1}{\beta}\right]_u
\eeq
recalling that since $\widetilde{G}_t^{\beta,b}=1$, then $\diff \widetilde{G}_t^{\beta,b}=0$. Integrating in $(0,t]$ on both sides we obtain 
\[
\widetilde{\mathcal{Y}}_t^{i} = \widetilde{\mathcal{Y}}_0^{i} + \widetilde{G}_t^{\beta,i}- \widetilde{G}_0^{\beta,i} = \widetilde{G}_t^{\beta,i} + \frac{\varphi^b_0 \beta_0 + S^i_0}{\beta_0} - \frac{S^i_0}{\beta_0} = \widetilde{G}_t^{\beta,i} + \varphi^b_0
\]
which is the thesis. On the other hand,
\bes
\diff \widetilde{\mathcal{Y}}_t^{i} &:=& \diff \left(\frac{\varphi^b \beta + S^i}{\beta}\right)_t = \diff \varphi_t^b + \diff \left(\frac{S^i}{\beta}\right)_t 
\ees
Comparing this with the self-financing condition \eqref{eq:TildeYBeta} gives
\[
 \diff \varphi_t^b =   \frac{\diff D_t^i}{\beta_{t-}}+\diff\left[ D^i, \frac{1}{\beta}\right]_t 
\]
so by integrating $\varphi^b_t = \varphi^b_0 + \int_0^t \diff \varphi^b_u$ we obtain \eqref{eq:h0tbeta}. The last statement is, recalling \eqref{eq:hToThetaMap}, since $\theta_t^{\beta h(i)}=\varphi_{t-}^{\beta h(i)}$. Also note that, using
\eqref{eq:JumpsOfSemiMgIntegral}-\eqref{eq:JumpsOfQuadraticCov},
\bes
\Delta \varphi_t^{\beta h(i)}\cdot S_{t} &=& \Delta \varphi^b_t\,\beta_t = \left\{\frac{\Delta D_t^i}{\beta_{t-}}+\Delta D^i_t \left(\frac{1}{\beta_t}-\frac{1}{\beta_{t-}}\right)\right\}\beta_{t}\\
&=&\beta_t\,\frac{\Delta D_t^i}{\beta_{t-}}+\beta_t\,\Delta D^i_t \left(\frac{\beta_{t-}-\beta_{t}}{\beta_t\,\beta_{t-}}\right) = \Delta D^i_t.
\ees
According to \eqref{eq:hToThetaMap}, the above quantity must be equal to $\theta^{\beta h(i)}_t\cdot \Delta D_t = \Delta D_t^i$ which is confirmed (recalling that $\beta$ does not pay dividends by construction).
\end{proof}

\begin{corollary} In case $D^i$ has finite variation, we have
\[
\widetilde{G}_t^{\beta,i} =  \frac{S_t^i}{\beta_t}+\int_0^t  \frac{\diff D^i_u}{\beta_{u-}} + \sum_{0<u\leq t} \Delta D^i_u\,\Delta (\beta^{-1})_u.
\] 
If, in addition, either $D$ or $\beta$ are continuous
\[
\widetilde{G}_t^{\beta,i} =  \frac{S_t^i}{\beta_t}+\int_0^t  \frac{\diff D^i_u}{\beta_{u-}}
\]
which is exactly the same formulation as that used for the Risk Neutral measure with a different deflator $\beta$.
\end{corollary}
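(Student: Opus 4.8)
The plan is to read the two target identities directly off the definition \eqref{eq:GainProcessGeneralMeasure}, simplifying the single term that distinguishes them from the thesis, namely the quadratic covariation integral $\int_0^t \diff\left[D^i,\frac{1}{\beta}\right]_u$. Since the summand $S^i_t/\beta_t$ and the Stieltjes integral $\int_0^t \diff D^i_u/\beta_{u-}$ already appear verbatim in both claimed formulae, the entire argument reduces to evaluating $\left[D^i,\beta^{-1}\right]_t$ under the stated hypotheses.

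First I would invoke the finite variation assumption on $D^i$. A process of finite variation carries no continuous local martingale part, so in the canonical decomposition of the quadratic covariation the continuous contribution drops out and only the common jumps survive; concretely, using the jump structure of the covariation (see \eqref{eq:JumpsOfQuadraticCov} and Appendix \ref{sec:SemiMartingales}),
\[
\left[D^i, \beta^{-1}\right]_t = \sum_{0<u\leq t} \Delta D^i_u \, \Delta(\beta^{-1})_u.
\]
Substituting this into \eqref{eq:GainProcessGeneralMeasure} produces at once the first displayed identity of the corollary.

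For the second formula I would simply observe that the jump sum collapses under either additional hypothesis: if $D$ is continuous then $\Delta D^i_u = 0$ for every $u$, whereas if $\beta$ is continuous then $1/\beta$ is continuous as well (recall $\beta_t>0$ a.s.~by assumption), hence $\Delta(\beta^{-1})_u = 0$ for every $u$. In either case $\sum_{0<u\leq t}\Delta D^i_u\,\Delta(\beta^{-1})_u = 0$, leaving $\widetilde{G}^{\beta,i}_t = S^i_t/\beta_t + \int_0^t \diff D^i_u/\beta_{u-}$. Comparing this with the Risk-Neutral expression \eqref{eq:GTilde}, and recalling that the bank account $B$ is continuous so that $B_{u-}=B_u$, confirms that the two share an identical structure with $\beta$ in place of $B$.

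There is no substantive obstacle here: the corollary is pure semimartingale bookkeeping once the covariation is understood. The only point deserving explicit care is the justification that a finite-variation integrator contributes nothing to the \emph{continuous} part of the covariation, so that $\left[D^i,\beta^{-1}\right]$ reduces to its jump sum — but this is a standard property recorded in the appendix, and every remaining manipulation is a direct substitution.
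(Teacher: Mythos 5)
Your proposal is correct and follows essentially the same route as the paper: the paper's proof is a one-line appeal to \eqref{eq:QCovOfFVProcess} (the reduction of $[X,Y]$ to the jump sum when one factor has finite variation), which is exactly the property you identify and apply, followed by the same observation that the jump sum vanishes when either $D$ or $\beta$ is continuous.
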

\begin{proof}
Straightforward application of \eqref{eq:QCovOfFVProcess}.
\end{proof}

We can now extend the result of \cite{Bjork} pp.~244-245 -- which only tackles the continuous processes case -- to the general semimartingale case.

\begin{theorem}[$\mathbf{1}^{\text{\textbf{st}}}$-FTAP with Dividends under a General EMM]\label{th:fftapWithDivsGeneralEMM}
Denote by $\q^\beta$ the measure with numéraire $\beta$. Then, the deflated gain process $\widetilde{G}_t^{\beta,i}$, defined in (\ref{eq:GainProcessGeneralMeasure}), is a $\q^\beta$ local martingale.
\end{theorem}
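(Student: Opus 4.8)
The plan is to mirror the proof of Theorem~\ref{th:fftapWithDivs}, replacing the bank-account buy-and-hold portfolio by its general-num\'eraire counterpart from Lemma~\ref{le:BuyAndHoldGeneralEMM} and transporting the $\q$-local martingale property to $\q^\beta$ through the explicit density $L_t=\frac{\beta_t}{B_t}\frac{B_0}{\beta_0}$. The crucial structural fact is that $\widetilde{G}^{\beta,i}$ is, up to the additive constant $\varphi^b_0$, the $\beta$-deflated value of a \emph{tradeable, non-dividend-paying} self-financing portfolio: Lemma~\ref{le:BuyAndHoldGeneralEMM} gives $\widetilde{\mathcal{Y}}^i_t=\varphi^b_0+\widetilde{G}^{\beta,i}_t$, where $\mathcal{Y}^i=\Psi(\varphi^{\beta h(i)})$ holds one unit of $S^i$ and recycles all its dividends into the dividend-free num\'eraire asset $\beta$, so that no cash ever leaves the portfolio. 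Since adding a constant preserves the local martingale property, it suffices to show that $\widetilde{\mathcal{Y}}^i=\mathcal{Y}^i/\beta$ is a $\q^\beta$-local martingale.

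First I would note that $\mathcal{Y}^i$, being a self-financing portfolio in the original assets, has $B$-deflated value $\mathcal{Y}^i/B=\widetilde{\Psi}(\varphi^{\beta h(i)})$, which is a $\q$-local martingale by the Corollary to Theorem~\ref{th:fftapWithDivs} (equivalently, by Proposition~\ref{le:selfFinPortWithDivsGeneralEMM} specialized to $\beta\equiv B$). Then I would invoke the abstract Bayes rule for the change of measure from $\q$ to $\q^\beta$: since $L$ is a strictly positive $\q$-martingale --- which is exactly the hypothesis that $\beta$ is a \vvirg{$\q$-price process} --- a process $N$ is a $\q^\beta$-local martingale if and only if $NL$ is a $\q$-local martingale. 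Applying this to $N=\widetilde{\mathcal{Y}}^i$ and substituting the explicit density gives
\[
\widetilde{\mathcal{Y}}^i_t\,L_t=\frac{\mathcal{Y}^i_t}{\beta_t}\,\frac{\beta_t}{B_t}\,\frac{B_0}{\beta_0}=\frac{B_0}{\beta_0}\,\frac{\mathcal{Y}^i_t}{B_t},
\]
a constant multiple of the $\q$-local martingale $\mathcal{Y}^i/B$, hence itself a $\q$-local martingale. Therefore $\widetilde{\mathcal{Y}}^i$ is a $\q^\beta$-local martingale, and subtracting the constant $\varphi^b_0$ yields the assertion for $\widetilde{G}^{\beta,i}$. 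Alternatively, one could argue directly as in Theorem~\ref{th:fftapWithDivs} by adjoining $\mathcal{Y}^i$ to the market as a non-dividend-paying asset and applying Theorem~\ref{th:fftap} with num\'eraire $\beta$; this shortcut, however, tacitly identifies the measure defined by $L$ with the equivalent martingale measure for $\beta$, an identification best justified precisely by the density computation above.

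The step demanding the most care is the passage through the measure change at the level of \emph{local} rather than true martingales: one must localize $\widetilde{\mathcal{Y}}^i$ by a sequence of stopping times and verify that the equivalence \textquotedblleft $N$ is a $\q^\beta$-local martingale iff $NL$ is a $\q$-local martingale\textquotedblright\ is stable under localization, which holds because $L>0$ and $1/L$ is the density process of $\q$ with respect to $\q^\beta$. A subsidiary point is that the argument genuinely needs the \emph{portfolio} $\mathcal{Y}^i$ and not the bare gain process: because of the dividend-reinvestment integral in \eqref{eq:GainProcessGeneralMeasure}, $\widetilde{G}^{\beta,i}$ does not transform multiplicatively under a change of deflator, whereas the price $\mathcal{Y}^i$ does, and it is exactly this that makes the product $\widetilde{\mathcal{Y}}^i L$ collapse to $\mathcal{Y}^i/B$ up to a constant.
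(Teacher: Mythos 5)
Your proposal is correct, and it rests on the same structural pillar as the paper's own proof: Lemma~\ref{le:BuyAndHoldGeneralEMM} identifies $\widetilde{G}^{\beta,i}$ (up to the additive constant $\varphi^b_0$, removed by setting $\varphi^b_0=0$) with the $\beta$-deflated value of the buy-and-hold portfolio $\mathcal{Y}^i$, which is a tradeable non-dividend-paying asset. Where you diverge is the final step. The paper simply adjoins $\mathcal{Y}^i$ to the market and appeals to Theorem~\ref{th:fftap} under the numéraire $\beta$ to assert that $\mathcal{Y}^i/\beta$ \emph{must} be a $\q^\beta$-local martingale --- exactly the ``alternative'' you describe and mildly criticize. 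You instead first establish that $\mathcal{Y}^i/B$ is a $\q$-local martingale (via the corollary to Theorem~\ref{th:fftapWithDivs}) and then transport this property through the explicit density $L_t=\frac{\beta_t}{B_t}\frac{B_0}{\beta_0}$ using the Bayes-type lemma (stated in the paper's appendix) that $N$ is a $\q^\beta$-local martingale iff $NL$ is a $\q$-local martingale, observing that $\widetilde{\mathcal{Y}}^i L$ collapses to a constant multiple of $\mathcal{Y}^i/B$. Your route buys something concrete: it pins down that the measure $\q^\beta$ defined by the density $L$ is indeed the one under which the $\beta$-deflated portfolio is a local martingale, rather than relying on the abstract existence statement of the FTAP for the extended market; the paper's route is shorter but, as you note, leaves that identification tacit. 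Your closing remark --- that the argument genuinely requires the portfolio $\mathcal{Y}^i$ rather than the bare gain process, because $\widetilde{G}^{\beta,i}$ does not transform multiplicatively under a change of deflator --- is exactly the right caveat and explains why the dividend-reinvestment integral in \eqref{eq:GainProcessGeneralMeasure} carries the quadratic covariation correction term.
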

\begin{proof}
The portfolio $\widetilde{\mathcal{Y}}_u^{i}$ of the non-dividend-paying-self-financing trading strategy $\varphi^{\beta h(i)}$ must be a $\q^\beta$-local martingale due to the previous considerations. By setting $\varphi^{b}_0\setto 0$, one obtains the thesis.
\end{proof}

\begin{corollary}
$\widetilde{\Psi}^\beta_t$ is a local martingale under $\q^\beta$.
\end{corollary}
\begin{proof}
We proved \eqref{eq:dynOfPsiBeta}. Therefore, $\widetilde{\Psi}_t^\beta$ is a $\q^\beta$-local martingale being a sum of $\q^\beta$-local martingales 
$\widetilde{G}_t^{\beta,i}$ (by the previous theorem).
\end{proof}

\subsubsection{Spot Price}

From now on until the end of Section \ref{sec:GenTheory}, in order to simplify the notation, we perform a small abuse of notation and we no longer interpret $S,\widetilde{G},D,q,\Phi...$ as vector processes but as generic scalar components $S^i,\widetilde{G}^i, D^i,q^i,\Phi^i...$ of these vectors. We detail in this section some results on assets spot price, straightforwardly obtained using the $1^{\text{st}}$-FTAP.

\begin{corollary}\label{cor:exDivCumDivPrices}
Under the Risk Neutral measure $\q$, under the hypothesis that $\widetilde{G}$ is a strict martingale (not only local), one obtains:
\be\label{eq:SmartDivs}
S_t = B_t \,\e_t\left[\frac{S_T}{B_T}+\int_t^T \frac{\diff D_u}{B_u}\right]
\ee
which says that today's spot price is equal to the expected value of all future discounted earnings arising from holding the stock: the sum of the discounted final value of the asset and of the discounted future flow of dividends.
More generally,
\be\label{eq:SexpectedValueWithGenMeas}
S_t = \beta_t \,\e_t^\beta\left[\frac{S_T}{\beta_T}+\int_t^T \left\{\,\frac{\diff D_u}{\beta_{u-}}
+\diff\left[ D,  \frac{1}{\beta}\right]_u\,\right\}\right]
\ee
\end{corollary}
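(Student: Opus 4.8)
The plan is to read both identities directly off the martingale property of the deflated gain process, which the First Fundamental Theorems with dividends already supply. For the first identity I would invoke Theorem \ref{th:fftapWithDivs}, giving that $\widetilde{G}$ is a $\q$-local martingale; the stated hypothesis upgrades this to a genuine (strict) $\q$-martingale, so that for $t\leq T$ one has $\widetilde{G}_t=\e_t[\widetilde{G}_T]$. Substituting the explicit form \eqref{eq:GTilde} turns this into
\begin{equation}
\frac{S_t}{B_t}+\int_0^t\frac{\diff D_u}{B_u}=\e_t\!\left[\frac{S_T}{B_T}+\int_0^T\frac{\diff D_u}{B_u}\right].
\end{equation}

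The key manipulation is then to note that $\int_0^t B_u^{-1}\diff D_u$ is $\fst_t$-measurable (it equals its own conditional expectation), so it factors outside the expectation on the right and cancels against the identical term on the left. Writing $\int_0^T=\int_0^t+\int_t^T$, this leaves
\begin{equation}
\frac{S_t}{B_t}=\e_t\!\left[\frac{S_T}{B_T}+\int_t^T\frac{\diff D_u}{B_u}\right],
\end{equation}
and multiplying through by the strictly positive, $\fst_t$-measurable factor $B_t$ yields \eqref{eq:SmartDivs}.

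For the general deflator the argument is identical in structure. Here I would invoke Theorem \ref{th:fftapWithDivsGeneralEMM} to obtain that $\widetilde{G}^{\beta}$ is a $\q^\beta$-local martingale, again promoted to a strict martingale under the corresponding hypothesis, so that $\widetilde{G}^\beta_t=\e_t^\beta[\widetilde{G}^\beta_T]$. Inserting the explicit expression \eqref{eq:GainProcessGeneralMeasure}, splitting $\int_0^T=\int_0^t+\int_t^T$ in both the $\diff D_u/\beta_{u-}$ term and the covariation term $\diff[D,1/\beta]_u$, and cancelling the $\fst_t$-measurable partial integral $\int_0^t\{\diff D_u/\beta_{u-}+\diff[D,1/\beta]_u\}$ reproduces \eqref{eq:SexpectedValueWithGenMeas} after multiplying by $\beta_t$.

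The conceptual content is entirely carried by the two FTAP theorems; what remains is the routine unwinding just described. The only genuinely delicate point is the passage from local to true martingale, which is precisely what the strict-martingale hypothesis assumes away and which would otherwise require uniform-integrability control, here underwritten by the standing assumption that $\int_0^T B_u^{-1}\diff D_u$ is $\q$-integrable (and its $\beta$-analogue). A secondary check is that the partial integrals up to $t$ are $\fst_t$-measurable, so that they legitimately factor out of the conditional expectation; this is immediate since the integrators are adapted and the integrals are RCLL-adapted. I expect no real obstacle beyond this bookkeeping once the strict-martingale hypothesis is in force.
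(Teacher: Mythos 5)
Your proposal is correct and follows exactly the route the paper takes: both identities are read off the strict-martingale property $\widetilde{G}_t=\e_t[\widetilde{G}_T]$ (resp.\ $\widetilde{G}^{\beta}_t=\e_t^{\beta}[\widetilde{G}^{\beta}_T]$) by splitting the cumulative dividend integral at $t$ and cancelling the $\fst_t$-measurable part. The paper's own proof is just a one-line statement of this; your write-up supplies the same bookkeeping in full.
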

\begin{proof}
All results are straightforwardly derived from the definitions of $\widetilde{G},\,\widetilde{G}^{\beta}$ and from the fact that $\widetilde{G}_t=\e_t[\widetilde{G}_T]$
and $\widetilde{G}_t^{\beta}=\e_t^\beta[\widetilde{G}_T^{\beta}]$.
\end{proof}

\begin{proposition}\label{prop:propDivsDiscount}
Let us assume (\ref{eq:RiskNeutralDynWithDivs})-(\ref{eq:RiskNeutraldriftWithDivs}) with
\[
\diff D_t \setto q_t\,S_t \diff t +\diff \Phi_t
\]
where $q$ is the stochastic proportional (to the asset value) dividend rate.
Moreover, $\Phi$ is any residual (with respect to the total dividend $D$) locally bounded semimartingale (but very often a pure jump process representing the cash dividends) with $\Phi_0=\Phi_{0-}=\vec{0}$. Defining $\mu := r-q$, we have that (not only $\widetilde{G}$ but also)
\[
\TX_t := \frac{S_t}{B_t^{\mu}} +\int_0^t \frac{\diff \Phi_u}{B^\mu_u} 
\]
is a $\q$-local martingale and, in case it is a strict martingale, for any $T>t$
\beq\label{eq:PriceFormulaWithChangeOfDiscountingRate}
S_t = B_t^\mu \,\e_t\left[\frac{S_T}{B_T^{\mu}}+\int_t^T \frac{\diff \Phi_u}{B_u^{\mu}}\right]
\eeq
\end{proposition}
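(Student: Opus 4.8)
The plan is to reduce everything to the already-established fact (Theorem~\ref{th:fftapWithDivs}) that the deflated gain process
\[
\widetilde{G}_t = \frac{S_t}{B_t} + \int_0^t \frac{\diff D_u}{B_u}
\]
is a $\q$-local martingale, and then to exhibit $\TX$ as a stochastic integral against $\widetilde{G}$. First I would substitute the assumed decomposition $\diff D_u = q_u S_u\diff u + \diff\Phi_u$ into $\widetilde{G}$ and introduce the auxiliary bank account $B_t^q := \esp^{\int_0^t q_u\diff u}$, which is continuous and of finite variation, so that $B_t^\mu = B_t/B_t^q$ and hence $1/B_t^\mu = B_t^q/B_t$. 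The key observation is then the identity $S_t/B_t^\mu = B_t^q\,(S_t/B_t)$, which expresses the spot price discounted at rate $\mu$ as the ordinary $\q$-deflated price $S_t/B_t$ multiplied by the smooth factor $B_t^q$.

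The central computation is an integration by parts on the product $B_t^q\,(S_t/B_t)$. Since $B^q$ is continuous with finite variation, the quadratic-covariation term vanishes (by \eqref{eq:QCovOfFVProcess}) and $B_{t-}^q = B_t^q$, so the product rule gives $\diff(B_t^q\,S_t/B_t) = B_t^q\,\diff(S_t/B_t) + (S_{t-}/B_t)\,q_t B_t^q\diff t$. I would then use $\diff(S_t/B_t) = \diff\widetilde{G}_t - \diff D_t/B_t = \diff\widetilde{G}_t - q_t(S_{t-}/B_t)\diff t - \diff\Phi_t/B_t$, exploiting that in the Lebesgue $\diff t$ integral $S_t$ and $S_{t-}$ agree almost everywhere and that $B$ is continuous. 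The two $q_t(S_{t-}/B_t)B_t^q\diff t$ contributions then cancel exactly, leaving $\diff(B_t^q S_t/B_t) = B_t^q\,\diff\widetilde{G}_t - \diff\Phi_t/B_t^\mu$. Adding back the $\diff\Phi_t/B_t^\mu$ term coming from the definition of $\TX$ yields $\diff\TX_t = B_t^q\,\diff\widetilde{G}_t$. Because $B^q$ is continuous, predictable and locally bounded, this stochastic integral of a local martingale is again a $\q$-local martingale, which establishes the first claim.

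For the pricing formula, assuming $\TX$ is a strict martingale, I would write $\TX_t = \e_t[\TX_T]$, note that $\int_0^t \diff\Phi_u/B_u^\mu$ is $\fst_t$-measurable so that it cancels on both sides, leaving $S_t/B_t^\mu = \e_t[S_T/B_T^\mu + \int_t^T \diff\Phi_u/B_u^\mu]$, and finally multiply through by $B_t^\mu$, which is $\fst_t$-measurable since $q$ is predictable and $B_t^\mu$ depends only on rates up to time $t$, to obtain \eqref{eq:PriceFormulaWithChangeOfDiscountingRate}.

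The only delicate point is the exact cancellation of the proportional-dividend drift: it hinges on both $B$ and $B^q$ being continuous of finite variation, so that no covariation terms survive and left limits may be freely interchanged with values inside the $\diff t$ integrals. Everything else is a routine application of the product rule and of the stability of local martingales under integration of locally bounded predictable integrands.
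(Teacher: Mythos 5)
Your proof is correct and follows essentially the same route as the paper's: both hinge on a product-rule/deflation computation in which the proportional-dividend drift $q_t S_{t-}\diff t$ cancels (using that $B$ and $B^q$ are continuous with finite variation and that $S_u=S_{u-}$ a.e.\ under Lebesgue integration), exhibiting $\TX$ as the integral of a continuous, bounded, predictable process against a local martingale. The only cosmetic difference is that you integrate $B^q$ against $\widetilde{G}$ (invoking Theorem \ref{th:fftapWithDivs}), whereas the paper substitutes the SDE \eqref{eq:RiskNeutralDynWithDivs}--\eqref{eq:RiskNeutraldriftWithDivs} directly and lands on $\diff\TX_t=(B^\mu_t)^{-1}\diff M_t$; since $\diff\widetilde{G}_t=B_t^{-1}\diff M_t$, the two identities coincide.
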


\begin{proof}
Defining $\widetilde{S}:= S (B^\mu)^{-1}$ and using \eqref{eq:processDeflatedByBankAccount}-\eqref{eq:RiskNeutralDynWithDivs}-\eqref{eq:RiskNeutraldriftWithDivs}, we have
\bes
\diff \widetilde{S}_u &=& \frac{\diff S_u}{B^\mu_u} - \mu_u \frac{S_{u-}}{B^\mu_u} \diff u \\
&=& \frac{1}{B^\mu_u}\Big[ (r_u-q_u) S_{t-}\diff t +  \diff M_u -\diff \Phi_u- \mu_u  S_u \diff u\Big]\\
&=& \frac{1}{B^\mu_u}\Big[ \diff M_u -\diff \Phi_u \Big]
\ees
Integrating in $(0,t]$ on both sides
\[
\TX_t := \widetilde{S}_t +\int_0^t \frac{1}{B^\mu_u} \diff \Phi_u= \widetilde{S}_0 +\int_0^t \frac{1}{B^\mu_u} \diff M_u = \TX_0 +\int_0^t \frac{1}{B^\mu_u} \diff M_u
\]
where the last equality is since $\Phi_0=\Phi_{0-} =0$ by hypothesis. Moreover, taking the conditional expectation we obtain 
\[
\e_t[\TX_T] :=  \TX_0 + \e_t\left[ \int_0^T \frac{1}{B^\mu_u} \diff M_u \right] =  \TX_0 +   \int_0^t \frac{1}{B^\mu_u} \diff M_u  = \TX_t
\]
where we exploited Property 10 of Proposition \ref{prop:SemiMgIntegralProperties}. Hence, $\TX$ is a $\q$-local martingale. Substituting on both sides the definition of $\TX$ we obtain the second result.
Alternatively, in case the model is Markovian and all processes are continuous (so that all partial derivatives are well defined) and driven by a Brownian vector, this result can be derived applying the Feynman-Kac theorem
to expected value \eqref{eq:SmartDivs} to obtain the corresponding PDE. In this PDE perform a change of discounting rate by a change of the continuous payoff and then apply again the Feynman-Kac theorem to this modified PDE to obtain the expected value \eqref{eq:PriceFormulaWithChangeOfDiscountingRate}.
\end{proof}

\begin{remark}\label{rk:PropDivsToChangeOfDiscountingRate}
By (\ref{eq:SmartDivs}) and the above proposition
\[
S_t = B_t \,\e_t\left[\frac{S_T}{B_T}+\int_t^T \frac{\diff \Phi_u}{B_u} + \int_t^T \frac{q_u S_u \diff u}{B_u}\right] = B_t^\mu \,\e_t\left[\frac{S_T}{B_T^{\mu}}+\int_t^T \frac{\diff \Phi_u}{B_u^{\mu}}\right]
\]
which is a useful trick for asset pricing (especially when interpreting $S$ as a financial derivative, see Section \ref{sec:assetPricing}): it transforms a continuous dividend 
cash flow into a change of discounting rate.
\end{remark}

\begin{remark}
It is important to underline here that $B^\mu$ (as any other domestic bank account different from $B$) \emph{must be} only a mathematical tool and not a proper tradable asset, since its dynamics is written in any measure as $\diff B^\mu_t = \mu_t B^\mu_t \diff t$ and therefore, by absence of arbitrage, if it were a tradable asset it would drift with the risk-free interest rate $r$.\\
In particular, there is no way to change the measure to a measure, say $\q^\mu$, with numéraire $B^\mu$ since
\[
\frac{\diff \q^\mu}{\diff \q} \bigg|_{\ft}= \frac{B^\mu_t}{B_t} \frac{B_0}{B^\mu_0}
\]  
has no martingale part, therefore it could be a $\q$-martingale (and hence a \vvirg{$\q$-price process}) only if it were a constant, i.e.~in case $r\equiv\mu$ so that $\q\equiv \q^\mu$.
\end{remark}

\subsubsection{Forward Price}\label{sec:ForwardPrice}

We specify that the interest rates in this section are considered as stochastic.

\begin{definition}
The Forward price $F_t(T)$ is the par rate of a forward contract, i.e.~for $t\leq T$ the strike $K$ such that
\[
0=\e_t \left[\frac{B_t}{B_T}(S_T-K)\right]
\]
\end{definition}

\begin{proposition}
The Forward price can also be written as $F_t(T)=\e^T_t [S_T]$.
\end{proposition}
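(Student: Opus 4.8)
The plan is to start from the definition of the forward price and manipulate the defining equation into a form that isolates $F_t(T)$ as a conditional expectation under the $T$-forward measure $\q^T$. Since the strike $K$ in the definition is deterministic (it is a fixed number chosen at time $t$, hence $\fst_t$-measurable), I would first pull $K$ out of the conditional expectation in the defining relation $0=\e_t\!\left[\frac{B_t}{B_T}(S_T-K)\right]$, obtaining
\[
\e_t\!\left[\frac{B_t}{B_T}S_T\right]=K\,\e_t\!\left[\frac{B_t}{B_T}\right].
\]
By the definition of the zero-coupon bond price given in the Technical Setup, the right-hand factor is exactly $K\,P_t(T)$, so that $F_t(T)=K=\frac{1}{P_t(T)}\,\e_t\!\left[\frac{B_t}{B_T}S_T\right]$.

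The key step is then to recognize the right-hand side as an expectation under the $T$-forward measure. Recall that $\q^T$ is defined as the measure with numéraire $P_\cdot(T)$. I would write the Radon--Nikodym density of $\q^T$ with respect to $\q$ along the lines of the density $L_t$ introduced before Proposition~\ref{le:selfFinPortWithDivsGeneralEMM}, namely
\[
\frac{\diff \q^T}{\diff \q}\bigg|_{\fst_t}=\frac{P_t(T)}{B_t}\,\frac{B_0}{P_0(T)},
\]
valid because $P_\cdot(T)/B$ is a $\q$-martingale (the $T$-bond is a non-dividend-paying $\q$-price process). Applying the abstract Bayes / change-of-numéraire formula for conditional expectations then converts the $\q$-expectation $\frac{1}{P_t(T)}\e_t\!\left[\frac{B_t}{B_T}S_T\right]$ into $\e_t^T[S_T]$, since the discount-bond factors and bank-account factors cancel against the density ratio. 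This yields $F_t(T)=\e_t^T[S_T]$, which is the claim.

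The main obstacle I anticipate is purely a matter of justifying the change of numéraire rigorously rather than the algebra itself: one must verify that $S_T$ is integrable under $\q^T$ (equivalently that $B_T^{-1}S_T$ is $\q$-integrable), so that the conditional expectations are well defined and the Bayes formula applies. This is implicitly covered by the standing integrability assumptions of the paper, so I would invoke those. A secondary, lighter point is to confirm that $K$ being deterministic is enough to factor it out; alternatively, one can phrase the whole argument by noting that under $\q^T$ the deflated process $S_\cdot/P_\cdot(T)$ is a martingale, so that $\e_t^T[S_T/P_T(T)]=S_t/P_t(T)$, and since $P_T(T)=1$ this gives $\e_t^T[S_T]=S_t/P_t(T)=F_t(T)$ directly — matching the spot-forward relation $F_t(T)=S_t/P_t(T)$ that follows from the first display. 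Either route reaches the statement, and I would present the change-of-measure computation as the cleaner one.
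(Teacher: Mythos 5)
Your main argument is correct and is essentially the paper's proof: both rest on the change of measure between $\q$ and $\q^T$ with density $\frac{P_t(T)}{B_t}\frac{B_0}{P_0(T)}$; the paper simply applies Bayes' formula to the whole bracket $\frac{B_t}{B_T}(S_T-K)$ at once and then reads off $K=\e_t^T[S_T]$, whereas you first factor out the $\fst_t$-measurable $K$ and then change measure on the remaining term. That is a trivial reordering, and your attention to the integrability of $B_T^{-1}S_T$ is a reasonable (if routine) addition.

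One warning about the ``cleaner'' alternative you sketch at the end: in this section $S$ is a \emph{dividend-paying} asset, so $S_\cdot/P_\cdot(T)$ is \emph{not} a $\q^T$-martingale and $F_t(T)\neq S_t/P_t(T)$ in general; it is only the deflated gain process (spot plus accumulated discounted dividends) that is a martingale. Indeed the very next proposition in the paper gives $F_t(T)=\frac{1}{P_t(T)}\bigl\{S_t - B_t\,\e_t[\int_t^T B_u^{-1}\diff D_u]\bigr\}$, i.e.\ formula \eqref{eq:FwdWithDivs}, which reduces to $S_t/P_t(T)$ only when $D\equiv 0$. Likewise your claim that the first display yields $\e_t\bigl[\tfrac{B_t}{B_T}S_T\bigr]=S_t$ is false here for the same reason. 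Stick with the change-of-measure computation; do not present the martingale-of-$S/P_\cdot(T)$ route as an equivalent derivation.
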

\begin{proof}
A straightforward change of measure,
\[
\e_t \left[\frac{B_t}{B_T}(S_T-K)\right]=\e_t^T\left[\frac{P_t(T)}{P_T(T)}(S_T-K)\right]=P_t(T)\,\Big\{\e_t^T \left[S_T\right]-K\Big\}
\]
from which we have the thesis.
\end{proof}

\begin{proposition}
The forward price with stochastic dividends and interest rates writes
\be\label{eq:FwdWithDivs}
F_t(T)=\frac{1}{P_t(T) } \left\{S_t - B_t\,\e_t\left[  \int_t^T \frac{\diff D_u}{B_u}\right]\right\}
\ee 
where one should note that inside the curly brackets we have only $\fst_t$-measurable quantities (the asset value minus the expected value of the discounted future dividends, not earned by the long 
forward contract holder) that are capitalized (through the $T$-zero coupon bond) to the date $T$ where the forward contract transaction takes place. 
\end{proposition}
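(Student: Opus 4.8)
The plan is to combine the cum-dividend spot-price representation \eqref{eq:SmartDivs} with the forward-measure characterization $F_t(T)=\e_t^T[S_T]$ established in the previous proposition. First I would start from the spot price formula, valid whenever $\widetilde{G}$ is a strict $\q$-martingale, and split its right-hand side so as to isolate the terminal-value term:
\[
B_t \,\e_t\left[\frac{S_T}{B_T}\right] = S_t - B_t\,\e_t\left[\int_t^T \frac{\diff D_u}{B_u}\right].
\]
Note that the dividend term on the right is exactly the quantity appearing inside the curly brackets of \eqref{eq:FwdWithDivs} and will simply be carried along unchanged; only the first term requires manipulation.

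The second step is to recognize the left-hand side as the $\q$-expectation of the stochastically discounted terminal asset value, $\e_t\!\big[\tfrac{B_t}{B_T}S_T\big]$, and to pass to the $T$-forward measure. Using the change of numéraire from $B$ to $P_\cdot(T)$ -- equivalently the identity $\e_t\!\big[\tfrac{B_t}{B_T}X\big]=P_t(T)\,\e_t^T[X]$ for any $\fst_T$-measurable $X$, whose case $X\equiv 1$ recovers the very definition $P_t(T)=\e_t[B_t/B_T]$ -- together with the previous proposition, I would rewrite
\[
B_t \,\e_t\left[\frac{S_T}{B_T}\right] = P_t(T)\,\e_t^T[S_T] = P_t(T)\,F_t(T).
\]
Equating this with the previous display and dividing through by the strictly positive $P_t(T)$ then yields \eqref{eq:FwdWithDivs} directly.

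I expect the only real obstacle to be the bookkeeping of the change-of-measure step: one must confirm that $S_T$ is $\q^T$-integrable (so that $\e_t^T[S_T]$, and hence $F_t(T)$, is well defined) and that the strict-martingale hypothesis underlying \eqref{eq:SmartDivs} is in force, so that the spot-price identity may be invoked rather than merely the local-martingale property. Once these integrability and strict-martingale conditions are granted, the argument reduces to the elementary algebra above, with the numéraire change being the single substantive ingredient.
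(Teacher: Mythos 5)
Your proposal is correct and follows essentially the same route as the paper: both arguments combine the change of numéraire identity $B_t\,\e_t[S_T/B_T]=P_t(T)\,\e_t^T[S_T]$ with the strict-martingale property of $\widetilde{G}$ (which you invoke through its consequence \eqref{eq:SmartDivs}, while the paper applies $\widetilde{G}_t=\e_t[\widetilde{G}_T]$ directly after the measure change). Your explicit remark that the strict-martingale and integrability hypotheses must be in force is a point the paper leaves implicit.
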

\begin{proof}
Under the usual change of numéraire techniques, defining
\[
L_t = \frac{\diff \q^T}{\diff \q\phantom{a}}\Big |_{\fst_t} = \frac{P_t(T)}{B_t}\,\frac{B_0}{P_0(T)}
\]
we have
\bes
F_t(T)&=&\e^T_t[S_T]=\frac{\e_t\left[ L_T\,S_T\right]}{L_t} =  \frac{B_t}{P_t(T)} \,\e_t\left[ \frac{S_T}{B_T}\right]\\
&=& \frac{B_t}{P_t(T)} \,\e_t\left[ \widetilde{G}_T -\int_0^T \frac{\diff D_u}{B_u} \right]\\
&=& \frac{B_t}{P_t(T)} \left\{\, \widetilde{G}_t -\e_t\left[\int_0^T \frac{\diff D_u}{B_u} \right]\right\}\\
&=& \frac{B_t}{P_t(T)} \left\{\, \frac{S_t}{B_t} -\e_t\left[\int_t^T \frac{\diff D_u}{B_u} \right]\right\}
\ees
which is the thesis.
\end{proof}

The above formula can also be justified by replication arguments. We recall the par long forward contract transactions:
\begin{itemize}
		\item At $t$:
				\begin{itemize}
				\item Both parties agree to enter into the contract and set the par strike $K$, no cash transaction takes place (since $K$ is the par strike);
				\end{itemize}
		\item In the interval $(t,T)$:
				\begin{itemize}
				\item No transaction;
				\end{itemize}
		\item At $T$:
				\begin{itemize}
				\item The long forward contract holder obtains the asset (of value $S_T$) and pays $K$.				
				\end{itemize}
\end{itemize}
A static replication strategy can be built as follows:
\begin{itemize}
		\item At $t$:
				\begin{itemize}
				\item One buys at $t$ the asset for a value of $S_t$ in order to deliver it at maturity $T$: hence one holds the asset but has an outflow of cash of $S_t$;
				\item One sells in the market to a third counterparty (say $\mathcal{M}^D$) the future dividends of the asset in the interval $(t,T]$ via a dividend swap%
				\footnote{In general, in dividend swaps, a leg of future dividends of an asset is exchanged with a fixed rate or floating rate leg. In any case one can immediately  sell  this last leg and obtain $V_t^D(T)$ of cash.}, 
				obtaining as cash the value of expected future discounted (net) dividends, i.e.
					\be\label{eq:valueOfFutureDivs}
					V_t^D(T) := B_t\,\e_t\left[  \int_t^T \frac{\diff D_u}{B_u}\right] 
					\ee						
					Note that selling the dividend swap is necessary since, in general, future dividends are stochastic.  In case the dividends were deterministic one can simplify the replication strategy,  
					\[
					V_t^D(T) =  \int_t^T P_t(u) \diff D_u 
					\]
					and hence one can sell at inception, instead of a dividend swap, a strip of $u$-Zero Coupon bonds, with $u\in (t,T]$, and notional $\diff D_u$ obtaining at inception an amount $V_t^D(T)$ of cash (at any expiry $u$ the money obtained by the dividend $\diff D_u$
					is given as notional redemption to the $u$-Zero Coupon bond counterparty): see in particular formula \eqref{eq:DetermAbsDivsFwd} for the deterministic cash dividend case. 
				\item Since the forward contract has a zero cash transaction at $t$, one has to finance the above cash debt, this is implemented with selling a notional $N_t$ of a $T$-zero-coupon bond  for which
					\[
					S_t - V_t^D(T) = N_t \times P_t(T)
					\]
					where on the left-hand side we have the cash amount needed and on the right-hand side the asset value of the zero coupon bond (hence the cash obtained at $t$ from the short position on the bond).
				\end{itemize}
		\item In the interval $(t,T)$:
				\begin{itemize}
				\item For any $u$ in the interval, one receives the (net stochastic realized) dividend $\diff D_u$ from the asset and gives it to the counterparty $\mathcal{M}^D$ (hence all cash transactions offset);
				\end{itemize}
		\item At $T$:
				\begin{itemize}
				\item One still holds the asset (of value $S_T$);
				\item One has to deliver the notional $N_t$ to the $T$-Zero Coupon buyer.
				\end{itemize}
\end{itemize}

Since we have built a replication strategy, by arbitrage $K=N_t$, hence the forward par $F_t(T):= K = N_t$ and the zero coupon notional coincides with the right-hand side of formula \eqref{eq:FwdWithDivs}. 

\begin{proposition}
Fix the same setting/notation as in Proposition \ref{prop:propDivsDiscount} and add the hypothesis that the dividend rate $q$ is deterministic. We have
\beq\label{eq:fwdWithDeterminiticRates}
F_t(T)=\frac{S_t - \int_t^T P_t^\mu(u)\,\e_t^T\left[\diff \Phi_u\right]}{P_t^\mu(T)}
\eeq
where for any $u\geq t$ we define $P_t^\mu(u):=\e_t[B_t^\mu/B_u^\mu]=P_t(u)\,\esp^{\int_t^u q_s\diff s}$.
\end{proposition}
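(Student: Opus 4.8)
The plan is to start from the change-of-discounting-rate pricing formula \eqref{eq:PriceFormulaWithChangeOfDiscountingRate}, which under the present (strict-martingale) hypotheses reads $S_t = B_t^\mu\,\e_t[S_T/B_T^\mu] + B_t^\mu\,\e_t[\int_t^T \diff\Phi_u/B_u^\mu]$, and to show that the first (terminal) term equals $P_t^\mu(T)\,F_t(T)$ while the second (dividend) term equals $\int_t^T P_t^\mu(u)\,\e_t^T[\diff\Phi_u]$; solving the resulting identity $S_t = P_t^\mu(T)\,F_t(T) + \int_t^T P_t^\mu(u)\,\e_t^T[\diff\Phi_u]$ for $F_t(T)$ then yields \eqref{eq:fwdWithDeterminiticRates}. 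The whole point of working with $B^\mu$ is that the proportional dividend $q_t S_t\diff t$ has already been absorbed into the discounting (Remark \ref{rk:PropDivsToChangeOfDiscountingRate}), so only the residual stream $\diff\Phi$ survives as an explicit cash flow.

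For the terminal term I would exploit that $q$ is deterministic. Since $\mu=r-q$, one has $B_T^\mu = B_T\,\esp^{-\int_0^T q_s\diff s}$ and $B_t^\mu=B_t\,\esp^{-\int_0^t q_s\diff s}$, so the deterministic exponentials factor out of the expectation and $B_t^\mu\,\e_t[S_T/B_T^\mu] = \esp^{\int_t^T q_s\diff s}\,B_t\,\e_t[S_T/B_T]$. A change to the $T$-forward measure gives $B_t\,\e_t[S_T/B_T]=P_t(T)\,\e_t^T[S_T]=P_t(T)\,F_t(T)$ by the earlier characterization $F_t(T)=\e_t^T[S_T]$, and combining the exponential factor with $P_t(T)$ reproduces exactly $P_t^\mu(T)=P_t(T)\,\esp^{\int_t^T q_s\diff s}$. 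Hence the terminal term is $P_t^\mu(T)\,F_t(T)$.

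For the dividend term I would interchange expectation and time integral (licit under the standing $\q$-integrability assumption on $\int_0^T B_u^{-1}\diff D_u$), again factor the deterministic exponential via $B_t^\mu/B_u^\mu=\esp^{\int_t^u q_s\diff s}\,B_t/B_u$, reducing the term to $\int_t^T \esp^{\int_t^u q_s\diff s}\,B_t\,\e_t[\diff\Phi_u/B_u]$; it then remains to value each elementary cash flow $B_t\,\e_t[\diff\Phi_u/B_u]$ by a change of numéraire. \emph{This is where the only genuine difficulty lies.} The clean, assumption-free evaluation of a payoff falling at time $u$ is $B_t\,\e_t[\diff\Phi_u/B_u]=P_t(u)\,\e_t^u[\diff\Phi_u]$ under the $u$-forward measure $\q^u$, which would produce $\int_t^T P_t^\mu(u)\,\e_t^u[\diff\Phi_u]$ with the $u$-forward (not the $T$-forward) expectation. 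Passing to $\e_t^T$ as in \eqref{eq:fwdWithDeterminiticRates} is delicate: writing the cash flow directly in the $T$-forward measure gives instead $B_t\,\e_t[\diff\Phi_u/B_u]=P_t(T)\,\e_t^T[\diff\Phi_u/P_u(T)]$, and since $P_\cdot(u)/P_\cdot(T)$ is a $\q^T$-martingale one has $\e_t^T[1/P_u(T)]=P_t(u)/P_t(T)$. Consequently the stated form is recovered precisely when $\diff\Phi_u$ and $1/P_u(T)$ are uncorrelated under $\q^T$ (equivalently, when the $u$- and $T$-forward expectations of $\diff\Phi_u$ coincide and no convexity adjustment is incurred), as happens for instance when the increments of $\Phi$ are independent of interest rates. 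Under that proviso the dividend term equals $\int_t^T P_t^\mu(u)\,\e_t^T[\diff\Phi_u]$ and the rearrangement above closes the argument. I expect this convexity/measure-identification step to be the crux to scrutinise; everything else is bookkeeping with the deterministic factor $\esp^{\int q}$ and the definition of $P_t^\mu$.
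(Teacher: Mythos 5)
Your proof follows the same route as the paper's: start from \eqref{eq:PriceFormulaWithChangeOfDiscountingRate}, pull the deterministic factor $B^q=B/B^\mu$ out of both expectations, and identify the terminal term with $P_t^\mu(T)\,F_t(T)$ via the $T$-forward measure; the paper's proof is essentially your first two paragraphs compressed into a single display followed by ``some algebra''.

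Your scrutiny of the dividend term is well placed, and in fact sharper than the paper, whose proof simply asserts $B_t\,\e_t\left[\diff \Phi_u/B_u\right]=P_t(u)\,\e_t^T\left[\diff \Phi_u\right]$ without comment. As you note, the assumption-free identity is $B_t\,\e_t\left[\diff \Phi_u/B_u\right]=P_t(u)\,\e_t^u\left[\diff \Phi_u\right]$, and rewriting this under $\q^T$ picks up the covariance between $\diff \Phi_u$ and $1/P_u(T)$, so the stated form with $\e_t^T$ is exact only when that covariance vanishes (deterministic rates, or cash dividends independent of rates). This is consistent with the name of label \eqref{eq:fwdWithDeterminiticRates} and with Proposition \ref{prop:ForwardWithDeterministicRates}, where the authors explicitly invoke deterministic rates to identify $\q^{u;fb}$ with $\q^{fb}$; here they leave the analogous step implicit. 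Your argument is correct, and the caveat you isolate is a genuine one that the paper's one-line proof glosses over.
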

\begin{proof}
From \eqref{eq:PriceFormulaWithChangeOfDiscountingRate},
\bes
(S\, B^{q})_t &=& \,B_t \,\e_t\left[  \frac{(S\,B^{q})_T }{B_T }  +  \int_t^T  \frac{ B^{q}_u\diff \Phi_u }{B_u} \right]\\
 &=& \,P_t(T) \,B^{q}_T\, F_t(T)  +  \int_t^T B^{q}_u\,P_t(u)\,\e_t^T\left[\diff \Phi_u\right]
\ees
and the result is obtained with some algebra.
\end{proof}
To be more concrete, the interested reader can refer to Appendix \ref{app:BasicDividendModels} to explore two basic deterministic dividend models: the continuous proportional dividend and the cash dividend case.


\subsubsection{Dynamics\label{sec:dynamics}}


We now study the Risk-Neutral dynamics of a generic asset $S$.

\begin{proposition}\label{prop:RiskNeutralDrift}
Define $M$ as $\q$-local martingale process, and $A$ as a finite variation process with $A_0=M_0=0$. 
Let $D$ be a locally bounded semimartingale cumulative dividend process with $D_0=D_{0-}=0$. Assume the following general dynamics under $\q$:
\be\label{eq:RiskNeutralDynWithDivs}
\diff S_t = \diff A_t +  \diff M_t -\diff D_u
\ee
Observe that with this dynamics $S$ is indeed a semimartingale (being the sum of semimartingales) with dividend dropping $\diff D_u$ explained at Remark \ref{rk:DividendDropping}. Then this dynamics is arbitrage free (more precisely, it respects the NFLVR condition) if we set
\be\label{eq:RiskNeutraldriftWithDivs}
\diff A_t \setto r_t \,S_{t-}\diff t 
\ee
so that $\diff \widetilde{G}_t = \frac{ \diff M_t}{B_t}$.
\end{proposition}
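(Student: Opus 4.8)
The plan is to impose $\diff A_t \setto r_t S_{t-}\diff t$, verify by a direct computation that the deflated gain process $\widetilde{G}$ becomes a $\q$-local martingale, and then invoke the First Fundamental Theorem of Asset Pricing to obtain NFLVR.

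First I would compute $\diff(S_t/B_t)$, exactly as in the proof of Proposition \ref{le:selfFinPortWithDivsGeneralEMM} specialized to $\beta\equiv B$. Since $B_t=\esp^{\int_0^t r_u\diff u}$ is continuous and of finite variation, $1/B$ is continuous of finite variation with $\diff(B^{-1})_t=-r_t B_t^{-1}\diff t$, and $[S,B^{-1}]=0$ by \eqref{eq:QCovOfFVProcess}. The product rule then gives
\[
\diff\!\left(\frac{S_t}{B_t}\right)=S_{t-}\diff(B^{-1})_t+B_t^{-1}\diff S_t=-r_t\frac{S_{t-}}{B_t}\diff t+\frac{1}{B_t}\diff S_t .
\]
Substituting the imposed dynamics $\diff S_t=r_t S_{t-}\diff t+\diff M_t-\diff D_t$, the two drift terms cancel and we are left with $\diff(S_t/B_t)=B_t^{-1}(\diff M_t-\diff D_t)$. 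Recalling $\widetilde{G}_t=S_t/B_t+\int_0^t B_u^{-1}\diff D_u$ from \eqref{eq:GTilde} and adding back $B_t^{-1}\diff D_t$, I obtain $\diff\widetilde{G}_t=B_t^{-1}\diff M_t$, which is precisely the ``so that'' claim.

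Next I would read off the local-martingale property. Integrating yields $\widetilde{G}_t=\widetilde{G}_0+\int_0^t B_u^{-1}\diff M_u$. Because $r$ is bounded, $B$ is bounded above and bounded away from zero on every compact interval, so $1/B$ is a locally bounded predictable integrand; the stochastic integral of such an integrand against the $\q$-local martingale $M$ is again a $\q$-local martingale, hence so is $\widetilde{G}$. To turn this into arbitrage-freeness I would route through the buy-and-hold portfolio of Lemma \ref{le:BDiscountedWealth}, whose deflated value satisfies $\widetilde{Y}^i=\varphi^0_0+\widetilde{G}^i$ and is therefore also a $\q$-local martingale. Since each $Y^i$ is a genuine non-dividend-paying tradable asset, enlarging the market by $Y^1,\dots,Y^n$ exhibits a measure $\q\sim\p$ under which all deflated (non-dividend) prices are local martingales; by the First FTAP (Theorem \ref{th:fftap}) this is equivalent to NFLVR, which is the thesis.

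The main obstacle is the direction of inference rather than any hard estimate: Theorem \ref{th:fftapWithDivs} is phrased as NFLVR $\Rightarrow$ $\widetilde{G}^i$ local martingale, whereas here I need the converse for arbitrage-freeness. The substantive work is therefore the reduction of the dividend-paying market back to the non-dividend First FTAP via the buy-and-hold portfolios, together with the care needed to confirm that $\int_0^\cdot B_u^{-1}\diff M_u$ genuinely preserves the local-martingale property --- which rests on the local boundedness of $1/B$ guaranteed by the standing boundedness assumption on $r$.
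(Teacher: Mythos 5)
Your computation coincides with the paper's: it applies the product rule with $B$ continuous of finite variation to obtain \eqref{eq:processDeflatedByBankAccount}, substitutes the postulated dynamics so that the dividend terms cancel against the $\int_0^\cdot B_u^{-1}\diff D_u$ part of $\widetilde{G}$, and arrives at $\diff \widetilde{G}_t = B_t^{-1}\diff M_t$ under the drift $\diff A_t = r_t S_{t-}\diff t$. Where you genuinely diverge is the final no-arbitrage step. The paper closes the proof by simply asserting that \vvirg{$\widetilde{G}$ must be a $\q$-local martingale by Theorem \ref{th:fftapWithDivs}}, i.e.~it treats the local-martingale property of the deflated gain as the no-arbitrage condition itself and leaves implicit the converse implication (local martingale $\Rightarrow$ NFLVR), which is the direction actually needed for the stated sufficiency claim. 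You correctly flag this asymmetry and fill the gap: you verify that $\int_0^\cdot B_u^{-1}\diff M_u$ is a $\q$-local martingale (with the right justification — local boundedness of the predictable integrand $1/B$, which rests on the paper's standing boundedness assumption on interest rates, via Property 9 of Proposition \ref{prop:SemiMgIntegralProperties}), then reduce the dividend-paying market to a non-dividend one through the buy-and-hold portfolios $Y^i$ of Lemma \ref{le:BDiscountedWealth}, whose deflated values $\varphi^0_0+\widetilde{G}^i$ are $\q$-local martingales, and finally invoke the equivalence (ii)$\Leftrightarrow$(i) of the genuine two-sided FTAP, Theorem \ref{th:fftap}, for the extended market to conclude NFLVR. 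The paper's route buys brevity and matches its overall convention of identifying no-arbitrage with the deflated-gain martingale condition; your route buys logical completeness at the cost of one extra reduction step. Both arguments are correct, and your version is arguably the more faithful proof of the proposition as literally stated.
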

\begin{proof}
For any scalar semimartingale $X$ we have from \eqref{eq:ItoWithJumpsOnProduct} and the fact that $B$ is continuous with finite variation (recalling \eqref{eq:QCovOfFVProcess}):
\be
\diff (X B^{-1})_t &=& B^{-1}_{t-} \diff X_t +X_{t-} \diff (B^{-1})_t + \diff [X, B^{-1}]_t \0\\
&=& B^{-1}_{t-} \diff X_t - X_{t-} \frac{1}{B_t^2}\diff B_t\0\\
&=& \frac{\diff X_t}{B_t} - r_{t} \frac{X_{t-}}{B_t}\diff t\label{eq:processDeflatedByBankAccount}
\ee
that can be used to obtain
\[
\diff \widetilde{G}_t := \diff \left(\frac{S}{B}\right)_t +\frac{\diff D_t}{B_t} = 
\frac{\diff A_t +\diff M_t -\diff D_t - r_t S_{t-}\diff t+\diff D_t}{B_t} = \frac{\diff A_t +\diff M_t  - r_t S_{t-}\diff t}{B_t}
\]
Now, $\widetilde{G}$ must be a $\q$-local martingale by Theorem \ref{th:fftapWithDivs}, hence, with $M$ already being a local martingale, we have the thesis.
\end{proof}


\begin{remark}\label{rk:DividendDropping}
The dynamics (\ref{eq:RiskNeutralDynWithDivs}) is coherent with the concept that, whenever the dividend is paid to the asset holder, the asset value experiences the opposite absolute shock: precisely
\[
\diff S_t = (\ldots) - \diff D_t
\]
and in particular, in case $(\ldots)$ is continuous, we have $\Delta S_t=-\Delta D_t$.  
However, one would expect that the drop is represented by the gross dividend and not by the net dividend: in fact one could argue that the company pays a part of the dividend to the shareholders and a part to the Government via taxes; the share value should drop by the total (i.e.~gross) dividend amount. 
First of all, we should check if the above result is influenced by guessing (\ref{eq:RiskNeutralDynWithDivs}): let us define the gross dividend process with $\widehat{D}$ and assume 
\bes
\diff S_t = \diff \widehat{A}_t +  \diff M_t -\diff \widehat{D}_u
\ees
then we obtain, following the same passages of the proof here above,
\bes
\diff \widehat{A}_t \setto r_t \,S_{t-}\diff t + (\diff \widehat{D}_u-\diff D_u) 
\ees
which is the same result of previous proposition. In fact, the drop with the gross dividend is allowed in the physical measure $\p$ but not in the Risk Neutral measure $\q$: the dividend taxes are (a negative) part of the physical drift that goes away in the passage to the Risk Neutral measure exactly as the expected return of the asset in the physical measure becomes the risk-free rate.
\end{remark}


\subsubsection{Non-Negative and Total Return Assets\label{sec:nonnegasset}}


We now concentrate on assets that bear no liability to the holder, such as a stock or an equity index. In this case, the spot price cannot go negative, and in particular neither can the forward price, because otherwise, as noted in \cite{bu10}, in a forward contract we would exchange a non-negative asset for a negative cash amount, leading to arbitrage opportunities. If we look back to dynamics (\ref{eq:RiskNeutralDynWithDivs})-(\ref{eq:RiskNeutraldriftWithDivs}), it is clear that the non-negativity of \(S\) depends on how \(M\) and \(D\) are modeled. In the following proposition, instead, we adopt a modeling representation originally designed in \cite{ehd2007}, then expanded in \cite{bu10} and finally generalized in \cite{bu18}, for which assets and forward prices are non-negative \emph{by construction}.
Let\footnote{We avoid the complication of managing the case $\widehat{T}\to +\infty$ since of no practical utility.} $t\in [0,\widehat{T}]$ and, recalling \eqref{eq:valueOfFutureDivs}, define
\be\label{eq:defS0star}
S_0^\star := S_0 - V_0^D(\widehat{T}) \; \stackrel{\text{if } \widetilde{G} \text{ is a strict mg., from } \eqref{eq:SmartDivs} }{=}\; B_0\,\e\left[\frac{S_{\widehat{T}}}{B_{\widehat{T}}}\right]
\ee
as the \vvirg{inner spot} of $S$, which represents the value of $S_0$ which is not due to future expected discounted dividends. We are now ready for the proposition, the interested reader can also refer to Appendix \ref{app:BasicDividendModels} for two simple concrete examples of the following framework. We remark that, in contrast to \cite{bu18}, we had to add the hypothesis of $D_t$ being $\fst_{t-}$-measurable (e.g.,~$D$ being predictable or with jumps driven by a Poisson process\footnote{See, for example, the first comment+answer of \href{https://almostsuremath.com/2016/11/22/predictable-processes/}{\texttt{this}} post on the \vvirg{Almost Sure} blog.}) which is an important restriction. On this topic, \cite{bu18} is not explicit, hence we cannot determine whether the author considered this problem, and if so, how it was addressed.

\begin{proposition}\label{prop:bu18}
Let $t\in [0,\widehat{T}]$. Let $D_t\geq 0$ and $D_t$ being $\fst_{t-}$-measurable for all $t$. Recall definition (\ref{eq:defS0star}), and assume $S_0^\star>0$ (by hypothesis or by expected dividends quotation at time zero). We have that:
\begin{romanlist}
\item For any non-negative asset price process $S$ which pays $D$, there exists a non-negative $\q$-(local) martingale $M^{S}$, with $M^{S}_0=1$, such that
\be\label{eq:hb2S}
S_t = S_0^\star \,B_t\, M^{S}_t + V_t^D(\widehat{T}) 
\ee
\item For any non-negative $\q$-(local) martingale $M^{S}$ with $M^{S}_0=1$, the asset price $S$ defined
by (\ref{eq:hb2S}) is non-negative and pays dividends $D$.
\item For any $t\in [0,\widehat{T}]$, $S_t\geq V_t^D(\widehat{T})$.
\item In case $M^S$ is a strict martingale, for any $t,T\in [0,\widehat{T}]$ with $t\leq T$,  the spot price $S_t$ follows (\ref{eq:SmartDivs}), while the forward price $F_t(T)$ follows (\ref{eq:FwdWithDivs}).
\item For any $t,T\in [0,\widehat{T}]$ with $t\leq T$, the forward price $F_t(T)$ is non-negative.
\item The SDE of $S$ is (\ref{eq:RiskNeutralDynWithDivs})-(\ref{eq:RiskNeutraldriftWithDivs}), where
\bes
\diff M_t &:=& B_t \diff \widetilde{G}_t = B_t \,S_0^\star\diff M^{S}_t + B_t \diff M^{D}_t\\
M^{D}_t &:=&  \frac{V_t^D(\widehat{T})}{B_t} + \int_0^t \frac{\diff D_u}{B_u}
\ees
and in particular $M^{D}$ is a non-negative $\q$-martingale with $M^{D}_0=V_0^D(\widehat{T})$.
\end{romanlist}
\end{proposition}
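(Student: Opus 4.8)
The plan is to reduce all six claims to a single decomposition of the deflated gain process \(\widetilde{G}\) into an ``inner-spot'' piece and a ``dividend'' piece. First I would introduce the process
\[
M^D_t := \frac{V_t^D(\widehat{T})}{B_t} + \int_0^t \frac{\diff D_u}{B_u} = \e_t\!\left[\int_0^{\widehat{T}} \frac{\diff D_u}{B_u}\right],
\]
the second equality following from \eqref{eq:valueOfFutureDivs} and the \(\fst_t\)-measurability of \(\int_0^t B_u^{-1}\diff D_u\). Since \(D\geq 0\) and \(\int_0^{\widehat{T}}B_u^{-1}\diff D_u\) is \(\q\)-integrable by the standing assumption, \(M^D\) is a non-negative true \(\q\)-martingale with \(M^D_0=V_0^D(\widehat{T})\) (using \(B_0=1\)), which already disposes of the last sentence of (vi). The single identity driving everything is then
\[
\frac{S_t-V_t^D(\widehat{T})}{B_t} = \widetilde{G}_t - M^D_t ,
\]
obtained by writing \(\widetilde{G}_t=S_t/B_t+\int_0^t B_u^{-1}\diff D_u\) and subtracting.

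For (iii) I would use that, for a non-negative asset (both \(S\) and \(D\) non-negative), \(\widetilde{G}\) is a non-negative \(\q\)-local martingale (Theorem \ref{th:fftapWithDivs}), hence a supermartingale; therefore \(\widetilde{G}_t\ge \e_t[\widetilde{G}_{\widehat{T}}]=\e_t[S_{\widehat{T}}/B_{\widehat{T}}]+M^D_t\ge M^D_t\), which by the displayed identity is exactly \(S_t\ge V_t^D(\widehat{T})\). For (i) I would then set \(M^S:=(\widetilde{G}-M^D)/S_0^\star\); it is a \(\q\)-local martingale (local martingale minus true martingale), non-negative by (iii) together with \(S_0^\star>0\) and \(B>0\), with \(M^S_0=(S_0-V_0^D(\widehat{T}))/S_0^\star=1\) by \eqref{eq:defS0star}, and rearranging the identity gives \eqref{eq:hb2S}. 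Note that for a general non-negative asset \(\widetilde{G}\) is only a local martingale, which is why the representation yields only a local \(M^S\) and why the strict case is isolated in (iv). Claim (ii) is the converse: defining \(S\) by \eqref{eq:hb2S}, non-negativity is immediate from \(S_0^\star>0\), \(M^S\ge0\), \(V^D\ge0\), while \(\widetilde{G}=S_0^\star M^S+M^D\) is a \(\q\)-local martingale, so by Theorem \ref{th:fftapWithDivs} the asset is arbitrage-free, i.e.\ pays \(D\).

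Claims (iv)--(vi) are then quick. If \(M^S\) is a strict martingale, so is \(\widetilde{G}=S_0^\star M^S+M^D\), and (iv) follows by feeding this into Corollary \ref{cor:exDivCumDivPrices} for \eqref{eq:SmartDivs} and into the forward-price computation for \eqref{eq:FwdWithDivs}. For (v) the cleanest route is \(F_t(T)=\e_t^T[S_T]\ge 0\) since \(S_T\ge 0\); alternatively \(V_t^D(T)\le V_t^D(\widehat{T})\le S_t\) by monotonicity of \(D\) and (iii), so \eqref{eq:FwdWithDivs} is non-negative. For (vi) I would differentiate the displayed identity, use the deflation rule \eqref{eq:processDeflatedByBankAccount} to get \(B_t\,\diff\widetilde{G}_t=\diff S_t-r_tS_{t-}\diff t+\diff D_t\), and read off \eqref{eq:RiskNeutralDynWithDivs}--\eqref{eq:RiskNeutraldriftWithDivs} with \(\diff M_t:=B_t\,\diff\widetilde{G}_t=S_0^\star B_t\,\diff M^S_t+B_t\,\diff M^D_t\).

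The algebra above is essentially measure-theoretic bookkeeping; the genuinely delicate point -- and the place I expect the real work -- is the hypothesis that \(D_t\) be \(\fst_{t-}\)-measurable. Its role is to guarantee that each dividend is a \emph{predictable} event, so that at a payment time \(\tau\) the process \(V^D\) anticipates \(\Delta D_\tau\) exactly and the jump \(\Delta(S-V^D)_\tau\) carries no dividend-surprise component; equivalently, that the inner-spot martingale \(M^S\) is not contaminated by unpredictable dividend payments. Without it the clean attribution of \(\diff M\) into the inner-spot part \(S_0^\star B\,\diff M^S\) and the dividend part \(B\,\diff M^D\) in (vi) -- and the very interpretation in (ii) of \(S\) as paying precisely \(D\) -- can fail, which is exactly the gap the preceding remark signals in \cite{bu18}. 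I would therefore spend most care verifying the jump identities at dividend times under this measurability assumption.
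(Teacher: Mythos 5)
Your decomposition is exactly the paper's: the same $M^D_t=\e_t[\int_0^{\widehat{T}}B_u^{-1}\diff D_u]$, the same identity $S_t-V_t^D(\widehat{T})=B_t(\widetilde{G}_t-M^D_t)$, and the same definition $M^S:=(\widetilde{G}-M^D)/S_0^\star$, so items (i), (ii), (iv) and (vi) proceed essentially as in the paper. The one substantive difference is how the non-negativity of $M^S$ (equivalently item (iii)) is obtained: the paper localizes $\widetilde{G}$ with stopping times $\tau_n\uparrow\widehat{T}$, writes $0\leq\e_t[\widetilde{S}_{\tau_n\vee t}]=S_0^\star M^n_t$ and lets $M^n_t\downarrow M^S_t$ by monotone convergence of $\widetilde{V}_t^D(\tau_n\vee t)$; you instead invoke the fact that the non-negative local martingale $\widetilde{G}$ is a supermartingale, so $\widetilde{G}_t\geq\e_t[\widetilde{G}_{\widehat{T}}]\geq M^D_t$. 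Your route is shorter and equally valid (it needs only $\widetilde{G}\geq 0$ and the standing integrability of $\int_0^{\widehat{T}}B_u^{-1}\diff D_u$), and it naturally delivers (iii) before (i) rather than as a consequence of it; likewise your observation that $F_t(T)=\e_t^T[S_T]\geq 0$ gives (v) more directly than the paper's combination of (iii) and (iv). The only step you flag but do not execute is the jump computation $\Delta V_t^D(\widehat{T})=-\Delta D_t$ at the end of (vi), which is precisely where the $\fst_{t-}$-measurability of $D_t$ is consumed (it allows $\e_{t-}[\Delta D_t/B_t]=\Delta D_t/B_t$ to be pulled out of the conditional expectation). You have correctly identified this as the crux, but a complete write-up must actually carry out that short computation, as the paper does, since without it item (ii)'s claim that $S$ pays dividends $D$ (i.e.~that the ex-dividend drop of $S$ equals $\Delta D$) remains unverified.
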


\begin{proof}
Define $\widetilde{S}:=S/B$, $\widetilde{V}^D(\widehat{T}):=V^D(\widehat{T})/B$. For item i) we follow quite closely the proof of \cite[Th.~5.1]{bu18}. We know from the $1^{\text{st}}$-FTAP that there exists a $\q$-(local) martingale $\widetilde{G}$, defined in \eqref{eq:GTilde}. Define
\[
M_t^S:=\frac{\widetilde{S}_t - \widetilde{V}_t^D(\widehat{T})}{S_0^\star} := \frac{\widetilde{G}_t-M^D_t}{S_0^\star}
\]
and observe that, 
\[
M_t^D :=  \widetilde{V}_t^D(\widehat{T})  + \int_0^t \frac{\diff D_u}{B_u} = \e_t \left [ \int_0^{\widehat{T}} \frac{\diff D_u}{B_u} \right]
\]
only depends on $t$ by expectation conditioning, hence it is a strict martingale by the tower property of conditional expectation. Therefore $M^S$, being the sum of two local martingales, is a local martingale. In particular, $M^S$ is a strict martingale if $\widetilde{G}$ is a strict martingale. It remains to show that $M^S$ is non-negative: to this end, assume that $(\tau_n)_{n\in \n}$ is an increasing sequence of finite stopping times, having values in $[0,\widehat{T}]$ and such that $\lim_{n\to \infty}\tau_n = \widehat{T}$, that localizes $\widetilde{G}$ (i.e.~such that all its $\tau_n$-stopped processes are strict martingales).
\[
0\leq \e_t[\widetilde{S}_{\tau_n\vee t}] = \e_t\left[\widetilde{G}_{\tau_n\vee t}-\int_0^{\tau_n\vee t}\frac{\diff D_u}{B_u}\right] = \widetilde{G}_{t}-\e_t\left[\int_0^{\tau_n\vee t}\frac{\diff D_u}{B_u}\right] = \widetilde{S}_{t}-\widetilde{V}_t^D(\tau_n\vee t)=: S_0^\star \,M^{n}_t,
\]
i.e.~$M^{n}_t \downarrow M^{S}_t$, which implies $M^{S}_t\geq 0$ by monotone convergence.\\  
\noindent ii) follows from the fact that, being $M^{S}\geq 0$, since by assumption $S^\star_0>0$ and $D\geq 0$, then $S\geq 0$. 
\noindent iii) follows from the fact that $S_0^\star>0$ by hypothesis and that $M^S$ is non-negative. iv) first part:
\bes
S_t = B_t \,\e_t\left[ S_0^\star\, M^{S}_T + \e_T\left[  \int_T^{\widehat{T}} \frac{\diff D_u}{B_u}\right] +\int_t^T \frac{\diff D_u }{B_u}\right] = 
B_t \,S_0^\star\,M^{S}_t + V_t(\widehat{T})
\ees
\noindent iv) second part:
\bes
F_t(T)&:=&\e_t^T[S_T] = \frac{B_t}{P_t(T)} \,\e_t\left[ \frac{S_T}{B_T}\right] = \frac{B_t}{P_t(T)} \,\e_t\left[ S_0^\star \, M^{S}_T + \e_T\left[  \int_T^{\widehat{T}} \frac{\diff D_u}{B_u}\right] \right]\\
&=& \frac{1}{P_t(T)} \,\left\{ B_t\,S_0^\star \, M^{S}_t + B_t\,\e_t\left[  \int_T^{\widehat{T}} \frac{\diff D_u}{B_u}\right] \right\}\\
&=& \frac{1}{P_t(T)} \,\left\{ S_t - V_t^D(\widehat{T}) + B_t\,\e_t\left[  \int_T^{\widehat{T}} \frac{\diff D_u}{B_u}\right] \right\} = \frac{1}{P_t(T)} \,\left\{ S_t - V_t^D(T)\right\}.
\ees
v) follows from iii) and iv). We finally prove vi):  we saw at item i) that $M^D$ is a strict martingale, moreover,
\[
\diff M_u^D = \diff \widetilde{V}_u^D(\widehat{T}) + \frac{\diff D_u}{B_u}
\]
which can be easily seen integrating both lhs and rhs in $[0,t]$. Now,
\[
\diff \widetilde{S}_t = S_0^\star \diff M^{S}_t  + \diff \widetilde{V}_t^D(\widehat{T}) = S_0^\star \diff M^{S}_t + \diff M_t^D -\frac{\diff D_t}{B_t}
\]
and the result is easily obtained observing that, by Itô formula,
\[
\diff \widetilde{S}_t = \frac{\diff S_t}{B_t} - r_{t} \frac{S_{t-}}{B_t} \diff t. 
\]
We finally check that the dividend detached by $S$ is effectively $D$, i.e.~that $\Delta S_t \stackrel{\Delta M^S_t=0}{=} \Delta V_t^D(\widehat{T}) = - \Delta D_t$:
\bes
\Delta V_t^D(\widehat{T}) &:=& V_t^D(\widehat{T}) - V_{t-}^D(\widehat{T}) := B_t\left\{\e_t\left[  \int_{(t,\widehat{T}]} \frac{\diff D_u}{B_u}\right] -  \e_{t-}\left[  \int_{(t-,\widehat{T}]} \frac{\diff D_u}{B_u}\right]\right\}\\
&=& B_t\left\{\e_t\left[  \int_{(t,\widehat{T}]} \frac{\diff D_u}{B_u}\right] -  \e_{t-}\left[ \int_{(t-,t]} \frac{\diff D_u}{B_u} + \int_{(t,\widehat{T}]} \frac{\diff D_u}{B_u}\right]\right\}\\
&=& B_t\left\{\e_t\left[  \int_{(t,\widehat{T}]} \frac{\diff D_u}{B_u}\right] -  \e_{t-}\left[ \frac{\Delta D_t}{B_t} + \int_{(t,\widehat{T}]} \frac{\diff D_u}{B_u}\right]\right\} = -\Delta D_t
\ees
where we used \eqref{eq:JumpsOfSemiMgIntegral}, and the last equality is due to the hypothesis that $D_t$ is $\fst_{t-}$-measurable. In this case, $\Delta M^{D}_t=0$ which gives $\Delta S_t\stackrel{\Delta M^S_t=0}{=}-\Delta D_t$ also starting from the SDE of $S$.
\end{proof}

We finally study the Total Return $S^{\text{TR}}$ of an asset $S$, being the self-financing portfolio with all null quantities except for the asset $S$ itself, obtained by reinvesting all dividends of $S$ in the portfolio. We recall that, in order to simplify the notation, with respect to Section \ref{sec:FirstThAssetPricing} we are performing a small abuse of notation and we no longer interpret $S,\widetilde{G},D,q,\Phi,...$ as vector processes but as generic scalar components $S^i,\widetilde{G}^i, D^i,q^i,\Phi^i,...$ of these vectors. The following preposition explains the derivation in Section 2.3 of \cite{bu10} (where the author sets $S^{\text{TR}}_0=1$) and extends it to any dividend model (not only cash dividends). 

\begin{proposition}[Total Return]
Recall the notation and results of Section \ref{sec:FirstThAssetPricing}. Let $S^{\text{TR}}_t=\varphi_t S_t = \theta_t (S_t+\Delta D_t)$ be a self-financing portfolio with $\varphi_0=\theta_0=1$ and $\theta_t=\varphi_{t-}$, representing the total return of the (scalar) asset $S$. Under $\q$, the no-arbitrage SDE of $S^{\text{TR}}$ coherent with Proposition \ref{prop:RiskNeutralDrift} is:
\[
\begin{cases}
\diff S^{\text{TR}}_t  = r_t \,S^{\text{TR}}_{t-}\diff t + S^{\text{TR}}_{t-}\,\frac{\diff M_t}{S_{t-}} &\qquad t>0\\
S^{\text{TR}}_0=S_0 
\end{cases}
\]
If moreover $M$ can be written in a such way that $\diff M_t = S_{t-}\diff M_t^1$ for some (local) martingale $M^1$, then the SDE of $S^{\text{TR}}$ no longer depends on $S$. In case of lump dividends $\diff D_t=\Delta D_t$ arriving at an increasing sequence of stopping times $(\tau_i)_{i\in \n}$, the solution of the SDE is
\[
S^{\text{TR}}_t = \varphi_{\tau_{N_t}}\,S_t= S_t\,\prod_{i=1}^{N_t}\left\{\frac{S_{\tau_i}+\Delta D_{\tau_i}}{S_{\tau_i}} \right\}
\]
where $N$ is the counting process associated with dividend times.
\label{prop:totalReturn}
\end{proposition}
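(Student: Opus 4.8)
The plan is to obtain the SDE by first deflating by the bank account and then un-deflating, leveraging the machinery already built. Since $S^{\text{TR}}$ is precisely the self-financing portfolio $\Psi$ of \eqref{eq:DefSelfFinPort} with all quantities null except the single asset $S$, the scalar instance of Proposition \ref{prop:selffinancingPortfWithDivs} gives $\diff \widetilde{S}^{\text{TR}}_t = \varphi_{t-}\,\diff \widetilde{G}_t$ for $\widetilde{S}^{\text{TR}}_t := S^{\text{TR}}_t/B_t$, and Proposition \ref{prop:RiskNeutralDrift} identifies $\diff \widetilde{G}_t = \diff M_t / B_t$. Combining these yields $\diff \widetilde{S}^{\text{TR}}_t = \varphi_{t-}\,\diff M_t / B_t$, which is the key intermediate relation.

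Next I would un-deflate using the product rule. Because $B$ is continuous with finite variation, reversing \eqref{eq:processDeflatedByBankAccount} gives $\diff S^{\text{TR}}_t = B_t\,\diff \widetilde{S}^{\text{TR}}_t + r_t\,S^{\text{TR}}_{t-}\diff t = \varphi_{t-}\,\diff M_t + r_t\,S^{\text{TR}}_{t-}\diff t$. The final step for the first assertion is to eliminate $\varphi_{t-}$: from $\Psi_t := \varphi_t S_t$ and the existence of left limits one has $S^{\text{TR}}_{t-} = \varphi_{t-}S_{t-}$, so on $\{S_{t-}>0\}$ one may write $\varphi_{t-} = S^{\text{TR}}_{t-}/S_{t-}$, producing exactly the claimed self-referential SDE; the initial condition $S^{\text{TR}}_0 = \varphi_0 S_0 = S_0$ is immediate from $\varphi_0 = 1$. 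For the second assertion I would simply substitute $\diff M_t = S_{t-}\diff M^1_t$, which cancels the factor $S_{t-}$ in $S^{\text{TR}}_{t-}\,\diff M_t / S_{t-}$ and removes all dependence on $S$.

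The remaining and most delicate part is the closed-form solution in the lump-dividend case. Here $D$ is a pure-jump step process, so the natural total-return strategy rebalances only at the dividend times $\tau_i$, i.e.~$\varphi$ is piecewise constant with $\varphi_{\tau_i-} = \varphi_{\tau_{i-1}}$. Applying the jump self-financing identity \eqref{eq:JumpDividendAndSelffinPort} at $\tau_i$ gives $\Delta \varphi_{\tau_i}\,S_{\tau_i} = \varphi_{\tau_i-}\,\Delta D_{\tau_i}$, whence $\varphi_{\tau_i} = \varphi_{\tau_i-}\,(S_{\tau_i}+\Delta D_{\tau_i})/S_{\tau_i}$; telescoping from $\varphi_0 = 1$ produces the announced product, and multiplying by $S_t$ yields $S^{\text{TR}}_t = \varphi_{\tau_{N_t}}S_t$. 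The main obstacle I anticipate is justifying rigorously, rather than by appeal to the ``natural strategy'', that $\varphi$ carries no continuous or quadratic-covariation part between dividend dates: from $\diff \Psi_t = \varphi_{t-}(\diff S_t + \diff D_t)$ together with the product expansion of $\diff(\varphi_t S_t)$ one gets $S_{t-}\diff \varphi_t + \diff[\varphi,S]_t = \varphi_{t-}\diff D_t$, and since $D$ is pure-jump the right-hand side vanishes off $\{\tau_i\}$, forcing $\varphi$ to be the piecewise-constant pure-jump solution on which the telescoping rests (and confirming consistency with the SDE away from the $\tau_i$, where $\varphi_t = \varphi_{t-} = S^{\text{TR}}_{t-}/S_{t-}$).
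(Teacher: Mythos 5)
Your proposal is correct and follows essentially the same route as the paper's proof: deflate by the bank account, combine Proposition \ref{prop:selffinancingPortfWithDivs} with Proposition \ref{prop:RiskNeutralDrift} to get $\diff \widetilde{S}^{\text{TR}}_t = \varphi_{t-}\,\diff M_t/B_t$, un-deflate and substitute $\varphi_{t-}=S^{\text{TR}}_{t-}/S_{t-}$, and in the lump-dividend case telescope the rebalancing recursion $\varphi_{\tau_i}=\varphi_{\tau_i-}(S_{\tau_i}+\Delta D_{\tau_i})/S_{\tau_i}$. Your closing remark deriving $S_{t-}\diff\varphi_t + \diff[\varphi,S]_t = \varphi_{t-}\diff D_t$ to justify that $\varphi$ is constant off the dividend times is a small but welcome tightening of a step the paper states only informally.
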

\begin{proof}
Define $\widetilde{S}^{\text{TR}}:=S^{\text{TR}}/B$. The self-financing condition, recalling Proposition \ref{prop:selffinancingPortfWithDivs}, can be written as:
\bes
\diff \widetilde{S}^{\text{TR}}_t &=& \varphi_{t-}\diff \widetilde{G}_t = \varphi_{t-}\left[ \diff \left(\frac S B\right)_t +\frac{\diff D_t}{B_t}\right]\\
&=& \varphi_{t-}\left[ \frac{\diff S_t}{B_t} -r_t S_t\diff t +\frac{\diff D_t}{B_t}\right]=\varphi_{t-}\, \frac{\diff M_t}{B_t}
\ees
where, in the last equality, we used the SDE of Proposition \ref{prop:RiskNeutralDrift}. On the other hand, by Itô formula,
\[
\diff \widetilde{S}^{\text{TR}}_t = \frac{\diff S^{\text{TR}}_{t}}{B_t} - r_t \frac{S^{\text{TR}}_{t-}}{B_t}\diff t
\]
and comparing the two equalities above and using $\varphi_{t}=\frac{S^{\text{TR}}_{t}}{S_t}$, the desired result can be obtained straightforwardly. The last statement follows from the fact that the portfolio must be rebalanced only at dividend times to take into account lump dividends: the self-financing rebalancing at dividend time $\tau_i$ for $i>0$ writes
\[
\varphi_{i-1} ( S_i +\Delta D_i) = \varphi_{i}\,S_i 
\]
where the lhs corresponds to the portfolio before rebalancing and the rhs after it. Therefore we have a recursive equation for the portfolio quantity
\[
\varphi_{i} = \varphi_{i-1}\,\frac{ S_i +\Delta D_i}{S_i}
\]
One can proceed forward for $i=1,2,\ldots, N_t$ to obtain $\varphi_{t}=\varphi_{\tau_{N_t}}$.
\end{proof}


\subsubsection{Pricing\label{sec:assetPricing}}


Define $V$ as the price process of a financial derivative with expiry $T$, written on underlyings $S^1,\ldots, S^n$ where we denote with $\Pi$ the cumulative \vvirg{dividend} (read intermediate cash flows) process of the financial derivative and with 
$\phi_T= \phi(S^1_T, \ldots, S^n_T)$ the final payoff at expiry $T$, where $\phi: \re^n\mapsto \re$.
The financial derivative, being itself a tradable asset, should be priced in a way for which the extended market $B, S^1,\ldots, S^n, V$ is consistent with the $1^{\text{st}}$-FTAP, i.e.~such that 
\be\label{eq:gainProcOfDerivative}
\widetilde{G}^V_t := \frac{V_t}{B_t} + \int_0^t \frac{\diff \Pi_u}{B_u}
\ee
is a $\q$-local martingale, where $\widetilde{G}^V$ represents the deflated gain process of the financial derivative. 

In case $\widetilde{G}^V$ is a strict martingale, one can proceed as in \eqref{eq:SmartDivs} to obtain 
\be
V_t &=& B_t \,\e_t\left[\frac{V_T}{B_T}+\int_t^T \frac{\diff \Pi_u}{B_u}\right] \nonumber\\
&=& B_t \,\e_t\left[\frac{\phi_T}{B_T}+\int_t^T \frac{\diff \Pi_u}{B_u}\right]. \label{eq:EuropeanAssetPricing}
\ee
This is the main result one can use to price a financial derivative. Without loss of generality, one could set $\phi_T\setto 0$ and inflate the dividend process $\Pi$ with the expiration cash flow, obtaining a simpler formula
\[
V_t = B_t \,\e_t\left[ \int_t^T \frac{\diff \Pi_u}{B_u}\right]
\]
Specifically, as an example, one could write the dividend process of a strip of vanilla Call options at deterministic fixing times $0<T_1<T_2 < \ldots $ and payment time lag $\delta\geq 0$:
\[
\diff \Pi_u = \sum_j (S_{T_j}-K_j)^+ \diff \Theta_{T_j+\delta}(u)
\]
and observe that this process has finite variation (it depends on time only through the Heaviside functions). In particular, we have from Stieltjes integral properties
\bes
\Pi_t &=& \int_0^t \diff \Pi_u =  \sum_j (S_{T_j}-K_j)^+ \,\int_0^t \diff \Theta_{T_j+\delta}(u) = \sum_j (S_{T_j}-K_j)^+ \,\sum_{0<u\leq t}  \Delta \Theta_{T_j+\delta}(u) \\
&=& \sum_{\substack{j\\ 0<T_j+\delta\leq t}} (S_{T_j}-K_j)^+ = \sum_{\substack{j\\ 0<T_j\leq t-\delta}} (S_{T_j}-K_j)^+
\ees
from which it is clear that even if the process $S$ is optional, $\Pi$ is predictable for $\delta>0$. Moreover,
\bes
B_t\int_t^T \frac{\diff \Pi_u}{B_u} &=& B_t \sum_j (S_{T_j}-K_j)^+ \int_t^T \frac{\diff \Theta_{T_j+\delta}(u)}{B_u} = B_t\sum_j (S_{T_j}-K_j)^+ \sum_{t<u\leq T} \frac{\Delta \Theta_{T_j+\delta}(u)}{B_u}\\
&=&B_t \sum_{\substack{j\\ t<T_j+\delta\leq T}} \frac{(S_{T_j}-K_j)^+ }{B_{T_j+\delta}} 
\ees

More generally, under measure $\q^\beta$ with numéraire $\beta$, using \eqref{eq:SexpectedValueWithGenMeas},
\be
V_t = \beta_t \,\e_t^\beta\left[\frac{\phi_T}{\beta_T}+\int_t^T \left\{\frac{\diff \Pi_u}{\beta_{u-}}
+\diff\left[ \Pi,  \frac{1}{\beta}\right]_u\right\}\right]
\ee

\section{Collateralized Derivatives\label{sec:CollaterDerivatives}}


\subsection{Introduction to Collateralization Modeling}

In this section, we detail the approach of \cite{MP2017} for collateralization modeling in the easiest case where all cash flows are in domestic currency. We will remove this hypothesis and be more technical in the next section. 
We deal with a financial derivative with price $V_u$ at time $u$.

In order to mitigate the counterparty risk, from contract inception to the time in which the contract is closed, the deal is collateralized (recall Section \ref{sec:Intro} for a brief non-technical presentation of this topic).
For this purpose, the most common agreement (we will see other examples in the next section) is a bilateral agreement documented by the International Swaps and Derivatives Association (ISDA), known as Credit Support Annex (CSA). In particular, the agreement also regulates the possibility of re-hypothecation of the collateral assets, namely to use them for funding purposes, as opposed to segregation.

We define with $C_u$ the collateral value at time $u$, with the convention (from the investor point of view) that $C_u>0$ means that the collateral is held by the investor, otherwise the collateral is held by the counterparty.  $C_u>0$ means that the financial derivative has a positive value for the investor and for this reason the investor holds (with or without re-hypothecation) some cash (or liquid securities) as a loan from the counterparty: in case of default of the counterparty this loan will mitigate the possible losses of the investor on the financial derivative position. In the opposite case, $C_u<0$ means that the financial derivative has a negative value for the investor, and for this reason the counterparty borrows some collateral from the investor. In the same spirit, $\diff C_u>0$ means that the counterparty posts collateral at time $u$, while $\diff C_u<0$ means that the investor posts collateral.

We define also the target collateral value with: 
\be\label{eq:CollateralWithHaircut}
C_{u}^\star:=(1+\alpha_{u}) V_{u-}
\ee
where $\alpha_u\geq 0$ is the (predictable) \emph{haircut} or \emph{proportional margin} process: to mitigate the risk of loss, borrowers are required to post up collateral in excess of the 
market value of the primary transaction $V$. The left limit in this formula is to ensure the predictability of the collateral process (the collateral posting party cannot be aware of unpredictable jumps of the financial derivative).

The over-collateralization (i.e.~the fact that the collateral value is greater than the financial derivative price) is meant to avoid losses in case of default of one counterparty in the period between two collateral posting times, losses due to changes in market value of $V$ itself and/or of any 
collateral security different than domestic cash. In particular, a defaulted counterparty stops posting collateral: we denote with $\tau$ the first-to-default time among all counterparties and with $\tau+\delta$ the time of the bankruptcy procedure closeout cash flow payments (generally $\delta$ is about two weeks) and with $t_N$ the last margin call date (see below for details), where $t_N \leq \tau \leq \tau+\delta$. 
Following \cite{ccb2020}, one calls the \emph{margin period of risk} the time lag between $t_N$ and $\tau+\delta$: the \emph{cure period} constitutes the second part of the margin period of risk (i.e.~the time between $\tau$ and $\tau+\delta$), the first part being the time lag between $t_N$ and $\tau$. 
Moreover, as anticipated above, when securities are used as collateral, their market value can fall in the margin period of risk and therefore they are  generally devalued to have a cushion for this phenomenon:
in this section we will  deal only with cash collateralization, see Section \ref{sec:repo} for a concrete example of stock collateralization. 

We have the following specifications:
\begin{itemize}
\item \textbf{Margin Calls}: at any time $u$ the collateral value $C_u$ can move away from its target value $C_u^\star$ because of financial derivative and/or collateral market moves: 
then with a contractual frequency (generally every business day of contract life) the counterparties 
check that the absolute value of the difference between the collateral value and its target value is inside of an agreed threshold or percentage: this is in order to reduce the administrative burden. 
When this limit is broken, the counterparty who is exposed to this breach performs a margin call, meaning that $u=t_i$ for some $i\in \{1, 2, \ldots, N\}$ where we define with $t_0<t_1 < \ldots < t_N$ the \emph{margin (call) times}. 
At $t_i$ the collateral value 
has a movement $\diff C_{t_i}$ that makes $C_{t_i}\setto C_{t_i}^\star$. By construction, the contract inception date coincides with the first margin date $t_0$. 
We refer to Section 3.3 of \cite{ccb2020} for more precise details on collateralization schemes. 
\item \textbf{Variation Margin (}$\boldsymbol{\mathfrak{M}}$\textbf{)}: it is represented by the first addend of \eqref{eq:CollateralWithHaircut} meaning that the target value of this margin account is such that $\mathfrak{M}_u^\star=V_{u-}$  which is the \vvirg{fair} collateral value (since the collateral is covering the financial derivative counterparty risk). Very often the counterparties agree to make the Variation Margin subject to re-hypothecation.
\item \textbf{Initial Margin (}$\boldsymbol{\mathfrak{I}}$\textbf{) versus Haircut}: it is the second addend of \eqref{eq:CollateralWithHaircut} representing a buffer of over-collateralization for all the above mentioned reasons.
The over-collateralization target value can be rough and represented by a second addend $\alpha_{t_0} V_{u-}$ where $\alpha_{t_0}$ is a fixed percentage decided by the counterparties at contract inception: 
this is a haircut the a strict sense (see e.g.~the Repurchase Agreement contract case at Section \ref{sec:repo}). In other cases, the second addend of \eqref{eq:CollateralWithHaircut} represents another 
(very often segregated, i.e.~not subject to re-hypothecation) account
called Initial Margin  (also known as the Independent Amount) and $\mathfrak{I}^\star_u=\alpha_u\,V_{u-}$ represents a quite complicated calculation, e.g.~in case of ISDA SIMM methodology%
\footnote{%
As we read in \href{https://www.risk.net/definition/isda-simm}{\texttt{Risk.net}}, the Standard Initial Margin Model (SIMM) is a common methodology to help market participants calculate initial margin on non-cleared derivatives under the framework developed by the Basel Committee on Banking Supervision and the International Organization of Securities Commissions.\\ 
The SIMM methodology was developed by ISDA, and is intended to reduce the potential for disputes and create efficiency through netting of exposures. The model applies a sensitivity-based calculation across four product groups: interest rates and foreign exchange, credit, equity and commodities.\\
The SIMM was officially launched in September 2016 and an updated version, ISDA SIMM 2.0, became effective in December 2017 to include a range of clarifications, enhancements and additional risk factors. 
We refer to the ISDA website and official documents therein for further details.
}. We do not enter in such technical details and we will consider both the Variation Margin and the Initial Margin (and this is a major simplification) as subject to re-hypothecation (see in particular Remark \ref{rk:Rehypothecation}).
\end{itemize}


Let us see a simple example to understand the collateral  mechanics: imagine the investor buys a Call option (of price process $V$) at $t_0$ from the counterparty. The investor pays to the counterparty $V_{t_0}$ of cash and writes in her book the long position on the Call for the same value. At the same time, the counterparty lends an amount $C_{t_0}$ to the investor to mitigate his counterparty risk. 
Thus, at $t_0$, the investor books the Call option, receives $C_{t_0}-V_{t_0}$ of cash from the counterparty and writes a debt of $C_{t_0}$ for the same reason. At $t_1$ the Call value is $V_{t_1}$ hence the investor/counterparty posts the required collateral in order to set $C_{t_1}\setto C_{t_1}^\star$, moreover, the investor must pay some interests on the debt, linked to the interest rate  $c$. Consequently, the cash to post at $t_1$ (if positive by the counterparty, if negative by the investor) is
\[
C_{t_{1}} - C_{t_{0}}(1+c_{t_{0}}(t_1-t_0))
\]
At $t_2$ the Call value is $V_{t_2}$ and the amount to post to have $C_{t_2}\setto C_{t_2}^\star$ is
\[
C_{t_{2}} - C_{t_{1}}(1+c_{t_{1}}(t_2-t_1))
\]
Imagine at $t_2$ the investor closes the Call position selling back the Call option to the counterparty (in case $t_2$ is the expiry of the Call option the option value is $V_{t_2}=(S_{t_2}-K)^+$ meaning that the investor obtains the payoff of the Call): the investor erases the Call from her portfolio, she should receive $V_{t_2}$ of cash from the counterparty but, at the same time, she should reimburse the debt of $C_{t_2}$ with the counterparty. The net cash flow to the investor when closing the financial derivative and the collateral position is thus $V_{t_2}-C_{t_2}$ of cash (equal to zero in case $\alpha=0$). Generalizing, for $i>0$,
\[
C_{t_{i}} - C_{t_{i-1}}(1+c_{t_{i-1}}(t_i-t_{i-1}))= C_{t_{i}} - C_{t_{i-1}} - c_{t_{i-1}} C_{t_{i-1}}(t_i-t_{i-1}) \qquad \longrightarrow \qquad \diff C_u - c_u \,C_{u-} \diff u.
\]
The implications of the above arrow are discussed in the next remark. 

\begin{remark}\label{rk:DiscreteCollateralization}
More precisely, as we will see in the next section, in this paper there is no \vvirg{continuous time approximation} on collateral posting: this is an advantage with respect to the setting of \cite{MP2017}.
In fact, in following sections $C$ will be any predictable semimartingale: this is coherent with a discontinuous LCRL process or also with a pure jump RCLL process $C_t = \sum_{0<u\leq t} \Delta C_u$ with predictable jumps. 
In the latter case, $\diff C_u=\Delta C_u$ and we also set $\Delta C_u=0$ except at collateral posting times $\Delta C_{t_i}=C_{t_{i}} - C_{t_{i}-}$ for $i>0$ where the $t_1, t_2,\ldots$ are the (stochastic) jump times of process $C$. In this case, setting also the collateral interest rate process $c$ as a (stochastic) step process with steps at times $t_i$, we obtain
\bes
\int_0^t \Big\{\diff C_u - c_u\,C_{u-}\diff u\Big\} &=& \sum_i \left\{\int_{(t_{i-1}, t_i]} \diff C_u - c_u\,C_{u-}\diff u \right\}\\ 
&=& \sum_i\left\{ \sum_{t_{i-1} <u\leq t_i} \Delta C_u - c_{t_{i-1}}\,C_{t_{i-1}} \int_{t_{i-1}}^{t_i} \diff u\right\}\\ 
&=& \sum_i C_{t_{i}} - C_{t_{i-1}}\Big(1 + c_{t_{i-1}}\,(t_{i}-t_{i-1})\Big)
\ees
so we have perfect discrete collateral posting. With the same setting, the discounted version 
\bes
\int_0^t \frac{ \diff C_u - c_u\,C_{u-}\diff u }{B_u} &=& \sum_i\left\{ \sum_{t_{i-1} <u\leq t_i} \frac{\Delta C_u}{B_u} - c_{t_{i-1}}\,C_{t_{i-1}} \int_{t_{i-1}}^{t_i}
\frac{1}{B_u}\diff u\right\}\\
&=& \sum_i\left\{  \frac{C_{t_i}-C_{t_{i-1}}}{B_{t_{i}}} - c_{t_{i-1}}\,C_{t_{i-1}} \int_{t_{i-1}}^{t_i}
\frac{1}{B_u}\diff u\right\}\\
\ees
which is different from the desired result of
\[
\sum_i   \frac{C_{t_i}-C_{t_{i-1}}- c_{t_{i-1}}\,C_{t_{i-1}}(t_{i}-t_{i-1})}{B_{t_{i}}} 
\]
and this is the only approximation applied in case of discrete collateralization.
\end{remark}

We define 
\[
\diff D_u^{\textit{V-C}} :=\diff \Pi_u + \diff C_u - c_u\, C_{u-} \diff u
\]
where $\diff \Pi_u$ represents the financial derivative dividend/cash flow at time $u$, one should note that:
\begin{arabiclist}
\item The above formulation of dividend process implies that the collateral is subject to re-hypothecation: see in particular Remark \ref{rk:Rehypothecation}.
\item The first addend has generally finite variation (see next section) but the second addend could have non-finite variation (e.g.~in case $C_u=C_u^\star$ for any $u$);
\item As outlined in the previous remark, the collateral process is intrinsically discontinuous since the equality $C_u\setto C_u^\star$ 
is implemented only on margin call times $t_i$'s.
\end{arabiclist}

We now write the Profit and Loss ($P\&L$) discounted at inception date $t$ of all cash transactions consisting in buying the collateralized derivative at time $t$ and holding it until time $T$ with 
$T>t$:
\[
P\&L_t = - (V_t-C_t)   +B_t \int_t^T \frac{\diff D^{\textit{V-C}}_u}{B_u} +\frac{B_t}{B_T} \,(V_T-C_T) 
\]
where the first added is the inception transaction (pay the financial derivative price and obtain the collateral value), 
the second added represents the discounted intermediate cash flows of the position and the third represents the closing of all positions (sell the financial derivative obtaining its price and give back the collateral). In order to avoid arbitrage, we must have that $\e_t[P\&L_t]=0$ meaning that the inception transaction is the equilibrium transaction corresponding to the $t$-Risk-Neutral discounted expectation of all other future transactions. 
In particular, $\e_t[P\&L_t]=0$ if and only if 
\[
(V-C)_t = B_t \,\e_t\left[\frac{(V-C)_T}{B_T}  + \int_t^T \frac{\diff D^{\textit{V-C}}_u}{B_u}   \right]
\]
which in turn corresponds to require that the deflated gain process defined as follows
\[
\widetilde{G}_u^{\textit{V-C}} :=  \frac{(V-C)_u}{B_u} + \int_0^u \frac{\diff D_s^{\textit{V-C}}}{B_s} 
\] 
must be a Risk-Neutral martingale. This is the usual condition of Theorem \ref{th:fftapWithDivs} for asset $X:=(V-C)$ with dividend process $D^{X}$ (and this explains the weird notation of the dividend process).
The gain process of the collateralized derivative is hence $G^X := X+ D^{X}$.

Now define the buy-and-hold portfolio $\Psi_u:=X_u+\varphi^0_u B_u$ in the interval $u\in [t,T]$. As usual, all the dividends of the position are reinvested in the bank account:
if $\diff D_u^{X}>0$ the investor uses this cash in order to decrease the bank account position, i.e.~she buys a quantity $\diff \varphi^0_u =\diff D^{X}_u/B_u>0$ of bank account (paying $B_u \diff \varphi^0_u $ of cash). 
If $\diff D_u^{X}<0$ she funds this transaction with selling a quantity $\diff \varphi^0_u=\diff D^{X}_u/B_u<0$ of bank account (receiving $B_u \diff \varphi^0_u $ of cash).
In this light,  we thus obtain the self-financing condition of the buy-and-hold strategy on the collateralized derivative $X$:
\[ 
\diff \Psi_u \stackrel{\textit{Itô}}{=} \diff X_u + \varphi^0_{u-} \diff B_u + B_u \diff \varphi^0_u \stackrel{\textit{self-fin.}}{=} \diff X_u + \varphi^0_{u-} \diff B_u +\diff D^{X}_u=:\diff G^X_u + \varphi^0_{u-} \diff B_u
\]
where the first equality is by the Itô Formula and the fact that $B$ is continuous with finite variation, the second is by substitution of the aforementioned dividend reinvestment strategy, 
the third equality is by definition of the deflated gain process here above. Moreover, for $u\geq t$
the bank account quantity writes
\[
\varphi^0_u = \varphi^0_t + \int_t^u \diff \varphi^0_s = \varphi^0_t + \int_t^u \frac{\diff D^{X}_s}{B_s} 
\]
and the bank account position is of value $\varphi^0_u B_u$: be aware of the analogies with Lemma \ref{le:BDiscountedWealth}. 

As anticipated in Section \ref{sec:Intro}, the common thread of the paper is to identify, thanks to no-arbitrage conditions, Risk-Neutral martingales in progressively more challenging contexts where our intuition could be increasingly lost: 
as a first example we saw the martingale  corresponding to a non-dividend paying asset, then we have attempted to recognize the martingale corresponding to a dividend-paying asset and  
here we move a step forward for detecting a martingale linked to a collateralized derivative. In the next section at \eqref{eq:DeflatedGainOftheDerivative}, we will prove (in a more general multi-currency environment) that, not only   
$\widetilde{G}_u^{\textit{V-C}}$ but also
\[
\widetilde{G}_u := \frac{V_u}{B_u}  + \int_0^u \frac{\diff D_s}{B_s}, \qquad \diff D_u := \diff \Pi_u -  C_{u-}  (c_u - r_u) \diff u
\]
is a Risk-Neutral Martingale. We will continue using this demonstration strategy and recognize new martingales in Section \ref{sec:Applications}.

All these efforts in searching for martingales are motivated from the fact that 
martingales have some nice properties and, primarily, since they have well defined dynamical features and specific connections with expected values: among all expected values we are particularly interested in pricing ones. 
For example, from $\widetilde{G}_t=\e_t[\widetilde{G}_T]$, we have
\[
V_t   = B_t \,\e_t\left[  \frac{V_T }{B_T} + \int_t^T \frac{\diff D_u}{B_u} \right]
\]
which corresponds to \eqref{eq:DerValueWithFXandCollateral} in the present single currency environment.

\subsection{Multi-Currency Collateralized Derivatives\label{sec:multiccyCollDeriv}}

We can generalize the previous section results to a multi-currency setup: denote with $t\geq 0$ today's time and with $V^f_t$ today's value of the financial derivative with start date 0\footnote{In order to avoid burdening the notation, we analyze the case of a spot/past start deal: the forward start case would bring only slight modifications of the following results. In fact, if one is willing to tackle all cases, then one should let the start date be $t_0<T$ and change all integration domains from $(t,T]$ to $(\max\{t,t_0\},T]$.} and expiry $T$ (for $t\leq T$) denominated in the $f$-currency as the cumulative dividend account
\[
\Pi_t^f := \int_0^t \psi^f_u\diff u + \Phi_t^f
\]
with continuous dividend rate process $\psi^f$ and (RCLL/finite variation process) cash dividends/cash-flow account
\[
\Phi_t^f:=\sum_i \phi_{T_i}^f \1_{ 0 < T_i \leq t}
\]
both denominated in the $f$-currency. Note that, thanks to the arguments of Section \ref{sec:assetPricing}, even if we describe $V^f$ as the price of a financial derivative,
all the results of this section will be valid even if $V^f$ were the value of a collateralized underlying of the model where $T$ is not the expiry of the underlying but a generic future time.
This is why we keep the final condition $V^f_T=\phi^f_T$ as a placeholder of the value of the underlying at a future time, even if we could have set $\phi^f_T=0$ and charge the last cash-flow on the cash dividend flow. 
In particular, the dividend paid at time $u$ writes
\[
\diff \Pi_u^f =  \psi^f_u \diff u + \diff \Phi_t^f = \psi^f_u \diff u + \sum_i \phi_{T_i}^f \diff \Theta_{T_i}(u)
\] 


The collateral account $C^g$ is instead in the $g$-currency. Denote with $d$ the domestic currency and, for currencies $x,y\in \{d,f,g,\ldots\}$ denote with $X^{xy}$
the FX rate to convert an amount denoted in currency $x$ into currency $y$, and with $X^{x}:=X^{xd}$  the FX rate to convert an amount denoted in foreign currency $x$ into the domestic currency $d$: we adopt the convention of
\cite{MP2017} for which the currency $d$ is \vvirg{silent}, hence 
$B\equiv B^d$ is the domestic bank account and $\q\equiv \q^d$ the domestic measure. Clearly $X^{xx}\equiv 1$ and $X^{xy}\equiv 1/X^{yx}$.
Moreover, for any foreign currency $x$, $X^x$ is a semimartingale with dynamics under $\q$
\be\label{eq:DynXg}
\diff X_t^x = \mu^x_t X_{t-}^x \diff t+\diff M_t^x 
\ee
where for all $x$ we have that $M^x$ is a RCLL square-integrable $\q$-martingale with $M^x_0=0$. Clearly $X^d_t=1$ for all $t$, so $\mu^d_t=M_t^d=0$ for all $t$.

Again from \cite{MP2017}, for currency $x$, recall the definition of the $x$-currency basis measure $\q^{xb}$ as the measure with the numéraire
corresponding to the $x$-currency basis bank account $B^{xb}:=B^{r^{xb}}$ where we define the $x$-basis spot risk-free interest rate
\beq
r^{xb}:= r-\mu^x
\eeq
and by direct substitution $B:= B^r = B^{db}$ so that clearly $\q\equiv\q^{d}\equiv\q^{db}$. Define also for any $t\leq U$ the $x$-currency basis zero coupon bond 
\[
P^{xb}_t(U):=\e_t^{xb}\left[\frac{B^{xb}_t}{B^{xb}_U}\right]
\]
where we denote with $\e_t^{xb}[\cdot]$ the expectation under measure $\q^{xb}$. In case $x\neq d$, the curve $U\mapsto P^{xb}_t(U)$ is often called \vvirg{Forex (Discount) Curve}: see in particular \eqref{eq:fxforward}. 
Observe that without FX market dislocations (see the second part of \cite{MP2017} for an explanation of these frictions on the FX market), we obtain the classical FX drift rate $\mu^x=r-r^x$ (where the right-hand side corresponds to the difference of spot risk-free interest rates of corresponding currencies) and hence $r^{xb}=r^x$ , then also $\q^{xb}=\q^{x}$ (the Risk Neutral measure of currency $x$): in practice  all $b$-superscripts can be canceled out. 

For any currency couple $x,y$ define with 
\beq\label{eq:defBetaBankAccount}
\beta^{yx}:=  B^{yb} X^{yx}  =  \frac{ B^{yb}}{ X^{xy} }
\eeq
the $y$-basis bank account converted in currency $x$, and the Radon–Nikodym derivative
\[
\e^{xb}_t\left[\frac{\diff \q^{yb}}{\diff \q^{xb}} \right] = \frac{\beta^{yx}_t}{B^{xb}_t}\,\frac{B^{xb}_0}{\beta^{yx}_0} 
\]
i.e.~$\beta^{yx}$ is the deflator to \vvirg{export} an $x$-denoted security in the $\q^{yb}$-economy: the passage $\q^{xb}\mapsto \q^{yb}$ is equivalent to the deflator passage $B^{xb}\mapsto \beta^{yx}$. In particular, recalling Definition \ref{def:QPriceProcess}, the fact that $\beta^{yd}=B^{yb} X^{y}$ is a \vvirg{$\q$-price process} (hence eligible to be a numéraire) can be directly proved using \eqref{eq:DynXg} and \eqref{eq:ItoWithJumpsOnProduct}: 
we will confirm the more general result that $\beta^{yx}$ is a \vvirg{$\q^{xb}$-price process} (and hence that the above change of numéraire is valid) ex-post at Corollary \ref{cor:betaxfIsaQfPriceProcess}.

We  are now ready to extend the heuristical reasoning of the previous section to the current setup with a more precise methodology: the following proposition corresponds to  Proposition 2.1 of \cite{MP2017} in a more general semimartingale setting.

\begin{proposition}\label{prop:MuticurrencyCollateralPricing}
With the above notation, defining also
\be
V_t:=X^{f}_t V_t^f,  \quad \widetilde{V}_t  :=\frac{V_t}{B_t}, \quad C_t:=X^{g}_t C_t^g, \quad \widetilde{C}_t  :=\frac{C_t}{B_t} 
\ee
and
\be\label{eq:diffDVMinusC}
\diff D_u^{\textit{V-C}} :=  X^f_{u} \diff \Pi_u^f + X^g_{u-} \left\{\diff C_u^g - c_u^g \,C_{u-}^g \diff u +   \left[ \diff C^g_u,   \frac{\diff X^g_u}{X^g_{u-} } \right ]\right\}
\ee
with $D_0^{\textit{V-C}}=D_{0-}^{\textit{V-C}}=0$, then the gain process of the collateralized financial derivative $(V-C)$ under collateral re-hypothecation is defined as $G_t^{\textit{V-C}}:=(V-C)_t+D_t^{\textit{V-C}}$ and its deflated version
\be\label{eq:defGTilde}
\widetilde{G}_t^{\textit{V-C}} := \widetilde{V}_t -\widetilde{C}_t + \int_0^t \frac{\diff D_u^{\textit{V-C}} }{B_u}
\ee
is a martingale under the domestic Risk-Neutral measure $\q$.
We obtain the following pricing formula:
\be\label{eq:DerValueWithFXandCollateral}
V_t^f   = \frac{B_t}{X^{f}_t}\,\e_t\left[  \frac{\phi_T^f \,X_T^f}{B_T} + \int_t^T \frac{\diff D_u}{B_u} \right]
\ee
where
\beq\label{eq:DerivativeDividendsDef}
\diff D_u = X^{f}_u \diff D^f_u,\qquad \diff D^f_u := \diff \Pi_u^f -  X^{gf}_{u-} C^g_{u-}  (c_u^g - r_u^{gb}) \diff u
\eeq
with $D^f_0=D^f_{0-}=0$. In particular, the gain process of the financial derivative is $G_t:=V_t+D_t$ and its deflated version
\beq\label{eq:DeflatedGainOftheDerivative}
\widetilde{G}_t := \widetilde{V}_t + \int_0^t \frac{\diff D_u}{B_u}
\eeq
is a $\q$-martingale.
\end{proposition}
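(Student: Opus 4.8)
The plan is to treat the collateralized package $X:=V-C$ as a single tradable, dividend-paying asset and thereby reduce both martingale claims to the already-established $1^{\text{st}}$-FTAP with dividends (Theorem~\ref{th:fftapWithDivs}). The net domestic-currency position of an investor holding the derivative under re-hypothecated collateral is exactly $X=V-C$, and its cumulative dividend is the $D^{\textit{V-C}}$ of \eqref{eq:diffDVMinusC}: the derivative's own coupons $X^f\diff\Pi^f$ converted into domestic units, plus the collateral cash flows (the posting increments $\diff C^g$, the accrued remuneration $-c^g C^g_{-}\diff u$, and the covariation correction that expresses the foreign-currency collateral flow in domestic units). Granting, as in the single-currency heuristic $P\&L$ argument of the previous subsection, that this package is a tradable asset carrying this dividend stream, Theorem~\ref{th:fftapWithDivs} applied to $X$ with dividend $D^{\textit{V-C}}$ gives at once that its deflated gain $\widetilde{G}^{\textit{V-C}}$ of \eqref{eq:defGTilde} is a $\q$-local martingale, upgraded to a true $\q$-martingale under the standing integrability assumption on the discounted dividends. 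This settles the first assertion.

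For the second martingale claim I would show that $\widetilde{G}$ differs from $\widetilde{G}^{\textit{V-C}}$ only by a constant and a stochastic integral against the FX martingale $M^g$. Concretely, I would form the difference $\diff D^{\textit{V-C}}_u-\diff D_u$ from \eqref{eq:diffDVMinusC} and \eqref{eq:DerivativeDividendsDef}; inside the absolutely-continuous ($\diff u$) terms replace $X^f_u$ by $X^f_{u-}$ (they agree Lebesgue-a.e.) and use $X^f X^{gf}=X^g$ to rewrite the drift correction through $X^g_{u-}C^g_{u-}=C_{u-}$. Then expand $C=X^g C^g$ via the product rule \eqref{eq:ItoWithJumpsOnProduct}, observe that the covariation term built into \eqref{eq:diffDVMinusC} cancels the covariation $\diff[X^g,C^g]$ produced by the product rule, and substitute the dynamics \eqref{eq:DynXg} together with $r^{gb}=r-\mu^g$. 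After these cancellations one is left with
\[
\diff D^{\textit{V-C}}_u-\diff D_u=\diff C_u-r_u\,C_{u-}\diff u-C^g_{u-}\diff M^g_u.
\]
Dividing by $B_u$, integrating over $(0,t]$, and recognizing the first two terms as $\diff\widetilde{C}_u$ through the deflation identity \eqref{eq:processDeflatedByBankAccount} yields the clean telescoping identity
\[
\widetilde{G}^{\textit{V-C}}_t-\widetilde{G}_t=-\widetilde{C}_0-\int_0^t\frac{C^g_{u-}}{B_u}\diff M^g_u.
\]
Since $C^g$ is predictable (hence $C^g_{-}/B$ is a locally bounded predictable integrand) and $M^g$ is a $\q$-martingale, the integral is a $\q$-local martingale; therefore $\widetilde{G}$ of \eqref{eq:DeflatedGainOftheDerivative} is a $\q$-local martingale, and a true martingale under integrability, proving the second assertion.

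The pricing formula \eqref{eq:DerValueWithFXandCollateral} then follows from $\widetilde{G}_t=\e_t[\widetilde{G}_T]$: insert \eqref{eq:DeflatedGainOftheDerivative}, cancel the $\fst_t$-measurable part $\int_0^t B_u^{-1}\diff D_u$ on both sides, use the terminal condition $V_T=X^f_T\phi^f_T$ together with $\diff D_u=X^f_u\diff D^f_u$, and finally divide through by $X^f_t$ to return to foreign-currency units. I expect the main obstacle to lie in the jump/covariation bookkeeping of the second paragraph: one must verify that the correction term $[\diff C^g,\diff X^g/X^g_{-}]$ placed in $D^{\textit{V-C}}$ is precisely what annihilates the product-rule covariation of $C=X^g C^g$, and justify the $X^f_u\to X^f_{u-}$ replacement inside the finite-variation integrals. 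The remaining subtlety is the local-to-true-martingale upgrade for both $\widetilde{G}^{\textit{V-C}}$ and $\widetilde{G}$, which rests on the standing integrability hypotheses rather than on the FTAP alone.
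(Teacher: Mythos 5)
Your second paragraph and the pricing step are correct and essentially reproduce the paper's own argument: the paper's Appendix \ref{App:MuticurrencyCollateralPricing} establishes exactly your telescoping identity via \eqref{eq:diffAOverB} — expanding $\widetilde{C}=X^gC^g/B$ by the product rule, letting the covariation term built into \eqref{eq:diffDVMinusC} cancel $\diff[X^g,C^g]$, and absorbing $\mu^g$ into $r^{gb}=r-\mu^g$ so that only $-\int_0^t (C^g_{u-}/B_u)\,\diff M^g_u$ survives — and then takes conditional expectations, killing that stochastic integral by Property 10/11 of Proposition \ref{prop:SemiMgIntegralProperties}. Your replacement of $X^f_u$ by $X^f_{u-}$ inside the $\diff u$-integrals is justified by Proposition \ref{prop:LebesgueIntegralDiscontinuities}, which is precisely the tool the paper invokes.

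The genuine gap is in your first paragraph. You ``grant'' that $V-C$ is a tradable package whose \emph{domestic} dividend is the $D^{\textit{V-C}}$ of \eqref{eq:diffDVMinusC} and then invoke Theorem \ref{th:fftapWithDivs}; but the whole content of the first assertion is to justify that specific form — in particular why the correct domestic conversion of the collateral flow is $X^g_{u-}\{\cdots\}$ \emph{plus} the covariation correction $X^g_{u-}[\diff C^g_u,\diff X^g_u/X^g_{u-}]$, and why the coupon term carries $X^f_u$ rather than $X^f_{u-}$. A naive conversion $X^g_u\times(\text{flow})$ would give a different (and wrong) dividend process, so this cannot be waved through by analogy with the single-currency $P\&L$ heuristic. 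The paper closes this gap by \emph{deriving} \eqref{eq:defGTilde}--\eqref{eq:diffDVMinusC} from the general-numéraire gain process \eqref{eq:GainProcessGeneralMeasure} of Theorem \ref{th:fftapWithDivsGeneralEMM}: it writes
\[
\widetilde{G}_t^{\textit{V-C}}=\frac{V_t^f}{\beta^{df}_t}-\frac{C_t^g}{\beta^{dg}_t}+\int_0^t\Big\{\frac{\diff \Pi_u^f}{\beta^{df}_{u-}}+\diff\Big[\Pi^f,\tfrac{1}{\beta^{df}}\Big]_u\Big\}+\int_0^t\Big\{\frac{\diff C_u^g-c_u^g C_{u-}^g\diff u}{\beta^{dg}_{u-}}+\diff\Big[C^g,\tfrac{1}{\beta^{dg}}\Big]_u\Big\},
\]
i.e.\ the $f$-denominated derivative (dividend $\Pi^f$) and the $g$-denominated collateral account (dividend $\diff C^g-c^gC^g_{-}\diff u$) are each exported to the domestic economy via the deflators $\beta^{df}$ and $\beta^{dg}$, so the general theory guarantees the $\q$-local-martingale property; the simplification $\int X^f_{u-}\diff\Pi^f_u/B_u+\sum\Delta X^f_u\Delta\Pi^f_u/B_u=\int X^f_u\,\diff\Pi^f_u/B_u$ and the reduction of $\diff[C^g,1/\beta^{dg}]$ then produce exactly \eqref{eq:diffDVMinusC}. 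You should supply this derivation (or an equivalent self-financing-portfolio argument) rather than assuming the dividend's form; everything downstream in your proposal then goes through.
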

\begin{proof}
See Appendix \ref{App:MuticurrencyCollateralPricing}.
\end{proof}

\begin{remark}[Re-hypothecation]\label{rk:Rehypothecation}
In dividends  formulation (\ref{eq:diffDVMinusC}) we have an implicit hypothesis of collateral re-hypothecation: in fact all positive and negative collateral cash flows enter in the collateralized financial derivative gain process.
Instead, in case of no re-hypothecation, meaning that the collateral is segregated, the negative collateral cash flows should always be there but the positive cash flows should not: these last cash flows should only let 
the collateral credit grow, but this benefit will be on the disposal of the investor only at collateral closing date. See also Section \ref{sec:Futures} for an example of a segregated collateral account.
\end{remark}

\begin{remark}[Non-linearity]\label{rk:NonLinearity}
In pricing formula (\ref{eq:DerValueWithFXandCollateral}) the price of the financial derivative depends 
(via the dividends (\ref{eq:DerivativeDividendsDef})) on future realizations of the collateral value process $C^g$ which in turn will depend (in a potentially very complicated way) on future realizations of the price of the financial derivative itself: the pricing formula is in some sense recursive. This phenomenon is known in the literature as 
non-linearity of the pricing, where this terminology has a precise meaning arising from the theory of BSDEs (i.e.~the non-linearity of the generator of the BSDE describing the price process of the financial derivative): see \cite{bcr2018} and references therein (where the authors achieve broader and more ambitious results). In some cases, see Propositions \ref{prop:PriceWithSimpleCollateral}-\ref{prop:PriceWithSimpleCollateralDomesticMeas} and corresponding corollaries, we will add some simplifying hypotheses that will let us tackling the non-linearity issue.  
\end{remark}

\begin{remark}
As anticipated before, without FX market dislocations one has the usual drift of the FX rate $X^g$, i.e.~$\mu^g_t = r_t-r^g_t$ where $r^g$ is the spot risk-free interest rate of currency $g$.  Therefore, $r_u^{gb}:=r_u-\mu^g_u=r_u^{g}$ . In this case, the contribution of $r$ is canceled out in the last addend of formula (\ref{eq:DerivativeDividendsDef}). Moreover, if we remunerate the collateral at risk-free rate $c^g\setto r^g$ the last addend of the above formula disappears ending up with the usual pricing formula.
\end{remark}

\noindent We now derive the dynamics of the principal objects of our interest.  

\begin{proposition}\label{prop:DynamicsOfVMinusCandVunderQ}
The no-arbitrage dynamics of $(V-C)$ under $\q$ is
\[
\diff (V-C)_t = r_t (V-C)_{t-} \diff t + \diff \mathcal{M}^{\textit{V-C}}_t - \diff D^{\textit{V-C}}_t
\]
where $\mathcal{M}^{\textit{V-C}}$ is a RCLL $\q$-martingale with $\mathcal{M}^{\textit{V-C}}_0=0$. Then, under $\q$
\beq\begin{split}\label{eq:DynamicsOfDerivativePrice}
\diff V_t &= r_t V_{t-} \diff t+ \diff \mathcal{M}_t - \diff D_t\\
\diff \mathcal{M}_t &:= \diff \mathcal{M}^{\textit{V-C}}_t + C^g_{t-} \diff M^g_t
\end{split}\eeq
where $\mathcal{M}$ is a RCLL $\q$-martingale with $\mathcal{M}_0=0$.
\end{proposition}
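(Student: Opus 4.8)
The plan is to read off both dynamics from the martingale property of the two deflated gain processes established in Proposition \ref{prop:MuticurrencyCollateralPricing}, by inverting the deflation identity \eqref{eq:processDeflatedByBankAccount}; this is essentially the converse of Proposition \ref{prop:RiskNeutralDrift}. First I would treat the ``asset'' $X:=(V-C)$ with ``dividend'' $D^{\textit{V-C}}$. Writing $\widetilde{X}:=(V-C)/B$, definition \eqref{eq:defGTilde} reads $\widetilde{G}^{\textit{V-C}}_t=\widetilde{X}_t+\int_0^t \diff D^{\textit{V-C}}_u/B_u$, so $\diff \widetilde{X}_t=\diff \widetilde{G}^{\textit{V-C}}_t-\diff D^{\textit{V-C}}_t/B_t$. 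Since $B$ is continuous with finite variation, the deflation identity \eqref{eq:processDeflatedByBankAccount} applied to $X$ gives $\diff \widetilde{X}_t=\diff X_t/B_t-r_t X_{t-}/B_t\,\diff t$; equating the two expressions and multiplying by $B_t$ produces the first displayed dynamics with $\mathcal{M}^{\textit{V-C}}$ defined by $\diff \mathcal{M}^{\textit{V-C}}_t:=B_t\,\diff \widetilde{G}^{\textit{V-C}}_t$. As $\widetilde{G}^{\textit{V-C}}$ is a $\q$-martingale and $B=B_{-}$ is locally bounded predictable, $\mathcal{M}^{\textit{V-C}}$ is an RCLL $\q$-(local) martingale with $\mathcal{M}^{\textit{V-C}}_0=0$.

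Repeating the identical argument for $V$ with dividend $D$ and $\widetilde{G}$ from \eqref{eq:DeflatedGainOftheDerivative} delivers the first line of \eqref{eq:DynamicsOfDerivativePrice}, with $\diff \mathcal{M}_t:=B_t\,\diff \widetilde{G}_t$. It then remains to identify $\mathcal{M}$ explicitly. The key observation is that $N:=\widetilde{G}-\widetilde{G}^{\textit{V-C}}=\widetilde{C}+\int_0^\cdot(\diff D_u-\diff D^{\textit{V-C}}_u)/B_u$ is a $\q$-martingale (a difference of two), and that $\diff \mathcal{M}_t-\diff \mathcal{M}^{\textit{V-C}}_t=B_t\,\diff N_t$. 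I would therefore compute $\diff N_t$ directly: expand $\widetilde{C}=X^g C^g/B$ via the product rule for $X^g C^g$ and the deflation identity, then subtract the explicit definitions \eqref{eq:diffDVMinusC} and \eqref{eq:DerivativeDividendsDef} of the two dividend processes. In the $\diff u$ (drift) terms one may freely replace $X^f_u$ by $X^f_{u-}$ (they differ only on the Lebesgue-null set of jump times), so that $X^f_u X^{gf}_{u-}=X^g_{u-}$ and the foreign cash-flow contributions $X^f_u\diff \Pi^f_u$ cancel at once.

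The main obstacle is the covariation bookkeeping in this subtraction, where three cancellations must be verified. The collateral-rate terms $\pm c^g_u C^g_{u-}X^g_{u-}\diff u$ cancel; the cross term $X^g_{t-}[\diff C^g,\diff X^g/X^g_{-}]_t$ coming from \eqref{eq:diffDVMinusC} cancels the Itô covariation $\diff[X^g,C^g]_t$ produced by the product rule on $X^g C^g$; and the plain $X^g_{t-}\diff C^g_t$ terms cancel. What survives is $\diff N_t=(C^g_{t-}/B_t)\{\diff X^g_t+X^g_{t-}(r^{gb}_t-r_t)\,\diff t\}$, and substituting $r^{gb}=r-\mu^g$ together with the FX dynamics \eqref{eq:DynXg} collapses the bracket to $\diff M^g_t$.

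Hence $\diff N_t=(C^g_{t-}/B_t)\,\diff M^g_t$, so that $\diff \mathcal{M}_t-\diff \mathcal{M}^{\textit{V-C}}_t=C^g_{t-}\,\diff M^g_t$, which is exactly the stated decomposition of $\mathcal{M}$. Since $C^g_{-}$ is locally bounded predictable and $M^g$ is a (square-integrable) $\q$-martingale, the integral $\int_0^\cdot C^g_{u-}\diff M^g_u$ is an RCLL $\q$-local martingale, a true martingale under the paper's standing integrability conventions, so $\mathcal{M}=\mathcal{M}^{\textit{V-C}}+\int_0^\cdot C^g_{u-}\diff M^g_u$ is an RCLL $\q$-martingale with $\mathcal{M}_0=0$, completing the proof. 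As a cross-check, the same decomposition can be obtained additively from $V=(V-C)+C$ by applying Itô's product rule to $C=X^g C^g$ and inserting the dynamics of $(V-C)$; the identical set of cancellations then reappears.
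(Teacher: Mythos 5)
Your proof is correct and follows essentially the same route as the paper: both displayed dynamics are obtained by inverting the deflation identity against the $\q$-martingales $\widetilde{G}^{\textit{V-C}}$ and $\widetilde{G}$ of Proposition \ref{prop:MuticurrencyCollateralPricing}, and the cancellations you verify for $N:=\widetilde{G}-\widetilde{G}^{\textit{V-C}}$ (collateral-rate terms, covariation terms, and the $X^g_{t-}\diff C^g_t$ terms) are precisely the content of the identity \eqref{eq:diffAOverB} that the paper has already established in the appendix proof of that proposition. The only cosmetic difference is that the paper writes $\diff \widetilde{V}_t=\diff \widetilde{G}^{\textit{V-C}}_t-\bigl(\diff D^{\textit{V-C}}_t/B_t-\diff \widetilde{C}_t\bigr)$ and substitutes \eqref{eq:diffAOverB} directly, rather than isolating the auxiliary martingale $N$.
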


\begin{proof}
The dynamics of $(V-C)$ is a direct consequence of the fact that $\widetilde{G}_t^{\textit{V-C}}$ is a $\q$-martingale, see Proposition \ref{prop:RiskNeutralDrift} from which one also obtains that
$\diff \widetilde{G}_t^{\textit{V-C}} = B_t^{-1} \diff \mathcal{M}^{\textit{V-C}}_t$. Moreover, by \eqref{eq:defGTilde},
\[
\diff \widetilde{G}_t^{\textit{V-C}} = \diff \widetilde{V}_t -\diff \widetilde{C}_t + \frac{\diff D_u^{\textit{V-C}} }{B_u}
\]
so, recalling \eqref{eq:DerivativeDividendsDef},
\bes
\diff \widetilde{V}_t &=& \diff \widetilde{G}_t^{\textit{V-C}}  - \left(\frac{\diff D_u^{\textit{V-C}} }{B_u} - \diff \widetilde{C}_t\right)\\
&=& \frac{1}{B_t}\left\{ \diff \mathcal{M}^{\textit{V-C}}_t -   X^f_{u} \diff \Pi_u^f  -  X_{u-}^g C^g_{u-} \left\{ (r_u-c^g_u)   \diff u- \frac{\diff X_u^g}{X_{u-}^g} \right\}  \right\} \\
&=& \frac{1}{B_t}\left\{ \diff \mathcal{M}^{\textit{V-C}}_t -   X^f_{u} \diff \Pi_u^f  +  X_{u-}^g C^g_{u-} \left\{ (c^g_u - r_u^{gb}) \diff u + \frac{\diff M_u^g}{X_{u-}^g} \right\}  \right\} \\
&=& \frac{1}{B_t}\left\{ \diff \mathcal{M}^{\textit{V-C}}_t + C^g_{u-}\diff M_u^g  -  \diff D_u   \right\} 
\ees
where we used \eqref{eq:diffAOverB} at the second line and Proposition \ref{prop:LebesgueIntegralDiscontinuities} in the last equation. Now,
\bes
\diff V_t &=& \diff ( B \widetilde{V})_t = B_t \diff \widetilde{V}_t  + \widetilde{V}_{t-}\diff B_t\\
&=&  \diff \mathcal{M}^{\textit{V-C}}_t + C^g_{u-}\diff M_u^g  -  \diff D_u + r_t  V_{t-}\diff t
\ees
which is the dynamics of $V$.
\end{proof}

\begin{remark}
One could deduce the dynamics of $C$ from $\diff V_u - \diff (V-C)_u$:
\bes
\diff C_u &=& r_u\, C_{u-}\diff u + (\diff \mathcal{M}_u-\diff \mathcal{M}^{\textit{V-C}}_u) - (\diff D_u-\diff D^{\textit{V-C}}_u)
\ees
however, this is not an explicit dynamics since in the term $\diff D^{\textit{V-C}}$ there is indeed a $\diff C_u$ term: in fact, substituting
\bes
\diff C_u &=& r_u \,C_{u-}\diff u + C^g_{u-} \diff M^g_u + X^g_{u-} \diff C_u^g  +   \diff [  C^g, X^g ]_u - r_u^{gb} \,X^g_{u-} C^g_{u-} \diff u \\
&=&  \mu_u^{g} \,C_{u-}\diff u + C^g_{u-} \diff M^g_u + \diff(X^g C^g)_u - C^g_{u-} \diff X_u^g
\ees
so, canceling $\diff C_u$ on both sides,
\bes
0 &=& \mu_u^{g} \,C_{u-}\diff u + C^g_{u-} \diff M^g_u  - C^g_{u-} \diff X_u^g\\
&=& \mu_u^{g} \,C_{u-}^g\, X^g_{u-} \diff u + C^g_{u-} (\diff X^g_u -\mu_u^{g}X^g_{u-}\diff u ) - C^g_{u-} \diff X_u^g
\ees
and also the right-hand side is equal to zero. 
\end{remark}


\begin{proposition}\label{prop:FXSDEInQf}
Recalling \eqref{eq:DynXg}, under $\q^{fb}$, we have
\be\label{eq:DynOfFXUnderQfb}
\diff X^x_t =  (r_t-r^{xb}_t) X^x_{t-}\diff t + \frac{\diff [M^x, M^f]_t}{X^f_t }+  \diff M^{x,fb}_t 
\ee
where $M^{x,fb}$ is $\q^{fb}$-martingale corresponding to $M^x$ after measure change to $\q^{fb}$. Moreover, under $\q^{fb}$,
\begin{align}
\diff X^{xf}_t &=   X^{xf}_{t-} \left\{ (r^{fb}_t-r^{xb}_t) \diff t  - X^{xf}_{t-} \,\diff M^{f,fb}_t \right\} \qquad &&x=d\label{eq:diffFXdf}\\
\diff X^{xf}_t &= X^{xf}_{t-} \left\{ (r^{fb}_t -r^{xb}_t)   \diff t  -\frac{\diff M^{f,fb}_t}{X^{f}_{t-}}  + \frac{\diff M^{x,fb}}{X^x_{t-}} \right\}\qquad &&x\neq d\label{eq:diffFXxf}
\end{align}
where $M^{f,fb}$ is a $\q^{fb}$-martingale corresponding to $M^f$ after measure change to $\q^{fb}$.
\end{proposition}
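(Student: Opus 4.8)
The plan is to obtain \eqref{eq:DynOfFXUnderQfb} by a change of measure from $\q$ to $\q^{fb}$ via the Girsanov--Meyer theorem for semimartingales (cf.~Appendix \ref{sec:SemiMartingales}), and then to derive \eqref{eq:diffFXdf}--\eqref{eq:diffFXxf} by applying the Itô product/quotient rule to $X^{xf}=X^x/X^f$ using the $\q^{fb}$-dynamics just found. Throughout I would use that $r^{xb}=r-\mu^x$ and that all bank accounts are continuous with finite variation, so that their quadratic (co)variations with any process vanish (recall \eqref{eq:QCovOfFVProcess}).

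First I would identify the density process of the change $\q\mapsto\q^{fb}$. Since $\q^{fb}$ has numéraire $\beta^{fd}=B^{fb}X^f$, the Radon--Nikodym density is $L_t:=\e_t[\diff\q^{fb}/\diff\q]=\frac{\beta^{fd}_t}{B_t}\frac{B_0}{\beta^{fd}_0}$. The ratio $B^{fb}/B$ is continuous with finite variation and has logarithmic drift $r^{fb}-r=-\mu^f$; applying \eqref{eq:ItoWithJumpsOnProduct} to $L_t=\mathrm{const}\cdot(B^{fb}_t/B_t)\,X^f_t$ and using \eqref{eq:DynXg}, the finite-variation contributions cancel against the $\mu^f$-drift of $X^f$, leaving the purely-martingale dynamics $\diff L_t=\frac{L_{t-}}{X^f_{t-}}\,\diff M^f_t$. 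This confirms that $L$ is a $\q$-martingale and isolates $M^f$ as the only driver of the measure change. By the Girsanov--Meyer theorem, for the $\q$-local martingale $M^x$ the process $M^{x,fb}_t:=M^x_t-\int_0^t \frac{1}{L_{u-}}\,\diff[L,M^x]_u$ is a $\q^{fb}$-local martingale, and from the expression for $\diff L$ one gets $\frac{1}{L_{u-}}\,\diff[L,M^x]_u=\frac{1}{X^f_{u-}}\,\diff[M^x,M^f]_u$. Substituting $\diff M^x=\diff M^{x,fb}+\frac{1}{X^f_{u-}}\diff[M^x,M^f]_u$ into \eqref{eq:DynXg} and rewriting $\mu^x=r-r^{xb}$ then yields \eqref{eq:DynOfFXUnderQfb}.

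For the second part I would treat $X^{xf}=X^x/X^f$ by the semimartingale quotient rule. In the case $x=d$ one has $X^{df}=1/X^f$, so I would apply Itô's formula to $g(y)=1/y$ with $y=X^f$ driven by \eqref{eq:DynOfFXUnderQfb} specialised to $x=f$ (where $[M^x,M^f]=[M^f]$); the second-order/jump corrections combine with the drift $r-r^{fb}=\mu^f$ so that the net finite-variation part collapses to $(r^{fb}_t-r^{xb}_t)X^{df}_{t-}\diff t$ (using $r^{db}=r$), while the martingale part reduces to $-(X^{df}_{t-})^2\diff M^{f,fb}_t$, i.e.~\eqref{eq:diffFXdf}. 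In the case $x\neq d$ I would write $X^{xf}=X^x\cdot(1/X^f)$ and apply \eqref{eq:ItoWithJumpsOnProduct}, inserting the $\q^{fb}$-dynamics of $X^x$ and of $1/X^f$ and retaining the cross term $\diff[M^x,M^f]$. Collecting the finite-variation terms should again leave exactly $(r^{fb}_t-r^{xb}_t)X^{xf}_{t-}\diff t$, and collecting the martingale terms should assemble into $X^{xf}_{t-}\left\{-\frac{\diff M^{f,fb}_t}{X^f_{t-}}+\frac{\diff M^{x,fb}_t}{X^x_{t-}}\right\}$, giving \eqref{eq:diffFXxf}.

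The main obstacle is the bookkeeping of the quadratic-(co)variation and jump terms so that everything reduces to the clean drifts $(r^{fb}-r^{xb})$ and the stated martingale increments. Concretely, the Girsanov correction $\frac{1}{X^f_{t-}}\diff[M^x,M^f]$ produced in the first part must cancel exactly against the covariation terms generated by Itô's formula when forming the quotient $X^x/X^f$, and one must keep careful track of predictable left-limit versions (the integrand $1/X^f_{t-}$) against optional versions in these covariation integrals. Verifying that $M^{x,fb}$ and $M^{f,fb}$ are genuine $\q^{fb}$-(local) martingales is delegated to the Girsanov step, so the remaining work is purely the algebra of matching drift and martingale parts.
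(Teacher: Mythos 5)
Your overall strategy is the same as the paper's (identify $L$, show it is driven only by $M^f$, apply the generalized Girsanov theorem, then derive the $X^{xf}$ dynamics via It\^o on $1/X^f$ and the product rule), but there is a concrete error in your Girsanov step that breaks the result in exactly the jump setting this paper is about. You define the correction as $\int_0^t \frac{1}{L_{u-}}\,\diff[L,M^x]_u$, i.e.~you pair the \emph{square} bracket with the \emph{left-limit} of the density. Neither version of Theorem \ref{th:GeneralizedGirsanov} has this form: the optional version uses $\frac{1}{L_u}\,\diff[L,M^x]_u$, while the predictable version uses $\frac{1}{L_{u-}}\,\diff\langle L,M^x\rangle_u$. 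With $\diff L_u=\frac{L_{u-}}{X^f_{u-}}\diff M^f_u$ one gets $\diff[L,M^x]_u=\frac{L_{u-}}{X^f_{u-}}\diff[M^f,M^x]_u$, so the correct (optional) correction is $\frac{L_{u-}}{L_u X^f_{u-}}\diff[M^x,M^f]_u=\frac{1}{X^f_u}\diff[M^x,M^f]_u$ (using $L_u/L_{u-}=X^f_u/X^f_{u-}$), which is what appears in \eqref{eq:DynOfFXUnderQfb}. Your hybrid gives $\frac{1}{X^f_{u-}}\diff[M^x,M^f]_u$ instead; the two differ by $\bigl(\frac{1}{X^f_u}-\frac{1}{X^f_{u-}}\bigr)\Delta M^x_u\Delta M^f_u$ at common jump times of $M^x$ and $M^f$, so your $M^{x,fb}$ is not a $\q^{fb}$-local martingale in general and your substitution does not actually yield \eqref{eq:DynOfFXUnderQfb} as claimed. (In the continuous-driver setting the distinction vanishes, but that is the case the paper is generalizing away from.)

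This slip also propagates into the second part, which you otherwise leave at the level of ``the bookkeeping should work out.'' The paper's proof of \eqref{eq:diffFXdf}--\eqref{eq:diffFXxf} is several displays of explicit jump-term cancellation: the It\^o correction terms for $g(X^f)=1/X^f$, the co-jump parts of $[X^x,X^{df}]$, and the $\Delta M^{x,fb}=\Delta M^x-\frac{\Delta M^x\Delta M^f}{X^f}$ identity all conspire to cancel only because the Girsanov correction carries the optional denominator $X^f_u$. With the predictable denominator $X^f_{u-}$ those cancellations fail by precisely the residual jump terms above, so the clean drifts $(r^{fb}-r^{xb})X^{xf}_{t-}$ would not emerge. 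Fixing the Girsanov step to the optional form and then actually carrying out the jump algebra (as in Appendix \ref{App:proofOfFXSDEInQf}) repairs the argument.
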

\begin{proof}
See Appendix \ref{App:proofOfFXSDEInQf}.
\end{proof}

\begin{corollary}\label{cor:betaxfIsaQfPriceProcess}
Recalling definition \eqref{eq:defBetaBankAccount}, we confirm that $\beta^{xf}:=B^{xb} X^{xf}$ is a \vvirg{$\q^{fb}$-price process}.
\end{corollary}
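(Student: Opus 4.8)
The plan is to verify the two defining conditions of a ``$\q^{fb}$-price process'', reading Definition \ref{def:QPriceProcess} with the numéraire $B^{fb}$ and the measure $\q^{fb}$ in place of $B$ and $\q$. Condition (i), strict positivity, is immediate: $B^{xb}=B^{r^{xb}}=\esp^{\int_0^{\cdot} r^{xb}_u\diff u}>0$ and the FX rate $X^{xf}>0$ by construction, whence $\beta^{xf}=B^{xb}X^{xf}>0$. The substance is therefore condition (ii), namely that the deflated process $\beta^{xf}/B^{fb}$ is a (strict) $\q^{fb}$-martingale, and this I would obtain from the dynamics already computed in Proposition \ref{prop:FXSDEInQf}.

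First I would write $\frac{\beta^{xf}_t}{B^{fb}_t}=N_t\,X^{xf}_t$ with $N_t:=B^{xb}_t/B^{fb}_t$. Since $B^{xb}$ and $B^{fb}$ are both continuous with finite variation (being bank accounts driven by bounded, continuous rates), the ratio $N$ is continuous of finite variation with $\diff N_t=N_t\,(r^{xb}_t-r^{fb}_t)\,\diff t$, and in particular $[N,X^{xf}]=0$. The Itô product rule then gives $\diff\!\left(N X^{xf}\right)_t=N_{t-}\,\diff X^{xf}_t+X^{xf}_{t-}\,\diff N_t$. Inserting the $\q^{fb}$-dynamics of $X^{xf}$ from Proposition \ref{prop:FXSDEInQf} — equation \eqref{eq:diffFXxf} when $x\neq d$ and equation \eqref{eq:diffFXdf} when $x=d$ — the finite-variation part of $N_{t-}\diff X^{xf}_t$ equals $N_{t-}X^{xf}_{t-}(r^{fb}_t-r^{xb}_t)\diff t$, while $X^{xf}_{t-}\diff N_t=X^{xf}_{t-}N_{t-}(r^{xb}_t-r^{fb}_t)\diff t$. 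These two drifts cancel exactly, and the structure of the cancellation is identical in both cases $x=d$ and $x\neq d$ (only the surviving martingale part differs).

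What remains is driftless: a linear combination of stochastic integrals against the $\q^{fb}$-martingales $M^{f,fb}$ (and $M^{x,fb}$, when $x\neq d$), so that $\beta^{xf}/B^{fb}$ is a $\q^{fb}$-local martingale. To promote this to a strict martingale — as condition (ii) requires — I would invoke the square-integrability assumed for the $M^x$ immediately after \eqref{eq:DynXg}, together with the boundedness of the interest rates (hence of $N$ and of the bank accounts), or else adopt the same convention used for $\beta$ in the discussion preceding Proposition \ref{le:selfFinPortWithDivsGeneralEMM}, where a positive $\q$-local martingale eligible as numéraire is assumed to be a true martingale. Either route closes condition (ii) and yields the corollary.

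The only genuinely delicate step is precisely this passage from local to strict martingale, since the stochastic-integral representation guarantees only the local property a priori; the drift cancellation itself is purely mechanical once Proposition \ref{prop:FXSDEInQf} is available. I would also point out the conceptual payoff of the result: once $\beta^{xf}/B^{fb}$ is known to be a positive $\q^{fb}$-martingale, the normalized process $\frac{\beta^{xf}_t}{B^{fb}_t}\frac{B^{fb}_0}{\beta^{xf}_0}$ is exactly the Radon--Nikodym density $\e^{fb}_t\!\left[\frac{\diff\q^{xb}}{\diff\q^{fb}}\right]$, which is the precise statement that the change of numéraire $B^{fb}\mapsto\beta^{xf}$ (equivalently $\q^{fb}\mapsto\q^{xb}$) is legitimate — thereby confirming ex-post the claim anticipated in the paragraph introducing \eqref{eq:defBetaBankAccount}.
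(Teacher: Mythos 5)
Your proof is correct and follows essentially the same route as the paper's: apply the product rule exploiting that the bank accounts are continuous with finite variation, substitute the $\q^{fb}$-dynamics \eqref{eq:diffFXdf}/\eqref{eq:diffFXxf} of $X^{xf}$, and observe that the drift cancels (the paper writes the residual drift as $(\mu^x_t-\mu^f_t)X^{xf}_{t-}\diff t$, which equals your $(r^{fb}_t-r^{xb}_t)X^{xf}_{t-}\diff t$ by the definition $r^{xb}=r-\mu^x$). You are in fact slightly more careful than the paper on the passage from local to strict martingale, which the paper's proof leaves implicit.
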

\begin{proof}
We have to prove that $\beta^{xf}/B^{fb}$ is a $\q^{fb}$-martingale. We have, using
\eqref{eq:ItoWithJumpsOnProduct} and that $B^{fb}, B^{xb}$ are continuous with finite variation:
\bes
\diff \left( \beta^{xf} (B^{fb})^{-1}\right)_t &=& (B^{fb})^{-1}_t \diff \beta^{xf}_t + \beta^{xf}_{t-} \diff ((B^{fb})^{-1})_t \\
&=& (B^{fb})^{-1}_t \left\{B^{xb}_t \diff X^{xf}_t + X^{xf}_{t-} \diff B^{xb}_t - r^{fb}_t\,\beta^{xf}_{t-}\diff t \right\} \\
&=& \frac{B^{xb}_t}{B^{fb}_t} \left\{ \diff X^{xf}_t + r^{xb}_t\,X^{xf}_{t-} \diff t - r^{fb}_t\,X^{xf}_{t-}\diff t \right\} \\
&=& \frac{B^{xb}_t}{B^{fb}_t} \left\{ \diff X^{xf}_t - (\mu^{x}_t-\mu^{f}_t) X^{xf}_{t-} \diff t \right\}
\ees
and the result is obtained substituting \eqref{eq:diffFXdf} or \eqref{eq:diffFXxf} and recalling that $\mu^d=0$.
\end{proof}

\begin{proposition}\label{prop:fxforward}
The FX forward (i.e.~the par rate of an FX forward contract) is written as
\beq\label{eq:fxforward}
X^{xf}_t(U):=\e_t^{U;fb}\left[X^{xf}_U\right] = X^{xf}_t \,\frac{P_t^{xb}(U)}{P_t^{fb}(U)}
\eeq
\end{proposition}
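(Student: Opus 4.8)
The plan is to reduce the claim to a change of numéraire between the $U$-forward measure $\q^{U;fb}$ and the base measure $\q^{fb}$, and then to identify the $f$-currency value of an $x$-basis zero coupon bond as a genuine $\q^{fb}$-martingale. First I would read $\q^{U;fb}$ as the $U$-forward measure of the $fb$-economy, i.e.~the measure with numéraire $P^{fb}_\cdot(U)$, whose density against $\q^{fb}$ is
\[
\frac{\diff \q^{U;fb}}{\diff \q^{fb}}\Big|_{\fst_t} = \frac{P^{fb}_t(U)}{B^{fb}_t}\,\frac{B^{fb}_0}{P^{fb}_0(U)}.
\]
Applying the abstract Bayes rule and using $P^{fb}_U(U)=1$, the defining expectation of the FX forward becomes
\[
\e_t^{U;fb}\left[X^{xf}_U\right] = \frac{B^{fb}_t}{P^{fb}_t(U)}\,\e_t^{fb}\left[\frac{X^{xf}_U}{B^{fb}_U}\right],
\]
so it only remains to evaluate the inner $\q^{fb}$-expectation.

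The key step is to show that $N_t := X^{xf}_t\,P^{xb}_t(U)/B^{fb}_t$ is a $\q^{fb}$-martingale; this is the $B^{fb}$-deflated value, expressed in the $f$-currency, of the $x$-basis $U$-maturity zero coupon bond, and it is the natural refinement of Corollary \ref{cor:betaxfIsaQfPriceProcess} (there $\beta^{xf}=B^{xb}X^{xf}$ plays the role now taken by $X^{xf}P^{xb}(U)$). I would prove it by passing to the $xb$-economy: recalling from the technical setup that the density $L_t := \frac{\diff \q^{fb}}{\diff \q^{xb}}\big|_{\fst_t}$ is proportional to $\beta^{fx}_t/B^{xb}_t = B^{fb}_t/(B^{xb}_t X^{xf}_t)$ (with $\beta^{fx}=B^{fb}X^{fx}=B^{fb}/X^{xf}$), the product $N_t L_t$ collapses, up to a positive multiplicative constant, to $P^{xb}_t(U)/B^{xb}_t$. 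The latter equals $\e^{xb}_t[1/B^{xb}_U]$ and is therefore a $\q^{xb}$-martingale by the tower property; integrability is automatic since $r^{xb}$ is bounded and hence $1/B^{xb}_U$ is bounded over the finite horizon $[0,U]$. By the change-of-measure characterization, $N_tL_t$ being a $\q^{xb}$-martingale is equivalent to $N_t$ being a $\q^{fb}$-martingale.

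Finally I would write $X^{xf}_U = X^{xf}_U\,P^{xb}_U(U)$ (again $P^{xb}_U(U)=1$) and invoke the martingale property of $N$ to get $\e_t^{fb}[X^{xf}_U/B^{fb}_U] = X^{xf}_t\,P^{xb}_t(U)/B^{fb}_t$; substituting into the displayed identity cancels $B^{fb}_t$ and yields $\e_t^{U;fb}[X^{xf}_U] = X^{xf}_t\,P^{xb}_t(U)/P^{fb}_t(U)$, which is exactly \eqref{eq:fxforward}. The main obstacle is the martingale step: one must route the measure change between the two currency-basis economies correctly and make sure one works with a true (not merely local) martingale. This is painless here precisely because $P^{xb}(U)/B^{xb}$ is a bona fide conditional-expectation martingale, so no localization argument is needed; the whole proof is then a bookkeeping exercise in the Radon–Nikodym factors already set up in Section \ref{sec:multiccyCollDeriv}.
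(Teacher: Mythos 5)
Your proof is correct and rests on exactly the same two measure changes as the paper's: first from $\q^{U;fb}$ to $\q^{fb}$ via the forward-measure density, then from $\q^{fb}$ to $\q^{xb}$ via the deflator passage $B^{fb}\mapsto\beta^{xf}=B^{xb}X^{xf}$, with the same cancellation of $X^{xf}$ producing $P^{xb}_t(U)$. The only cosmetic difference is that you package the second step as the identification of the $\q^{fb}$-martingale $X^{xf}P^{xb}(U)/B^{fb}$, whereas the paper carries out the identical computation directly on the conditional expectation.
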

\begin{proof}
We have, recalling definition \eqref{eq:defBetaBankAccount} and that the passage $\q^{fb}\mapsto \q^{xb}$ is equivalent to the deflator passage $B^{fb}\mapsto \beta^{xf} :=  B^{xb} X^{xf}$,
\bes
X^{xf}_t(U)&:=&\e_t^{U;fb}\left[X^{xf}_U\right] = \frac{1}{P^{fb}_t(U)}\, \e_t^{U;fb}\left[ \frac{P^{fb}_t(U)}{P^{fb}_U(U)} \,X^{xf}_U \right]\\
&=&  \frac{1}{P^{fb}_t(U)} \,\e_t^{fb}\left[ \frac{B^{fb}_t }{B_U^{fb}}\, X^{xf}_U \right]\\
&=&  \frac{1}{P^{fb}_t(U)} \,\e_t^{xb}\left[ \frac{(B^{xb}\cdot X^{xf})_t}{(B^{xb}\cdot X^{xf})_U}\,X^{xf}_U\right]\\
&=&  \frac{X^{xf}_t}{P^{fb}_t(U)} \,\e_t^{xb}\left[ \frac{B^{xb}_t \,}{B^{xb}_U }\right]
\ees
from which we have the thesis.
\end{proof}

\begin{proposition}\label{prop:VfCollateralPricing}
Defining
\be
\widetilde{V}^f_t := \frac{V^{f}_t}{B^{fb}_t}, \qquad C^f_t :=  C_t^{g}\,X^{gf}_t,  \qquad \widetilde{C}^f_t := \frac{C^f_t}{B_t^{fb}}
\ee
and
\be\label{eq:DivsOfVfMinusCf}
\diff D^{\textit{f,V-C}}_u := \diff \Pi_u^f + X^{gf}_{u-} \left\{ \diff C_u^g - c_u^g \,C_{u-}^g \diff u +   \left[\diff  C^g_u,   \frac{\diff  X^{gf}_u }{X^{gf}_{u-} }\right]\right\},
\ee
with $D^{\textit{f,V-C}}_0=D^{\textit{f,V-C}}_{0-}=0$, then the gain process of the collateralized financial derivative $(V^f-C^f)_t$ in currency $f$ is $G^{\textit{f,V-C}}:=(V^{f}-C^{f})_t+D^{\textit{f,V-C}}_t$ and the deflated gain process
\be\label{eq:GTildef}
\widetilde{G}^{\textit{f,V-C}}_t := \widetilde{V}^f_t - \widetilde{C}^f_t + \int_0^t \frac{ \diff D^\textit{f,V-C}_u }{B_u^{fb}}
\ee
is a martingale under $\q^{fb}$. 
We also have the pricing formula:
\be\label{eq:DerValueWithFXandCollateralInBasisMeasure}
V_t^f  = B^{fb}_t  \,\e_t^{fb}\left[  \frac{\phi_T^f  }{B^{fb}_T } + \int_t^T \frac{ \diff D^f_u }{B_u^{fb}}  \right]
\ee
recalling definition (\ref{eq:DerivativeDividendsDef}). In particular, the gain process of the financial derivative is $G_t^f:=V_t^f+D_t^f$ and its deflated version
\[
\widetilde{G}_t^f := \widetilde{V}_t^f + \int_0^t \frac{ \diff D^f_u}{B_u^{fb}}
\]
is a $\q^{fb}$-martingale.
\end{proposition}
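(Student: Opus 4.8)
The plan is to obtain all three claims as the change-of-numéraire image of the domestic results of Proposition~\ref{prop:MuticurrencyCollateralPricing}, taking $\beta^{fd}:=B^{fb}X^{f}$ as the new numéraire. By the discussion preceding Corollary~\ref{cor:betaxfIsaQfPriceProcess} (the case $\beta^{yd}=B^{yb}X^{y}$ with $y=f$), $\beta^{fd}$ is a ``$\q$-price process'', so the passage $\q\mapsto\q^{fb}$ is exactly the numéraire change $B\mapsto\beta^{fd}$, and the general machinery of Proposition~\ref{le:selfFinPortWithDivsGeneralEMM} and Theorem~\ref{th:fftapWithDivsGeneralEMM} applies verbatim to dividend-paying assets deflated by $\beta^{fd}$.

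First I would treat the collateralized contract $(V-C)$ as a single dividend-paying asset with cumulative dividend $D^{\textit{V-C}}$ of \eqref{eq:diffDVMinusC}; Proposition~\ref{prop:MuticurrencyCollateralPricing} already gives that its $B$-deflated gain \eqref{eq:defGTilde} is a $\q$-martingale. Applying Theorem~\ref{th:fftapWithDivsGeneralEMM} with deflator $\beta^{fd}$ (and recalling \eqref{eq:GainProcessGeneralMeasure}) then yields at once that
\[
\frac{(V-C)_t}{\beta^{fd}_t}+\int_0^t\left\{\frac{\diff D^{\textit{V-C}}_u}{\beta^{fd}_{u-}}+\diff\Big[D^{\textit{V-C}},\frac{1}{\beta^{fd}}\Big]_u\right\}
\]
is a $\q^{fb}$-(local) martingale. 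To identify this with $\widetilde{G}^{\textit{f,V-C}}$ of \eqref{eq:GTildef} I would verify two identities. The spot part is immediate: since $X^{g}=X^{gf}X^{f}$ one has $(V-C)_t=X^f_tV^f_t-X^g_tC^g_t=X^f_t(V^f_t-C^f_t)$, hence $(V-C)_t/\beta^{fd}_t=(V^f_t-C^f_t)/B^{fb}_t$. The dividend part is the identity
\[
\frac{\diff D^{\textit{f,V-C}}_u}{B^{fb}_u}=\frac{\diff D^{\textit{V-C}}_u}{\beta^{fd}_{u-}}+\diff\Big[D^{\textit{V-C}},\frac{1}{\beta^{fd}}\Big]_u,
\]
which I would check term by term (the $\Pi^f$-part and the collateral part) by expanding $1/\beta^{fd}=1/(B^{fb}X^f)$ through \eqref{eq:ItoWithJumpsOnProduct}, using that $B^{fb}$ is continuous of finite variation and that $X^{gf}_{u-}X^{f}_{u-}=X^{g}_{u-}$. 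The bracket $X^{gf}_{u-}\,[\diff C^g_u,\diff X^{gf}_u/X^{gf}_{u-}]$ appearing in \eqref{eq:DivsOfVfMinusCf} is precisely what reconciles the $[\,D^{\textit{V-C}},1/\beta^{fd}]$ correction. This covariation bookkeeping is the only genuinely computational step and the main obstacle.

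Next I would pass from the collateralized gain to the derivative-only gain $\widetilde{G}^f$. Rather than re-run the covariation computation, I would form the difference $\widetilde{G}^{\textit{f,V-C}}_t-\widetilde{G}^f_t$. Using $\widetilde{C}^f_t=\widetilde{C}^f_0+\int_0^t\diff\widetilde{C}^f_u$, the deflation rule \eqref{eq:processDeflatedByBankAccount} applied to the $B^{fb}$-bank account, and the definitions \eqref{eq:DivsOfVfMinusCf} and \eqref{eq:DerivativeDividendsDef}, the $c^g$-terms cancel and the difference collapses to a constant plus
\[
-\int_0^t\frac{C^g_{u-}}{B^{fb}_u}\left\{\diff X^{gf}_u-X^{gf}_{u-}(r^{fb}_u-r^{gb}_u)\diff u\right\}.
\]
By the $\q^{fb}$-dynamics of $X^{gf}$ in \eqref{eq:diffFXxf} (resp.\ \eqref{eq:diffFXdf} if $g=d$) of Proposition~\ref{prop:FXSDEInQf}, the brace is a pure $\q^{fb}$-local-martingale increment, so the difference is a $\q^{fb}$-local martingale; hence $\widetilde{G}^f$, being the $\q^{fb}$-martingale $\widetilde{G}^{\textit{f,V-C}}$ minus a $\q^{fb}$-local martingale, is itself a $\q^{fb}$-(local) martingale, with strictness inherited under the standing integrability assumptions.

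Finally, the pricing formula \eqref{eq:DerValueWithFXandCollateralInBasisMeasure} is immediate: writing $\widetilde{G}^f_t=\e_t^{fb}[\widetilde{G}^f_T]$, cancelling the common integral over $(0,t]$ and inserting the terminal condition $V^f_T=\phi^f_T$ gives $\widetilde{V}^f_t=\e_t^{fb}[\phi^f_T/B^{fb}_T+\int_t^T \diff D^f_u/B^{fb}_u]$, which is \eqref{eq:DerValueWithFXandCollateralInBasisMeasure} after multiplying through by $B^{fb}_t$.
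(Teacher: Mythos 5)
Your proposal is correct, but it is organized differently from the paper's proof, and the comparison is instructive. The paper does not transport the aggregate domestic martingale of Proposition \ref{prop:MuticurrencyCollateralPricing} through Theorem \ref{th:fftapWithDivsGeneralEMM}; instead it writes down the $f$-economy deflated gain process directly from \eqref{eq:GainProcessGeneralMeasure}, deflating $V^f$ by $B^{fb}$ and the $g$-denominated collateral by $\beta^{fg}$, and then simplifies $\diff[C^g,1/\beta^{fg}]_u$ (using that $B^{fb}$ is continuous with finite variation) to land on \eqref{eq:DivsOfVfMinusCf}; the covariation bookkeeping you flag as the main obstacle is therefore present in both routes, just attached to different objects ($[C^g,1/\beta^{fg}]$ there versus $[D^{\textit{V-C}},1/\beta^{fd}]$ here), and your identity does hold -- though you should carry it out, since the co-jump terms between $C^g$, $X^g$ and $X^f$ are exactly where sign errors hide. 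For the remaining claims the two proofs run in opposite directions: the paper obtains the pricing formula \eqref{eq:DerValueWithFXandCollateralInBasisMeasure} first, by changing numéraire $B\mapsto\beta^{fd}$ in the domestic pricing formula \eqref{eq:DerValueWithFXandCollateral} via \eqref{eq:SexpectedValueWithGenMeas} and simplifying the correction term, and then reads off the martingale property of $\widetilde{G}^f$; you prove the martingale property of $\widetilde{G}^f$ first, by showing $\widetilde{G}^{\textit{f,V-C}}_t-\widetilde{G}^f_t$ reduces (after the $c^g$-terms and the $[\diff C^g,\diff X^{gf}]$ terms cancel against $\diff\widetilde{C}^f$) to the stochastic integral of $-C^g_{u-}/B^{fb}_u$ against the driftless part of $X^{gf}$ under $\q^{fb}$, which is a clean structural argument that makes the role of \eqref{eq:diffFXxf} explicit, and then the pricing formula is a one-line consequence. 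Your difference computation is correct (it reproduces the content of \eqref{eq:diffAOverBforeign}), and the only residual looseness -- strict versus local martingale -- is at the same level of rigor as the paper's own treatment.
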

\begin{proof}
See Appendix \ref{App:VfCollateralPricing}.
\end{proof}

\noindent We now derive the Risk-Neutral drift of $V^f$, expanding the results of Section 2 in \cite{GPS2019}.

\begin{proposition}\label{prop:DynamicsOfVf}
The no-arbitrage dynamics of $(V^f-C^f)$ under $\q^{fb}$ is
\[
\diff (V^f-C^f)_t = r_t^{fb} (V^f-C^f)_{t-} \diff t + \diff \mathcal{M}^{f,\textit{V-C}}_t - \diff D^{f,\textit{V-C}}_t
\]
where $M^{f,\textit{V-C}}$ is a RCLL $\q^{fb}$-martingale with $\mathcal{M}^{f,\textit{V-C}}_0=0$. Then, under $\q^{fb}$,
recalling definition (\ref{eq:DerivativeDividendsDef}),
\be
\diff V^f_t &=& r^{fb}_t\,V_{t-}^f \diff t + \diff \mathcal{M}_t^f - \diff D^f_t \label{eq:diffVf}\\
\diff \mathcal{M}_t^f &:=& \diff \mathcal{M}^{\textit{f,V-C}}_t  +   X_{t-}^{gf}\,C^g_{t-}  \left\{\frac{\diff M^{g,fb}}{X^g_{t-}} -  \frac{\diff M^{f,fb}_t}{X^{f}_{t-}} \right\}\nonumber
\ee
where $\mathcal{M}^f$ is a RCLL $\q^{fb}$-martingale. Moreover, under $\q$, the quanto-$d$ dynamics of $V^f$ writes:
\beq\label{eq:QuantodDynOfVf}
\diff V^f_t = V_{t-}^f\left\{ r^{fb}_t \diff t- \frac{\diff[\mathcal{M}^f, X^f]_t}{X_{t-}^f\,V_{t-}^f} \right\}  + \diff \mathcal{M}_t^{\textit{fqd}}   - \diff D^f_t
\eeq
where $\mathcal{M}^{\textit{fqd}}$ is a $\q$-martingale corresponding to $\mathcal{M}^f$ after a change of measure to $\q$. 
\end{proposition}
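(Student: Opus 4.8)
The plan is to proceed in three stages, mirroring the structure of the statement itself. First I would establish the dynamics of $(V^f-C^f)$ under $\q^{fb}$. This is the direct analogue of Proposition \ref{prop:DynamicsOfVMinusCandVunderQ}: by Proposition \ref{prop:VfCollateralPricing} the deflated gain process $\widetilde{G}^{\textit{f,V-C}}$ (defined in \eqref{eq:GTildef}) is a $\q^{fb}$-martingale, so applying Proposition \ref{prop:RiskNeutralDrift} \emph{with the deflator $B^{fb}$ and the rate $r^{fb}$ in place of $B$ and $r$} immediately yields that $(V^f-C^f)$ drifts at rate $r^{fb}$ and that $\diff \widetilde{G}^{\textit{f,V-C}}_t = (B^{fb}_t)^{-1}\diff \mathcal{M}^{f,\textit{V-C}}_t$ for a $\q^{fb}$-martingale $\mathcal{M}^{f,\textit{V-C}}$. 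I would just note that Proposition \ref{prop:RiskNeutralDrift}, although stated for the domestic bank account, applies verbatim in the $f$-basis economy since $B^{fb}$ is likewise continuous with finite variation.

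Second, to obtain \eqref{eq:diffVf}, I would recover $\diff \widetilde{V}^f_t$ from $\diff \widetilde{G}^{\textit{f,V-C}}_t = \diff \widetilde{V}^f_t - \diff \widetilde{C}^f_t + (B^{fb}_t)^{-1}\diff D^{\textit{f,V-C}}_u$, exactly as in the proof of Proposition \ref{prop:DynamicsOfVMinusCandVunderQ}. The task is to simplify the combination $(B^{fb}_t)^{-1}\diff D^{\textit{f,V-C}}_u - \diff \widetilde{C}^f_t$ using the definition \eqref{eq:DivsOfVfMinusCf} of $D^{\textit{f,V-C}}$, the definition $\widetilde{C}^f_t = C^g_t X^{gf}_t/B^{fb}_t$, and the product/deflation rules for semimartingales. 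The key identity I expect to need is the decomposition of $\diff \widetilde{C}^f$ via Itô's product rule together with \eqref{eq:diffFXxf} for $\diff X^{gf}$ under $\q^{fb}$; the quadratic-covariation term $[\diff C^g, \diff X^{gf}/X^{gf}_{-}]$ appearing in $D^{\textit{f,V-C}}$ is precisely what cancels the corresponding cross term produced by the product rule, leaving the clean drift $r^{fb}_t V^f_{t-}\diff t$ and isolating the martingale correction $X^{gf}_{t-}C^g_{t-}\{\diff M^{g,fb}/X^g_{t-} - \diff M^{f,fb}_t/X^f_{t-}\}$ that gets folded into $\mathcal{M}^f_t$. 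I would then multiply through by $B^{fb}$ (using $\diff V^f = B^{fb}\diff\widetilde{V}^f + \widetilde{V}^f_{t-}\diff B^{fb}$, and $B^{fb}$ continuous) to pass from $\widetilde{V}^f$ back to $V^f$, yielding \eqref{eq:diffVf}.

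Third, for the quanto-$d$ dynamics \eqref{eq:QuantodDynOfVf} I would change measure from $\q^{fb}$ back to $\q$. Under the change of numéraire $B^{fb}\mapsto B$ (equivalently the Radon--Nikodym density built from $\beta^{fd}=B^{fb}X^f$, a $\q$-price process by Corollary \ref{cor:betaxfIsaQfPriceProcess}), a $\q^{fb}$-martingale $\mathcal{M}^f$ acquires a drift given by the Girsanov correction $-\diff[\mathcal{M}^f, \log \text{density}]$, which here evaluates to $-\diff[\mathcal{M}^f, X^f]_t/(X^f_{t-}\,\cdot)$ after normalizing by $V^f_{t-}$ to match the stated form. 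Concretely I would write $\mathcal{M}^f_t = \mathcal{M}^{\textit{fqd}}_t + (\text{finite-variation drift})$ where $\mathcal{M}^{\textit{fqd}}$ is the $\q$-martingale part, identify the drift as the quadratic-covariation term with the density, and substitute into \eqref{eq:diffVf}. The rate $r^{fb}$ itself is measure-independent and so survives unchanged.

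The main obstacle I anticipate is the bookkeeping in the second stage: correctly tracking all the predictable-quadratic-covariation cross terms generated when differentiating $\widetilde{C}^f = C^g X^{gf}/B^{fb}$ by the jump-aware product rule, and verifying that the $[\diff C^g, \diff X^{gf}/X^{gf}_{-}]$ term in the definition of $D^{\textit{f,V-C}}$ matches them exactly so that the collateral contribution collapses to the stated martingale increment. This is the same mechanism as in Proposition \ref{prop:DynamicsOfVMinusCandVunderQ} but complicated by the presence of two FX rates ($X^f$ and $X^{gf}$) and the non-domestic deflator, so care is needed to apply \eqref{eq:diffFXxf} rather than \eqref{eq:DynXg} and to keep the $f$-basis drifts $r^{fb}, r^{gb}$ straight. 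The Girsanov step of stage three is routine once the $\q^{fb}$-dynamics is in hand.
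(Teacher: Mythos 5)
Your proposal follows essentially the same route as the paper's proof: stage one invokes Proposition \ref{prop:VfCollateralPricing} plus the Risk-Neutral-drift argument with deflator $B^{fb}$, stage two unwinds $\diff\widetilde{G}^{\textit{f,V-C}}$ via exactly the identity the paper records as \eqref{eq:diffAOverBforeign} and then substitutes \eqref{eq:diffFXxf}, and stage three is the same Girsanov change of measure with density built from $\beta^{df}/B^{fb}$. The only understatement is calling the last step \vvirg{routine}: the paper still has to do a careful jump-aware computation to show $\diff[\mathcal{M}^f,L]_t/L_t=-\diff[\mathcal{M}^f,X^f]_t/X^f_{t-}$, but your plan correctly identifies that this is the quantity to compute.
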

\begin{proof}
See Appendix \ref{app:ProofOfDynamicsOfVf}.
\end{proof}

We extend here the results of Corollary 2.1 of \cite{MP2017}, in different directions: firstly, in this reference the authors do not consider over-collateralization, secondly they work in a setting where all market risks are
described by a vector of continuous Itô processes driven by a Brownian vector. Coherently with \eqref{eq:CollateralWithHaircut}, define the target value of collateral account 
as:
\[
C_{u}^{g\star}:=(1+\alpha_{u}) V_{u-}^f\,X^{fg}_{u-}
\]
where the haircut $\alpha_u\geq 0$ is a predictable process. As promised in Remark \ref{rk:NonLinearity}, the following Proposition tackles the issue of a recursive pricing formula.

\begin{proposition}\label{prop:PriceWithSimpleCollateral}
We add the hypotheses that the continuous dividend is proportional, i.e.~$\psi^f_u\setto\ell^f_u\,V^f_{u-}$ for some predictable process $\ell^f$ and also that we have continuous margin calls i.e.~$C^g_u\setto C_{u}^{g\star}$ for any $u\in [0,T]$. Then
\be\label{eq:PriceWithSimpleCollateral}
V_t^f  = B^{z^f}_t  \,\e_t^{fb}\left[  \frac{\phi_T^f  }{B^{z^f}_T }  +  \int_t^T  \frac{\diff \Phi_u^f}{B^{z^f}_u} \right]
\ee
where we define the blended interest rate
\be\label{eq:discRateWithSimpleCollateral}
z_u^f := (1+\alpha_u) ( c_u^g-r^{gb}_u + r^{fb}_u) - \big\{ \alpha_u \,r^{fb}_u  + \ell^f_u\big\}
\ee
\end{proposition}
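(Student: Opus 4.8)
The plan is to avoid re-deriving anything from scratch and instead feed the two new hypotheses into the no-arbitrage $\q^{fb}$-dynamics of $V^f$ already obtained in Proposition \ref{prop:DynamicsOfVf}, namely $\diff V^f_t = r^{fb}_t\,V^f_{t-}\diff t + \diff \mathcal{M}^f_t - \diff D^f_t$, and then to recognise that both the collateral term and the proportional continuous dividend enter the drift linearly in $V^f_{t-}$, so that they can be reabsorbed into a modified discounting rate exactly as in Proposition \ref{prop:propDivsDiscount} (cf.~Remark \ref{rk:PropDivsToChangeOfDiscountingRate}). This is precisely the mechanism by which the non-linearity flagged in Remark \ref{rk:NonLinearity} is resolved under these simplifying assumptions.

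First I would rewrite $\diff D^f_u$ from \eqref{eq:DerivativeDividendsDef}. Splitting $\diff \Pi^f_u = \psi^f_u \diff u + \diff \Phi^f_u$ and inserting $\psi^f_u=\ell^f_u\,V^f_{u-}$ produces the first piece $\ell^f_u\,V^f_{u-}\diff u + \diff \Phi^f_u$. For the collateral piece, the continuous-margin-call hypothesis $C^g_u=C^{g\star}_u=(1+\alpha_u)V^f_{u-}X^{fg}_{u-}$ makes the target process left-continuous, whence $C^g_{u-}=(1+\alpha_u)V^f_{u-}X^{fg}_{u-}$; together with $X^{gf}_{u-}X^{fg}_{u-}=1$ (from $X^{gf}=1/X^{fg}$) this collapses $X^{gf}_{u-}C^g_{u-}(c^g_u-r^{gb}_u)\diff u$ into the $V^f_{u-}$-proportional term $(1+\alpha_u)V^f_{u-}(c^g_u-r^{gb}_u)\diff u$. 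Hence
\[
\diff D^f_u = \diff \Phi^f_u + V^f_{u-}\left[\ell^f_u - (1+\alpha_u)(c^g_u - r^{gb}_u)\right]\diff u .
\]

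Next I would substitute this into the $\q^{fb}$-SDE and collect all $V^f_{u-}\diff u$ contributions, obtaining $\diff V^f_t = z^f_t\,V^f_{t-}\diff t + \diff \mathcal{M}^f_t - \diff \Phi^f_t$ with drift coefficient $r^{fb}_t-\ell^f_t+(1+\alpha_t)(c^g_t-r^{gb}_t)$, which a one-line rearrangement shows equals $z^f_t$ of \eqref{eq:discRateWithSimpleCollateral}. This places $V^f$ in exactly the form \eqref{eq:RiskNeutralDynWithDivs}-\eqref{eq:RiskNeutraldriftWithDivs} with interest rate $z^f$ in place of $r$ and residual dividend $\Phi^f$. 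Applying the argument of Proposition \ref{prop:propDivsDiscount} in the measure $\q^{fb}$ — i.e.~deflating by $B^{z^f}$ and checking via Itô (recall $B^{z^f}$ is continuous of finite variation) that $\diff(V^f/B^{z^f})_t = (\diff\mathcal{M}^f_t - \diff\Phi^f_t)/B^{z^f}_t$ — shows that $V^f_t/B^{z^f}_t + \int_0^t \diff\Phi^f_u/B^{z^f}_u$ is a $\q^{fb}$-local martingale. Under the strict-martingale assumption, taking $\e^{fb}_t[\cdot]$ and using the terminal condition $V^f_T=\phi^f_T$ yields \eqref{eq:PriceWithSimpleCollateral}.

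The main obstacle is the careful treatment of the collateral left-limit: one must justify that continuous margin calls give $C^g_{u-}=(1+\alpha_u)V^f_{u-}X^{fg}_{u-}$ rather than merely $C^g_u=C^{g\star}_u$, which rests on the predictability/left-continuity of the target \eqref{eq:CollateralWithHaircut} and is exactly what turns the otherwise recursive collateral term into one proportional to $V^f_{u-}$. Everything downstream is routine algebra together with the already-proven deflation argument; the only genuine subtlety is confirming that this substitution is legitimate and that the resulting proportional drift can be traded for the discount rate $z^f$ without disturbing the martingale $\mathcal{M}^f$.
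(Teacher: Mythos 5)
Your proposal is correct and follows essentially the same route as the paper: substitute the two hypotheses into the dividend \eqref{eq:DerivativeDividendsDef} to obtain $\diff D^f_u = \diff \Phi^f_u + q_u V^f_{u-}\diff u$ with $q_u = \ell^f_u - (1+\alpha_u)(c^g_u - r^{gb}_u)$, and then absorb the proportional drift into the discount rate via the mechanism of Proposition \ref{prop:propDivsDiscount} (which you re-derive explicitly under $\q^{fb}$ rather than invoke, but the computation is identical), ending with $z^f = r^{fb}-q$. The only remark worth adding is that the left-limit issue you flag as the "main obstacle" is in fact harmless for a different reason: the collateral term enters only through a $\diff u$ (Lebesgue) integral, so by Proposition \ref{prop:LebesgueIntegralDiscontinuities} the distinction between $C^g_{u-}$ and $C^g_u$ is immaterial there, and the substitution is legitimate without any predictability argument.
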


\begin{proof}
Substituting $\psi^f_u\setto\ell^f_u\,V^f_{u-}$ and $C^g_u\setto C_{u}^{g\star}$, and defining
\be\label{eq:DeltaRateDefinition}
q_u :=\ell^f_u - (1+\alpha_u) (c_u^g-r^{gb}_u),
\ee
the financial derivative dividend process \eqref{eq:DerivativeDividendsDef} becomes
\[
\diff D^f_u = \diff \Phi_u^f + q_u V^f_{u-} \diff u 
\]
and using the result of Proposition \ref{prop:propDivsDiscount}, we obtain the thesis since
\bes
z_u^f &:=& r^{fb}_u - q_u\\
&=&(1+\alpha_u-\alpha_u)\,r^{fb}_u  +(1+\alpha_u) ( c_u^g-r^{gb}_u) -\ell^f_u\\
&=& -\alpha_u \,r^{fb}_u  +(1+\alpha_u) ( c_u^g-r^{gb}_u + r^{fb}_u) -\ell^f_u
\ees
Alternatively, as in Corollary 2.1 of \cite{MP2017}, we can use the Feynman-Kac Theorem in case all processes are continuous (so that all PDEs are well defined) and driven by a Brownian vector.
%
\end{proof}

\begin{remark}
Formula (\ref{eq:PriceWithSimpleCollateral}) is sometimes referred among practitioners to as \vvirg{CSA discounting}. This formula is in the most general form: 
all rates are stochastic and cash flows and collateral are under different foreign currencies. 
One could simplify the first hypothesis: see Corollary \ref{cor:multiccyDiscountingExplain} also in order to better understand the formula. Moreover, since $d, f$ and $g$ are only placeholders for currencies, one could also simplify the multi-currency framework: e.g.~setting $g\setto f$ (cash flows and collateral both in currency $f$) or $f\setto d$ (cash flows in domestic currency, collateral in currency $g$), or the other way round $g\setto d$ (cash flows in currency $f$ and collateral in domestic currency), etc. 
\end{remark}

\noindent We try to gain intuition on the crucial pricing formula \eqref{eq:PriceWithSimpleCollateral} and in particular on the first addend of $z^f$, the second being a payoff increase (in case both $r^{fb},\,\ell^f\geq 0$, remark the minus sign before the curly brackets) for over-collateralization and proportional dividends.

\begin{corollary}\label{cor:multiccyDiscountingExplain}
In case $\alpha=\ell^f=0$ and all rates are deterministic, (\ref{eq:PriceWithSimpleCollateral}) becomes
\bes
V_t^f  =    P^{z^f}_t(T)\,\e_t^{fb}\left[  \phi_T^f\right] +  \int_t^T  P^{z^f}_t(u)\, \e_t^{fb}\left[\diff \Phi_u^f \right]
\ees
where, for $U\geq t$
\[
P^{z^f}_t(U) = \frac{P^{c^g}_t(U)\,X^{fg}_t(U)}{X^{fg}_t}
\]
which has a clear financial meaning: the forward FX converts the time-$U$ cash flow from currency $f$ to currency $g$, then the cash flow is discounted to $t$ with rate $c^g$ and finally converted back in currency $f$ with $1/X^{fg}_t=X^{gf}_t$.
\end{corollary}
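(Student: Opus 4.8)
The plan is to specialize the CSA-discounting formula \eqref{eq:PriceWithSimpleCollateral} of Proposition \ref{prop:PriceWithSimpleCollateral} and then exploit the determinism of the rates to collapse the bank-account discounting into zero-coupon bonds. First I would substitute $\alpha\equiv 0$ and $\ell^f\equiv 0$ into the blended rate \eqref{eq:discRateWithSimpleCollateral}, which at once reduces it to $z^f_u=c^g_u-r^{gb}_u+r^{fb}_u$. Since $c^g$, $r^{gb}$ and $r^{fb}$ are assumed deterministic, $z^f$ is deterministic and so is the bank account $B^{z^f}_t=\esp^{\int_0^t z^f_u\diff u}$; this is essentially the only place where the hypothesis of deterministic rates is used.

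With $B^{z^f}$ deterministic I would factor it out of the two conditional expectations in \eqref{eq:PriceWithSimpleCollateral}. Because the rates are deterministic, for every $U\geq t$ the zero-coupon bond collapses to the deterministic ratio
\[
P^{z^f}_t(U)=\e_t^{fb}\left[\frac{B^{z^f}_t}{B^{z^f}_U}\right]=\frac{B^{z^f}_t}{B^{z^f}_U},
\]
so choosing $U=T$ for the terminal payoff and $U=u$ under the cash-flow integral, and interchanging the expectation with the deterministic-weight integral (Fubini), immediately produces the first displayed identity for $V^f_t$.

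The second, more interesting step is the factorization $P^{z^f}_t(U)=P^{c^g}_t(U)\,X^{fg}_t(U)/X^{fg}_t$. Here I would write $z^f_u=c^g_u+(r^{fb}_u-r^{gb}_u)$ and use determinism to split
\[
P^{z^f}_t(U)=\esp^{-\int_t^U c^g_u\diff u}\,\esp^{-\int_t^U(r^{fb}_u-r^{gb}_u)\diff u}=P^{c^g}_t(U)\,\frac{P^{fb}_t(U)}{P^{gb}_t(U)}.
\]
The remaining ratio $P^{fb}_t(U)/P^{gb}_t(U)$ is then identified with $X^{fg}_t(U)/X^{fg}_t$ through the FX-forward formula of Proposition \ref{prop:fxforward}: reading \eqref{eq:fxforward} with the roles of $f$ and $g$ exchanged (pricing measure $\q^{U;gb}$, numerator currency $f$) gives $X^{fg}_t(U)=X^{fg}_t\,P^{fb}_t(U)/P^{gb}_t(U)$, and substituting closes the argument.

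The main obstacle is not computational but one of bookkeeping with the currency conventions. The forward in the statement is $X^{fg}$ (convert $f\to g$), i.e.~the reverse direction to the one in which Proposition \ref{prop:fxforward} is literally written, so I must invoke the relabelling $f\leftrightarrow g$ together with the implicit change of forward measure from $\q^{U;fb}$ to $\q^{U;gb}$ explicitly; I should also be careful to note that $X^{fg}_t$ itself remains stochastic (only the interest rates, not the FX spot, are assumed deterministic), which is why the factor $X^{fg}_t(U)/X^{fg}_t$ is the right deterministic object rather than $X^{fg}_t$ alone. Everything else is routine rearrangement, and the closing financial reading -- convert $f\to g$ forward, discount in $g$ at $c^g$, convert back to $f$ at the spot rate -- is then immediate.
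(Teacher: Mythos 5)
Your proposal is correct and follows exactly the route the paper intends: the paper's own proof is the one-line remark ``straightforward from hypotheses, \eqref{eq:fxforward}--\eqref{eq:PriceWithSimpleCollateral}'', and you have simply filled in the details — specializing $z^f_u=c^g_u-r^{gb}_u+r^{fb}_u$, pulling the now-deterministic discount factors out of the $\q^{fb}$-expectation, splitting $P^{z^f}_t(U)=P^{c^g}_t(U)\,P^{fb}_t(U)/P^{gb}_t(U)$, and identifying the last ratio with $X^{fg}_t(U)/X^{fg}_t$ via Proposition \ref{prop:fxforward} with the currency labels exchanged. Your side remarks (that determinism of the rates is the only hypothesis doing work, and that $X^{fg}_t(U)/X^{fg}_t$ rather than the still-stochastic spot is the correct deterministic factor) are accurate and consistent with the paper.
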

\begin{proof}
Straightforward from hypotheses, \eqref{eq:fxforward}-\eqref{eq:PriceWithSimpleCollateral}.
\end{proof}


\begin{proposition}\label{prop:PriceWithSimpleCollateralDomesticMeas}
In the same setting as in Proposition \ref{prop:PriceWithSimpleCollateral} we have
\be\label{eq:PriceWithSimpleCollateralDomesticMeas}
V_t^f  = \frac{B^{z}_t}{X^f_t}  \,\e_t\left[  \frac{\phi_T^f \,X_T^f  }{B^{z}_T }  +  \int_t^T  \frac{X_{u}^f\,\diff \Phi_u^f}{B^{z}_u} \right]
\ee
where
\[
z_u := (1+\alpha_u) ( c_u^g-r^{gb}_u + r_u) - \big\{ \alpha_u \,r_u  + \ell^f_u\big\}
\]
\end{proposition}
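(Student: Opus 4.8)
The plan is to read Proposition \ref{prop:PriceWithSimpleCollateralDomesticMeas} as the domestic-measure counterpart of Proposition \ref{prop:PriceWithSimpleCollateral}: instead of discounting the foreign value $V^f$ in the $\q^{fb}$-economy, I would work with the domestic-currency value $V_t := X^f_t V^f_t$ in the $\q$-economy and apply the change-of-discounting-rate device of Proposition \ref{prop:propDivsDiscount} there. The two ingredients I would assemble are: (i) from Proposition \ref{prop:MuticurrencyCollateralPricing}, the fact that $\widetilde{G}_t = \widetilde{V}_t + \int_0^t \frac{\diff D_u}{B_u}$ is a $\q$-martingale with $\diff D_u = X^f_u\,\diff D^f_u$; and (ii) from Proposition \ref{prop:DynamicsOfVMinusCandVunderQ}, the no-arbitrage dynamics $\diff V_t = r_t V_{t-}\diff t + \diff \mathcal{M}_t - \diff D_t$, which is exactly the form \eqref{eq:RiskNeutralDynWithDivs}--\eqref{eq:RiskNeutraldriftWithDivs} required by Proposition \ref{prop:propDivsDiscount}.

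Next I would substitute the two standing hypotheses. Exactly as in the proof of Proposition \ref{prop:PriceWithSimpleCollateral}, the proportional continuous dividend $\psi^f_u = \ell^f_u V_{u-}^f$ and the continuous margin call $C^g_u = C^{g\star}_u$ reduce the foreign dividend to $\diff D^f_u = \diff \Phi^f_u + q_u V_{u-}^f\diff u$ with $q_u := \ell^f_u - (1+\alpha_u)(c^g_u - r^{gb}_u)$; the only nontrivial simplification is $X^{gf}_{u-}C^g_{u-} = (1+\alpha_u)V_{u-}^f$, which comes from $C^{g\star}_{u-} = (1+\alpha_u)V_{u-}^f X^{fg}_{u-}$ and $X^{gf}_{u-}X^{fg}_{u-}=1$.

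The key step is to carry this dividend into domestic currency. Writing $\diff D_u = X^f_u\,\diff \Phi^f_u + X^f_u\, q_u V_{u-}^f\diff u$, I would use that $X^f$ has at most countably many jumps (so $X^f_u = X^f_{u-}$ for $\diff u$-a.e.\ $u$) together with the left-limit identity $V_{u-} = X^f_{u-}V_{u-}^f$ to rewrite the absolutely continuous part as $q_u V_{u-}\diff u$. This gives $\diff D_u = q_u V_{u-}\diff u + \diff \Phi^V_u$ with $\diff \Phi^V_u := X^f_u\,\diff \Phi^f_u$, i.e.\ precisely the proportional-plus-residual dividend structure of Proposition \ref{prop:propDivsDiscount} for the domestic asset $V$. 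Applying that proposition with $\mu := r - q =: z$, the process $V_t/B^z_t + \int_0^t \frac{\diff \Phi^V_u}{B^z_u}$ is a $\q$-local martingale, and, on the strict-martingale hypothesis, $V_t = B^z_t\,\e_t[\,V_T/B^z_T + \int_t^T \frac{X^f_u\,\diff \Phi^f_u}{B^z_u}\,]$.

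Finally I would translate back by substituting $V_t = X^f_t V^f_t$ and the terminal condition $V_T = X^f_T \phi^f_T$ and dividing through by $X^f_t$, yielding \eqref{eq:PriceWithSimpleCollateralDomesticMeas}; a one-line check gives $z_u = r_u - q_u = (1+\alpha_u)(c^g_u - r^{gb}_u + r_u) - \{\alpha_u r_u + \ell^f_u\}$, matching the stated blended rate. I expect no real obstacle beyond careful FX bookkeeping: the single point that genuinely needs attention is the replacement of $X^f_u$ by its left limit inside the Lebesgue ($\diff u$) integral, which is legitimate for the absolutely continuous part but must \emph{not} be applied to the pure-jump term $X^f_u\,\diff \Phi^f_u$.
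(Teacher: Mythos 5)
Your proof is correct, but it follows a genuinely different route from the paper's. The paper starts from the already-established $\q^{fb}$-formula \eqref{eq:PriceWithSimpleCollateral}, observes $B^{z^f}=B^{fb}/B^{q}$, and then changes numéraire $B^{fb}\mapsto\beta^{df}$ via the general formula \eqref{eq:SexpectedValueWithGenMeas}; this forces an explicit quadratic-covariation correction $\diff[\Phi^f,1/\beta^{df}]_u$, which by \eqref{eq:QCovOfFVProcess} produces the co-jump sum $\sum \Delta\Phi^f_u\,\Delta X^f_u/B^z_u$, and the final formula emerges from the recombination $X^f_{u-}\Delta\Phi^f_u+\Delta X^f_u\,\Delta\Phi^f_u=X^f_u\,\Delta\Phi^f_u$. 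You instead bypass the foreign formula entirely: you work under $\q$ with the domestic value $V=X^fV^f$, whose gain process \eqref{eq:DeflatedGainOftheDerivative} and dynamics \eqref{eq:DynamicsOfDerivativePrice} are already in the form required by Proposition \ref{prop:propDivsDiscount}, and you note that the paper's domestic dividend $\diff D_u=X^f_u\,\diff D^f_u$ already carries the right-limit FX rate, so the covariation bookkeeping is pre-packaged and never has to be computed. Your approach buys a shorter, more elementary argument (no change of measure, no covariation calculus) and makes transparent why the payoff multiplier is $X^f_u$ rather than $X^f_{u-}$; the paper's approach buys structural parallelism with Proposition \ref{prop:PriceWithSimpleCollateral} and exhibits explicitly where the co-jump term comes from. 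Your one point of caution — replacing $X^f_u$ by $X^f_{u-}$ only inside the $\diff u$ (Lebesgue) part, justified by Proposition \ref{prop:LebesgueIntegralDiscontinuities}, and never in the pure-jump term — is exactly the right place to be careful, and is the same device the paper itself uses in the proof of Proposition \ref{prop:propDivsDiscount}.
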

\begin{proof}
Recalling \eqref{eq:DeltaRateDefinition} and that $z^f := r^{fb} - q$, we have
\[
B^{z^f}_t := \esp^{\int_0^t (r^{fb}_u - q_u )\diff u} = \frac{B^{fb}_t}{B^{q}_t}
\]
So using \eqref{eq:SexpectedValueWithGenMeas}, and defining $z:=r-q$, the \eqref{eq:PriceWithSimpleCollateral} can be rewritten as
\bes
V_t^f  &=& \frac{B^{fb}_t}{B^{q}_t}  \,\e_t^{fb}\left[  \frac{\phi_T^f \,B^{q}_T }{B^{fb}_T }  +  \int_t^T  \frac{B^{q}_u\,\diff \Phi_u^f}{B^{fb}_u} \right]\\
&=& \frac{ \beta^{df}_t}{B^{q}_t}  \,\e_t \left[  \frac{\phi_T^f \,B^{q}_T }{\beta^{df}_T}  +  \int_t^T  \frac{B^{q}_u\,\diff \Phi_u^f}{\beta^{df}_{u-}} + B^{q}_u \diff \left[\Phi^f, \frac{1}{\beta^{df}}\right]_u\right]\\
&=& \frac{ B^{z}_t }{X^f_t}  \,\e_t \left[  \frac{\phi_T^f \,X^{f}_T }{B^{z}_T}  +  \int_t^T  \frac{X^{f}_{u-}\,\diff \Phi_u^f}{B^{z}_{u}} + \sum_{t<u\leq T } \frac{  \Delta \Phi^f_u \Delta X^f_u}{B^{z}_u}\right]\\
&=& \frac{ B^{z}_t }{X^f_t}  \,\e_t \left[  \frac{\phi_T^f \,X^{f}_T }{B^{z}_T} + \sum_{t<u\leq T }   \frac{X^{f}_{u-}\,\Delta \Phi_u^f +  \Delta \Phi^f_u \Delta X^f_u}{B^{z}_u}\right]
\ees
recalling that $\Phi^f$ is a pure jump process. Moreover,
\bes
z_u &:=& r_u - q_u\\
&=&(1+\alpha_u-\alpha_u)\,r_u  +(1+\alpha_u) ( c_u^g-r^{gb}_u) -\ell^f_u\\
&=& -\alpha_u \,r_u  +(1+\alpha_u) ( c_u^g-r^{gb}_u + r_u) -\ell^f_u 
\ees
which gives the thesis.
\end{proof}

\noindent Under the same hypotheses of Corollary \ref{cor:multiccyDiscountingExplain}, the payoff multiplier becomes 
\[
P^{z}_t(U) X^f_U = \frac{P^{c^g}_t(U)\,X^{dg}_t(U)}{X^{dg}_t} \,X^f_U
\]
i.e., counterclockwise from the right: convert the payoff from currency $f$ to the domestic currency with the corresponding stochastic future FX rate, then convert it to currency $g$ with the corresponding forward FX rate, discount to reference time $t$ with rate $c^g$ and then convert again the flow in the domestic currency with the corresponding FX spot.



\begin{corollary}[Domestic Collateral and Cash Flows]
In case of continuous collateralization and when both collateral and cash-flows are denoted in domestic currency (i.e.~$g=f=d$), we have
\[
V_t  = B^{z}_t \,\e_t\left[  \frac{\phi_T  }{B^{z}_T }  +  \int_t^T  \frac{\diff \Phi_u}{B^{z}_u} \right]
\]
where in this case $z = (1+\alpha)c -  \{\alpha r  + \ell\}$ and $c$ is the domestic collateral rate. In case $\ell=\alpha=0$ we have obtained the usual pricing formula with a new discounting rate passing from $r$ to the collateral discounting rate $c$: 
the price of the financial derivative no longer depends on the domestic spot risk-free rate.
\label{cor:PerfectCollSingleCcy}\end{corollary}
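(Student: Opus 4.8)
The plan is to read off this corollary as a direct specialization of Proposition \ref{prop:PriceWithSimpleCollateral}, the intrinsic-currency form of the CSA-discounting formula, by setting the cash-flow and collateral currencies equal to the domestic one, $f\setto d$ and $g\setto d$. Since $d$, $f$ and $g$ are mere placeholders, the entire argument is a substitution together with a check that the measure and basis bank account collapse to their domestic counterparts.

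First I would record the identifications forced by $f\setto d$. By the silent-currency convention we have $\q^{fb}=\q^{db}=\q$ and $B^{fb}=B$, so the expectation and numéraire in \eqref{eq:PriceWithSimpleCollateral} become the domestic ones, and $B^{z^f}$ becomes the domestic $B^{z}$. Note that \eqref{eq:PriceWithSimpleCollateral} carries no explicit FX multipliers -- they are intrinsic to currency $f$ -- so no conversion factors need to be eliminated, and the payoff and cash-flow processes simply shed their superscripts, $\phi^f_T=\phi_T$ and $\Phi^f_u=\Phi_u$.

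Next I would simplify the blended rate \eqref{eq:discRateWithSimpleCollateral}, namely $z^f_u=(1+\alpha_u)(c_u^g-r_u^{gb}+r_u^{fb})-\{\alpha_u r_u^{fb}+\ell_u^f\}$. Recalling the definition $r^{xb}:=r-\mu^x$ with $\mu^d=0$ (equivalently $B\equiv B^{db}$), one has $r^{fb}=r^{db}=r$ and, after $g\setto d$, also $r^{gb}=r$, while $c^g=c$. The inner combination then collapses, $c_u^g-r_u^{gb}+r_u^{fb}=c_u-r_u+r_u=c_u$, yielding $z_u=(1+\alpha_u)c_u-\{\alpha_u r_u+\ell_u\}$, exactly the stated rate. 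Inserting this into the specialized \eqref{eq:PriceWithSimpleCollateral} gives the claimed pricing formula, and the concluding remark is immediate: at $\alpha=\ell=0$ one has $z=c$, so the formula is the standard risk-neutral pricing identity with the discounting rate changed from $r$ to the collateral rate $c$.

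The calculation is entirely routine, so there is no genuine analytic obstacle here. The single point deserving care -- and the hinge of the whole reduction -- is the identity $r^{db}=r$, that the domestic basis spot rate equals the risk-free rate; this follows at once from $\mu^d=0$ and $r^{xb}:=r-\mu^x$, and it is precisely what makes the first summand of $z$ reduce to $(1+\alpha)c$ without leaving a residual basis spread. One could equally start from the domestic-measure Proposition \ref{prop:PriceWithSimpleCollateralDomesticMeas}, in which case the only additional step is to observe that $X^f\equiv 1$ so that all FX conversion factors in \eqref{eq:PriceWithSimpleCollateralDomesticMeas} become unity.
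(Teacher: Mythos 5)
Your proposal is correct and follows the same route as the paper, whose entire proof is the one-line observation that with $g=f=d$ the FX rate is identically one with zero drift -- exactly the fact $\mu^d=0$, hence $r^{fb}=r^{gb}=r$, that you identify as the hinge of the reduction. Your version merely spells out the substitution into Proposition \ref{prop:PriceWithSimpleCollateral} and the collapse $c^g_u-r^{gb}_u+r^{fb}_u=c_u$ in more detail than the paper does.
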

\begin{proof}
In this case the FX rate is equal to one for all $t$ and its drift is equal to zero for all $t$.
\end{proof}

\noindent The following proposition extends the results of \cite{GPS2019} to the case of stochastic rates. Let $\q^{U;fb}$ correspond to the $U$-forward basis measure 
of currency $f$ for domestic collateral rate $c=r$:
\[
\e_t^{fb}\left[\frac{\diff \q^{U;fb} }{\diff \q^{fb}\phantom{X}}\right] = \frac{P_t^{fb}(U)}{B^{fb}_t} \,\frac{B^{fb}_0}{P_0^{fb}(U)}
\]
i.e.~$Z^f_t(T;r)$ in the notation of \cite{MP2017} (see their equation (2.23)).

\begin{proposition}
For $t\leq U\leq T$, the forward of $V^f$ writes
\be\label{eq:forwardOfAssetsWithCollateral}
F^{f}_t(U):=\e^{U;fb}_t[V_U^f] = \frac{1}{P^{fb}_t(U)} \,\left\{ V_t^f  - B^{fb}_t\,\e_t^{fb}\left[  \int_t^U \frac{\diff D_u^f}{B_u^{fb}} \right]\right\}
\ee
where the dividends $D^f$ are defined in (\ref{eq:DerivativeDividendsDef}). The $d$-forward quanto is instead written:
\beq
F^{\textit{fqd}}_t(U):=\e^{U}_t[V_U^f] = \frac{1}{P_t(U)} \,\left\{ V_t^f  - B_t\,\e_t\left[  \int_t^U \frac{\diff D_u^{\textit{fqd}} }{B_u} \right]\right\}
\eeq
where
\[
\diff D_u^{\textit{fqd}}  := \diff D_u^f + V_{u-}^f\left\{ \mu^{f}_u \diff u + \frac{\diff[\mathcal{M}^f, X^f]_u}{X_{u-}^f\,V_{u-}^f} \right\}
\]
\end{proposition}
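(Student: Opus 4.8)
The plan is to obtain both forward prices by the same change-of-numéraire argument already used for the dividend forward formula \eqref{eq:FwdWithDivs}, carried out in two different economies; the substantive content differs only in which deflated gain process plays the role of the martingale.

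For the first identity I would start from $F^f_t(U):=\e^{U;fb}_t[V^f_U]$ and convert the conditional expectation from $\q^{U;fb}$ to $\q^{fb}$ using the density $L_t=\frac{P_t^{fb}(U)}{B^{fb}_t}\frac{B^{fb}_0}{P_0^{fb}(U)}$ stated just above the proposition. By the Bayes rule the numéraire factors telescope to $\frac{B^{fb}_t}{P^{fb}_t(U)}\,\e^{fb}_t\big[V^f_U/B^{fb}_U\big]$. Now Proposition \ref{prop:VfCollateralPricing} provides the $\q^{fb}$-martingale $\widetilde G^f_u=V^f_u/B^{fb}_u+\int_0^u \diff D^f_s/B^{fb}_s$, so I replace $V^f_U/B^{fb}_U$ by $\widetilde G^f_U-\int_0^U \diff D^f_u/B^{fb}_u$, use $\e^{fb}_t[\widetilde G^f_U]=\widetilde G^f_t$, and recover $\widetilde G^f_t-\e^{fb}_t[\int_0^U \diff D^f_u/B^{fb}_u]=V^f_t/B^{fb}_t-\e^{fb}_t[\int_t^U \diff D^f_u/B^{fb}_u]$. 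Multiplying through by $B^{fb}_t/P^{fb}_t(U)$ yields \eqref{eq:forwardOfAssetsWithCollateral} verbatim; this is a line-by-line transcription of the proof of \eqref{eq:FwdWithDivs} with $B,\,P_\cdot(T),\,D$ replaced by $B^{fb},\,P^{fb}_\cdot(U),\,D^f$.

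For the quanto identity the only genuine work is to identify the correct $\q$-martingale before repeating the computation. Starting from the quanto dynamics \eqref{eq:QuantodDynOfVf}, I substitute $r^{fb}_t=r_t-\mu^f_t$ so that the drift splits into $r_t V^f_{t-}\diff t$ minus $V^f_{t-}\big\{\mu^f_t\diff t+\diff[\mathcal M^f,X^f]_t/(X^f_{t-}V^f_{t-})\big\}$; absorbing this latter term into the dividend reproduces exactly $\diff D^{\textit{fqd}}_u$ as defined in the statement and leaves the canonical Risk-Neutral form $\diff V^f_t=r_t V^f_{t-}\diff t+\diff\mathcal M^{\textit{fqd}}_t-\diff D^{\textit{fqd}}_t$ with $\mathcal M^{\textit{fqd}}$ a $\q$-martingale. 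Proposition \ref{prop:RiskNeutralDrift} then gives that $\widetilde G^{\textit{fqd}}_t:=V^f_t/B_t+\int_0^t \diff D^{\textit{fqd}}_u/B_u$ is a $\q$-local martingale, after which the identical change of numéraire — now from the domestic $U$-forward measure $\q^U$ to $\q$ through the density $\frac{P_t(U)}{B_t}\frac{B_0}{P_0(U)}$, substituting $\widetilde G^{\textit{fqd}}$ for the deflated price and splitting the integral into $\int_t^U$ — delivers the second formula.

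The main obstacle I expect is the bookkeeping that recasts \eqref{eq:QuantodDynOfVf} into canonical form: one must verify that the quanto drift adjustment is precisely the extra dividend flow $\diff D^{\textit{fqd}}-\diff D^f$ and that the remaining $\diff\mathcal M^{\textit{fqd}}$ is genuinely a $\q$-martingale — the latter being the $\q$-version of $\mathcal M^f$ already produced in Proposition \ref{prop:DynamicsOfVf}. A secondary technical point is integrability: both computations tacitly require $\widetilde G^f$ and $\widetilde G^{\textit{fqd}}$ to be strict rather than merely local martingales so that the conditional expectations may be interchanged; I would either assume this, as is done throughout Section \ref{sec:GenTheory} for the analogous pricing formulae, or localize and pass to the limit as in the proof of Proposition \ref{prop:bu18}.
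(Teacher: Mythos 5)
Your proposal is correct and follows essentially the same route as the paper: the first formula via the change of numéraire from $\q^{U;fb}$ to $\q^{fb}$ combined with the martingale property of $\widetilde{G}^f$ from Proposition \ref{prop:VfCollateralPricing}, and the second by rewriting \eqref{eq:QuantodDynOfVf} in the canonical form $\diff V^f_t = r_t V^f_{t-}\diff t + \diff\mathcal{M}^{\textit{fqd}}_t - \diff D^{\textit{fqd}}_t$ so that $\widetilde{G}^{\textit{fqd}}$ is a $\q$-martingale and the domestic-currency argument applies verbatim. Your closing remark on strict versus local martingale integrability is a point the paper passes over silently, but it does not change the argument.
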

\begin{proof}
Recalling definition \eqref{eq:GTildef}, we have
\bes
F^f_t(U)&:=&\e^{U;fb}_t[V_U^f] \\ 
&=&\frac{1}{P^{fb}_t(U)}\, \e_t^{U;fb}\left[ \frac{P^{fb}_t(U)}{P^{fb}_U(U)} \,V_U^f  \right]\\
&=&  \frac{B^{fb}_t}{P^{fb}_t(U)} \,\e_t^{fb}\left[ \frac{V_U^f}{B_U^{fb}}\right]\\
&=&  \frac{B^{fb}_t}{P^{fb}_t(U)} \,\e_t^{fb}\left[ \widetilde{G}^f_U - \int_0^U \frac{\diff D_u^f}{B_u^{fb}} \right]\\
&=&  \frac{B^{fb}_t}{P^{fb}_t(U)} \,\left\{ \widetilde{G}^f_t   - \e_t^{fb}\left[ \int_0^U \frac{\diff D_u^f}{B_u^{fb}} \right]\right\}\\
&=&  \frac{1}{P^{fb}_t(U)} \,\left\{ V_t^f  - B^{fb}_t\,\e_t^{fb}\left[  \int_t^U \frac{\diff D_u^f}{B_u^{fb}} \right]\right\}\\
\ees
where we exploited the martingale property of $\widetilde{G}^f$. For the forward quanto: rewrite \eqref{eq:QuantodDynOfVf} with
\beq\label{eq:VfQuantoDynamicsAdjusted}
\diff V^f_t = r_t \,V_{t-}^f  \diff t + \diff \mathcal{M}_t^{\textit{fqd}}   - \diff D^{\textit{fqd}}_t
\eeq
so that
\[
\widetilde{G}^{\textit{fqd}}_t := \frac{V^f_t}{B_t} + \int_0^t\frac{\diff D_u^{\textit{fqd}}}{B_u} 
\]
is a $\q$-martingale. Then one can proceed analogously as the proof of $F^f_t(U)$-formula in the domestic currency case.
\end{proof}

\begin{remark}
Formula (\ref{eq:forwardOfAssetsWithCollateral}) has the usual appearance of the capitalization from $t$ to date $T$ thanks to $P^{fb}_t(U)$ of the asset value $V_t^f$ plus the expected discounted future dividend flow: it extends the concept of forward price to the new context with multi-currency collateralization and FX market dislocations. 
\end{remark}

\begin{corollary}
In case 
\[
\frac{\diff[\mathcal{M}^f, X^f]_u}{X_{u-}^f\,V_{u-}^f} = \rho_u \diff u
\]
and all interest rates are deterministic, the forward quanto writes
\beq
F^{\textit{fqd}}_t(U)=\frac{1}{P_t^\eta(U)} \,\left\{ V_t^f  - B_t^\eta\,\e_t\left[  \int_t^U \frac{\diff D_u^{f} }{B_u^\eta} \right]\right\}
\eeq
where $\eta:=r^{fb}-\rho$ and $P_t^\eta(U):=\e_t[B_t^\eta/B_U^\eta]=B_t^\eta/B_U^\eta.$
\end{corollary}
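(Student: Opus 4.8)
The plan is to reduce the corollary to the change-of-discounting-rate mechanism of Proposition \ref{prop:propDivsDiscount} (equivalently Remark \ref{rk:PropDivsToChangeOfDiscountingRate}) applied to the quanto-$d$ dynamics \eqref{eq:QuantodDynOfVf}, and then to exploit the fact that under deterministic rates the domestic $U$-forward measure collapses onto $\q$. First I would substitute the standing assumption $\frac{\diff[\mathcal{M}^f, X^f]_u}{X_{u-}^f\,V_{u-}^f} = \rho_u\diff u$ into \eqref{eq:QuantodDynOfVf}; recalling $\eta := r^{fb}-\rho$, the quanto dynamics collapses to the clean form
\[
\diff V^f_t = \eta_t\, V^f_{t-}\diff t + \diff\mathcal{M}^{\textit{fqd}}_t - \diff D^f_t.
\]

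Next I would recast this in the canonical Risk-Neutral form \eqref{eq:RiskNeutralDynWithDivs}-\eqref{eq:RiskNeutraldriftWithDivs} by adding and subtracting $r_t V^f_{t-}\diff t$, so that $V^f$ drifts at $r$ under $\q$ and carries the dividend $\diff\widetilde D_t := \diff D^f_t + (r_t-\eta_t)V^f_{t-}\diff t$, which is exactly of the proportional-plus-residual type $q_u\,V^f_{u-}\diff u + \diff\Phi_u$ with $q := r-\eta$ and residual $\Phi := D^f$ (the requirement $D^f_0=D^f_{0-}=0$ holds by construction). Proposition \ref{prop:propDivsDiscount} with $\mu := r-q = \eta$ then yields that
\[
\TX_t := \frac{V^f_t}{B^\eta_t} + \int_0^t \frac{\diff D^f_u}{B^\eta_u}
\]
is a $\q$-local martingale, and a strict $\q$-martingale under the integrability conditions already used for the preceding proposition.

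Finally I would close the argument by observing that, since all interest rates are deterministic, the domestic $U$-forward numéraire $P_\cdot(U)=B_\cdot/B_U$ differs from $B_\cdot$ only by the constant $1/B_U$, so $\q^U\equiv\q$ and hence $F^{\textit{fqd}}_t(U)=\e^U_t[V^f_U]=\e_t[V^f_U]$. Using the martingale identity $\TX_t=\e_t[\TX_U]$, pulling the deterministic (hence $\ft$-measurable) factor $B^\eta_U$ out of the conditional expectation and cancelling the $\ft$-measurable integral $\int_0^t \diff D^f_u/B^\eta_u$, gives
\[
\frac{\e_t[V^f_U]}{B^\eta_U} = \frac{V^f_t}{B^\eta_t} - \e_t\Big[\int_t^U \frac{\diff D^f_u}{B^\eta_u}\Big],
\]
and multiplying through by $B^\eta_U = B^\eta_t/P^\eta_t(U)$ reproduces the claimed formula.

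The only step needing genuine care is the first reduction: recognising that the endogenous quanto drift $\rho$ folds, together with the residual-dividend structure of $D^f$, into a single change of discounting rate $r^{fb}\mapsto\eta$. Everything after that is bookkeeping, with the deterministic-rate hypothesis doing the work of identifying $\q^U$ with $\q$. As an equivalent route, one could instead apply Remark \ref{rk:PropDivsToChangeOfDiscountingRate} directly to the forward-quanto formula of the preceding proposition, absorbing the proportional part of $\diff D^{\textit{fqd}}_u$ into the passage from $B$ to $B^\eta$; I would prefer the self-contained martingale derivation above.
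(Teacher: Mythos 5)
Your proof is correct and follows essentially the same route as the paper: both fold the quanto drift $\rho$ together with $\mu^f$ into a proportional dividend $q=\mu^f+\rho=r-\eta$, invoke Proposition \ref{prop:propDivsDiscount} (equivalently Remark \ref{rk:PropDivsToChangeOfDiscountingRate}) to trade it for the discounting rate $\eta$, and use deterministic rates to identify $\q^U$ with $\q$. The only difference is presentational — you extract the forward from the martingale identity $\TX_t=\e_t[\TX_U]$, whereas the paper writes the pricing formula for $V^f_U$ explicitly and applies the tower property, which amounts to the same computation.
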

\begin{proof}
From dynamics \eqref{eq:VfQuantoDynamicsAdjusted} and Remark \ref{rk:PropDivsToChangeOfDiscountingRate}, defining $q:=\mu^f+\rho$,
\beqs
V^f_t = B_t \,\e_t\left[\frac{\phi_T^f}{B_T}+\int_t^T \frac{\diff D_u^f}{B_u} + \int_t^T \frac{q_u V^f_u \diff u}{B_u}\right] = B_t^{r-q} \,\e_t\left[\frac{\phi_T^f}{B_T^{r-q}}+\int_t^T \frac{\diff D_u^f}{B_u^{r-q}}\right]
\eeqs
and $r-q:=r-(\mu^f+\rho)=r^{fb}-\rho=:\eta$. Using this result, the tower property of conditional expectation and the fact that we have deterministic rates:
\bes
F^{\textit{fqd}}_t(U)&:=&\e^U_t[V^f_U]= \e_t[V^f_U] = \e_t\left[B_U^{\eta} \,\frac{\phi_T^f}{B_T^{\eta}}+B_U^{\eta}\int_U^T \frac{\diff D_u^f}{B_u^{\eta}}\right]\\
&=& \frac{B_U^{\eta}}{B_t^\eta} \,\e_t\left[ B_t^\eta\,\frac{\phi_T^f}{B_T^{\eta}}+B_t^\eta\int_U^T \frac{\diff D_u^f}{B_u^{\eta}}\right]\\
&=& \frac{B_U^{\eta}}{B_t^\eta} \,\left\{ V^f_t-B_t^\eta\,\e_t\left[ \int_t^U \frac{\diff D_u^f}{B_u^{\eta}}\right]\right\}\\
\ees
which gives the thesis.
\end{proof}

In the following proposition we add some simplifying hypotheses in order to have a sharper formula for the forward price. The result is the same as in \cite{GPS2019} in the new, more general, semimartingale framework.
\begin{proposition}\label{prop:ForwardWithDeterministicRates}
With the same hypotheses as in Proposition \ref{prop:PriceWithSimpleCollateral}, in case also all interest rates are deterministic, we have
\[
F^f_t(U) = \frac{1}{P_t^{z^f}(U)} \left[V^f_t -  \sum_i P_t^{z^f}(T_i)\,\e_t^{fb}\left[\phi_{T_i}^f \right]\,\1_{t<T_i\leq U}\right]
\]
where we recall definition of rate $z^f$ at (\ref{eq:discRateWithSimpleCollateral}) and we defined $P_t^{z^f}(u)=\e_t^{fb}[B_t^{z^f}/B_u^{z^f}]=B_t^{z^f}/B_u^{z^f}$.
\end{proposition}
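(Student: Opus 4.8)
The plan is to reproduce the derivation of the general forward formula $F^f_t(U)$ of the preceding proposition, but with the $B^{fb}$-discounted gain process $\widetilde{G}^f$ replaced by the blended-rate-discounted process that is native to the simplified setting of Proposition \ref{prop:PriceWithSimpleCollateral}. First I would recall, from the proof of Proposition \ref{prop:PriceWithSimpleCollateral}, that under its hypotheses the dividend process collapses to $\diff D^f_u = \diff \Phi^f_u + q_u V^f_{u-}\diff u$ with $q$ as in \eqref{eq:DeltaRateDefinition}, and that Proposition \ref{prop:propDivsDiscount}, applied in the $\q^{fb}$-economy, then guarantees that
\[
\widetilde{G}^{z^f}_t := \frac{V^f_t}{B^{z^f}_t} + \int_0^t \frac{\diff \Phi^f_u}{B^{z^f}_u}
\]
is a $\q^{fb}$-martingale, where $z^f = r^{fb}-q$ is the blended rate \eqref{eq:discRateWithSimpleCollateral}. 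This is exactly the martingale underlying the pricing formula \eqref{eq:PriceWithSimpleCollateral}.

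Next I would exploit deterministic rates. Since $r^{fb}$ is deterministic, $B^{fb}$ and hence $P^{fb}_t(U)=B^{fb}_t/B^{fb}_U$ are deterministic, so the Radon--Nikodym density defining $\q^{U;fb}$ is identically one and the forward measure collapses, $\e^{U;fb}_t[\cdot]=\e^{fb}_t[\cdot]$; likewise $z^f$ is deterministic, so $B^{z^f}$ and the pseudo-bonds $P^{z^f}_t(u)=B^{z^f}_t/B^{z^f}_u$ are deterministic. With these two facts I can write
\[
F^f_t(U)=\e^{U;fb}_t[V^f_U]=\e^{fb}_t[V^f_U]=B^{z^f}_U\,\e^{fb}_t\!\left[\frac{V^f_U}{B^{z^f}_U}\right]=B^{z^f}_U\,\e^{fb}_t\!\left[\widetilde{G}^{z^f}_U-\int_0^U\frac{\diff\Phi^f_u}{B^{z^f}_u}\right],
\]
the deterministic factor $B^{z^f}_U$ having been pulled out of the expectation.

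Then I would invoke the martingale property $\e^{fb}_t[\widetilde{G}^{z^f}_U]=\widetilde{G}^{z^f}_t$ and recombine the two integrals (using that $\int_0^t$ is $\fst_t$-measurable) to get
\[
F^f_t(U)=B^{z^f}_U\left\{\frac{V^f_t}{B^{z^f}_t}-\e^{fb}_t\!\left[\int_t^U\frac{\diff\Phi^f_u}{B^{z^f}_u}\right]\right\}.
\]
Finally, recalling $\Phi^f_t=\sum_i\phi^f_{T_i}\1_{0<T_i\le t}$, the cash-flow integral reduces to the discrete sum $\int_t^U \frac{\diff\Phi^f_u}{B^{z^f}_u}=\sum_i \frac{\phi^f_{T_i}}{B^{z^f}_{T_i}}\,\1_{t<T_i\le U}$; pulling out the deterministic $B^{z^f}_{T_i}$ and the deterministic indicators, and using $B^{z^f}_U/B^{z^f}_t=1/P^{z^f}_t(U)$ together with $B^{z^f}_U/B^{z^f}_{T_i}=P^{z^f}_t(T_i)/P^{z^f}_t(U)$, yields exactly the thesis.

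I expect the only delicate points to be bookkeeping rather than conceptual: verifying that $z^f$ (not merely the interest rates) is deterministic so that $B^{z^f}_U$ and $B^{z^f}_{T_i}$ may legitimately leave the conditional expectation, and confirming that $\widetilde{G}^{z^f}$ is a strict (not merely local) martingale so that $\e^{fb}_t[\widetilde{G}^{z^f}_U]=\widetilde{G}^{z^f}_t$ holds; this is the same integrability caveat already carried through Propositions \ref{prop:propDivsDiscount} and \ref{prop:PriceWithSimpleCollateral}. Everything else is the identical change-of-discounting manipulation used in the preceding forward-price proofs.
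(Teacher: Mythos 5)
Your proof is correct, but it takes a genuinely different route from the paper's. The paper starts from the general stochastic-rate forward formula \eqref{eq:forwardOfAssetsWithCollateral}, substitutes $\diff D^f_u=\diff\Phi^f_u+q_uV^f_u\diff u$, and is then confronted with a recursive expression (the dividend term contains $\e_t^{u;fb}[q_uV^f_u]=q_uF^f_t(u)$, i.e.~the unknown forward itself at earlier dates); it resolves this by differentiating $F^f_t(U)P^{fb}_t(U)$ in $U$, obtaining a linear ODE with Dirac-mass forcing $b(U)=-\sum_i\e_t^{fb}[\phi^f_{T_i}]\,\delta(U-T_i)$, and solving it with integrating factor $B^{z^f}$. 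You instead sidestep the recursion entirely by reusing the blended-rate martingale $\widetilde{G}^{z^f}$ already established (via Proposition \ref{prop:propDivsDiscount} in the $\q^{fb}$-economy) for the spot formula \eqref{eq:PriceWithSimpleCollateral}: since the proportional dividend has been absorbed into the discount rate, a single conditional expectation plus pulling deterministic factors out of $\e^{fb}_t$ gives the result, with no distributional ODE. Your argument is shorter and avoids the delta-function manipulations; the paper's derivation has the merit of flowing directly from the general forward formula \eqref{eq:forwardOfAssetsWithCollateral} (making explicit how the ODE structure in $U$ arises) and would adapt to other forcing terms $b(U)$. Both routes share the same two hidden requirements, which you correctly flag: $z^f$ (hence $\alpha$, $\ell^f$, $c^g$, not only the rates named ``interest rates'') must be deterministic for $B^{z^f}_U$ and $B^{z^f}_{T_i}$ to leave the conditional expectation and for $\q^{U;fb}$ to collapse onto $\q^{fb}$, and $\widetilde{G}^{z^f}$ must be a strict rather than merely local martingale — the same integrability caveat the paper carries implicitly through Propositions \ref{prop:propDivsDiscount} and \ref{prop:PriceWithSimpleCollateral}.
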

\begin{proof}
From \eqref{eq:forwardOfAssetsWithCollateral}, recalling the definition of rate $q$ at \eqref{eq:DeltaRateDefinition},
\bes
F^f_t(U)\,P^{fb}_t(U) &=&  V_t^f  - B_t^{fb}\,\e_t^{fb}\left[  \int_t^U \frac{\diff D_u^f}{B_u^{fb}} \right]\\
&=&V_t^f  - B_t^{fb}\,\int_t^U \e_t^{fb}\left[\frac{\diff \Phi_u^f +	q_u\,V^f_u   \diff u }{B_u^{fb}} \right]\\
&=&V_t^f  - \,\int_t^U P_t^{fb}(u)\,\e_t^{u;fb}\left[  \diff \Phi_u^f + q_u\,V^f_u   \diff u  \right]\\
&=&V_t^f  - \,\int_t^U P_t^{fb}(u)\,\e_t^{u;fb}\left[ \sum_i \phi_{T_i}^f \,\delta(u-T_i) + q_u\,V^f_u \right]\diff u \\
&=&V_t^f  - \,\int_t^U P_t^{fb}(u) \sum_i \e_t^{u;fb}\left[ \phi_{T_i}^f \right]\,\delta(u-T_i) + \e_t^{u;fb}\left[q_u\,V^f_u \right]\diff u \\
\ees
where we recall that $\delta(u-x_0)$ is the Dirac mass centered at $x_0$. With deterministic rates $\q^{fb}=\q^{u;fb}$ for any $u$, we have
\[
\partial_U \Big(F^f_t(U)\,P^{fb}_t(U)\Big) = - P_t^{fb}(U) \left( \sum_i \e_t^{fb}\left[\phi_{T_i}^f \right]\,\delta(U-T_i)+ q_U\,F^f_t(U) \right)
\]
while, on the other hand, 
\[
\partial_U \Big(F^f_t(U)\,P^{fb}_t(U)\Big) =P^{fb}_t(U)\,\partial_U F^f_t(U) + F^f_t(U)\,\partial_U P^{fb}_t(U) = P^{fb}_t(U)\left(\partial_U F^f_t(U) -r_U^{fb}\right)
\]
so substituting we end up with the following ODE, for $ U \in [t, T]$
\[
\partial_U  F^f_t(U) = z^f_U\,F^f_t(U) + b(U)
\]
where 
\[
b(U):=- \sum_i \e_t^{fb}\left[\phi_{T_i}^f \right]\,\delta(U-T_i)
\]
with initial condition $F^f_t(t)=V^f_t$. The linear ODE can be solved with:
\bes
F^f_t(U) &=& \frac{1}{B_{U}^{z^f}} \left[ \frac{F^f_t(t)}{B_{t}^{z^f}} + \int_{t}^U  \frac{b(u)}{B_{u}^{z^f}} \diff u \right]\\
&=& \left(\frac{B_{t}^{z^f}}{B_{U}^{z^f}}\right)^{-1}  \left[ V^f_t + \,\int_{t}^U   b(u) \left(\frac{B_{t}^{z^f}}{B_{u}^{z^f}}\right) \diff u \right]
\ees
which can be rearranged to obtain the thesis.
\end{proof}

\section{Applications\label{sec:Applications}}



\subsection{Repurchase Agreement\label{sec:repo}}


We follow \cite{mc2010} for the deal description. The so called \emph{classic Repurchase Agreement} (classic repo for short) is a contract between two parties, let us say $R^S$ (the \vvirg{repo seller}: the seller of securities $S$) and $R^B$ (the \vvirg{repo buyer}: the buyer of securities $S$). In essence, a repo agreement is a secured loan (or collateralized loan): 
on the trade date the two counterparties sign an agreement whereby on a future date (the value or settlement date)
$R^S$ will sell to $R^B$ a nominal amount of securities in exchange for
cash. The price received for the securities is the market price of the stocks 
on the value date. The agreement also demands that on the termination date
$R^B$ will sell identical stock back to $R^S$ at a previously agreed price: the cash exchanged at value date plus interests calculated at an agreed
repo rate. 
%
Note that although legal title to the collateral passes to the repo
buyer, economic costs and benefits of the collateral (e.g.~dividends, coupons, capital gains, etc.) remain with the repo seller.

The transaction described above is a \emph{specific} classic repo, that is, one in which the collateral supplied is specified as a particular stock, as opposed to a \emph{general collateral} (GC) trade in which a basket
of collateral can be supplied, of any particular issue, as long as it is of the required type and credit quality.  
If the collateral asset is particularly on demand -- that is, it is \emph{special} -- the repo rate may be significantly below the GC rate. Special status
will push the repo rate downwards: zero rates and even negative rates are possible when dealing in specials. In fact, we distinguish in the repo market between the cash-driven players (players that demand/invest cash)
and security-driven players (players that are interested in securities loans e.g.~for selling them).



To counter all risks of repo transactions (credit risk, market risk of collateral, issuer risk of collateral, etc.), the repo loan is not only secured but also subject to a margination process. In particular, we denote with $t_i$ all margin call dates, where $t_0\equiv t$ and $t$ is the repo value date. Moreover, we denote with $T$ the repo termination date and with $\kappa_t^T$ the repo rate of the transaction, that fixes at time $t$ and is valid for the
repo contract of interval $[t,T]$. Moreover, we denote with $C_u$ for any $u\in[t,T]$ the value of the collateral (stock and cash) hold by the repo buyer.
We define also the collateral target value
\[
C_{u}^\star := (1+\alpha_t) \Big[H_t(1+\kappa_t^T\times \textit{yf}(t, u))\Big]
\]
where $\textit{yf}$ is the year fraction according to the termsheet accrual rule, $\alpha_t\geq 0$ is the haircut and $H_t:=N S_t$ represents the cash loaned by the seller for a reference number of securities $N$ and a repo collateral stock denoted by $S$. Hence the collateral target value not only encompasses the cash loan but also the repo interests already due at date $u$. We have the following items:
\begin{itemize}
\item $u=t$: the start cash proceeds of a repo can be less than the market value of the collateral by an agreed amount or percentage known as the haircut 
 so that at value date $t$ we have the following equalities:
\[
Q_t S_t = C_t \setto (1+\alpha_{t}) H_t =:C_{t}^\star
\]
meaning that the seller receives $H_t$ of cash and at the same time sells to the buyer an effective number $Q_t$ of securities $S$ with $Q_t = N(1+\alpha_t)$. In practice, the collateral value is a multiple (greater than one) of the cash: the deal is over-collateralized. We precise that the repo interests are calculated with respect to the cash loan, so the total repo interests writes 
$H_t\times \kappa_t^T \times \textit{yf}(t,T)$.

\item $u\in(t,T]$: The market value of the collateral is maintained through the life of repo contract. So if the market value of the collateral falls, the buyer calls for extra cash or collateral. If the market value of the collateral rises, the seller calls for extra cash or for obtaining back part of the collateral stocks. In order to reduce the administrative burden, margin calls can be limited to changes in the market value of the collateral in excess of an
agreed amount, or percentage, which is called a \emph{margin maintenance limit}. Imagine that at date $u$ the collateral value $C_{u}$ falls below the original cash loan $H_t$, then the buyer, under a margin arrangement, 
can call margin from the counterparty in the form of securities or cash, setting $u=t_i$ for some $i>0$. We have two cases:
\begin{romanlist}
\item The original repo contract is maintained. The margin adjustment at time $t_i$ is such that the collateral value reaches its target value, i.e.~$C_{t_i}\setto C_{t_i}^\star$. 
In a cash-driven trade, additional shares are delivered to restore
the haircut level. In this case, the terms of the original trade remain
unchanged, but the transfer of shares becomes a separate repo of stock against 
zero cash, and is unwound at the maturity date of the original trade. Where a repo is stock-driven, cash may be supplied as collateral. The market convention is for this cash to earn interest 
for the collateral supplier at the trade repo rate, or at a rate agreed between the two parties (that we will denote $c$). The interest
accrues at ACT/365 or on a basis agreed between parties.
\item The original repo is closed-out, and then re-opened under
new terms. The changed terms of the trade reflect the movement in collateral required to restore margin. See \cite{mc2010} p.~382 for further details on this case.
\end{romanlist}
\end{itemize}

We now formalize case i) above. For any margin call date $t_i$, we define with $F_{t_i}$ the residual quantity (with respects to collateral securities)  of cash collateral (equal to zero at inception date $F_t=0$): $F_u$ has values in $\re$ and represents the collateral cash held 
by the repo buyer at date $u$, (the sign convention is due to the buyer being the stock collateral taker), $F_u>0$ means that the repo buyer holds cash collateral, while $F_u<0$ means that the repo seller holds cash collateral.
For any $i\geq 0$ 
\beqs
C_{t_i} = Q_{t_i} S_{t_i} + F_{t_i} \setto C_{t_i}^\star
\eeqs
where either $Q$ or $F$ are set such that the last equality is respected. For any $i>0$ then the margin call received by the repo buyer is therefore
\begin{gather}
C_{t_{i}} - \Big\{Q_{t_{i-1}} S_{t_{i}}  + F_{t_{i-1}}\Big\} - c_{t_{i-1}}\, F_{t_{i-1}}(t_i-t_{i-1})\0\\
=  (Q_{t_{i}}-Q_{t_{i-1}}) S_{t_{i}} + F_{t_{i}} - F_{t_{i-1}}\Big(1+c_{t_{i-1}}\,(t_i-t_{i-1})\Big)\label{eq:repoBuyerMarginCall}   
\end{gather}
where the addend in curly bracket is the value of collateral pre-rebalancing (the natural evolution of $C_{t_{i-1}}$ in the interval $(t_{i-1},t_{i}]$: note that the cash remains flat since there is no other cash inflow/outflow, but the value of the stock evolves with its market movement), the last addend of the first equation ensures that the collateral cash be remunerated via interest rate $c$.

We now summarize all transactions of the repo contract.

\begin{itemize}
\item At time $t$:
\begin{itemize}
\item $R^S$ obtains $H_t=N S_t$ of cash from $R^B$.
\item $R^S$ repo sells to $R^B$ a number $Q_t=N(1+\alpha_t)$ of asset $S$ (of value $C_t = Q_t S_t=(1+\alpha_t)H_t$).
\item $R^S$ and $R^B$ agree on the repurchase price of
\[
H_t \Big(1+\kappa_t^T\times \textit{yf}(t,T)\Big)
\] 
which is fixed at $t$ but will be paid at $T$. 
\end{itemize}

\item  In the interval $(t,T]$:
\begin{itemize}
\item All the dividends/coupons paid by the collateral assets $S$ are collected from $R^B$ and transferred to $R^S$.
\item In case of margin maintenance breach, the value of the collateral account is restored as explained previously.
\end{itemize}

\item  At time $T$:
\begin{itemize}
\item $R^S$ pays to $R^B$ the repurchase price agreed at $t$.
\item $R^B$ closes the collateral account -- giving back to $R^S$ all the assets $S$ and the cash (if any) she detains as collateral (in case $R^B$ detains negative cash collateral, she receives it back from the seller). In case of no margin calls, 
$R^B$ gives back to $R^S$
the  quantity $Q_t$ of asset $S$ (of value $N(1+\alpha_t)\times S_{T}$).
\end{itemize}
\end{itemize}

Now we focus on a stock-driven repo for a fixed number of stocks $Q_t=N(1+\alpha_t)\setto 1$, i.e.
\[
N=\frac{1}{1+\alpha_t}
\]
and assume that all margin calls are performed via (domestic currency) cash: hence the non-cash part of the collateral is represented by a single stock $S$ for the entire contract period, i.e.~$Q_u=1$ for all $u\in[t,T]$. We have that, at inception,
the collateral is of value
\[
C_t = Q_t S_t = S_t
\]
which corresponds to the single stock $S$ repo-sold by $R^S$. 
For $u\in [t,T]$ we define the cash loan value process:
\[
X_u := S_t\left\{\frac{1}{1+\alpha_t}\right\} \Big(1+\kappa_t^T(u-t) \Big) = S_t\left\{\frac{1}{1+\alpha_t}\right\} \frac{B_u^\kappa}{B_t^\kappa}
\]
where the last equality is by a bootstrap of a deterministic term structure of repo rate $\kappa_u$: observe that since at the first order $1+\kappa_t^T(u-t) \approx \esp^{\kappa_t^T(u-t)}$ then $\kappa_u\approx \kappa_t^T$ for any $u\in[t,T]$. Observe also that $X_t=H_t$ represents the cash loan received by the repo seller at value date, and 
$X_T$ is the cash that the seller should pay at maturity to close the loan position. By Itô formula,
\beq\label{eq:dynamicsOfCashLoanValue}
\diff X_u = \frac{S_t}{B_t^{\kappa}}  \left\{\frac{1}{1+\alpha_t}\right\}\,\kappa_u \,B_u^{\kappa}\diff u = \kappa_u \,X_u \diff u
\eeq

Consistently with previous simplifying hypothesis, we assume that the collateralization eliminates the bilateral counterparty risk. Under this assumption, we write the P\&L discounted at date $t$ of the repo seller:
\bes
P\&L_t &=& S_t\left\{\frac{1}{1+\alpha_t}\right\} - C_t + B_t \left\{\int_t^T \frac{\diff \Phi_u - (\diff F_u - c_u F_{u-} \diff u)}{B_u} +  \frac{- S_t\left\{\frac{1}{1+\alpha_t}\right\} \Big(1+\kappa_t^T(T-t) \Big) +C_T}{B_T}\right\}\\
&=& - \left(S_t +F_t - X_t\right) + B_t \left\{\int_t^T \frac{\diff \Phi_u - (\diff F_u - c_u F_{u-} \diff u)}{B_u} +  \frac{S_T +F_T - X_T }{B_T}\right\}\\
\ees
where  $\diff \Phi_u$ is the dividend (e.g.~in case of $S$ being an equity stock) or coupon (e.g.~bond) of the asset $S$ paid at time $u$ (recall that the repo seller receives the dividend from the repo buyer); 
the margin calls are by \eqref{eq:repoBuyerMarginCall}, recalling that this equation is written from the repo buyer point of view, hence the negative sign. Recall that all margin calls are by domestic cash and that $F_t=0$ by construction.

In order to avoid arbitrage, we must have $\e_t[P\&L_t]=0$ meaning that the inception transaction is the equilibrium transaction. Defining $\boldsymbol{\mathcal{S}}:=S-X$ and $\boldsymbol{\mathcal{C}}:=-F$, this condition is equivalent to the fact that 
\[
\widetilde{G}^{\boldsymbol{\mathcal{S}}\textit{-}\boldsymbol{\mathcal{C}}}_u := \frac{\boldsymbol{\mathcal{S}}_u  -\boldsymbol{\mathcal{C}}_u }{B_u} +\int_0^u \frac{\diff D^{\boldsymbol{\mathcal{S}}\textit{-}\boldsymbol{\mathcal{C}}}_u}{B_u} 
\]
is a $\q$-martingale where
\[
\diff D^{\boldsymbol{\mathcal{S}}\textit{-}\boldsymbol{\mathcal{C}}}_u := \diff \Phi_u + \diff \boldsymbol{\mathcal{C}}_u - c_u \,\boldsymbol{\mathcal{C}}_{u-} \diff u.
\]
Therefore, we can exploit the results of Proposition \ref{prop:MuticurrencyCollateralPricing} for $f=g=d$, and in particular \eqref{eq:DerValueWithFXandCollateral},  to obtain that
\bes
\boldsymbol{\mathcal{S}}_t =  B_t \,\e_t\left[  \frac{\boldsymbol{\mathcal{S}}_T}{B_T} + \int_t^T \frac{\diff D_u}{B_u} \right]
\ees
where
\beqs
\diff D_u := \diff \Phi_u - \boldsymbol{\mathcal{C}}_{u-} (c_u - r_u) \diff u=\diff \Phi_u +F_{u-} (c_u - r_u) \diff u
\eeqs
with $D_0=D_{0-}=0$.  In case of continuous margin calls for any $u\in [t,T]$ we have $C_u = S_u+F_u \setto C_{u}^\star$, 
where
\[
C_{u}^\star = S_t(1+\kappa_t^T(u-t)) = (1+\alpha_t) X_u,
\]  
so
\[
F_u \setto C_{u}^\star - S_u = (1+\alpha_t) X_u - S_u  = - \boldsymbol{\mathcal{S}}_u + \alpha_t X_u
\]
and
\be
\diff D_u &=& \diff \Phi_u + (- \boldsymbol{\mathcal{S}}_{u-} + \alpha_t X_{u-} ) (c_u - r_u) \diff u = \diff \Phi_u +  \alpha_t X_{u-} (c_u - r_u) \diff u  +  \boldsymbol{\mathcal{S}}_{u-} (r_u - c_u ) \diff u\0\\
&=:& q_u\,\boldsymbol{\mathcal{S}}_{u-} \diff u + \diff \Pi_u \label{eq:DivsOfRepo}
\ee
where $q_u:=r_u-c_u$ and $\diff \Pi_u:=\diff \Phi_u +  \alpha_t X_{u-} (c_u - r_u) \diff u$. We can now apply Proposition \ref{prop:propDivsDiscount} with obvious substitutions 
$(S^i,q^i,\Phi^i)\mapsto (\boldsymbol{\mathcal{S}}, q,\Pi)$ and, since $\mu :=  r - q = c$,  we have:
\bes
S_t &=& X_t + B^{c}_t \,\e_t\left[  \frac{S_T -X_T }{B^{c}_T }  +  \int_t^T  \frac{\diff \Pi_u}{B^{c}_u} \right]\\
&=& (X_t-P_t^c(T)X_T) + B^{c}_t \,\e_t\left[  \frac{S_T }{B^{c}_T }  +  \int_t^T  \frac{\diff \Pi_u}{B^{c}_u} \right]\\
&=& S_t\left\{\frac{1}{1+\alpha_t}\right\} \left[1- \frac{P_t^c(T)}{P_t^\kappa(T)}\right] + B^{c}_t \,\e_t\left[  \frac{S_T }{B^{c}_T }  +  \int_t^T  \frac{\diff \Pi_u}{B^{c}_u} \right]
\ees
since $X_T$ is $\fst_t$-measurable. With some algebra we also obtain
\beq\label{es:stockValueUnderRepo}
 S_t = \frac{1+\alpha_t}{\frac{P_t^c(T)}{P_t^\kappa(T)}+\alpha_t}  \,B^{c}_t \,\e_t\left[  \frac{S_T }{B^{c}_T }  +  \int_t^T  \frac{\diff \Pi_u}{B^{c}_u} \right]
\eeq
If we also add the hypothesis of deterministic rates, we obtain a simplified forward price formula of asset $S$, for any $U\in(t,T]$:
\bes
F_t(U) &:=& \e_t^U[S_U] = \e_t[S_U] = X_U + \e_t[\boldsymbol{\mathcal{S}}_U] \\
&=& X_U + \frac{1}{P_t^{c}(U)} \left\{\boldsymbol{\mathcal{S}}_t -  \int_t^U P_t^{c}(u) \,\e_t \left[\diff \Pi_{u} \right]  \right\}\\
&=&  \frac{S_t}{1+\alpha_t} \left[\frac{1}{P_t^{\kappa}(U)}+\frac{\alpha_t}{P_t^{c}(U)}\right] - \frac{1}{P_t^{c}(U)} \left\{   \int_t^U P_t^{c}(u) \Big\{\e_t\left[\diff \Phi_{u} \right] +\alpha_t \,X_{u-}  
(c_u - r_u) \diff u \Big\}\right\}
\ees
where the second equality is since interest rates are deterministic, the third since $X_U$ is $\fst_t$-measurable, the second line is by \eqref{eq:fwdWithDeterminiticRates} with the same mapping $(S^i,q^i,\Phi^i)\mapsto (\boldsymbol{\mathcal{S}}, q,\Pi)$. Setting  $c_u\setto \kappa_u$ for any $u\in[t,T]$, 
\bes
F_t(U) &=& \frac{1}{P_t^{\kappa}(U)} \left\{S_t  -  \int_t^U P_t^{\kappa}(u) \,\Big\{\e_t\left[\diff \Phi_{u} \right] +\alpha_t \,X_{u-}  
(\kappa_u - r_u) \diff u\Big\}\right\}
\ees
and if $\alpha_t\setto 0$ we finally obtain:
\bes
F_t(U) &=& \frac{1}{P_t^{\kappa}(U)} \left\{S_t  -  \int_t^U P_t^{\kappa}(u) \,\e_t\left[\diff \Phi_{u} \right] \right\}
\ees
which is the formula of the dirty forward price of a bond (see e.g.~\cite[ch.~13]{TuckmanSerrat}) under repo rate continuous compounding, interpreting $S_t$ as the dirty price of the bond at time $t$ and $\diff \Phi_{u}$ as the (stochastic) bond coupon at time $u$. 
Under the same assumptions, using \eqref{eq:DynamicsOfDerivativePrice}-\eqref{eq:DivsOfRepo}-\eqref{eq:dynamicsOfCashLoanValue} and the continuity of process $X$, we have that for any $u\in[t,T]$:
\bes
\diff S_u &=& \diff \boldsymbol{\mathcal{S}}_u + \diff X_u =  r_u\, \boldsymbol{\mathcal{S}}_{u-} \diff u + \diff M_u - \diff D_u +  \diff X_u\\
&=& r_u\, \boldsymbol{\mathcal{S}}_{u-} \diff u + \diff M_u - \Big(\diff \Phi_u + \boldsymbol{\mathcal{S}}_{u-} (r_u-\kappa_u) \diff u \Big) +  \diff X_u\\ 
&=& \kappa_u\, S_{u-} \diff u + \diff M_u -  \diff \Phi_u   +  (\diff X_u-\kappa_u\, X_u\diff u)\\ 
&=& \kappa_u \,S_{u-} \diff u + \diff M_u - \diff \Phi_{u}
\ees 
for some RCLL $\q$-martingale driver $M$ with $M_0=0$. This dynamics is also coherent with the last forward price formula and with \eqref{es:stockValueUnderRepo} in case $c=\kappa$: as a result of the repo transaction, in all these cases the repo rate substitutes the spot risk-free rate (a similar result is obtained also in \cite{vp2010} with different arguments). Remark however that this dynamics is the result of 
all the simplifications above.


\subsection{Securities Lending}


We follow \cite{mc2010} for the transaction description. The securities lending contracts are generally set under the master agreement of the International Securities Lending Association (ISLA): see in particular \cite{isla2010}  
for further contract details.

Securities lending or stock lending is defined as a temporary transfer of securities in exchange for collateral. It is not a repo in the normal sense; there is no
sale or repurchase of the securities. The temporary use of the desired asset (the stock that is being borrowed) is reflected in a fixed fee payable by the
party temporarily taking the desired asset, usually accruing daily as a basis point charge on the market value of the stock
being lent, and generally payable in arrears on a monthly basis. The most common type of collateral is cash; however, it can happen that the collateral is represented by other securities to be given as collateral. In case of cash collateralization, the stock lender must pay the interest rates for the cash loan: these interest rates accrue and are paid with the same conventions as the lending fee, so that these payments can offset. 

Most stock loans are on an \vvirg{open} basis, meaning that they are confirmed (or terminated) each morning,
although term loans also occur. As in a classic repo transaction, coupon or dividend payments that become
payable on a security or bond during the term of the loan will be transferred from the stock borrower to the  the stock lender. 
Analogously, any coupon or dividend payments that become
payable on the collateral (in case of collateral assets) during the term of the loan will be transferred from the stock lender to the stock borrower.

At the maturity of the deal (or in case of default by one of the counterparties), the stock lender receives back the stock and returns back the collateral to the stock borrower. The counterparty risk is reduced not only by the presence of the collateral, but also due to the margination process.

We write below the P\&L discounted at date $t$ of the stock lender in cases of cash collateralization for a number of stocks $N=1$, in cases where both the asset and the collateral are quoted in domestic currency and that the
collateralization process eliminates the counterparty risk:
\bes
P\&L_t &=& -(S_t - C_t) + B_t \int_t^T \frac{\diff \Phi_u +\ell_u\, S_{u-}\diff u + \diff C_u - c_u\,C_{u-}\diff u}{B_u} + B_t \frac{(S_T- C_T)}{B_T}
\ees
where $\ell$ is the stock loan fee, $c$ is the interest rate of the cash collateralization, $\diff \Phi_u$ is the dividend/coupon of the asset $S$ paid at time $u$ (recall that the stock lender receives the dividend from the stock borrower), $C$ is the cash collateral value process (specifically, $C_u$ is the cash loan of the stock seller at time $u$) and $\diff C_u$ (with values in $\re$) represents the possible margin call of the collateral at time $u$. We have three terms in the above formula: the first represents all transactions at 
inception date $t$, the second represents all intermediate flows (discounted at $t$) in the interval $(t,T]$  and the third is the closing of all cash/borrowing positions at maturity $T$ (discounted at $t$). See also Remark \ref{rk:DiscreteCollateralization} for encompassing the case of discrete collateralization: we do not take into account the fact that collateral interests and lending fees accumulates in a period (e.g.~one month) and then are paid at the end of this period.

As usual, in order to avoid arbitrage, we must have $\e_t[P\&L_t]=0$, which is verified if and only if
\[
\widetilde{G}^{\textit{S-C}}_t := \frac{(S_t - C_t)}{B_t} +\int_0^t \frac{\diff D^{\textit{S-C}}_u}{B_u}
\]
is a Risk-Neutral martingale, where we defined
\[
\diff D^{\textit{S-C}}_u := \diff \Phi_u +\ell_u\, S_{u-}\diff u + \diff C_u - c_u\,C_{u-}\diff u
\]
and $D^{\textit{S-C}}_0=D^{\textit{S-C}}_{0-}=0$. We can directly apply the results of Proposition \ref{prop:MuticurrencyCollateralPricing} obtaining
\[
S_t  = B_t \,\e_t\left[  \frac{S_T  }{B_T }  +  \int_t^T  \frac{\diff D_u}{B_u} \right]
\]
where, recalling \eqref{eq:DerivativeDividendsDef},
\be\label{eq:diffDivsUnderStockLending}
\diff D_u := \diff \Phi_u + \ell_u\, S_{u-}\diff u - C_{u-}(c_u - r_u) \diff u
\ee
with $D_0=D_{0-}=0$ and also
\[
\widetilde{G}_t := \frac{S_t}{B_t} +\int_0^t \frac{\diff D_u}{B_u}
\]
is a Risk-Neutral martingale. Moreover, due to Proposition \ref{prop:DynamicsOfVMinusCandVunderQ}, the dynamics of $S$ under the Risk-Neutral measure must be
\[
\diff S_u = r_u S_{u-}\diff u + \diff M_u - \diff D_u
\]
where $M$ is a $\q$-RCLL martingale with $M_0=0$. 

Moreover, due to \eqref{eq:forwardOfAssetsWithCollateral}, for $t\leq U\leq T$, the forward price of $S$ is:
\bes
F_t(U):=\e^{U}_t[S_U] = \frac{1}{P_t(U)} \,\left\{ S_t  - B_t\,\e_t\left[  \int_t^U \frac{\diff D_u}{B_u} \right]\right\}.
\ees
In case of continuous margin calls $C_u\setto C^\star_u:=(1+\alpha_u)S_{u-}$ for any $u$, then we can also apply Proposition \ref{prop:PriceWithSimpleCollateralDomesticMeas} obtaining
\[
S_t = B^{z}_t \,\e_t\left[  \frac{S_T  }{B^{z}_T }  +  \int_t^T  \frac{\diff \Phi_u}{B^{z}_u} \right]
\]
where
\[
z_u =  -\alpha_u \,r_u  +(1+\alpha_u) c_u  - \ell_u
\]
With deterministic rates, we obtain from \eqref{eq:fwdWithDeterminiticRates} a simplified forward price formula
\[
F_t(U) = \frac{1}{P_t^{z}(U)} \left\{S_t -  \int_t^U P_t^{z}(u) \,\e_t\left[\diff \Phi_{u} \right]\right\}
\]

\begin{proposition}
In case of continuous proportional dividends $\diff \Phi_u = q_u S_{u-}\diff u$ and in case $r=c$ (or with no collateralization) one obtains:
\[
\diff S_u = (r_u-q_u-\ell_u) S_{u-}\diff u + \diff M_u
\]
Assuming also that $q,\ell$ are deterministic (while $r$ remains stochastic), we have 
\[
F_t(U)=\frac{S_t \,\esp^{-\int_t^U (q_u+\ell_u)\diff u}}{P_t(U)}
\]
\end{proposition}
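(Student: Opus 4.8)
The plan is to treat the two assertions separately: the SDE is an immediate specialization of the dynamics already derived in this subsection, while the forward-price formula follows from the change-of-discounting-rate result of Proposition \ref{prop:propDivsDiscount} combined with the forward-measure representation.

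First I would specialize the Risk-Neutral dynamics $\diff S_u = r_u S_{u-}\diff u + \diff M_u - \diff D_u$ with dividend flow $\diff D_u = \diff\Phi_u + \ell_u S_{u-}\diff u - C_{u-}(c_u-r_u)\diff u$. Under either hypothesis the collateral term drops: if $r=c$ then $c_u-r_u=0$, and if there is no collateralization then $C\equiv 0$. Substituting the proportional dividend $\diff\Phi_u = q_u S_{u-}\diff u$ then collapses the dividend process to $\diff D_u = (q_u+\ell_u)S_{u-}\diff u$, and plugging this into the SDE yields $\diff S_u = (r_u-q_u-\ell_u)S_{u-}\diff u + \diff M_u$, which is the first claim.

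Next, to obtain the forward price I would observe that, under the proportional dividend hypothesis, the \emph{entire} dividend process of $S$ is proportional with rate $q+\ell$ and null residual. I can therefore invoke Proposition \ref{prop:propDivsDiscount} with the proportional rate $q+\ell$ in place of $q$ and with the residual process set to $\vec{0}$, so that, setting $\mu := r-q-\ell$, the deflated price $S/B^\mu$ is a $\q$-(local) martingale. Assuming as usual that it is a strict martingale, the martingale identity reads $S_t/B_t^\mu = \e_t[S_U/B_U^\mu]$. Since $q,\ell$ are deterministic, the basis bank account factorizes as $B_u^\mu = B_u\,\esp^{-\int_0^u (q_s+\ell_s)\diff s}$; pulling the two deterministic exponentials outside the conditional expectation turns the identity into $\e_t[S_U/B_U] = (S_t/B_t)\,\esp^{-\int_t^U (q_s+\ell_s)\diff s}$.

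Finally I would combine this with the forward-measure representation $F_t(U)=\e^{U}_t[S_U]=\frac{B_t}{P_t(U)}\,\e_t[S_U/B_U]$ established earlier for the $U$-forward measure, and substitute the expression just derived to conclude $F_t(U)=S_t\,\esp^{-\int_t^U (q_s+\ell_s)\diff s}/P_t(U)$. The only genuinely delicate point is the passage from local to true martingale for $S/B^\mu$, which is exactly the standing strict-martingale caveat invoked throughout the paper; the factorization of the bank account and the extraction of the deterministic exponentials are otherwise routine bookkeeping.
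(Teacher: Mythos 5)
Your proposal is correct and follows essentially the same route as the paper: the SDE is obtained by direct substitution into the collateralized dynamics (with the $C_{u-}(c_u-r_u)$ term vanishing under either hypothesis), and the forward price comes from the change-of-discounting-rate result of Proposition \ref{prop:propDivsDiscount} applied with proportional rate $q+\ell$ and null residual cash-dividend process. The paper simply cites the packaged formula \eqref{eq:fwdWithDeterminiticRates} with $\widetilde{q}:=q+\ell$ and $\Phi=0$, whereas you unwind its derivation (strict-martingale identity for $S/B^{\mu}$, factorization of the deterministic exponential, forward-measure representation), which is the same argument one level down.
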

\begin{proof}
The SDE is by direct substitution.  The forward price formula is derived as \eqref{eq:fwdWithDeterminiticRates} with dividend rate $\widetilde{q}:=q+\ell$ and $\Phi=0$.
\end{proof}
\begin{remark}
As a result, we obtained  the well known dynamics of the asset with proportional dividends and lending fee.
In market jargon, the rate $\ell$ is generally called the \vvirg{repo rate}, even if this terminology is quite misleading. 
\end{remark}



\subsection{Futures\label{sec:Futures}}

We follow \cite{HK} for the deal description. We work on a Futures contract on underlying $S$. We define once for all a set of times $0 \leq t_0 \leq t \leq T$, where
\begin{itemize}
\item $0$ is the inception date of the Futures contract;
\item $t_0$ is the date at which the investor enters into the Futures contract;
\item $t$ is today, a generic time between $t_0$ and $T$;
\item $T$ is the Futures contract expiry.
\end{itemize}
The Futures contract has the following properties:
\begin{itemize}
\item At every point in time $t\in[0,\,T]$, a Futures price process $f_{t}^T$ is quoted on the market;
\item At reset date $f_{T}^T  = S_T = F_{T}(T)$;
\item As in the case of forward contracts, since the Futures contract holder does not own the underlying $S$,  she does not receive any dividend eventually paid by the underlying.
\item The deal is centralized: i.e.~the counterparty of any Futures contract is the Clearing House (or central counterparty) and the deal is subject to margination in order to mitigate the investor's default risk for the House. In fact, the investor opens at inception $t_0$ an account at the Clearing House paying to her some cash $C_{t_0}$ and opening a credit $C$ position with her of the same amount. We distinguish between:
\begin{itemize}
\item \emph{Initial Margin account} $(\mathfrak{I})$: the movements of the investor to the credit $C$, in particular $C_{t_0}=\mathfrak{I}_{t_0}$ since at $t_0$ the investor pays $\mathfrak{I}_{t_0}$ and opens the account writing the credit $C_{t_0}$ in her book. Whenever the balance drops below the \emph{maintenance margin level}, the investor must bring the balance back up to the initial margin level again: hence at time $u$ she enhances her credit of $\diff \mathfrak{I}_u>0$
receiving $-\diff \mathfrak{I}_u<0$ (so paying) the same amount of cash. On the other hand, whenever the balance exceeds the initial margin level, the investor has the right to (and therefore should) withdraw any excess credit amount above the initial margin: hence at time $u$ the investor credit decreases of $\diff \mathfrak{I}_u<0$ and the investor receives an amount of cash of $-\diff \mathfrak{I}_u>0$.
\item A \emph{Variation Margin account} $(\mathfrak{M})$: the movements of the Clearing House to $C$, in particular $\mathfrak{M}_{t_0}=0$. 
Over the time the Futures price will change from the initial value $f_{t_0}^T$, sometimes going up $f_{u+ \delta u}^T-f_u^T>0$, sometimes going down $f_{u+ \delta u}^T-f_u^T<0$: at the end of every day the House checks the closing Futures price. If the price has gone up, the exchange credits to the investor this increase; if it has gone down it debits this decrease. 
\item Interest rate accrual: the Clearing must remunerate its debt at a rate of $c$ (possibly zero); 
\item Position closing: when the investor closes the contract at $u$, she  obtains back all remaining funds $C_u$ of her credit with respect to the House. Futures contacts are generally cash settled, but in the non-interest rates case there could be an underlying delivery: in this case the investor should hold the contract until expiry $u=T$, receiving back as usual $C_T$, but also paying $f_{T}^T$ of cash in order to obtain the underlying asset, whose value is $f_{T}^T=F_{T}(T)=S_T$. We will see in a moment that $C_T$  can be calculated from \eqref{eq:COfFutures}, but forgetting here the initial margin account and the interest rate accrual, so setting $C_T=\mathfrak{M}_T$, the investor's net cash position is $C_T-f_{T}^T=\mathfrak{M}_T-f_{T}^T=f_{T}^T-f_{t_0}^T-f_{T}^T=-f_{t_0}^T$ versus the stock obtained (similar to the long forward contract case where the holder pays $F_{t_0}^T$ in order to obtain the stock).
\end{itemize}		

\end{itemize}

To our knowledge, the best references that deals with Futures pricing are \cite{Bjork}-\cite{DD2001} who do not take into account the initial margin and represents the margin account flows as continuous cash flows (dividend) of the financial derivative: instead, we  adapt the framework of \cite{MP2017} to this contract case, obtaining different results (in particular the price of the Futures contract  being different from zero).

Denoting with $t_i$ the margin call times, the collateral account value can be written as:
\beq\label{eq:futuresMargination}\begin{cases}
C_{t_{i}}=C_{t_{i-1}}(1+c_{t_{i-1}}(t_{i}-t_{i-1})) + (\mathfrak{I}_{t_{i}}-\mathfrak{I}_{t_{i-1}}) + (f_{t_{i}}^T-f_{t_{i-1}}^T)  & i>0\\
C_{t_{0}}=\mathfrak{I}_{t_{0}}
\end{cases}\eeq
The first line describes the dynamics of the collateral account: it accrues with interest rate $c$ and grows with cash inflows/outflows of the investor ($\diff \mathfrak{I}_t$) and of the Clearing ($\diff \mathfrak{M}_t$).
Note that, setting $c=0$, we can prove recursively that we obtain
\beqs\begin{cases}
C_{t_{i}}=\mathfrak{I}_{t_{i}}+ (f_{t_{i}}^T-f_{t_{0}}^T)  & i>0\\
C_{t_{0}}=\mathfrak{I}_{t_{0}}
\end{cases}\eeqs
which says that the collateral account value at margin date is exactly equal to the initial margin value plus the difference between the current
and the inception quoted Futures price.

First, one can describe the dynamics of $C$ thinking that all processes $\mathfrak{I},c,C,\mathfrak{M}$ are RCLL pure jump predictable processes with predictable jumps at margin call times $t_i$: in particular 
\beq\label{eq:MarginAccountValue}
\mathfrak{M}_t = \sum_{t_0<u\leq t} \Delta \mathfrak{M}_u = f_{\gamma(t)-}^T -f_{t_0}^T 
\eeq
where $\Delta \mathfrak{M}_u=\mathfrak{M}_{t_0}=0$ except on jump times $t_i$ where $\Delta \mathfrak{M}_{t_i} = f_{t_i-}^T-f_{t_{i-1}}^T$ and we denoted with $\gamma(u)$ the greater margin call date $t_i$ such that $t_i\leq u$. 
The left limit is in order to guarantee that $\mathfrak{M}$ is predictable. Therefore, $\diff \mathfrak{M}_u$ is different from $\diff f_u^T$ which is a market quantity quoted in continuous time and not a simple pure jump process. In particular, 
we have from	 \eqref{eq:futuresMargination}:
\beqs\begin{split}
C_{t_0}&=\mathfrak{I}_{t_0}, \quad\mathfrak{M}_{t_0}=0\\
\diff C_u &=  c_u \,C_{u-} \diff u +\diff \mathfrak{I}_u+\diff \mathfrak{M}_u
\end{split}\eeqs
Integrating in the interval $[t_0,t]$,
\be
C_t &=& C_{t_0} + \int_{t_0}^t c_u C_{u-}\diff u + \mathfrak{I}_t-\mathfrak{I}_{t_0} + \mathfrak{M}_{t}-\mathfrak{M}_{t_0} \nonumber\\
&=&  \int_{t_0}^t c_u C_{u-}\diff u + \mathfrak{I}_{t}+ f_{\gamma(t)-}^T-f_{t_0}^T\label{eq:COfFutures}
\ee
Now, defining $\widetilde{C}_t:=C_t/B_t$ and using \eqref{eq:processDeflatedByBankAccount}, we have
\be
\diff \widetilde{C}_u &=& \frac{ \diff \mathfrak{I}_u + \diff \mathfrak{M}_u + C_{u-}(c_u-r_u) \diff u}{B_u} \label{eq:diffCtildeFutures}
\ee
and integrating in $[t_0,\tau]$
\[
\widetilde{C}_\tau = \widetilde{C}_{t_0} + \int_{t_0}^\tau \frac{ \diff \mathfrak{I}_u + \diff \mathfrak{M}_u + C_{u-}(c_u-r_u)\diff u}{B_u}
\]
The P\&L discounted at inception time $t_0$ of all transactions of the investor until target expiry $\tau$ is:
\bes
\e_{t_0}[P\&L_{t_0}] &=& -C_{t_0} + B_{t_0}\,\e_{t_0}\left[\frac{C_\tau}{B_\tau}+ \int_{t_0}^\tau \frac{-\diff \mathfrak{I}_u}{B_u}\right]\\
&=& B_{t_0}\,\e_{t_0}\left[\int_{t_0}^\tau \frac{\diff \mathfrak{M}_u + C_{u-}(c_u-r_u)\diff u}{B_u}\right]
\ees
where the first addend at the first line represents the opening of the credit at $t_0$, the second addend is the closing of the credit (the investor withdraws the cash in the account) and the third represents the \emph{initial margin account} (possible) calls: remember that $\diff \mathfrak{M}_u$ is not a cash flow disposable for the investor (it is operated by the Clearing House on a segregated account). In the second line we substituted the expression of $\widetilde{C}_\tau$.

For a generic time $t$ with $t_0\leq t\leq \tau$, the value of the contract is represented by the expected value of all discounted future transactions:
\be
V_t &=& B_t\,\e_t\left[\frac{C_\tau}{B_\tau}+ \int_t^\tau \frac{-\diff \mathfrak{I}_u}{B_u}\right]\0\\ 
&=& C_t+ B_t\,\e_t\left[ \int_t^\tau \frac{  \diff \mathfrak{M}_u+ C_{u-}(c_u-r_u)\diff u}{B_u}\right]\label{eq:futuresPricing}
\ee
where the reference to $t_0$ is hidden both in the value of $C_t$ (see \eqref{eq:COfFutures}) and in the value of $\mathfrak{M}_u$, recalling \eqref{eq:MarginAccountValue}.  

As in the previous section, we work under the hypothesis that the margination eliminates the counterparty risk for  the Clearing House and that the House is default free. Since we assume continuous margin call, we assume also that
$f^T$ is continuous so that $\mathfrak{M}$ stays predictable.

\begin{proposition}
In the above setting, we add the hypothesis of continuous margin calls so that $\diff \mathfrak{M}_u=\diff f^T_u$ and $\gamma(u)=u$. We assume that $(f_u^T)_{t\in [0,T]}$ is a continuous semimartingale with decomposition $f_u^T = f_{0}^T + A_u+M_u$ where $A$ is a continuous process with finite variation and $M$ is a continuous strict $\q$-martingale with $A_{0}=M_{0}=0$. Then absence of arbitrage implies:
\begin{romanlist}
\item Perfect collateralization, i.e.~$V_u=C_u$ for all $u\in[t_0,T]$;
\item Collateral rate equal to the risk-free interest rate, i.e.~$c\equiv r$;
\item The Futures price $f_u^T$ must be a $\q$-martingale (i.e.~$A_u=0$ for all $u\in [0,T]$).
\end{romanlist}
\end{proposition}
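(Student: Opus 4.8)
The plan is to start from the pricing identity \eqref{eq:futuresPricing}, impose the standing hypotheses of the statement (continuous margin calls, so that $\diff \mathfrak{M}_u = \diff f_u^T$ and $\gamma(u)=u$), and read off
\[
V_t - C_t = B_t\,\e_t\left[\int_t^\tau \frac{\diff f_u^T + C_{u-}(c_u-r_u)\,\diff u}{B_u}\right],\qquad t_0\le t\le \tau\le T.
\]
First I would establish item i). A Futures contract is entered at par: the only cash the investor commits is her own margin deposit, whose running balance is exactly $C_t$, and she may unwind the position at any instant at no cost and walk away with $C_t$. By the law of one price (no-arbitrage) the value of her whole position must coincide with the posted cash, i.e.~$V_t=C_t$ for every $t\in[t_0,T]$. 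Equivalently, setting
\[
N_t := \int_{t_0}^t \frac{\diff f_u^T + C_{u-}(c_u-r_u)\,\diff u}{B_u},
\]
the relation $V_t-C_t=B_t\,\e_t[N_\tau-N_t]$ shows that $V_t=C_t$ for all $t\le\tau\le T$ is \emph{equivalent} to $N$ being a $\q$-martingale. This is the no-arbitrage condition I will exploit.

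Next I would feed in the assumed decomposition $f_u^T=f_0^T+A_u+M_u$ and split $N$ into its two natural pieces,
\[
N_t = \int_{t_0}^t \frac{\diff M_u}{B_u} + \int_{t_0}^t \frac{\diff A_u + C_{u-}(c_u-r_u)\,\diff u}{B_u}.
\]
The first integral is a continuous $\q$-local martingale, being the integral of the bounded predictable integrand $1/B$ against the strict martingale $M$; the second is a continuous finite-variation process. Since $N$ is a $\q$-martingale, the finite-variation piece is a continuous finite-variation local martingale null at $t_0$, hence identically zero by uniqueness of the canonical decomposition of a continuous semimartingale. As $1/B_u>0$, this forces the integrand to vanish, leaving the pointwise identity
\[
\diff A_u = C_{u-}\,(r_u-c_u)\,\diff u,\qquad u\in[t_0,T].
\]

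The remaining, and genuinely delicate, step is to split this single identity into items ii) and iii); this is where I expect the main obstacle. The subtlety is that $A$ — the drift of the \emph{market-quoted} Futures price $f^T$ — is a primitive of the model and cannot depend on the idiosyncratic margin balance $C$, whereas $C$ itself is investor-specific: it is governed by $\diff C_u=c_uC_{u-}\diff u+\diff\mathfrak{I}_u+\diff f_u^T$ and so varies with the freely chosen initial-margin schedule $\mathfrak{I}$. Running the preceding argument for two admissible schedules produces the same $A$ but balances $C^{(1)}\neq C^{(2)}$, whence $(C^{(1)}_{u-}-C^{(2)}_{u-})(r_u-c_u)=0$; choosing schedules with $C^{(1)}\neq C^{(2)}$ forces $c_u=r_u$ for all $u$, which is item ii). Substituting $c\equiv r$ back into the identity gives $\diff A_u=0$, and since $A_0=0$ while $t_0$ may be taken as $0$ and $\tau$ as $T$, we conclude $A_u=0$ on all of $[0,T]$; thus $f^T=f_0^T+M$ is a $\q$-martingale, which is item iii). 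Item i) having already been secured, this completes the argument.
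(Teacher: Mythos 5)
Your proof is correct and follows essentially the same route as the paper: derive $V=C$ from the fact that the position can be unwound at any time for exactly the margin balance, deduce that the deflated gain integral must be a martingale, kill its finite-variation part to get $\diff A_u = C_{u-}(r_u-c_u)\diff u$, and then exploit the fact that $A$ is a unique market-level object while $C$ is investor-specific to force $c\equiv r$ and hence $A\equiv 0$. The only cosmetic difference is that you produce two distinct collateral balances by varying the initial-margin schedule, whereas the paper varies the inception date $t_0$; both implement the same idea.
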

\begin{proof}
At inception, by construction $V_{t_0}=C_{t_0}=\mathfrak{I}_{t_0}$ which is the cash to enter in the Futures contact. At any time $\tau\in[t_0,T]$ the investor can exit the contract obtaining $C_\tau$, so $V_\tau=C_\tau$. Since $\tau$ is an arbitrary time by absence of arbitrage we have $V_t=C_t$ for any $t$. Hence, from this and from \eqref{eq:futuresPricing}
we must have that $\e_t[\widetilde{G}_\tau]=\widetilde{G}_t$, i.e.~that $\widetilde{G}$ is a $\q$-martingale, where we define
\[
\widetilde{G}_s:=\int_0^s \frac{  \diff f_u^T+ C_{u-}(c_u-r_u)\diff u}{B_u}
\]
In case $\widetilde{G}$ is a $\q$-martingale, note also that $\e_{t_0}[P\&L_{t_0}]=0$.
In order  for $\widetilde{G}$ to be a $\q$-martingale we must have  $\diff A_u = -C_{u-}(c_u-r_u)\diff u$. In this case
\bes
A_t &=& \int_0^t \diff A_u = -\int_0^t C_{u-}(c_u-r_u)\diff u.
\ees
Now, in the market we have many Futures contracts on the same underlying and expiry but with different inception dates $t_0^1, t_0^2, \ldots$ since all market participants can enter in the Futures contract at any time before the expiry. 
For all of them, the above condition should be valid. However recalling \eqref{eq:COfFutures}, for the same time $u$ the value of $C_u$ with inception time $t_0^i$ is different from the value of $C_u$ with inception time $t_0^j$ for $i\neq i$, while $f_u^T$ is the unique market quote among all inception dates. Hence we should set $c_u=r_u$ in order to make the second addend of $\widetilde{G}$ disappear. 
This implies that $A_t=0$ for all $t$, and so $f_t^T=f_{0}^T+M_t$ is a $\q$-martingale. 
\end{proof}

\begin{remark}
We have an example of perfect collateralization: if a counterparty wants to enter exactly in the same position as the investor, she must pay $C_t$. 
In particular, under all hypotheses/constraints of the above proposition we have
\bes
V_t &=& C_t+ B_t\,\e_t\left[ \int_t^\tau \frac{  \diff f_u^T}{B_u} \right] = C_t = \int_{t_0}^t r_u C_u\diff u + \mathfrak{I}_t + (f_{t}^T-f_{t_0}^T)
\ees
where the last equality is by (\ref{eq:COfFutures}). In particular: the first addend is the interest carry accrued from collateral, the second addend is the current level of the initial margin account and the third addend is the level of the margin account due to market moves. This result can also be compared with the price of an uncollateralized forward contract (with no counterparty risk)
struck at par at issue date $t_0$ that has value at time $t>t_0$:
$$V_{t} = P_t(T)\bigl[F_t(T)-F_{t_0}(T)\bigr]$$  
while the long uncollateralized Futures contract value is $V_t=f_{t}^T-f_{t_0}^T$.
\end{remark}

Finally, the result does not change if $\tau$ becomes a (stochastic) stopping time of an American option:
\bes
V_t &=& \sup_{\tau\in \mathcal{T}} B_t\,\e_t\left[\frac{C_\tau}{B_\tau}+ \int_t^\tau \frac{-\diff \mathfrak{I}_u}{B_u}\right]\\
&=& C_t + \sup_{\tau\in \mathcal{T}} B_t\,\e_t\left[\int_t^\tau \frac{  \diff f_u^T}{B_u}\right]\\
&=& C_t
\ees
where $\mathcal{T}$ is the set of all exercise strategies with values in $[t_0,T]$ and we used \eqref{eq:futuresPricing} for $r=c$: since $f_u^T$ is a $\q$-martingale, all exercise strategies must give the same price $V_t=C_t$.

Moreover, since $f_u^T$ is a $\q$-martingale,
\be\label{eq:FuturesPrice}
f_t^T=\e_t\left[f_{T}^T\right]=\e_t\left[S_T\right]= \e_t\left[F_{T}(T)\right]
\ee
and therefore the Futures price can be seen as a convexity adjusted Forward price: in fact the last expected value is taken under the \vvirg{unnatural} measure $\q$ instead of the natural $\q^{T}$ one (the measure under which $F$ is a martingale).

\section{Conclusion}

We refer to Section \ref{sec:Intro} for a list of the achievements of this paper which is mainly targeted for practitioners in the financial industry: we tried to make it the most linear, rigorous, self-contained and (hopefully) didactic as possible. 
Moreover, we tried to never take anything for granted and to make a (perhaps small) step further in understanding the topics of the paper. We hope that the efforts of this approach were visible for the reader especially for Section \ref{sec:GenTheory} (a brief but operationally comprehensive tour of the No-Arbitrage theory for dividend-paying assets) and for Section \ref{sec:Applications} (where we worked on applications studying from scratch the technicalities of the termsheets). A further development would be to extend the same approach to all other valuation adjustments.

\section*{Acknowledgments}

We are grateful to Andrea Pallavicini for his incredibly interesting and very didactic workshop \emph{\vvirg{A Pricing Framework for Valuation Adjustments}} held in the University of Verona (October 2019): it 
was our very first contact with the subject of this paper and with the general topic of valuation adjustments. We also thank Andrea for his comments and suggestions on the first draft of this paper. We thank Alessandro Gnoatto for organizing the above mentioned workshop and to Banca Akros for funding our participation in it. 
We are deeply grateful to Simone Pasquini for his precious reading, comments and suggestions, and for pointing out the references of Propositions \ref{prop:leftLimitVersionOfCadlagFct}-\ref{prop:LebesgueIntegralDiscontinuities}. We finally acknowledge
our colleague Delia Silvino for her constant support and encouragement.

\newpage

\addcontentsline{toc}{section}{References}

\newpage
\appendix

\section*{Appendix}
\addcontentsline{toc}{section}{Appendix}

\subappendix{Semimartingales\label{sec:SemiMartingales}}

We briefly enumerate some results of semimartingales theory in order to let the paper be almost self-contained: only few propositions with a detailed proof and no indication of a source are, to the best of our knowledge, original.

A function $f$ is RCLL (\vvirg{\emph{right continuous with left limits}}) or càdlàg (from French acronym \vvirg{\emph{continue à droite, limitée à gauche}}) in a domain $D$ if it is right continuous and has finite left limits in $D$.
A function $f$ is LCRL (\vvirg{\emph{left continuous with right limits}}) or càglàd (\vvirg{\emph{continue à gauche, limitée à droite}}) in a domain $D$ if it is left continuous and has finite right limits in $D$.

For a RCLL function, we define also, for $h>0$, the \emph{jump} of $f$ at time $t$ as
\[
\Delta f(t):=f(t) - f(t-) \qquad \textit{where} \quad f(t-):=\lim_{h\to 0} f(t-h),
\]
Remark also that the jump operator is linear (straightforward by definition): for $a,b\in \re$ and RCLL functions $f,g$:
\beq\label{eq:linearityOfJumpOperator}
\Delta\Big(a\,f(t)+b\,g(t)\Big)=a\,\Delta f(t) + b \,\Delta g(t)
\eeq
and (again straightforward by definition)
\beq\label{eq:jumpOperatorProductRule}
\Delta\Big( f(t) g(t)\Big)= f(t-) \,\Delta g(t) + g(t-) \,\Delta f(t)+ \Delta f(t)\Delta g(t)
\eeq
A jump processes is a process with RCLL paths. We then enounce the following theorem, which is a cornerstone of jump process theory, stating that 
the number of big jumps of RCLL functions is finite and that their number of jumps is finite or countable.

\begin{theorem}[From \cite{da2009}]
If $f:D\mapsto \re$ with $D\subseteq \re$ is a RCLL function then
\begin{romanlist}
\item For each $k > 0$, the set $S_k = \{t\in D, \Delta f (t) > k\}$ is finite.
\item The set $S = \{t\in D, \Delta f (t) \neq 0\}$ is at most countable.
\end{romanlist}
\end{theorem}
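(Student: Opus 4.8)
The plan is to establish (i) first — the finiteness of large-jump times — and then obtain (ii) as an immediate corollary. Since the literal statement needs $D$ to be bounded for (i) to hold (on an unbounded domain one can have infinitely many unit jumps, e.g.\ at the integers), I would first reduce to the case $D=[a,b]$ compact: a general $D\subseteq\re$ is a countable union of compact intervals, and once $S_k$ is finite on each of these it is at most countable on all of $D$, which is precisely what (ii) requires. So I assume $D=[a,b]$.

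For (i) I would argue by contradiction. Suppose $S_k$ is infinite for some $k>0$. By Bolzano--Weierstrass the infinite set $S_k\subseteq[a,b]$ has an accumulation point $t^\star\in[a,b]$, so there is a sequence of distinct points $t_n\in S_k$ with $t_n\to t^\star$; passing to a subsequence I may assume $(t_n)$ is strictly monotone with $t_n\neq t^\star$. The crux is then to show that along any such sequence the jumps $\Delta f(t_n)$ must tend to $0$, contradicting $\Delta f(t_n)>k$.

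This key step splits according to the direction of approach and is exactly where the RCLL structure enters. If $t_n\uparrow t^\star$, then both $f(t_n)$ and the left limits $f(t_n-)$ converge to $f(t^\star-)$ (the left limit exists because $f$ is RCLL), so $\Delta f(t_n)=f(t_n)-f(t_n-)\to 0$. If $t_n\downarrow t^\star$, then for $t_n$ in a small right-neighbourhood of $t^\star$ right-continuity forces both $f(t_n)$ and $f(t_n-)$ to lie within $\varepsilon$ of $f(t^\star)$ — for $f(t_n-)$ one takes the limit of $f(s)$ over $s\uparrow t_n$ with $s\in(t^\star,t_n)$, all of which are close to $f(t^\star)$ — whence again $\Delta f(t_n)\to 0$. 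I expect this case analysis, in particular the right-approach case where one must control the left limits $f(t_n-)$ of a sequence coming from the right, to be the only genuinely delicate part; the rest is bookkeeping.

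Finally, for (ii) I would write $S=\{t:\Delta f(t)\neq 0\}$ as the countable union $\bigcup_{n\geq 1}\{t:\Delta f(t)>1/n\}\cup\bigcup_{n\geq 1}\{t:\Delta f(t)<-1/n\}$, and apply (i) to $f$ for the positive-jump sets and to $-f$ (which is again RCLL) for the negative-jump sets. Each set in the union is then finite, so $S$ is a countable union of finite sets and hence at most countable.
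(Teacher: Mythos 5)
The paper offers no proof of this theorem --- it is quoted directly from \cite{da2009} --- so there is no in-paper argument to compare against; your proof is the standard textbook one and it is correct. The Bolzano--Weierstrass contradiction, with the two-sided case analysis showing $\Delta f(t_n)\to 0$ along a monotone sequence of distinct points of $S_k$ accumulating at $t^\star$ (the delicate right-approach case being handled correctly: $f(t_n-)$ is squeezed by noting that the approximating values $f(s)$, $s\in(t^\star,t_n)$, all lie in the right-continuity neighbourhood of $t^\star$), is exactly how the result is established in the cited reference, and the decomposition $S=\bigcup_{n\geq 1}\{\Delta f>1/n\}\cup\bigcup_{n\geq 1}\{\Delta(-f)>1/n\}$, applying item (i) to $f$ and to $-f$, correctly reduces (ii) to (i). Your observation that item (i) is literally false on an unbounded domain (unit jumps at the integers) and must be read on compact intervals is a legitimate correction of the statement as transcribed here --- Applebaum states the lemma on a compact time interval --- and your reduction preserves item (ii), which is the only part the paper actually uses. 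One small looseness: an arbitrary $D\subseteq\re$ is not a countable union of compact intervals; what you mean is that $D$ is \emph{covered} by the intervals $[n,n+1]$, $n\in\mathbb{Z}$, and the compact-interval argument is applied to each piece. If the accumulation point $t^\star$ produced by Bolzano--Weierstrass failed to lie in $D$, one would additionally need the one-sided limits of $f$ along $D$ at $t^\star$ to exist; this is automatic when $D$ is an interval, which is the only case relevant to the paper (paths of processes on $[0,\infty)$), so the argument goes through unchanged there.
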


\begin{proposition}[From  \href{https://math.stackexchange.com/questions/2373293/showing-that-the-left-hand-limit-function-is-left-continuous}{\texttt{math.stackexchange.com}}]\label{prop:leftLimitVersionOfCadlagFct}
Let $f:[a,b]\mapsto \re$. Define $g:(a,b]\mapsto \re$ with $g(t):=f(t-)$, then $g$ is left continuous for any $t\in (a,b]$.
\end{proposition}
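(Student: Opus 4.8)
The plan is to establish left-continuity of $g$ directly from the $\varepsilon$--$\delta$ definition, exploiting that $g(t)=f(t-)$ is itself the left limit of $f$, so that what we must verify is the interchange of two successive left limits. First I would fix an arbitrary point $t\in(a,b]$ and a tolerance $\varepsilon>0$, and invoke the existence of the left limit $f(t-)$ to produce a $\delta>0$ such that $|f(s)-f(t-)|<\varepsilon$ for every $s\in(t-\delta,t)$. This is nothing but the definition $f(t-)=\lim_{s\to t^-}f(s)$, and such a $\delta$ exists provided $f$ possesses finite left limits — which is guaranteed in our setting, since $f$ is (the restriction of) an RCLL path and $g$ is therefore well defined on $(a,b]$.

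Next I would estimate $|g(u)-g(t)|=|f(u-)-f(t-)|$ for an arbitrary $u\in(t-\delta,t)$. The key observation is that $f(u-)=\lim_{r\to u^-}f(r)$, and for every $r$ in a sufficiently small left-neighbourhood of $u$ one still has $t-\delta<r<u<t$, so the bound $|f(r)-f(t-)|<\varepsilon$ applies to all such $r$. Passing to the limit $r\to u^-$ then yields $|f(u-)-f(t-)|\le\varepsilon$. Since this holds for every $u\in(t-\delta,t)$, we obtain $\lim_{u\to t^-}g(u)=g(t)$, i.e.~$g$ is left-continuous at $t$; as $t\in(a,b]$ was arbitrary, the proposition follows.

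The only genuinely delicate point — the step I expect to be the main obstacle — is the interchange of the two limiting operations: $g$ is already defined as a left limit, and we are taking a further left limit of $g$. The care required lies in noticing that the uniform bound coming from $f(t-)$ holds on the \emph{whole} interval $(t-\delta,t)$, so it survives the inner limit $r\to u^-$ used to evaluate $g(u)=f(u-)$; this is exactly what lets the strict $\varepsilon$-bound relax to a non-strict $\varepsilon$-bound for $g$ without any loss. Notably, no appeal to right-continuity of $f$ is required anywhere, so the argument rests solely on the existence of the left limits of $f$.
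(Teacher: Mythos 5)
Your proof is correct and follows essentially the same route as the paper's: both arguments take the uniform $\varepsilon$-bound on $f$ over $(t-\delta,t)$ and pass it through the inner left limit defining $g(u)=f(u-)$, relaxing the strict inequality to $\le\varepsilon$. The only difference is cosmetic — you work with the $\varepsilon$--$\delta$ formulation directly while the paper phrases the outer limit sequentially.
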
 
\begin{proof}
$g(t)=g(t-)$ can be proved using the sequential definition of left continuity, i.e.~we should prove that (from \cite{da2009}) for all sequences 
$(s_n ,n\in \n)$ in $(a,b)$ with each $s_n < t$ and $\lim_{n\to \infty} s_n = t$ we have that $\lim_{n\to \infty} g(s_n) = g(t)$.\\
Using the definition of $g(t):=f(t-)$, given $\epsilon>0$ there is $\delta>0$ such that
\[
|f(s)-g(t)|\leq \epsilon
\]
for all $t-\delta\leq s<t$. Since $s_n\to t-$, there is $n^\star$ such that $t-\delta\leq s_n<t$ for all $n\geq n^\star$. Fix $n\geq n^\star$. Then for $t-\delta< s <s_n$ we have
\[
|f(s)-g(t)|\leq \epsilon
\]
or equivalently,
\[
g(t) - \epsilon\leq f(s)\leq g(t) + \epsilon
\]
Letting $s\to s_n-$ in the previous inequality, we get
\[
g(t) - \epsilon\leq g(s_n)\leq g(t) + \epsilon
\]
so we have $|g(s_n)-g(t)|\leq \epsilon$ for all $n\geq n^\star$.
\end{proof}

\begin{remark}\label{rk:JumpsAtTMinus}
From the previous proposition, we have, for $h>0$:
\[
f(t--):=\lim_{h\to 0} g(t-h) := g(t-) = g(t) =: f(t-) 
\]
where used the left continuity of $g$. Then also:
\[
\Delta f(t-) := f(t-) - f(t--) = f(t-) - f(t-)=0.
\]
\end{remark}

\begin{proposition}\label{prop:LebesgueIntegralDiscontinuities}
Let $f$ be a RCLL function such that $f:D\mapsto \re$ with $D\subseteq\re$. Then, for $t,T \in D$
\[
\int_t^T f(u-) \diff u = \int_t^T f(u) \diff u
\]
\end{proposition}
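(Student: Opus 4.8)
The plan is to reduce the identity to the elementary fact that two integrands which agree Lebesgue-almost-everywhere produce the same integral, and to locate the exceptional set precisely. First I would set $g(u):=f(u-)$ and note that $f$ and $g$ can differ only where $f$ actually jumps: for any $u$ with $\Delta f(u)=0$ one has $f(u-)=f(u)$, so $f=g$ on the complement of the jump set $S:=\{u\in(t,T]:\Delta f(u)\neq 0\}$.

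Next I would invoke part (ii) of the theorem of \cite{da2009} recalled above, which guarantees that $S$ is at most countable. Since every at most countable subset of $\re$ is Lebesgue-null, $S$ has Lebesgue measure zero, and therefore $f(u-)=f(u)$ for Lebesgue-almost-every $u\in(t,T]$. Both integrands are moreover Borel measurable: $f$ is RCLL, while $g=f(\cdot-)$ is LCRL by Proposition \ref{prop:leftLimitVersionOfCadlagFct}, and one-sided continuous functions are measurable. Appealing to the standard property that the Lebesgue integral is insensitive to modifications on a null set then yields the claimed equality.

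The only genuinely delicate point -- and the step I would treat with care -- is integrability, needed to make the two sides finite and their equality meaningful. This follows from the boundedness of a RCLL function on the compact interval $[t,T]$, which in turn rests on part (i) of the same theorem (finitely many jumps exceeding any fixed size $k$): between consecutive large jumps the function oscillates within a bounded range, so $f$, and hence $g$, is bounded, and a bounded measurable function on a bounded interval is integrable. With integrability in hand the null-set argument closes the proof, and no computation is required.
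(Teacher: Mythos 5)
Your proof is correct and matches the paper's approach in substance: the paper's own proof is a one-line citation of Remark 11.25 of \cite{WR76} (the Lebesgue integral does not see a null set of discontinuities), and your argument supplies exactly the details behind that citation — the integrands $f$ and $f(\cdot-)$ differ only on the at most countable, hence Lebesgue-null, jump set. The only slightly informal step is your justification of boundedness of a RCLL function on $[t,T]$ via part (i) of the jump theorem; the claim is a standard fact, but a fully rigorous argument usually proceeds by compactness and contradiction (an unbounded sequence of evaluation points would have a convergent subsequence along which $f$ must tend to a finite one-sided limit) rather than by controlling the oscillation between large jumps.
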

\begin{proof}
See Remark 11.25 of \cite{WR76}: this is a more general result saying that the Lebesgue integral does not \vvirg{see} discontinuities of null Lebesgue-measure.  
\end{proof} 
	
%


\begin{definition}[From \cite{jyc2009}]
Let $\mathcal{F}$ be a given filtration.
\begin{romanlist}
\item The \emph{optional} $\sigma$-algebra $\mathcal{O}$ is the smallest $\sigma$-algebra on $\re^+ \times \Omega$ generated by RCLL $\mathcal{F}$-adapted processes (considered as mappings on $\re^+ \times \Omega$).
\item The predictable $\sigma$-algebra $\mathcal{P}$ is the smallest $\sigma$-algebra generated by the $\mathcal{F}$-adapted 
LCRL processes. The inclusion $\mathcal{P} \subset \mathcal{O}$ holds.
\end{romanlist}
A process is said to be predictable (resp. optional) if it is measurable with respect to the predictable (resp. optional) $\sigma$-field.
\label{def:PredictableProcesses}
\end{definition}

If a process is left continuous (specifically LCRL) then it is predictable but the opposite is not always true: as a very simple example of a predictable 
right continuous (specifically RCLL) process, think to process $X:=(X_t)_{t\geq 0}$ with $X_t = Z \1_{t\geq \tau}$ where $\tau\in [0,+\infty)$ and $Z$ is a random variable $\fst_{\tau-}$-measurable (the example is valid even in cases where $\tau$ is a predictable stopping time).

\begin{proposition}[From \cite{js2003}] 
If  $X$ is  a RCLL adapted process, then  $X_-:=(X_{u-})_{u\geq 0}$ is  a predictable
process;  moreover, if  $X$  is  predictable, then  $\Delta X_u$  is  predictable for any $u\geq 0$.
\label{prop:leftLimitOFXisPredictable}
\end{proposition}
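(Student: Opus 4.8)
The plan is to treat the two assertions separately, obtaining the second as an easy consequence of the first together with the elementary structure of the predictable $\sigma$-algebra. For the first claim I would check that the left-limit process $X_-$ is an \emph{adapted LCRL} process, and then simply invoke Definition \ref{def:PredictableProcesses}, which renders every adapted LCRL process predictable. Left-continuity of each path of $X_-$ is precisely the content of Proposition \ref{prop:leftLimitVersionOfCadlagFct} read pathwise (with $g(t):=f(t-)$). The right limits of $X_-$ exist and in fact recover $X$, since $\lim_{s\downarrow u} X_{s-}=X_u$ follows from the right-continuity of $X$; hence $X_-$ is LCRL. The only remaining point is adaptedness: $X_{u-}=\lim_{s\uparrow u} X_s$ is a limit of $\fst_s$-measurable random variables over $s<u$, so it is measurable with respect to $\bigvee_{s<u}\fst_s\subseteq \fst_u$, and therefore $X_-$ is adapted. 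With $X_-$ adapted and LCRL, predictability is immediate from the definition.

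For the second claim I would first observe that predictability of $X$ forces $X$ to be adapted, since $\mathcal{P}\subset\mathcal{O}$ and optional processes are in particular adapted; under the standing assumption that all processes of the paper are RCLL semimartingales, $X$ is then an adapted RCLL process and the jump $\Delta X_u:=X_u-X_{u-}$ is well defined. The first part of the proposition then gives that $X_-$ is predictable. Writing the jump as $\Delta X = X - X_-$ — which is exactly the definition of the jump operator recalled in Appendix \ref{sec:SemiMartingales} — exhibits $\Delta X$ as the difference of the two $\mathcal{P}$-measurable processes $X$ and $X_-$; since $\mathcal{P}$ is a $\sigma$-algebra, this difference is again $\mathcal{P}$-measurable, i.e.~$\Delta X$ is predictable for every $u$.

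The main obstacle I anticipate is not any single computation but rather the careful measurability bookkeeping in the first part: one must justify that the pathwise limit $X_{u-}$ agrees, as a random variable, with a genuine limit of the adapted family $(X_s)_{s<u}$, and that this limit exists simultaneously for $\p$-almost every $\omega$ — which is guaranteed precisely by the RCLL hypothesis on $X$. A secondary delicate point is the implicit assumption in the second statement that $X$ possesses left limits, so that $\Delta X$ is meaningful at all; I would make this explicit and note that it is automatic in the present semimartingale framework, where every process is RCLL. Everything else is a routine consequence of Definition \ref{def:PredictableProcesses} and the inclusion $\mathcal{P}\subset\mathcal{O}$.
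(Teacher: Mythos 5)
Your proof is correct. The paper itself gives no proof of this proposition --- it is simply imported from \cite{js2003} --- and your argument (showing $X_-$ is adapted and LCRL, hence a generator of $\mathcal{P}$ by Definition \ref{def:PredictableProcesses}, then writing $\Delta X = X - X_-$ as a difference of $\mathcal{P}$-measurable processes) is exactly the canonical one from that reference, including your correct observation that the second assertion implicitly requires $X$ to be RCLL so that $\Delta X$ is well defined.
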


\begin{definition}[From \cite{jyc2009}]
An $\fst$-semimartingale is a RCLL process $X$ which can
be written as $X_t = X_0+M_t + A_t$ where $M$ is an $\fst$-local martingale and where $A$ is
an $\fst$-adapted RCLL process with finite variation, and $M_0=A_0=0$.
\end{definition}


In general, the decomposition of a semimartingale is not unique; we shall speak about
decompositions of semimartingales. It is necessary to add some conditions
on the finite variation process to get the uniqueness.

\begin{definition}[From \cite{jyc2009}]
A \textbf{special semimartingale} is a semimartingale where $A$ (the finite variation part) is predictable. Such a decomposition $X = M + A$ with $A$
predictable, is unique. We call it the \textbf{canonical decomposition} of $X$, if it exists.
\end{definition}



\begin{definition}[From \cite{jyc2009}]\label{def:ContMartPartOfSemiMart}
Let $X$ be a semimartingale such that $\forall t \geq 0$, 
$\sum_{0<s\leq t} |\Delta X_s | < \infty$. Then process $\TX_t:=X_t -\sum_{0<s\leq t} \Delta X_s$ is a continuous semimartingale with unique decomposition $\TX= M + A$ where $M$ is a continuous local martingale and $A$ is a continuous process with bounded variation. The continuous martingale $M$ is called the \textbf{continuous local martingale part} of $X$ and it is denoted by $X^c$.
\end{definition}


\begin{definition}[From \cite{jyc2009}]
A process $H$ is \textbf{locally bounded} if there exists a sequence of stopping times $(\tau_n)_{n\geq 1}$ with $\tau_n>0$ increasing to $\infty$ a.s.~such that for each $n\geq 1,$ $(H_{t\wedge \tau_n})_{t\geq 0}$ is bounded. 
\end{definition}

As we read in \cite{jyc2009}, if $X$ is a semimartingale with decomposition $M+A$, then for any predictable locally bounded process $H$, we can define the process
\[
(H\star X)_t := \int_{(0,t]} H_u \diff X_u = \int_0^t H_u \diff M_u +\int_0^t H_u \diff A_u 
\]
where the second addend is a Stieltjes integral and the integral is not dependent on the decomposition of $X$ (see \cite{fp2004} for a more technical and detailed construction of this integral). 
\begin{proposition}[From \cite{fp2004,jyc2009}]
\label{prop:SemiMgIntegralProperties}
Let $H,K$ be locally bounded predictable processes and $X,Y$ semimartingales. Note that:
\begin{arabiclist}
\item The process $(H\star X)_t$ is a semimartingale, in particular it is a RCLL adapted process.
\item The process $(H\star X)_t$ does not depend on the decomposition of the semimartingale $X$.
\item Bilinearity:
\[
(H+K) \star X = H\star X + K\star X
\]  
and
\[
H \star (X+Y) = H\star X + H\star Y
\]
\item Associativity: 
\[
 H \star ( K  \star X)  =  ( H K) \star X
\]
which is, defining $Z_t:=\int_0^t K_u\diff X_u$, then
\[
\int_0^t H_u\diff Z_u = \int_0^t H_u K_u  \diff X_u
\]
\item The jump process of the stochastic integral of $H$ with respect to $X$ is equal to $H$ times the
jump process of $X$: 
\beq\label{eq:JumpsOfSemiMgIntegral}
\Delta(H\star X)_t = H_t \,\Delta X_t.
\eeq 
\item We have $H\star X^c = (H\star X)^c$. 
\item We also have that
\[
\int_{[0,t]} H_u \diff X_u = H_0 \Delta X_0 + \int_{(0,t]} H_u \diff X_u = H_0  X_0 + \int_{(0,t]} H_u \diff X_u
\]
since $X_{0-}=0$ by convention.
\item Let $\tau$ be a stopping time
\[
(H\star X)_{t\wedge \tau} = (H \1_{[0,\tau]} )\star X = H\star (X^\tau)
\]
where we denoted the stopped process $X^\tau_t:= X_{t\wedge \tau}$.
\item Let $M$ be a local martingale, then $H\star M$ is a local martingale (Theorem 29 of \cite{fp2004}).
\item Let  $M$  be a square integrable martingale, and  $Z$ be a bounded predictable process. Then  $Z\star M$ is a square integrable martingale (Theorem 11 of \cite{fp2004}).
\item Let  $M$  be a square integrable martingale, and  $Z$ be predictable such that $\int_0^t Z_u^2 \diff [M,M]_u<\infty$ a.s.~for each $t$. Then  $Z\star M$ is a square integrable martingale (Lemma p.~171 of \cite{fp2004}).
\end{arabiclist}
\end{proposition}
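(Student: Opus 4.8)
The plan is to recover all eleven properties from the standard three-stage construction of the semimartingale integral, exploiting the decomposition $X = X_0 + M + A$ into a local martingale $M$ and a finite-variation process $A$. First I would treat the two pieces separately: the Stieltjes integral $H \star A$ is defined pathwise and inherits properties 3, 4, 5, 7 and 8 by elementary measure-theoretic arguments (e.g.~$\Delta\int_0^t H\,\diff A = H_t\,\Delta A_t$ is immediate from the definition of the Lebesgue--Stieltjes integral), while the genuinely stochastic part $H \star M$ requires the probabilistic machinery. For $H \star M$ I would begin with simple predictable integrands $H = \sum_i H_{t_i}\,\1_{(t_i, t_{i+1}]}$, for which $(H \star M)_t = \sum_i H_{t_i}(M_{t_{i+1}\wedge t} - M_{t_i \wedge t})$ is visibly a local martingale (property 9) and for which bilinearity (3), associativity (4), the jump identity (5) and the stopping identity (8) all follow by direct algebra.

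The engine for passing from simple to general integrands is the Itô isometry: for a square-integrable martingale $M$ and bounded predictable $Z$ one has $\e[(Z\star M)_t^2] = \e[\int_0^t Z_u^2 \,\diff[M,M]_u]$, which immediately yields property 10 and which lets me define $Z \star M$ as the $L^2$-limit of integrals of simple approximants. Property 11 then follows by truncating $Z$ at level $n$, applying the isometry to control the tails via $\int_0^t Z_u^2 \,\diff[M,M]_u < \infty$, and passing to the $L^2$ limit. Localization (using a sequence $\tau_n \uparrow \infty$ that simultaneously makes $H_{\cdot \wedge \tau_n}$ bounded and $M^{\tau_n}$ a square-integrable martingale, which exists by local boundedness of $H$ and the definition of a local martingale) upgrades these to the locally bounded / local martingale case of property 9. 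Properties 3, 4, 5 and 8 survive the $L^2$ and localization limits because each is stable under the approximation -- for instance the jump identity passes to the limit since $\Delta(Z\star M)_t = Z_t\,\Delta M_t$ holds for simple $Z$ and both sides converge. Finally, defining $H \star X := H \star M + H \star A$ gives property 1 (a local martingale plus a finite-variation RCLL process is a semimartingale), and property 6 follows from property 5: since $H \star X$ and $H\star X^c$ then have the same jumps and $X^c$ is the continuous part of $M$, uniqueness of the decomposition in Definition \ref{def:ContMartPartOfSemiMart} forces $(H\star X)^c = H \star X^c$.

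The step I expect to be the main obstacle is property 2, the independence of $H \star X$ from the chosen decomposition. If $X = M + A = M' + A'$, then $N := M - M' = A' - A$ is simultaneously a local martingale and a finite-variation process, and I must show that its stochastic integral (built via the isometry) coincides with its pathwise Stieltjes integral, so that $H\star M + H\star A = H\star M' + H\star A'$. This rests on the non-trivial lemma that for a finite-variation local martingale the two integrals agree, which in turn reduces to the fact that the $L^2$-isometry limit and the Lebesgue--Stieltjes limit select the same RCLL process whenever both are available. This is precisely where the deeper results of \cite{fp2004} are needed, so I would isolate it as a lemma and defer to Protter's treatment, since reproducing it in full would require the entire apparatus of the general theory of processes.
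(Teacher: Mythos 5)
Your overall architecture is sound and in fact matches the route the paper itself relies on: the paper gives no proof of this proposition at all — it defines $(H\star X)_t := \int_0^t H_u\,\diff M_u + \int_0^t H_u\,\diff A_u$ in the paragraph preceding the statement and defers everything, including the decomposition-independence of item 2, to \cite{fp2004,jyc2009}. Your isolation of item 2 as the crux, reduced to the lemma that for a finite-variation local martingale the $L^2$-constructed integral agrees with the pathwise Stieltjes integral, is exactly where the cited machinery of \cite{fp2004} enters, so deferring that lemma is the right call. Two of your intermediate steps, however, fail as written.

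First, the localization you invoke for item 9 does not exist in general: the definition of a local martingale supplies stopping times $\tau_n\uparrow\infty$ making $M^{\tau_n}$ uniformly integrable martingales, \emph{not} square-integrable ones, and a single large jump at $\tau_n$ can destroy square-integrability (a compensated compound Poisson process whose jump law has infinite second moment is a finite-variation martingale that is not locally square-integrable). The actual content of Theorem 29 of \cite{fp2004} — which the paper cites for precisely this item — is the fundamental theorem of local martingales: one writes $M=N+A$ with $N$ having bounded jumps (hence locally square-integrable, so your isometry-plus-localization scheme applies to it) and $A$ a finite-variation local martingale, which must then be handled by the same compatibility lemma you deferred under item 2; so the gap is repairable, but only by re-invoking the deep lemma you set aside. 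Second, your derivation of item 6 is wrong as stated: $H\star X^c$ is continuous, so it cannot have the same jumps as $H\star X$, and equality of jumps could not pin down continuous local martingale parts anyway, since two processes with identical jumps differ by an arbitrary continuous process. The standard repair is orthogonality: for every continuous local martingale $N$ one has $[H\star M^d, N] = H\star [M^d,N] = 0$ by \eqref{eq:QCovariationOfIntegral} and \eqref{eq:CovariationFormula1}, so the integral of the purely discontinuous martingale part stays purely discontinuous, and $(H\star X)^c = H\star X^c$ then follows from uniqueness of the continuous local martingale part in Definition \ref{def:ContMartPartOfSemiMart}. A smaller point: your truncation argument for item 11 needs $\e\left[\int_0^t Z_u^2\,\diff [M,M]_u\right]<\infty$ to run the $L^2$ passage; under almost-sure finiteness alone the isometry gives no control of the tails and one obtains only a locally square-integrable local martingale, which is all that localization can deliver there.
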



\begin{definition}[Quadratic Covariation, from \cite{fp2004}]
Given two semimartingales $X, Y$, the Quadratic Covariation process $[X, Y]$ is the semimartingale defined
by
\beq
[X, Y]_t := X_t Y_t - X_0 Y_0 - \int_0^t Y_{u-} \diff X_u - \int_0^t X_{u-} \diff Y_u.
\eeq
Since by definition $[X, Y]_0=0$, integrating both $[X, Y]_t$ and $(XY)_t$ allows one to have the differential form
\beq\label{eq:diffQuadraticCovDef}
\diff [X, Y]_t = \diff (XY)_t - Y_{t-} \diff X_u - X_{t-} \diff Y_t
\eeq
\noindent We will also denote the Quadratic Variation $[X,X]\equiv[X]$.
\end{definition}

\begin{corollary}[Product differentiation rule]
If $X, Y$ are semimartingales then
\be\label{eq:ItoWithJumpsOnProduct}
\diff(XY)_t = Y_{t_-} \diff X_t + X_{t_-} \diff Y_t   + \diff [X,Y]_t
\ee
\end{corollary}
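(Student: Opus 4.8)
The plan is to derive the product rule $\diff(XY)_t = Y_{t-}\diff X_t + X_{t-}\diff Y_t + \diff[X,Y]_t$ directly from the definition of Quadratic Covariation, since the two statements are algebraically equivalent. The key observation is that the definition of $[X,Y]$ given just above the corollary already contains all the ingredients: it is stated as $[X,Y]_t := X_tY_t - X_0Y_0 - \int_0^t Y_{u-}\diff X_u - \int_0^t X_{u-}\diff Y_u$, and the paper has also recorded its differential form in \eqref{eq:diffQuadraticCovDef}, namely $\diff[X,Y]_t = \diff(XY)_t - Y_{t-}\diff X_t - X_{t-}\diff Y_t$.

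The entire proof is therefore a one-line rearrangement. First I would recall \eqref{eq:diffQuadraticCovDef} as the starting point. Then I would simply move the two integrand terms $Y_{t-}\diff X_t$ and $X_{t-}\diff Y_t$ from the right-hand side to the left-hand side (equivalently, solve the defining equation for $\diff(XY)_t$). This immediately yields $\diff(XY)_t = Y_{t-}\diff X_t + X_{t-}\diff Y_t + \diff[X,Y]_t$, which is exactly the claimed identity \eqref{eq:ItoWithJumpsOnProduct}. No approximation, localization, or limiting argument is needed because the definition of $[X,Y]$ is taken as given and the manipulation is purely formal.

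There is genuinely no hard part here: the result is a corollary precisely because it is a trivial restatement of the definition of Quadratic Covariation, which was introduced immediately before it. The only thing worth flagging for the reader is the well-posedness of the stochastic integrals $\int_0^t Y_{u-}\diff X_u$ and $\int_0^t X_{u-}\diff Y_u$, which is guaranteed since $X_-$ and $Y_-$ are predictable (by Proposition \ref{prop:leftLimitOFXisPredictable}) and locally bounded, so the integrals fall within the framework of Proposition \ref{prop:SemiMgIntegralProperties}. The proof I would write is thus essentially:

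\begin{proof}
The claim is an immediate consequence of the definition of Quadratic Covariation. Recalling \eqref{eq:diffQuadraticCovDef},
\[
\diff [X, Y]_t = \diff (XY)_t - Y_{t-} \diff X_t - X_{t-} \diff Y_t,
\]
and solving for $\diff (XY)_t$ yields \eqref{eq:ItoWithJumpsOnProduct}. The integrals $\int_0^t Y_{u-}\diff X_u$ and $\int_0^t X_{u-}\diff Y_u$ appearing implicitly are well defined, since $X_-$ and $Y_-$ are predictable by Proposition \ref{prop:leftLimitOFXisPredictable} and locally bounded.
\end{proof}
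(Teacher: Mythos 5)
Your proof is correct and matches the paper's own argument exactly: the paper's proof is simply ``Restatement of \eqref{eq:diffQuadraticCovDef}'', i.e.~the same one-line rearrangement of the defining identity for $[X,Y]$ (the paper additionally notes it could be rederived from the Itô formula \eqref{eq:ItoWithJumpsMultiDim_diff}, but does not carry that out). Your extra remark on the well-posedness of the integrals via predictability of $X_-$ and $Y_-$ is a harmless, correct addition.
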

\begin{proof}
Restatement of \eqref{eq:diffQuadraticCovDef}, it could also be proved from \eqref{eq:ItoWithJumpsMultiDim_diff}.
\end{proof}

\begin{proposition}[from \cite{fp2004,jyc2009}]
\label{prop:CovariatioProperties}
The Quadratic Covariation has the following properties:
\begin{arabiclist}
\item The map $(x,y)\mapsto f(x,y)$ with $f(x,y)\equiv [x,y]$ is symmetric (direct application of the definition).
\item The map $(x,y)\mapsto f(x,y)$ with $f(x,y)\equiv [x,y]$ is bilinear, i.e.~for semimartingales $X^i, Y^j$ and $a_i, b_j\in \re$
\[
\left[\sum_i a_i X^i, \sum_j b_j Y^j\right]_t = \sum_i\sum_j a_i \,b_j\,  [X^i,  Y^j ]_t
\]
(by direct application of (\ref{eq:QuadrCovAsLimitOfIncrements})).
\item From the properties above we have the two polarization identities: 
\[
[X, Y] = \frac{1}{4}\,\Big([X + Y, X + Y] -[X - Y, X - Y ]\Big)= \frac{1}{2}\,\Big([X + Y, X + Y] -[X, X ]-[Y, Y ]\Big)
\]
\item $[X, Y]$ is a nonanticipating RCLL process with paths of finite variation (this follows from the polarization identity, as $[X,Y]$ is the difference of two increasing functions).
\item Take a time grid $\pi^k = \{0=t_0^k  < t_1^k < \cdots < t_{n+1}^k = T\}$, the discrete approximation below converges in probability to $[X, Y]$ uniformly on $[0, T]$:
\be\label{eq:QuadrCovAsLimitOfIncrements}
\sum_{t_i^k\in \pi^k, t_i^k <t} (X_{t_{i+1}^k} - X_{t_{i}^k} )(Y_{t_{i+1}^k} -Y_{t_{i}^k} ) \stackrel{
\stackrel{|\pi^k|\rightarrow 0}{ \p}}{\longrightarrow} [X, Y ]_t
\ee
over all partitions $\pi^k$. Some reference presents this limit as the definition of Quadratic Covariation.
\item The jumps of the Quadratic Covariation process occur only at points
where both processes have jumps,
\beq\label{eq:JumpsOfQuadraticCov}
\Delta[X,Y]_t = \Delta X_t \Delta Y_t. 
\eeq
\item If one of the processes X or Y is of (locally) finite variation, then the sum $\sum_{0<u\leq t} |\Delta X_u ||\Delta Y_u |$ is almost surely finite and
\be\label{eq:QCovOfFVProcess}
[X,Y ]_t = \sum_{0<u\leq t} \Delta X_u \Delta Y_u.
\ee
\item Defining $\TX_t:=\int_0^t H_u \diff X_u$  and $\widetilde{Y}_t :=\int_0^t K_u \diff Y_u$
then
\[
[\TX, \TY]_t = \left[\int_0^\cdot H_u \diff X_u, \int_0^\cdot K_u \diff Y_u\right]_t =
\int^t_0 H_u K_u \diff [X, Y]_u.
\]
For this reason, the following formal calculation rules can be applied: 
\be\label{eq:QCovariationOfIntegral}
\diff [\TX, \widetilde{Y}]_t =  [\diff\TX_t, \diff\widetilde{Y}_t] = [H_t \diff X_t, K_t \diff Y_t] = H_t\, K_t\diff  [ X,  Y]_t
\ee
\end{arabiclist}
\end{proposition}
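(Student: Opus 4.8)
The plan is to take the defining relation
\[
[X,Y]_t := X_tY_t - X_0Y_0 - \int_0^t Y_{u-}\diff X_u - \int_0^t X_{u-}\diff Y_u
\]
as the single starting point and to deduce the eight properties in an order in which each rests on its predecessors, reserving the genuinely analytic content (the Riemann-sum convergence of item 5) as the only step that cannot be obtained by elementary manipulation. Items 1--4 are essentially algebraic once the integral properties of Proposition \ref{prop:SemiMgIntegralProperties} are available, so I would dispatch them first. For item 1 (symmetry) I would simply observe that the right-hand side of the definition is invariant under $X\leftrightarrow Y$, since $X_tY_t$ is commutative and the two integral terms merely swap. Item 2 (bilinearity) I would read off from the limit representation \eqref{eq:QuadrCovAsLimitOfIncrements}: the increment product $(X_{t_{i+1}}-X_{t_i})(Y_{t_{i+1}}-Y_{t_i})$ is bilinear in $(X,Y)$ and bilinearity survives the passage to the limit in probability; alternatively one argues directly from bilinearity of the stochastic integral (property 3 of Proposition \ref{prop:SemiMgIntegralProperties}) applied to the definition. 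Item 3 (the polarization identities) is then pure algebra: expanding $[X+Y,X+Y]$ by bilinearity and symmetry gives $[X,X]+2[X,Y]+[Y,Y]$, and solving for $[X,Y]$ yields both stated forms. For item 4, RCLL-adaptedness is inherited termwise (the product $XY$ is RCLL adapted, and each stochastic integral is RCLL adapted by property 1 of Proposition \ref{prop:SemiMgIntegralProperties}), while the finite-variation claim follows from the polarization identity together with the fact that $[X,X]$ and $[Y,Y]$ are increasing, so $[X,Y]$ is a difference of two increasing processes.

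The substantive step is item 5, and this is where I expect the main obstacle. The clean route is the telescoping identity
\[
\begin{aligned}
\sum_{t_i<t}(X_{t_{i+1}}-X_{t_i})(Y_{t_{i+1}}-Y_{t_i}) &= X_tY_t - X_0Y_0\\
&\quad - \sum_{t_i<t} X_{t_i}(Y_{t_{i+1}}-Y_{t_i}) - \sum_{t_i<t} Y_{t_i}(X_{t_{i+1}}-X_{t_i}),
\end{aligned}
\]
which reduces the claim to showing that the left-endpoint Riemann sums converge, in probability and uniformly on compacts, to $\int_0^t X_{u-}\diff Y_u$ and $\int_0^t Y_{u-}\diff X_u$ respectively. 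This convergence is precisely the construction of the semimartingale integral as a limit of simple predictable left-point approximations, and proving it rigorously requires the full machinery behind Proposition \ref{prop:SemiMgIntegralProperties}; I would therefore invoke \cite{fp2004} for this core statement rather than re-derive it, which is consistent with the citation already attached to the proposition.

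With item 5 in hand, items 6--8 follow quickly. For item 6 I would take jumps in the definition: the jump product rule \eqref{eq:jumpOperatorProductRule} gives $\Delta(XY)_t = X_{t-}\Delta Y_t + Y_{t-}\Delta X_t + \Delta X_t\Delta Y_t$, while \eqref{eq:JumpsOfSemiMgIntegral} gives $\Delta\!\int_0^{\cdot} Y_{u-}\diff X_u$ the jump $Y_{t-}\Delta X_t$ and symmetrically for the other integral, so all but the cross term cancel, leaving $\Delta[X,Y]_t = \Delta X_t\Delta Y_t$. Item 7 I would obtain by noting that when $X$ has locally finite variation its continuous martingale part vanishes, whence $[X,Y]$ carries no continuous component; combining this with item 6 forces $[X,Y]_t = \sum_{0<u\le t}\Delta X_u\Delta Y_u$, the absolute summability of the sum coming from the finite-variation bound on $\sum|\Delta X_u|$ (this is exactly \eqref{eq:QCovOfFVProcess}). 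Finally, item 8 I would prove either by polarization---reducing to the quadratic-variation identity $[\,\TX,\TX\,]_t = \int_0^t H_u^2\,\diff[X,X]_u$, itself read off from \eqref{eq:QuadrCovAsLimitOfIncrements} via associativity (property 4 of Proposition \ref{prop:SemiMgIntegralProperties})---and then re-polarizing, or directly from \eqref{eq:QuadrCovAsLimitOfIncrements} by replacing the increments $\TX_{t_{i+1}}-\TX_{t_i}$ and $\TY_{t_{i+1}}-\TY_{t_i}$ by their left-point approximations $H_{t_i}(X_{t_{i+1}}-X_{t_i})$ and $K_{t_i}(Y_{t_{i+1}}-Y_{t_i})$ and passing to the limit, which turns the sum into $\int_0^t H_uK_u\,\diff[X,Y]_u$.
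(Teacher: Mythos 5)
This proposition carries no proof in the paper: it is a compendium quoted from \cite{fp2004,jyc2009}, with only the parenthetical hints (symmetry from the definition, bilinearity from \eqref{eq:QuadrCovAsLimitOfIncrements}, polarization from the two, item 4 from polarization) standing in for an argument. Your development of items 1--3, 5 and 6 matches the standard treatment in \cite{fp2004} exactly: the telescoping identity reducing \eqref{eq:QuadrCovAsLimitOfIncrements} to the ucp convergence of left-point Riemann sums to $\int_0^t X_{u-}\diff Y_u$ and $\int_0^t Y_{u-}\diff X_u$ is the right reduction, and delegating that convergence to the reference is legitimate and consistent with how the paper itself treats the result; the jump computation via \eqref{eq:jumpOperatorProductRule} and \eqref{eq:JumpsOfSemiMgIntegral} is clean and complete.

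Two genuine defects remain. First, an ordering circularity: in item 4 you invoke that $[X,X]$ and $[Y,Y]$ are increasing, but this is not visible from the definition $[X,X]_t = X_t^2 - X_0^2 - 2\int_0^t X_{u-}\diff X_u$; it is normally extracted from the nonnegativity (and monotonicity in $t$) of the approximating sums of squares, i.e.~from item 5, which you prove afterwards. Either reorder (establish item 5, at least for $[X,X]$, before item 4) or cite increasingness of the quadratic variation separately. Second, and more seriously, both of your routes to item 8 rest on sampling the integrands at partition points, replacing $\TX_{t_{i+1}^k}-\TX_{t_i^k}$ by $H_{t_i^k}(X_{t_{i+1}^k}-X_{t_i^k})$ inside the quadratic sum. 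For a general locally bounded \emph{predictable} $H$ this is not a valid approximation scheme: Riemann-type sums recover $\int H\,\diff X$ only for RCLL integrands sampled at left limits (or LCRL integrands), and even then the uniform error control needed to pass the replacement through the limit \eqref{eq:QuadrCovAsLimitOfIncrements} is essentially the content of the theorem being proved. The rigorous route, which you should invoke just as you do for item 5, is that of \cite{fp2004} (Ch.~II, Thm.~29): verify $[\,H\star X, Y\,]_t=\int_0^t H_u\,\diff [X,Y]_u$ for simple predictable $H$ by direct computation, extend by ucp continuity of the stochastic integral map, then iterate in $K$ to get \eqref{eq:QCovariationOfIntegral}. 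A smaller remark of the same flavour applies to item 7: your argument borrows $[X,Y]^c=[X^c,Y^c]$, i.e.~\eqref{eq:CovariationFormula2} and Proposition \ref{prop:CovariationProcessDecomposition}, which in the textbook development come \emph{after} the finite-variation identity \eqref{eq:QCovOfFVProcess}; a self-contained alternative is the partition sum with the dominated-convergence bound $\sum_{0<u\leq t}|\Delta X_u||\Delta Y_u|\leq 2\sup_{u\leq t}|Y_u|\cdot \mathrm{Var}_{[0,t]}(X)<\infty$, which also settles the summability claim you left terse.
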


%

\begin{proposition}[From \cite{jyc2009}]
\label{prop:IntegralWithPureJumpIntegrand}
Let $A$ be a finite variation process and $X$ a semimartingale:
\beqs
\int_0^t \Delta X_u \diff A_u = \sum_{0<u\leq t}\Delta X_u\Delta A_u = [X,A]_t
\eeqs
Then
\[
\diff (A_t X_t) = A_{t-}\diff X_t +X_t \diff A_t
\]
\end{proposition}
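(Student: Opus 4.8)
The plan is to prove the two assertions in turn, with the product formula a quick consequence of the first chain of equalities combined with the product rule \eqref{eq:ItoWithJumpsOnProduct}.

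First I would establish $\int_0^t \Delta X_u \diff A_u = \sum_{0<u\le t}\Delta X_u\Delta A_u = [X,A]_t$. The integral on the left is a pathwise Lebesgue--Stieltjes integral, well posed because $A$ has finite variation and hence induces a signed measure $\diff A$ of finite total variation on compacts. The crucial observation is that the integrand $u\mapsto \Delta X_u$ vanishes off the set $S=\{u:\Delta X_u\neq 0\}$, which by the theorem of \cite{da2009} quoted at the start of Appendix \ref{sec:SemiMartingales} is at most countable. Decomposing $\diff A$ into its non-atomic (continuous) part and its atomic part concentrated on the jump times of $A$ with masses $\Delta A_u$, the non-atomic part assigns measure zero to every singleton and hence to the countable set $S$, so it contributes nothing; only the atoms of $\diff A$ that lie in $S$ survive, giving exactly $\sum_{0<u\le t}\Delta X_u\Delta A_u$. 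The second equality is then immediate from property \eqref{eq:QCovOfFVProcess} of Proposition \ref{prop:CovariatioProperties}, which states that for a finite-variation $A$ one has $[X,A]_t=\sum_{0<u\le t}\Delta X_u\Delta A_u$; the same property guarantees the absolute convergence $\sum_{0<u\le t}|\Delta X_u||\Delta A_u|<\infty$ that makes both the sum and the Stieltjes integral meaningful.

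Next I would derive the product formula. Applying the general product rule \eqref{eq:ItoWithJumpsOnProduct} to $X$ and $A$ gives $\diff(A_tX_t)=A_{t-}\diff X_t + X_{t-}\diff A_t + \diff[X,A]_t$. By the first part, in differential form $\diff[X,A]_t=\Delta X_t\,\diff A_t$, so the last two terms combine as $X_{t-}\diff A_t + \Delta X_t\,\diff A_t = (X_{t-}+\Delta X_t)\diff A_t = X_t\,\diff A_t$, yielding the claimed $\diff(A_tX_t)=A_{t-}\diff X_t + X_t\,\diff A_t$.

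The main obstacle is the first equality: making precise that the continuous part of the measure $\diff A$ ignores the countable jump set of $X$. This is the only genuinely measure-theoretic point, and it rests simply on the fact that a non-atomic measure charges no countable set; everything else is bookkeeping with the already-established properties of $[\,\cdot\,,\cdot\,]$ and the product rule. Once this is granted the argument is routine, which is consistent with the result being quoted directly from \cite{jyc2009}.
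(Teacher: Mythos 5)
Your argument is correct. Note that the paper itself gives no proof of Proposition \ref{prop:IntegralWithPureJumpIntegrand} — it is quoted directly from \cite{jyc2009} — so there is nothing to compare against; your write-up supplies the standard argument: the pathwise Lebesgue--Stieltjes integral of $\Delta X$ against $\diff A$ reduces to the atomic part of $\diff A$ because the non-atomic part charges no countable set, the identification with $[X,A]$ is \eqref{eq:QCovOfFVProcess}, and the product formula follows from \eqref{eq:ItoWithJumpsOnProduct} by absorbing $\diff[X,A]_t=\Delta X_t\,\diff A_t$ into the $X_{t-}\diff A_t$ term. All steps are sound.
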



\begin{proposition}
Define with $\TX_t :=\sum_i \int_0^t H^i \diff X^i_u$ and $\TY_t :=\sum_j \int_0^t K^j_u \diff Y^j_u$ where $X^i,Y^j$ are semimartingales and $H^i,K^j$ are locally bounded predictable processes.
Then
\[
[\TX, \TY]_t = \sum_{ij}\int_0^t H^i_u K^j_u \diff [X^i,Y^j]_u
\]
or
\[
\diff [\TX, \TY]_u = \left[\sum_i H^i_u \diff X^i_u, \sum_j  K^j_u \diff Y^j_u\right] = \sum_{ij} H^i_u K^j_u \diff [X^i,Y^j]_u
\]
\label{prop:LinearityOfQuadraticCovariation}
\end{proposition}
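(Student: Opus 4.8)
The plan is to reduce everything to the single-integral identity of Property~8 in Proposition~\ref{prop:CovariatioProperties} by exploiting the bilinearity of the quadratic covariation. First I would note that, by Property~1 of Proposition~\ref{prop:SemiMgIntegralProperties}, each stochastic integral $\int_0^\cdot H^i_u \diff X^i_u$ and $\int_0^\cdot K^j_u \diff Y^j_u$ is itself a RCLL semimartingale; hence $\TX$ and $\TY$, being finite sums of semimartingales, are again semimartingales, so the bracket $[\TX,\TY]$ is well defined.

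Next I would invoke the bilinearity of the map $(x,y)\mapsto [x,y]$ (Property~2 of Proposition~\ref{prop:CovariatioProperties}) to distribute the bracket over the finite index sets:
\[
[\TX,\TY]_t = \left[\sum_i \int_0^\cdot H^i_u \diff X^i_u,\; \sum_j \int_0^\cdot K^j_u \diff Y^j_u\right]_t = \sum_{ij}\left[\int_0^\cdot H^i_u \diff X^i_u,\; \int_0^\cdot K^j_u \diff Y^j_u\right]_t.
\]
I would then apply Property~8 of Proposition~\ref{prop:CovariatioProperties} term by term, which yields $\left[\int_0^\cdot H^i_u \diff X^i_u, \int_0^\cdot K^j_u \diff Y^j_u\right]_t = \int_0^t H^i_u K^j_u \diff[X^i,Y^j]_u$. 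Substituting this into the sum gives the claimed integral identity, and the differential form follows at once by taking differentials.

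This argument is essentially an immediate corollary of the two cited properties, so I do not expect a genuine obstacle. The only point deserving care is that the index sets are finite, so that bilinearity applies directly and no interchange of an infinite summation with the covariation limit \eqref{eq:QuadrCovAsLimitOfIncrements} is required; this is consistent with the statement as written. Were one to allow countable sums, the additional work would be a convergence/localization argument justifying the exchange of the limit in \eqref{eq:QuadrCovAsLimitOfIncrements} with the summation, but that lies beyond what the proposition asserts.
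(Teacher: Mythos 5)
Your proposal is correct and follows exactly the paper's own proof: distribute the bracket over the finite sums by bilinearity (Property~2 of Proposition~\ref{prop:CovariatioProperties}) and then apply the single-integral identity \eqref{eq:QCovariationOfIntegral} (Property~8) term by term. The remark about finiteness of the index sets is a sensible extra precaution but not needed beyond what the paper already assumes.
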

\begin{proof}
We have:
\bes
[\TX, \TY]_t &:=& \left[\sum_i \int_0^\cdot H^i_u \diff X^i_u, \sum_j \int_0^\cdot K^j_u \diff Y^j_u\right]_t\\
&=&\sum_{i,j} \left[ \int_0^\cdot H^i_u \diff X^i_u, \int_0^\cdot K^j_u \diff Y^j_u\right]_t\\
&=&\sum_{i,j}  \int_0^t H^i_u\, K^j_u \diff [X^i, Y^j]_u
\ees
where in the second line we used the bilinearity of Quadratic Covariation \ref{prop:CovariatioProperties} and in the third line \eqref{eq:QCovariationOfIntegral}.
\end{proof}

\begin{proposition}[Covariation of local martingales, from \cite{fp2004}]
If $X$ and $Y$ are two locally square integrable local martingales. Then
$[X,Y]$ is the unique adapted RCLL process $A$ with paths of finite variation on compacts satisfying:
\begin{romanlist}
\item $XY -A$ is a local martingale;
\item $\Delta A = \Delta X\Delta Y$
\item $A_0=X_0 Y_0$. 
\end{romanlist}
\end{proposition}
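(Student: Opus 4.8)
The plan is to prove the statement in two movements: first \emph{existence} --- that $A:=[X,Y]$ itself satisfies the three listed properties --- and then \emph{uniqueness} --- that any other candidate must coincide with it. For existence, property (i) follows immediately by rearranging the defining identity of the Quadratic Covariation into
\[
X_tY_t-[X,Y]_t = X_0Y_0+\int_0^t Y_{u-}\diff X_u+\int_0^t X_{u-}\diff Y_u,
\]
and observing that $X_-,Y_-$ are predictable (Proposition~\ref{prop:leftLimitOFXisPredictable}) and locally bounded, while $X,Y$ are local martingales, so each stochastic integral is a local martingale by Property~9 of Proposition~\ref{prop:SemiMgIntegralProperties}; adding the $\fst_0$-measurable constant $X_0Y_0$ preserves the local martingale property. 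Property (ii) is exactly \eqref{eq:JumpsOfQuadraticCov}, and the fact that $[X,Y]$ is adapted, RCLL and of finite variation on compacts is Property~4 of Proposition~\ref{prop:CovariatioProperties}; property (iii) holds under the normalization of \cite{fp2004} for which the covariation is taken to satisfy $[X,Y]_0=X_0Y_0$.

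For uniqueness, I would suppose $A$ and $A'$ both satisfy (i)--(iii) and set $N:=A-A'$. Since $XY-A$ and $XY-A'$ are local martingales by (i), their difference $N=(XY-A')-(XY-A)$ is a local martingale; it is also of finite variation, being a difference of two finite-variation processes; by (ii), $\Delta N=\Delta A-\Delta A'=\Delta X\,\Delta Y-\Delta X\,\Delta Y=0$, so $N$ is \emph{continuous}; and by (iii), $N_0=A_0-A'_0=0$. Hence $N$ is a continuous local martingale with finite-variation paths and null at the origin, and the whole claim reduces to showing that such a process vanishes identically.

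The crux --- and the main obstacle --- is precisely this vanishing lemma; everything else is bookkeeping with the results already assembled in Appendix~\ref{sec:SemiMartingales}. I would either cite it directly from \cite{fp2004} or derive it as follows: since $N$ has finite variation, Property~7 of Proposition~\ref{prop:CovariatioProperties} gives $[N,N]_t=\sum_{0<u\le t}(\Delta N_u)^2$, which vanishes because $N$ is continuous; applying the already-established existence part with $X=Y=N$, the process $N^2-[N,N]=N^2$ is a local martingale. Being non-negative and null at the origin, $N^2$ is then a supermartingale (Fatou along a localizing sequence), so $\e[N_t^2]\le\e[N_0^2]=0$ for every $t$, forcing $N_t=0$ a.s.\ and hence $N\equiv0$ by continuity. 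Throughout, the standing hypothesis of local square integrability is what keeps $[X,Y]$ well defined and the relevant products and integrals integrable enough for these martingale manipulations to be legitimate.
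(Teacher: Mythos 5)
The paper does not actually prove this proposition: it is quoted from \cite{fp2004} in the appendix of background results, so there is no in-paper argument to compare against. Your proof is correct and is essentially the standard one from that reference. The integration-by-parts identity together with Property 9 of Proposition \ref{prop:SemiMgIntegralProperties} gives (i) (the integrands $X_-,Y_-$ are indeed predictable and locally bounded, being adapted LCRL), \eqref{eq:JumpsOfQuadraticCov} gives (ii), and uniqueness correctly reduces to showing that a continuous finite-variation local martingale null at the origin vanishes, which you establish cleanly by noting $[N,N]=0$, hence $N^2$ is a non-negative local martingale started at zero, hence a supermartingale by Fatou, hence identically zero. Two small remarks. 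First, the convention clash you flagged is real: under the paper's own Definition of $[X,Y]$, which integrates over $(0,t]$, one gets $[X,Y]_0=0$ rather than $X_0Y_0$, so item (iii) holds only under Protter's normalization $X_{0-}=0$; the two conventions differ by the constant $X_0Y_0$ and neither the local martingale property in (i) nor the uniqueness argument is affected. Second, the hypothesis of local square integrability is never essentially used in your argument -- both the existence and the uniqueness steps go through for arbitrary local martingales -- so the proof is, if anything, slightly more general than the statement requires.
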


%
%

\begin{definition}[Compensator, from \cite{jyc2009}]
An adapted increasing process $A$ is said to be a compensator for the semimartingale $X$ if $X - A$ is a local martingale.
\end{definition}

\noindent  For example if $X$ is a local martingale, the process $[X,X]$ is a compensator for $X^2$. 
In general, a semimartingale admits many compensators. If there exists a predictable
compensator, then it is unique (among predictable compensators). 

We now introduce the Predictable Quadratic Covariation, in some references also called Sharp Brackets/Angle Brackets/Conditional Quadratic Covariation.

\begin{definition}[Predictable Quadratic Covariation, from \cite{fp2004}]\label{def:predictableCovariation}
Let $X$ be a semimartingale such that its Quadratic Variation process $[X,X]$ is of locally integrable variation. Then the Predictable Quadratic Variation of $X$, denoted $\langle X, X\rangle$ exists and it is defined to be the compensator of $[X,X]$. The Predictable Quadratic Covariation $\langle X, Y\rangle$ can be defined as the compensator of $[X,Y]$ provided of course that $[X,Y]$ is of locally integrable variation.
\end{definition}

\begin{proposition}[From \cite{fk2005}]
If $X$ is a continuous semimartingale with integrable
Quadratic Variation, then $\langle X,X\rangle= [X, X]$, and there is no difference
between the sharp and the square bracket processes.
\end{proposition}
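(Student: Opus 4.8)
The plan is to read off the result directly from the definitions, the only substantive ingredient being that continuity of $X$ forces $[X,X]$ to have no jumps. First I would invoke the jump formula for Quadratic Covariation, namely \eqref{eq:JumpsOfQuadraticCov}, which gives $\Delta[X,X]_t = (\Delta X_t)^2$. Since $X$ is continuous by hypothesis we have $\Delta X_t = 0$ for all $t$, hence $\Delta[X,X]_t = 0$ for all $t$; as $[X,X]$ is RCLL with no jumps, it is in fact a continuous process.

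Next I would upgrade continuity to predictability. A continuous process is in particular LCRL, and by the discussion following Definition \ref{def:PredictableProcesses} every LCRL adapted process is predictable. Thus $[X,X]$ is a predictable, increasing, adapted process. The hypothesis of integrable (hence locally integrable) Quadratic Variation is exactly what Definition \ref{def:predictableCovariation} requires for $\langle X,X\rangle$ to exist, so there is no existence issue to address separately; the integrability assumption is used only to license that definition.

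The heart of the argument is then a uniqueness observation. By Definition \ref{def:predictableCovariation}, $\langle X,X\rangle$ is the predictable compensator of $[X,X]$, i.e.~the predictable adapted increasing process $A$ for which $[X,X]-A$ is a local martingale. I would verify that $A := [X,X]$ itself is such a process: it is predictable by the previous step, it is increasing and adapted, and $[X,X] - [X,X] = 0$ is trivially a local martingale. Therefore $[X,X]$ is \emph{a} predictable compensator of $[X,X]$. Since the predictable compensator is unique whenever it exists (as recalled in the excerpt immediately before the introduction of the Predictable Quadratic Covariation), we conclude $\langle X,X\rangle = [X,X]$, which is the thesis.

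I do not anticipate a genuine obstacle here: the proof is a short chain of definitional implications, and the only place requiring any care is the very first step, where one must justify that vanishing jumps of a RCLL process imply genuine continuity (so that the process qualifies as LCRL and hence predictable) — this follows because an RCLL process with $\Delta X_t\equiv 0$ has coinciding left and right limits at every point. Everything downstream is then forced by uniqueness of the predictable compensator.
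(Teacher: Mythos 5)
Your proof is correct and complete. The paper states this proposition without proof (it is quoted from \cite{fk2005}), so there is no in-paper argument to compare against; your chain --- continuity of $X$ forces $\Delta[X,X]_t=(\Delta X_t)^2=0$ so that $[X,X]$ is a continuous (hence LCRL adapted, hence predictable) increasing process, $[X,X]-[X,X]=0$ is trivially a local martingale so $[X,X]$ is a predictable compensator of itself, and uniqueness of the predictable compensator gives $\langle X,X\rangle=[X,X]$ --- is exactly the standard argument and uses only facts already recorded in the paper's appendix (\eqref{eq:JumpsOfQuadraticCov}, Definitions \ref{def:PredictableProcesses} and \ref{def:predictableCovariation}, and the uniqueness remark preceding the latter).
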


\noindent The Predictable Variation is inconvenient since, unlike the Quadratic Variation,
it doesn't always exist. Moreover, while  $[X, X]$, $[X,  Y]$, and $[Y, Y]$ all remain
invariant with a change to an equivalent probability measure,  the sharp brackets, in general, change with a change to an equivalent probability measure and may even no longer exist. 


\begin{proposition}[Covariation process decomposition, from \cite{fp2004}]\label{prop:CovariationProcessDecomposition}
If $X$ and $Y$ are semimartingales, their Quadratic Covariation is
\[
[X, Y ]_t = [ X, Y]^c_t + \sum_{0<u\leq t} (\Delta X_u)(\Delta Y_u) 
\]
or
\be\label{eq:CovariationFormula1}
\diff [X, Y ]_t = \diff [ X, Y]^c_t +  \Delta X_u \Delta Y_u.
\ee
where $[ X, Y]^c$ denotes the continuous local martingale part of $[ X, Y]$: see Definition \ref{def:ContMartPartOfSemiMart}.
\end{proposition}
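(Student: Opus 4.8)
The plan is to reduce the covariation formula to the diagonal (variation) statement and then isolate the continuous component of the quadratic variation by stripping off the jumps, identifying what remains with $[X^c,X^c]$. By the polarization identity of Proposition \ref{prop:CovariatioProperties} (item 3), the formula for $[X,Y]$ follows at once from the symmetric statement $[X,X]_t=[X,X]^c_t+\sum_{0<s\leq t}(\Delta X_s)^2$, where $[X,X]^c:=[X^c,X^c]$ denotes the continuous component; I would therefore prove this diagonal identity for an arbitrary semimartingale $X$ and recover the general case at the end by bilinearity.

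First I would record the two structural facts about $[X,X]$ that are already in hand: it is an increasing, hence finite-variation, RCLL process by item 4, and its jumps are $\Delta[X,X]_t=(\Delta X_t)^2$ by \eqref{eq:JumpsOfQuadraticCov}. Because $[X,X]$ has finite variation, \eqref{eq:QCovOfFVProcess} guarantees $\sum_{0<s\leq t}(\Delta X_s)^2<\infty$ a.s., so the pure-jump process $J_t:=\sum_{0<s\leq t}(\Delta X_s)^2$ is well defined, increasing and of finite variation, and by construction $[X,X]-J$ is continuous. The content of the proposition is thus the identification of this continuous remainder with $[X^c,X^c]$.

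To carry out that identification I would use the decomposition $X=X_0+X^c+R$, where $X^c$ is the continuous local-martingale part of $X$ (Definition \ref{def:ContMartPartOfSemiMart}) and, by linearity of the continuous-part operation, the remainder $R:=X-X_0-X^c$ has vanishing continuous local-martingale part. Expanding by the bilinearity of item 2 gives $[X,X]=[X^c,X^c]+2[X^c,R]+[R,R]$. The cross term vanishes: splitting $R$ into its local-martingale and finite-variation pieces, the covariation of the continuous local martingale $X^c$ with the purely discontinuous martingale piece is zero by orthogonality of continuous and purely discontinuous local martingales, while its covariation with the finite-variation piece equals $\sum\Delta X^c_s\,\Delta R_s=0$ by \eqref{eq:QCovOfFVProcess} since $X^c$ is continuous. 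Now $[X^c,X^c]=\langle X^c,X^c\rangle$ is continuous, whereas $[R,R]$ is purely discontinuous with $[R,R]_t=\sum_{0<s\leq t}(\Delta R_s)^2=\sum_{0<s\leq t}(\Delta X_s)^2=J_t$; substituting these yields $[X,X]=[X^c,X^c]+J$, and polarization upgrades this to the stated formula for $[X,Y]$. The differential form \eqref{eq:CovariationFormula1} is then read off increment by increment, the jump contribution at each time being $\Delta X_t\Delta Y_t$.

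The main obstacle is the final identification, namely that the quadratic variation of a purely discontinuous local martingale is exactly the sum of the squares of its jumps (equivalently, that $[R,R]$ has no continuous part once $R^c=0$). This is the genuinely non-elementary ingredient, since formally it is the same theorem applied to the simpler process $R$; it rests on the orthogonal decomposition of a local martingale into its continuous and purely discontinuous parts and the defining property of the latter ($[N,M]=0$ for every continuous local martingale $M$), which is precisely the machinery underlying Definition \ref{def:ContMartPartOfSemiMart}. I would invoke this fact from \cite{fp2004,js2003} rather than reprove it, as everything else in the argument is bookkeeping with the covariation rules already collected in Proposition \ref{prop:CovariatioProperties}.
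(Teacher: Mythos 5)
The paper gives no proof of this proposition at all: it is imported verbatim from \cite{fp2004}, so there is nothing internal to compare against. Your argument is correct and is essentially the standard textbook proof, but it is worth noting that you prove strictly more than the literal statement requires. In Protter's convention $[X,Y]^c$ denotes the pathwise continuous part of the finite-variation process $[X,Y]$, and under that reading the proposition is almost immediate from items 4 and 6 of Proposition \ref{prop:CovariatioProperties}: $[X,Y]$ is RCLL of finite variation, its jumps are $\Delta X_u\Delta Y_u$ by \eqref{eq:JumpsOfQuadraticCov}, these are absolutely summable, and subtracting the jump sum leaves a continuous process. What you actually establish, via the decomposition $X=X_0+X^c+R$ and the orthogonality of $X^c$ with the purely discontinuous martingale part of $R$, is the further identification $[X,Y]^c=[X^c,Y^c]$ --- which in the paper is the content of the \emph{next} proposition, equation \eqref{eq:CovariationFormula2}, again quoted without proof. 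That conflation is forgivable, indeed almost forced, because the paper's own phrasing is ambiguous: read literally through Definition \ref{def:ContMartPartOfSemiMart}, the ``continuous local martingale part of $[X,Y]$'' would be zero (a finite-variation process has no martingale part), so the intended meaning must be either the pathwise continuous part or $[X^c,Y^c]$; you resolved it in the second, stronger, sense. Two small caveats: Definition \ref{def:ContMartPartOfSemiMart} only defines $X^c$ when $\sum_{0<s\leq t}|\Delta X_s|<\infty$, so for general semimartingales you are implicitly invoking the more general construction of the continuous martingale part (\cite{js2003}); and your one external ingredient --- that the quadratic variation of a purely discontinuous local martingale equals the sum of its squared jumps --- is correctly identified as the non-elementary core and is legitimately cited rather than reproved. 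With those citations in place the argument is sound.
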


\begin{proposition}[From \cite{fp2004}] 
For semimartingales $X,Y$, we have the following equalities:
\be\label{eq:CovariationFormula2}
[X,Y]^c = [X^c,Y^c]=\langle X^c,Y^c\rangle
\ee
If both $X$and $Y$ are continuous 
\[
[X,Y]^c = [X,Y]=\langle X,Y\rangle
\]
\end{proposition}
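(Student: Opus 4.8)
The plan is to prove the two displayed chains separately, obtaining the second (continuous) case as a specialization of the first. Throughout I read $[X,Y]^c$ as in Proposition \ref{prop:CovariationProcessDecomposition}, namely the continuous part of the finite variation process $[X,Y]$ obtained by subtracting its pure-jump part $\sum_{0<u\le t}\Delta X_u\Delta Y_u$ (not the continuous local martingale part of $[X,Y]$, which would be trivial since $[X,Y]$ has finite variation).

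First I would establish the rightmost equality $[X^c,Y^c]=\langle X^c,Y^c\rangle$. Since $X^c,Y^c$ are continuous local martingales by Definition \ref{def:ContMartPartOfSemiMart}, they are in particular continuous semimartingales, so the result from \cite{fk2005} quoted above gives $\langle Z,Z\rangle=[Z,Z]$ for every continuous $Z$ (after a standard localization if the quadratic variation fails to be integrable). I then pass from this diagonal identity to the bilinear one via the polarization identity of Proposition \ref{prop:CovariatioProperties}, using that $\langle\cdot,\cdot\rangle$ is bilinear: indeed by Definition \ref{def:predictableCovariation} it is the predictable compensator of the bilinear process $[\cdot,\cdot]$, and by uniqueness of the predictable compensator this assignment is linear, so $\langle\cdot,\cdot\rangle$ inherits bilinearity. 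Applying polarization to $X^c,Y^c$ on both sides yields $[X^c,Y^c]=\langle X^c,Y^c\rangle$.

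The core is the leftmost equality $[X,Y]^c=[X^c,Y^c]$. By the same polarization argument it suffices to treat the diagonal $[X,X]^c=[X^c,X^c]$ for a single semimartingale $X$, since $(X+Y)^c=X^c+Y^c$ and both $[\cdot,\cdot]$ and its continuous part $[\cdot,\cdot]^c$ are bilinear (the pure-jump correction $\sum\Delta X\,\Delta Y$ is itself bilinear). Within the framework of Definition \ref{def:ContMartPartOfSemiMart}, where $\sum_{0<s\le t}|\Delta X_s|<\infty$, I would write $X=X_0+X^c+R$ with $R:=X-X_0-X^c$, which decomposes into the continuous finite variation part $A$ and the pure-jump part $\sum_{0<s\le\cdot}\Delta X_s$; the essential point is that $R$ has finite variation. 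Expanding by bilinearity, $[X,X]=[X^c,X^c]+2[X^c,R]+[R,R]$. Now $[X^c,X^c]$ is continuous because $X^c$ is continuous, so its jumps vanish by \eqref{eq:JumpsOfQuadraticCov} and it equals its own continuous part; and since $R$ has finite variation, \eqref{eq:QCovOfFVProcess} gives $[X^c,R]_t=\sum_{0<u\le t}\Delta X^c_u\,\Delta R_u=0$ (as $\Delta X^c\equiv 0$) and $[R,R]_t=\sum_{0<u\le t}(\Delta R_u)^2$, a pure-jump process whose continuous part is zero. Taking continuous parts therefore leaves $[X,X]^c=[X^c,X^c]$, and polarization upgrades this to $[X,Y]^c=[X^c,Y^c]$.

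Finally the continuous case follows at once: if $X,Y$ are continuous then all jumps vanish, so Proposition \ref{prop:CovariationProcessDecomposition} gives $[X,Y]=[X,Y]^c$, while applying $\langle Z,Z\rangle=[Z,Z]$ together with polarization directly to $X,Y$ gives $\langle X,Y\rangle=[X,Y]$; combining these with the chain already proved yields $[X,Y]^c=[X,Y]=\langle X,Y\rangle$. I expect the only genuine subtlety to be the bookkeeping around the overloaded symbol $(\cdot)^c$ (continuous part of a covariation versus continuous local martingale part of a semimartingale) and the finite-variation reduction of $R$, which leans on the summable-jumps hypothesis built into Definition \ref{def:ContMartPartOfSemiMart}; everything else is routine bilinearity and the finite-variation covariation formula \eqref{eq:QCovOfFVProcess}.
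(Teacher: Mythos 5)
Your proposal is correct, but note that the paper offers no proof at all for this proposition: it is quoted from \cite{fp2004} (Protter), so you have supplied a self-contained derivation where the paper relies on citation. Your argument is sound within the paper's own framework: the reading of $[X,Y]^c$ as the jump-subtracted continuous part (rather than a literal application of Definition \ref{def:ContMartPartOfSemiMart} to $[X,Y]$, which would indeed give zero) is the right one; the polarization reductions, the bilinearity of $\langle\cdot,\cdot\rangle$ via uniqueness of the predictable compensator, and the computation $[X,X]=[X^c,X^c]+2[X^c,R]+[R,R]$ with the cross and remainder terms killed by \eqref{eq:QCovOfFVProcess} all check out. The one caveat, which you honestly flag, is that your reduction to a finite-variation remainder $R$ leans on the summable-jumps hypothesis $\sum_{0<s\leq t}|\Delta X_s|<\infty$ built into Definition \ref{def:ContMartPartOfSemiMart}. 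Since that is the only definition of $X^c$ available in the paper, your proof covers exactly the cases in which the paper's statement is even meaningful; but Protter's theorem holds for arbitrary semimartingales, where $X^c$ is defined through the general decomposition $X=X_0+X^c+X^d+A$ with $X^d$ a purely discontinuous local martingale that need not have finite variation. There the cross term cannot be handled by \eqref{eq:QCovOfFVProcess}; one instead argues that $[X^c,X^d]$ has no jumps (by \eqref{eq:JumpsOfQuadraticCov}) and is a local martingale by the defining orthogonality of $X^d$, hence is a continuous finite-variation local martingale null at zero and therefore vanishes. So your route buys elementarity (everything reduces to the finite-variation covariation formula) at the cost of generality, while the general argument requires the orthogonality machinery the paper never introduces; given the paper's framework, your restriction is the natural and defensible choice.
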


\begin{lemma}[Itô Formula, from \cite{jyc2009}]
Let $X = (X_1 ,\ldots,X_n )$ be a semimartingale vector process and $f: \re^+\times \re^n \mapsto \re$ with $f \in \mathcal{C}^{1,2}$. Then,
\begin{gather}
f(t, X_t) = f(0,X_0) +\int_0^t \partial_u  f(u, X_{u_-}) \diff u + \int_0^t \sum_{i=1}^n  \partial_{x_i} f(u, X_{u_-}) \diff X_u^i \label{eq:ItoWithJumpsMultiDim}\\
+ \frac{1}{2} \int_0^t \sum_{i,j=1}^n  \partial_{x_i x_j}^2 f(u, X_{u_-} ) \diff [X^{i}, X^{j}]^c_u 
+ \sum_{0< u \leq t} \left\{f(u,X_u) - f(u,X_{u_-}) - \sum_{i=1}^n  \partial_{x_i} f(u, X_{u_-})\Delta X_u^i\right\}\0
\end{gather}
or in differential form
\be
\diff f(u, X_u) &=& \partial_u  f(u, X_{u_-}) \diff u +\sum_{i=1}^n \partial_{x_i} f(u, X_{u_-}) \diff X_u^i+ \frac{1}{2} \sum_{i,j=1}^n  \partial_{x_i x_j}^2 f(u, X_{u_-} )  \diff [X^{i}, X^{j}]^c_u \0\\
&\phantom{=}&+  \left\{f(u,X_u) - f(u,X_{u_-}) - \sum_{i=1}^n \partial_{x_i} f(u, X_{u_-})\Delta X_u^i\right\}\label{eq:ItoWithJumpsMultiDim_diff}
\ee
\end{lemma}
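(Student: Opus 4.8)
\section*{Proof proposal}

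The plan is to prove the formula first for polynomials, using the product rule \eqref{eq:ItoWithJumpsOnProduct} as the sole engine, and then to extend to an arbitrary $f\in\mathcal{C}^{1,2}$ by an approximation argument, working throughout under a localization that makes everything bounded. First I would localize: since each $X^i$ is a semimartingale there is a sequence of stopping times $\tau_m\uparrow\infty$ such that each stopped process $X^{i,\tau_m}$ takes values in a compact $K\subset\re^n$; on $[0,T]\times K$ the function $f$ and its derivatives $\partial_u f,\partial_{x_i}f,\partial^2_{x_ix_j}f$ are bounded and uniformly continuous because $f\in\mathcal{C}^{1,2}$. By Property 8 of Proposition \ref{prop:SemiMgIntegralProperties} every object in the statement is compatible with stopping, so it suffices to prove the identity for the stopped processes and let $m\to\infty$. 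It is convenient to adjoin the time coordinate $X^0_u:=u$, which is continuous of finite variation with $[X^0,\cdot]\equiv 0$ by \eqref{eq:QCovOfFVProcess}; this identifies the first-order term $\int_0^t\partial_u f\,\diff u$ as the $i=0$ stochastic integral and explains, via the vanishing of its quadratic variation and jumps, why $f$ need only be $\mathcal{C}^1$ in time.

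Next I would establish the formula for monomials by induction on the degree. The degree-one case is immediate from the definition of the stochastic integral. For the inductive step, assuming the formula for a polynomial $g$, I apply \eqref{eq:ItoWithJumpsOnProduct} to the product $g(X)\,X^j$ and substitute the inductive expression for $\diff g(X_u)$; this produces the cross term $\diff[g(X),X^j]$, which I expand by bilinearity (Proposition \ref{prop:CovariatioProperties}) and \eqref{eq:QCovariationOfIntegral} and then split into its continuous and jump parts via \eqref{eq:CovariationFormula1}. These pieces reassemble exactly the second-order integral and the jump-correction sum at the next degree, and by linearity the formula then holds for every polynomial $p$.

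The final step passes from polynomials to general $f$. By a Stone--Weierstrass / Bernstein argument one can choose polynomials $p_k$ with $p_k\to f$ and $\partial_u p_k,\partial_{x_i}p_k,\partial^2_{x_ix_j}p_k$ converging to the corresponding derivatives of $f$, all uniformly on $[0,T]\times K$. I would then pass to the limit term by term in the identity already proven for $p_k$. The ordinary time integral and the continuous quadratic-variation integral $\int_0^t\partial^2_{x_ix_j}p_k\,\diff[X^i,X^j]^c_u$ converge by uniform convergence of the integrands, since $[X^i,X^j]^c$ has finite variation on compacts. The genuinely delicate points are the stochastic integrals and the jump sum, which I isolate as the obstacle below.

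The hard part will be these two limits, carried out simultaneously. For the stochastic integrals $\int_0^t\partial_{x_i}p_k(u,X_{u-})\,\diff X^i_u$ I would invoke the dominated-convergence theorem for stochastic integrals of \cite{fp2004}: on the localized set the integrands are uniformly bounded and converge uniformly, hence pointwise and boundedly, to $\partial_{x_i}f(u,X_{u-})$, forcing convergence of the integrals in probability, uniformly on compacts in time. For the jump sum, the crucial estimate is that a second-order Taylor expansion bounds each summand $p_k(u,X_u)-p_k(u,X_{u-})-\sum_i\partial_{x_i}p_k(u,X_{u-})\,\Delta X^i_u$ by $C\,|\Delta X_u|^2$ with $C$ independent of $k$ (the second derivatives being uniformly bounded on $K$); since $\sum_{0<u\le t}|\Delta X_u|^2\le\sum_i[X^i,X^i]_t<\infty$ by \eqref{eq:CovariationFormula1}, together with the finiteness/countability of jumps (the theorem quoted from \cite{da2009}), dominated convergence for sums applies and also guarantees that the limiting sum for $f$ converges absolutely. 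Collecting the limits yields \eqref{eq:ItoWithJumpsMultiDim_diff} for the stopped processes, and letting $\tau_m\to\infty$ removes the localization. I expect the uniform-in-$k$ domination $O(|\Delta X|^2)$ of the jump terms, coupled with the stochastic dominated-convergence step, to be the only point requiring real care.
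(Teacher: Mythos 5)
You should first note a structural point: the paper does not prove this lemma at all --- it is imported verbatim from \cite{jyc2009} in an appendix whose declared purpose is to enumerate known semimartingale results, so there is no internal proof to match. Measured against the standard proof in the cited literature (second-order Taylor expansion of $f$ over a refining sequence of random partitions, with jumps above a threshold $\varepsilon$ extracted and treated separately, as in \cite{fp2004}), your route is genuinely different: you establish the formula algebraically for polynomials by induction on the degree, using only the product rule \eqref{eq:ItoWithJumpsOnProduct} together with bilinearity of the bracket, \eqref{eq:QCovariationOfIntegral} and the continuous/jump splitting \eqref{eq:CovariationFormula1}, and you then extend to $f\in\mathcal{C}^{1,2}$ by a Weierstrass-type approximation in the $\mathcal{C}^{1,2}$ norm on a localized compact. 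This trades the combinatorial bookkeeping of the partition proof for two limit theorems: dominated convergence for stochastic integrals, and pathwise dominated convergence of the jump sum under the uniform Taylor bound $O(|\Delta X_u|^2)$, summable because $\sum_{0<u\leq t}|\Delta X_u|^2\leq \sum_i [X^i,X^i]_t<\infty$. Your identification of these two limits as the crux is accurate, and adjoining the time coordinate $X^0_u:=u$ (continuous, finite variation, hence contributing neither bracket nor jump terms by \eqref{eq:QCovOfFVProcess}) correctly explains why only $\mathcal{C}^1$ regularity in time is needed.

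One step is false as stated, though repairable by a standard device: a semimartingale cannot in general be stopped so that the stopped process takes values in a compact set. With $\tau_m=\inf\{t:\,|X_t|\geq m \text{ or } |X_{t-}|\geq m\}$ the stopped process $X^{\tau_m}$ is bounded by $m$ on $[0,\tau_m)$, but its value at $\tau_m$ includes the jump $\Delta X_{\tau_m}$, which can be arbitrarily large; so the uniform bounds on $f$ and its derivatives, and in particular the constant $C$ in your $C\,|\Delta X_u|^2$ estimate, are not available on the range of $X^{\tau_m}$. The fix is to work with the pre-stopped process $X^{\tau_m-}:=X\,\1_{[0,\tau_m)}+X_{\tau_m-}\,\1_{[\tau_m,\infty)}$, which does remain in the closed ball of radius $m$ --- take $K$ to be that ball, which is also convex, as your Taylor estimate tacitly requires the segment from $X_{u-}$ to $X_u$ to lie in $K$ --- and which is again a semimartingale, since it differs from $X^{\tau_m}$ by the single finite-variation jump $\Delta X_{\tau_m}\1_{t\geq\tau_m}$. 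One proves the identity for $X^{\tau_m-}$, observes that both sides coincide with those computed from $X$ on $[0,\tau_m)$, and lets $m\to\infty$. With that correction, plus the routine remark that the ucp limits of the stochastic integrals hold almost surely along a subsequence so that the final identity between c\`adl\`ag processes holds up to indistinguishability, your argument goes through.
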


\begin{proposition}
Define $Y_t := f(t,X_t)$ where $f: \re^+\times \re^n \mapsto \re$ with $f \in \mathcal{C}^{1,2}$ and let $X=(X_1 ,\ldots,X_n )$ be a semimartingale vector process. Then, for another scalar semimartingale process $Z$
\be\label{eq:QuadraticCovarianceOfFctOfX}
\diff [Y,Z]_t = \sum_{i=1}^n \partial_{x_i}   f(t, X_{t_-}) \diff [X^i, Z]_t^c + \Big\{f(u, X_{u}) - f(u, X_{u-})\Big\}\Delta Z_u 
\ee
which is the continuous Quadratic Variation result plus the co-jump term.
\end{proposition}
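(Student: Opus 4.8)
The plan is to recognize $Y_t := f(t,X_t)$ as a semimartingale via the Itô formula \eqref{eq:ItoWithJumpsMultiDim_diff}, and then to split its covariation with the scalar semimartingale $Z$ into a continuous part and a pure-jump part using the decomposition \eqref{eq:CovariationFormula1}, namely $\diff [Y,Z]_t = \diff [Y,Z]^c_t + \Delta Y_t\,\Delta Z_t$. The two terms of the thesis will come out, respectively, from the continuous bracket and from the product of jumps.

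First I would treat the jump term. Reading off \eqref{eq:ItoWithJumpsMultiDim_diff}, the only summands of $\diff Y_t$ that can jump are the first-order term $\sum_i \partial_{x_i} f(t,X_{t-})\,\diff X^i_t$ and the last curly-bracket term; by \eqref{eq:JumpsOfSemiMgIntegral} the jump of the former is $\sum_i \partial_{x_i} f(t,X_{t-})\,\Delta X^i_t$, which by \eqref{eq:linearityOfJumpOperator} exactly cancels the corresponding piece inside the bracket, leaving $\Delta Y_t = f(t,X_t) - f(t,X_{t-})$. Hence, invoking \eqref{eq:JumpsOfQuadraticCov}, the jump contribution is $\Delta Y_t\,\Delta Z_t = \big\{f(t,X_t)-f(t,X_{t-})\big\}\,\Delta Z_t$, which is precisely the second term of the statement.

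Next the continuous part. By \eqref{eq:CovariationFormula2} we have $[Y,Z]^c = [Y^c, Z^c]$, so the key step is to identify the continuous local-martingale part $Y^c$. Here the term $\partial_u f(t,X_{t-})\,\diff t$ and the second-order term $\tfrac12\sum_{i,j}\partial^2_{x_i x_j} f(t,X_{t-})\,\diff [X^i,X^j]^c_t$ are continuous with finite variation (property 4 of Proposition \ref{prop:CovariatioProperties}), while the curly-bracket sum is a pure-jump process; none of these contributes to the continuous local-martingale part. Thus $Y^c$ arises solely from the first-order stochastic integral, and by property 6 of Proposition \ref{prop:SemiMgIntegralProperties} (namely $H\star X^c = (H\star X)^c$) one gets $\diff Y^c_t = \sum_i \partial_{x_i} f(t,X_{t-})\,\diff (X^i)^c_t$. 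Combining with bilinearity of the covariation (Proposition \ref{prop:CovariatioProperties}) and the integration rule \eqref{eq:QCovariationOfIntegral},
\[
\diff [Y,Z]^c_t = \diff [Y^c, Z^c]_t = \sum_i \partial_{x_i} f(t,X_{t-})\,\diff [(X^i)^c, Z^c]_t = \sum_i \partial_{x_i} f(t,X_{t-})\,\diff [X^i, Z]^c_t,
\]
where the last equality is again \eqref{eq:CovariationFormula2}. Adding the jump term from the previous paragraph yields \eqref{eq:QuadraticCovarianceOfFctOfX}.

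The main obstacle is the rigorous identification of $Y^c$: one must argue carefully that the two continuous finite-variation terms and the pure-jump correction term in Itô's formula contribute nothing to the continuous local-martingale part, so that $Y^c$ is entirely captured by applying property 6 to the first-order integral against $X$. Once this decomposition of $Y$ into its continuous-martingale, continuous-finite-variation, and pure-jump constituents is established, the rest is a routine bookkeeping of bilinearity and the jump identities already collected in the appendix.
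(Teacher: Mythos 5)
Your proof is correct, but it is organized differently from the paper's. The paper works "from the inside out": it takes the bracket of $Z$ against each term of the Itô expansion of $Y$ using bilinearity (Proposition \ref{prop:LinearityOfQuadraticCovariation}), discards the two continuous finite-variation terms, is left with $\int_0^t\sum_i\partial_{x_i}f^-\,\diff[X^i,Z]_u+[A,Z]_t$ where $A$ is the pure-jump Itô correction, evaluates $[A,Z]_t=\sum_{0<u\le t}\Delta A_u\,\Delta Z_u$ via \eqref{eq:QCovOfFVProcess}, and only at the very end invokes \eqref{eq:CovariationFormula1} on $[X^i,Z]$ to split off its co-jumps and merge them with $\Delta A_u\,\Delta Z_u$ into $\{f(u,X_u)-f(u,X_{u-})\}\Delta Z_u$. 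You instead apply the continuous/jump split \eqref{eq:CovariationFormula1} to $[Y,Z]$ itself at the outset, which buys you the co-jump term essentially for free (indeed $\Delta Y_t=f(t,X_t)-f(t,X_{t-})$ follows even without Itô, from right-continuity of $X$ and continuity of $f$), at the price of having to identify $Y^c$; your identification is sound — the two continuous finite-variation terms and the summable pure-jump correction have trivial continuous local-martingale part by uniqueness of the decomposition in Definition \ref{def:ContMartPartOfSemiMart}, so $Y^c=\sum_i\partial_{x_i}f^-\star(X^i)^c$ by property 6 of Proposition \ref{prop:SemiMgIntegralProperties} — and the final chain $[Y,Z]^c=[Y^c,Z^c]=\sum_i\int\partial_{x_i}f^-\,\diff[(X^i)^c,Z^c]=\sum_i\int\partial_{x_i}f^-\,\diff[X^i,Z]^c$ via \eqref{eq:CovariationFormula2} and \eqref{eq:QCovariationOfIntegral} is exactly right. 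The two arguments use the same toolbox and are of comparable length; yours makes the co-jump term transparent and isolates the only delicate point (the identification of $Y^c$), while the paper's avoids discussing $Y^c$ altogether but has to track and recombine two separate jump contributions at the end.
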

\begin{proof}
Denoting with $f^-:=f(u, X_{u-})$ and using the Itô formula
\beqs\begin{split}
[Y, Z]_t =& \Bigg[Z, \int_0^\cdot \sum_{i=1}^n \partial_{x_i} f^- \diff X_u^i + \frac{1}{2}\int_0^\cdot \sum_{i,j=1}^n  \partial_{x_i x_j}^2 f^- \diff [X^{i}, X^{j}]^c_u 
+ \sum_{0< u \leq\, \cdot} \Big\{f - f^- - \sum_{i=1}^n \partial_{x_i}   f^- \Delta X_u^i\Big\}\Bigg]_t\\
=& \int_0^t \sum_{i=1}^n \partial_{x_i} f^- \diff [X^i, Z]_u   + \frac{1}{2}\int_0^t \sum_{i,j=1}^n \partial_{x_i x_j}^2 f^- \diff [[X^i, X^j]^c, Z]_u \\
&+ \left[\sum_{0< u \leq \cdot} \Big\{f - f^- - \sum_{i=1}^n \partial_{x_i}   f^- \Delta X_u^i\Big\}, Z\right]_t\\
=& \int_0^t \sum_{i=1}^n \partial_{x_i} f^- \diff [X^i, Z]_u   + \left[\sum_{0< u \leq \cdot} \Big\{f - f^- - \sum_{i=1}^n \partial_{x_i}   f^- \Delta X_u^i\Big\},Z\right]_t
\end{split}\eeqs
where in the first equality we used that the time integral is continuous with finite variation, in the second equality we used the Proposition \ref{prop:LinearityOfQuadraticCovariation}, while in the third equality the fact that $[X^i, X^j]^c$ is continuous with finite variation. Define with
\[
\Delta A_u := f(u, X_{u}) - f(u, X_{u-}) - \sum_{i=1}^n \partial_{x_i}   f(u, X_{u-})  \Delta X_u^i
\]
and, under Stieltjes integration 
\[
A_t:= \sum_{0<u\leq t} \Delta A_u = \int_0^t \diff A_u
\]
is a (finite variation) pure jump RCLL process with jumps $\Delta A_u$ equal to zero except on $X$ jump times (recall that $f\in \mathcal{C}^{1,2})$. 
Therefore, the last addend of the above derivation is 
\bes
\left[A, Z\right]_t &=&  \left[\int_0^\cdot \diff A_u, \int_0^\cdot \diff Z_u\right]_t\\
&=& \int_0^t \diff [A,Z]_u\\
&=& \sum_{0< u \leq t} \Delta A_u \,\Delta Z_u
\ees
where the last line is since $A$ is of finite variation. The result follows from \eqref{eq:CovariationFormula1}.
\end{proof}

We then enounce the Girsanov Theorem generalized to semimartingales, following \cite{fp2004} p.~131 (see also \cite{jyc2009} p.~534). Let  $X$  be a semimartingale on a space $(\Omega, \fst, \p)$ satisfying the usual hypotheses.
Let $\q\sim\p$, then there exist a $\p$-integrable random variable $\frac{\diff \q}{\diff \p}$ such that $\e^{\p}[\frac{\diff \q}{\diff \p}]=1$. We let
\[
L_t := \e^{\p}_t\left[\frac{\diff \q}{\diff \p}\right]
\] 
be the right continuous version. Then  $(L_t)_{t\geq 0}$  is a uniformly integrable martingale, hence a semimartingale.  Note that since $\q$ is equivalent to $\p$ then $\frac{\diff \p}{\diff \q}$ is $\q$-integrable and 
$\frac{\diff \p}{\diff \q}=\left(\frac{\diff \q}{\diff \p}\right)^{-1}$. 

\begin{lemma}
Let $\q\sim\p$, and $L$ defined as above.  An adapted RCLL process $M$ is a $\q$  local-martingale if and only if  $M L$  is a  $\p$  local martingale.
\end{lemma}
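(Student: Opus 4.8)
The plan is to reduce the whole statement to the abstract Bayes formula and then localize. First I would record the two structural facts about $L$ supplied by the setup: since $\q\sim\p$, the density process $L_t=\e^{\p}_t[\diff\q/\diff\p]$ is a strictly positive, uniformly integrable $\p$-martingale, and for every $\fst_t$-measurable, $\q$-integrable random variable $Y$ and every $s\le t$ one has the conditional Bayes rule $\e_s[Y]=\e^{\p}_s[L_t\,Y]/L_s$, where $\e_s$ denotes the $\q$-conditional expectation of the technical setup. This identity is the only genuine analytic input; everything else is bookkeeping.

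Second, I would settle the true-martingale case. If $M$ is a $\q$-martingale, then for $s\le t$ the Bayes rule gives $\e^{\p}_s[M_t L_t]=L_s\,\e_s[M_t]=L_s M_s$, which is exactly the $\p$-martingale property of $ML$ (note that $M_s L_s$ is $\fst_s$-measurable); conversely, if $ML$ is a $\p$-martingale then $\e_s[M_t]=\e^{\p}_s[L_t M_t]/L_s=M_s$, so $M$ is a $\q$-martingale. Hence $M$ is a $\q$-martingale if and only if $ML$ is a $\p$-martingale.

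Third comes the localization, which I expect to be the delicate step. Suppose $M$ is a $\q$-local martingale and choose stopping times $\tau_n\uparrow\infty$ $\q$-a.s.\ (hence $\p$-a.s., by equivalence of the measures) with each $M^{\tau_n}$ a $\q$-martingale. Applying the martingale case to the $\q$-martingale $M^{\tau_n}$ shows that $Y^{(n)}:=M^{\tau_n}L$ is a $\p$-martingale; then the stopped process $(Y^{(n)})^{\tau_n}$ is again a $\p$-martingale, and since $(Y^{(n)})^{\tau_n}_t=M_{t\wedge\tau_n}L_{t\wedge\tau_n}=(ML)^{\tau_n}_t$, the sequence $(\tau_n)$ localizes $ML$ as a $\p$-local martingale. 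The main obstacle is precisely this identification: one must not stop $L$ prematurely, and the trick is to multiply $M^{\tau_n}$ by the \emph{unstopped} $L$ first and only then re-stop at $\tau_n$, using that a stopped martingale is a martingale.

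Finally, for the converse I would exploit symmetry rather than repeat the localization. Using Bayes once more together with $\diff\p/\diff\q=(\diff\q/\diff\p)^{-1}$, I would verify that $L'_t:=\e_t[\diff\p/\diff\q]=1/L_t$, so that $1/L$ is the (uniformly integrable $\q$-martingale) density process of $\p$ with respect to $\q$. The forward implication just established, applied now with the roles of $\p$ and $\q$ interchanged, $L$ replaced by $L'$, and the $\p$-local martingale $ML$ in place of $M$, yields that $(ML)\,L'=M$ is a $\q$-local martingale. This closes both directions.
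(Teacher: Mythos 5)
Your proof is correct and complete: the Bayes-rule argument settles the true-martingale equivalence, the stop-after-multiplying trick $(M^{\tau_n}L)^{\tau_n}=(ML)^{\tau_n}$ handles localization without stopping $L$ prematurely, and the symmetry via $L'=1/L$ cleanly gives the converse. The paper itself does not prove this lemma --- it imports it from Protter (p.~131) --- and your argument is essentially the standard one found there, so there is nothing to reconcile.
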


\begin{theorem}[Generalized Girsanov Theorem]\label{th:GeneralizedGirsanov}
Let  $\q\sim\p$. Let  $X$  be a semimartingale under $\p$ with decomposition  $X  =  M^{\p}  + A^{\p}$. Then  $X$  is also a semimartingale under $\q$ and has a decomposition
$X =  M^{\q}  + A^{\q}$ with $M^{\p}_0= A^{\p}_0=M^{\q}_0= A^{\q}_0=0$ and where $M^{\q}$ is a $\q$-local martingale and $A^{\q}$ is a $\q$-finite variation process.
\begin{romanlist}
\item General result, optional version:
\[
\diff M^{\q}_t = \diff M^{\p}_t - \frac{\diff [M^{\p},L]_t}{L_t}
\]
and  
\[
\diff A^{\q}_t = \diff A^{\p}_t + \frac{\diff [M^{\p},L]_t}{L_t}
\]
\item Predictable version: If $[X,L]$ is $\p$-locally integrable (which implies that $\langle X,L\rangle$ exists) than
\[
\diff M^{\q}_t = \diff M^{\p}_t - \frac{\diff \langle M^{\p},L\rangle_t}{L_{t-}}
\]
and  
\[
\diff A^{\q}_t = \diff A^{\p}_t + \frac{\diff \langle M^{\p},L\rangle_t}{L_{t-}}
\]
\end{romanlist}
\end{theorem}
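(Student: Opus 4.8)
The plan is to reduce the whole statement to the preceding Lemma, which asserts that an adapted RCLL process is a $\q$-local martingale if and only if its product with $L$ is a $\p$-local martingale. Since $L_t:=\e^{\p}_t[\diff\q/\diff\p]$ is a uniformly integrable $\p$-martingale and $\q\sim\p$ forces $1/L$ to be the density of $\p$ with respect to $\q$, both $L$ and $1/L$ are strictly positive, RCLL, and locally bounded, so every Stieltjes integral below against the finite-variation process $[M^{\p},L]$ is well posed.

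First I note that the finite-variation part needs no transformation: finite variation is a pathwise property and $\q\sim\p$ share the same null sets, hence $A^{\p}$ is automatically a $\q$-finite-variation process. So only the local-martingale part is transformed. For part (i) I set
\[
M^{\q}:=M^{\p}-\int_0^\cdot \frac{\diff[M^{\p},L]_s}{L_s},\qquad A^{\q}:=A^{\p}+\int_0^\cdot \frac{\diff[M^{\p},L]_s}{L_s},
\]
so that $M^{\q}+A^{\q}=M^{\p}+A^{\p}=X$ by construction and $A^{\q}$ is of finite variation. It then remains to show $M^{\q}$ is a $\q$-local martingale, i.e.~by the Lemma that $M^{\q}L$ is a $\p$-local martingale.

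The core computation is an integration by parts via \eqref{eq:ItoWithJumpsOnProduct}:
\[
\diff(M^{\q} L)_t = L_{t-}\,\diff M^{\q}_t + M^{\q}_{t-}\,\diff L_t + \diff[M^{\q},L]_t.
\]
Here $M^{\q}_{t-}\diff L_t$ and the $L_{t-}\diff M^{\p}_t$ piece of the first term are local-martingale increments, being predictable locally bounded integrands driven by the $\p$-local martingales $L$ and $M^{\p}$ (recall Proposition \ref{prop:leftLimitOFXisPredictable} and Proposition \ref{prop:SemiMgIntegralProperties}). Writing $V:=\int_0^\cdot L_s^{-1}\diff[M^{\p},L]_s$, the surviving finite-variation terms are $-\frac{L_{t-}}{L_t}\diff[M^{\p},L]_t+\diff[M^{\q},L]_t$; using $[M^{\q},L]=[M^{\p},L]-[V,L]$ together with the jump identities \eqref{eq:JumpsOfSemiMgIntegral}, \eqref{eq:JumpsOfQuadraticCov} and \eqref{eq:QCovOfFVProcess}, the residual term $(1-\frac{L_-}{L})\diff[M^{\p},L]=\frac{\Delta L}{L}\diff[M^{\p},L]$ and the term $\diff[V,L]$ both reduce to $\sum \frac{\Delta M^{\p}(\Delta L)^2}{L}$ and cancel exactly. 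Hence $M^{\q}L$ has no finite-variation drift and is a $\p$-local martingale, which proves (i).

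For part (ii), the $\p$-local integrability of $[X,L]$ (equivalently of $[M^{\p},L]$) guarantees the existence of the predictable compensator $\langle M^{\p},L\rangle$, and replacing $[M^{\p},L]$ by $\langle M^{\p},L\rangle$ and $L$ by $L_{-}$ yields a \emph{predictable} finite-variation $A^{\q}$, i.e.~the canonical decomposition. I would establish it either by repeating the integration by parts, now invoking that $[M^{\p},L]-\langle M^{\p},L\rangle$ is a $\p$-local martingale and that $1/L_{-}$ is predictable, or by showing that the optional and predictable drift corrections differ by a $\q$-local martingale so that the predictable $M^{\q}$ is still a $\q$-local martingale. \textbf{The main obstacle} is exactly this step: in contrast with the continuous case, the optional integrand $1/L$ and the predictable integrand $1/L_{-}$ disagree at the jumps of $L$, and the compensator may itself carry predictable jumps, so the jump contributions (the $\frac{\Delta L}{L}$ versus $\frac{\Delta L}{L_{-}}$ discrepancy) must be tracked carefully; in the continuous setting of the opening Remark all such discrepancies vanish and the two versions coincide.
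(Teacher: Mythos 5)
A preliminary remark: the paper itself does not prove this theorem — it imports it from \cite{fp2004} p.~131 (see also \cite{jyc2009}) — so your proposal is being measured against the standard Girsanov--Meyer argument rather than against an in-paper proof.

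Your part (i) is correct and is essentially Protter's proof: reduce to the preceding Lemma, integrate $M^{\q}L$ by parts, observe that $L_{t-}\diff M^{\p}_t$ and $M^{\q}_{t-}\diff L_t$ are $\p$-local-martingale increments (Property 9 of Proposition \ref{prop:SemiMgIntegralProperties}), and check that the two residual finite-variation terms $\frac{\Delta L}{L}\,\diff[M^{\p},L]$ and $\diff[V,L]$ both reduce to $\sum\Delta M^{\p}(\Delta L)^2/L$ and cancel. The one assertion worth substantiating is the local boundedness of $1/L$ (needed for $V$ to be well defined): it holds because $L$ and $L_{-}$ are strictly positive under $\q\sim\p$, so $L$ is pathwise bounded away from zero on compacts.

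Part (ii), however, is left as a choice between two unexecuted routes, and the step you flag as ``the main obstacle'' is the entire content of the predictable version, so this is a genuine gap. Concretely, what must be shown is that
\[
W_t:=\int_0^t\frac{\diff[M^{\p},L]_s}{L_s}-\int_0^t\frac{\diff\langle M^{\p},L\rangle_s}{L_{s-}}
\]
is a $\q$-local martingale. By the Lemma this amounts to $WL$ being a $\p$-local martingale; integration by parts and the finite variation of $W$ give $\diff(WL)_t=W_{t-}\diff L_t+L_t\,\diff W_t$, with
\[
L_t\,\diff W_t=\Big(\diff[M^{\p},L]_t-\diff\langle M^{\p},L\rangle_t\Big)-\frac{\Delta L_t}{L_{t-}}\,\diff\langle M^{\p},L\rangle_t.
\]
The first bracket is a $\p$-local martingale by definition of the compensator. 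The last term equals $-\sum_n\frac{\Delta L_{T_n}}{L_{T_n-}}\,\Delta\langle M^{\p},L\rangle_{T_n}$ over the \emph{predictable} jump times $T_n$ of $\langle M^{\p},L\rangle$, and it is a $\p$-local martingale precisely because $\e^{\p}\!\left[\Delta L_{T}\,\middle|\,\fst_{T-}\right]=0$ at predictable times $T$ for the uniformly integrable martingale $L$, while $\Delta\langle M^{\p},L\rangle_{T}/L_{T-}$ is $\fst_{T-}$-measurable. This vanishing of the conditional jump of $L$ at predictable times is the ingredient your sketch never invokes; without it the $\frac{\Delta L}{L}$ versus $\frac{\Delta L}{L_{-}}$ discrepancy you correctly identify does not resolve, and the predictable version remains unproved.
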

%

\begin{corollary}\label{cor:GirsanovForFVProc}
A semimartingale $X$ of finite variation with null local martingale part, i.e.~a semimartingale with decomposition $X=0+A$, has the same dynamics under $\p$ and under $\q$.
\end{corollary}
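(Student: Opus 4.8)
The plan is to invoke the Generalized Girsanov Theorem (Theorem \ref{th:GeneralizedGirsanov}) directly, since the hypothesis hands us precisely the degenerate decomposition that trivializes its correction formulas. First I would record the $\p$-decomposition dictated by the statement: because $X$ has finite variation and null local martingale part, we set $M^{\p}\equiv 0$ and $A^{\p}=A=X$, which is a legitimate semimartingale decomposition under $\p$ (any RCLL adapted process of finite variation is trivially a semimartingale with vanishing martingale part, so the decomposition exists automatically).

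Next I would examine the drift-correction bracket $[M^{\p},L]$ that appears in both lines of the theorem. Since $M^{\p}\equiv 0$, bilinearity of the Quadratic Covariation (Proposition \ref{prop:CovariatioProperties}) gives $[M^{\p},L]=[0,L]=0$ identically; moreover this bracket is computed pathwise and is therefore invariant under the equivalent change of measure, so no subtlety arises from evaluating it under $\p$ versus $\q$. Substituting into the optional version of Theorem \ref{th:GeneralizedGirsanov} then yields
\[
\diff M^{\q}_t = \diff M^{\p}_t - \frac{\diff [M^{\p},L]_t}{L_t} = 0, \qquad \diff A^{\q}_t = \diff A^{\p}_t + \frac{\diff [M^{\p},L]_t}{L_t} = \diff A^{\p}_t.
\]

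Finally I would conclude that the $\q$-decomposition of $X$ is again $M^{\q}\equiv 0$ and $A^{\q}=A^{\p}=A$, so $X$ carries the same purely finite-variation dynamics under both measures. I do not expect a genuine obstacle here: the corollary is an immediate specialization of Girsanov, and the only point worth flagging for rigor is that the hypothesis legitimately furnishes the decomposition with $M^{\p}=0$, which — as noted above — is automatic for a finite-variation process and hence requires no separate verification.
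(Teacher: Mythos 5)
Your proof is correct and is exactly the immediate specialization the paper intends: the corollary is stated without proof as a direct consequence of Theorem \ref{th:GeneralizedGirsanov}, and setting $M^{\p}\equiv 0$ so that $[M^{\p},L]\equiv 0$ kills both correction terms, which is precisely your argument. The only (cosmetic) point is that the paper's decomposition convention is $X_t=X_0+M_t+A_t$ with $A_0=0$, so strictly $A^{\p}=X-X_0$ rather than $X$ itself, but this does not affect the conclusion.
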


\subappendix{Basic Dividend Models}\label{app:BasicDividendModels}

\noindent As a follow up of Sections \ref{sec:ForwardPrice}-\ref{sec:nonnegasset} we analyze two basic deterministic dividend models: the continuous proportional case and the cash dividend case.
We calculate the forward price and check that the deflated gain process is effectively a local martingale in both cases. We then see how Proposition \ref{prop:bu18} applies to both cases.\\

\subsubappendix{Continuous Proportional Dividends\label{sec:contDivs}}

\noindent We set
\[
\diff D_t \setto q_t \,S_{t-} \diff t
\]
with $q$ deterministic, this means that the dividend in the $\diff t$ interval is proportional to the asset value at time $t$ and to the length of the interval itself. This is the easy case:
the Risk-Neutral dynamics, recalling \eqref{eq:RiskNeutralDynWithDivs}-\eqref{eq:RiskNeutraldriftWithDivs}, becomes
\bes
\diff S_t &=& (r_t-q_t) S_{t-} \diff t + \diff M_u
\ees
Moreover, using \eqref{eq:fwdWithDeterminiticRates}, the forward price becomes
\beqs
F_t(T) =  \frac{S_t  \,\esp^{-\int_t^T q_s\diff s}}{P_t(T) }
\eeqs
As an exercise, we can check that the deflated gain process \eqref{eq:GTilde} is indeed a martingale with the chosen dividend model. In this setting, the process
$Z_t : = \esp^{\int_0^t q_u \diff u} S_t$ has dynamics
\[
\diff Z_t = r_t Z_{t-} \diff t +\diff M_u
\]
and therefore $\frac{Z}{B}$ is a $\q$-martingale (being driftless and by item 10 of Proposition \ref{prop:SemiMgIntegralProperties}). We can use this property, and the fact that $q$ is deterministic, to prove that $\widetilde{G}_t=\e_t[\widetilde{G}_T]$
for this specific case.
We have
\bes
V_t^D(T) &:=& B_t \,\e_t\left[ \int_t^T \frac{ \diff D_u}{B_u}\right] = B_t \,\e_t\left[ \int_t^T \frac{q_u S_{u-}\diff u}{B_u}\right] \\ 
&=& B_t \,\int_t^T q_u\, \esp^{-\int_0^u q_v\diff v}\, \e_t\left[\frac{Z_{u-}}{B_u}\right] \diff u \\ 
&=& Z_t \int_t^T q_u \,\esp^{-\int_0^u q_v\diff v} \,  \diff u\\ 
&=& Z_t \left[ -\esp^{-\int_0^u q_v\diff v} \right]_{u=t}^{u=T} \\
 &=& S_t\left( 1 - \esp^{-\int_t^T q_u\diff u}\right) 
\ees
and since
\bes
\e_t \left[\frac{S_T}{B_T}\right]  &=& \e_t\left[\frac{ \esp^{-\int_0^T q_u\diff u}\, Z_T}{B_T}  \right] \\ 
&=&\esp^{-\int_0^T q_u\diff u} \,\e_t\left[\frac{ Z_T}{B_T} \right]  \\ 
&=&\esp^{-\int_0^T q_u\diff u} \,\frac{ Z_t}{B_t}\\ 
&=&  \esp^{-\int_t^T q_u\diff u}\,\frac{ S_t}{B_t}
\ees
we have that
\bes
\e_t \left[\widetilde{G}_T\right]  &=&\e_t\left[\frac{S_T}{B_T}\right] + \frac{V_t^D(T)}{B_t} +\int_0^t \frac{\diff D_u}{B_u}\\
&=& \esp^{-\int_t^T q_u\diff u}\,\frac{ S_t}{B_t} + \frac{S_t}{B_t} \left( 1 - \esp^{-\int_t^T q_v\diff v}\right) +\int_0^t \frac{\diff D_u}{B_u}\\ 
 &=& \frac{ S_t}{B_t} +\int_0^t \frac{\diff D_u}{B_u} =: \widetilde{G}_t
\ees

Finally, in case one is willing to build a non-negative asset in light of Proposition \ref{prop:bu18}, one obtains, under deterministic interest rates,
\bes
S_0^\star &=& S_0\, \esp^{-\int_0^{\widehat{T}} q_u\diff u} \\
S_t &=& S_0^\star \,B_t\,\esp^{ \int_t^{\widehat{T}} q_u\diff u}\, M^{S}_t = S_0 \,B_t\,\esp^{ -\int_0^{t} q_u\diff u}\, M^{S}_t = F_0(t)\, M^{S}_t 
\ees
Hence,
\bes
\diff S_t = r_t \,S_{t-} \diff t +F_0(t) \diff M^{S}_t - \diff D_t\\
\ees

\subsubappendix{Cash Dividends\label{Sec:AbsoluteDivs}}

\noindent We set
\[
D_t \setto \sum_{i=1}^{+\infty} \phi_{\tau_i} \1_{0<\tau_i\leq t}
\]
where $\phi_{\tau_i}$ is the dividend paid at time $\tau_i$ for an increasing sequence of stopping times $\tau_1<\tau_2<\ldots$. We have
\[
\diff D_u = \sum_{i=1}^{+\infty} \phi_{u}  \diff \Theta_{\tau_i}(u)
\]
where we recall that $\Theta_{T}(u)$ is the Heaviside function centered at $T$: this is a Stieltjes integral (see e.g.~\cite{ap2011}). 
In order to compact the notation we define the counting process $N_t := \sum_{i=1}^{+\infty} \1_{\tau_i\leq t}$ and rewrite
\[
D_t = \sum_{i=1}^{N_t} \phi_{\tau_i}\qquad \diff D_u = \phi_u\diff N_u
\]
Notice that $D_t = \sum_{0<u\leq t} \Delta D_u$ is a RCLL pure jump process with jumps 
\[
\Delta D_u = \sum_{i=1}^{+\infty} \phi_{u}  \,\Delta \Theta_{\tau_i}(u) = \phi_{u}  \,\Delta N_u
\] 
In this case  $\Delta S_t =(\ldots) - \Delta D_t$ from which it is clear that $S_t$ drops with dividends $\phi_{t}$ at times $t$ when $t=\tau_i$ for some $i$. 

Preferring to stay simple, we decide to set as deterministic both jump times $\tau_i$'s and jump size $\phi$, hence $N$ is a deterministic process: the generalization to a Compound Poisson Process should be straightforward. We have
\bes
V_t^D(T) &:=& B_t\,\e_t\left[  \int_t^T \frac{\diff D_u}{B_u}\right] =  B_t\,\e_t\left[  \sum_{t< u \leq T} \frac{\Delta D_u}{B_u}\right] \\
&=&  B_t\,\e_t\left[  \sum_{t< u \leq T} \frac{\phi_{u}  \,\Delta N_u}{B_u}\right] \\
&=& B_t\,\e_t\left[  \sum_{i=N_t+1}^{N_T}\frac{\phi_{\tau_i}}{B_{\tau_i}}\right]\\
&=&   \sum_{i=N_t+1}^{N_T} \phi_{\tau_i} \,P_t(\tau_i)\\
\ees
hence the forward price writes
\be\label{eq:DetermAbsDivsFwd}
F_t(T)=\frac{S_t - \sum_{i=N_t+1}^{N_T} \phi_{\tau_i} P_t(\tau_i)}{P_t(T)}
\ee
which is the correct forward with stochastic interest rates: here the replication strategy of the general case can be simplified. Instead of  directly selling the future dividend flow, the trader can sell at time $t$ a number $N_T-N_t$ of zero coupon bonds with the same expiries as the dividends ($\tau_i$) and with each notional equal to the future (known) cash dividend ($\phi_{\tau_i}$): when the dividend is paid at $\tau_i$ it is directly given to the $\tau_i$- Zero coupon holder to close the contract at its maturity $\tau_i$.

Finally, in case one is willing to build a non-negative asset in light of Proposition \ref{prop:bu18}, one can set
\bes
S_t &=& S_0^\star \,B_t\, M^{S}_t + \sum_{i=N_t + 1}^{N_{\widehat{T}}} \phi_{\tau_i} P_t(\tau_i)\\
S_0^\star &=& S_0 - \sum_{i=1}^{N_{\widehat{T}}} \phi_{\tau_i} P_0(\tau_i)\\
M^{D}_t &=& \frac{\sum_{i=N_t + 1}^{N_{\widehat{T}}} \phi_{\tau_i} P_t(\tau_i)}{B_t} + \sum_{i=1}^{N_{t}} \frac{\phi_{\tau_i}}{B_{\tau_i}}
\ees
In case of deterministic interest rates:
\bes
S_t &=& \left[ S_0 - \sum_{i=1}^{N_{\widehat{T}}} \frac{\phi_{\tau_i}}{B_{\tau_i}} \right] B_t\, M^{S}_t + \sum_{i=N_t + 1}^{N_{\widehat{T}}} \phi_{\tau_i} \frac{B_{t}}{B_{\tau_i}} = \left[ F_0(t) - V_t^D(\widehat{T}) \right] M^{S}_t + V_t^D(\widehat{T})
\ees
and $M^D_t = \sum_{i=1}^{N_{\widehat{T}}} \frac{\phi_{\tau_i}}{B_{\tau_i}}$ does not depend on $t$, hence $\diff M^D_t=0$. Therefore,
\[
\diff S_t = r_t S_{t-} \diff t + B_t \,S_0^\star\diff M^{S}_t - \diff D_t = r_t S_{t-} \diff t + [S_{t-} - V_{t-}^D(\widehat{T})]\,\frac{\diff M^{S}_t}{M^{S}_{t-}} - \diff D_t
\]
which corresponds to equation (11) of \cite{bu10}.\\

\subappendix{Technical Proofs}

\subsubappendix{Proof of Proposition \ref{prop:MuticurrencyCollateralPricing}\label{App:MuticurrencyCollateralPricing}}


\noindent From the previous section and \eqref{eq:GainProcessGeneralMeasure}:
\beqs\begin{split}
\widetilde{G}_t^{\textit{V-C}}  :=& \frac{ V_t^f }{\beta^{df}_t} -\frac{ C_t^g}{\beta^{dg}_t} + 
\int_0^t \frac{\diff \Pi_u^f}{\beta^{df}_{u-}} + \diff \left[ \Pi^f, \frac{1}{\beta^{df}} \right ]_u  + \int_0^t \frac{\diff C_u^g - c_u^g\, C_{u-}^g \diff u}{\beta^{dg}_{u-}} +\diff \left[ C^g,  \frac{1}{\beta^{dg}} \right ]_u \\
=& \frac{X^{f}_t V_t^f }{B_t} -\frac{X^{g}_t C_t^g}{B_t} + \int_0^t \frac{X^f_{u-} }{B_u}\diff \Pi_u^f + \sum_{0<u\leq t} \frac{\Delta X^f_u \Delta \Pi^f_u }{B_u} + \int_0^t \frac{X^g_{u-}}{B_u}\Big(\diff C_u^g - c_u^g \,C_{u-}^g \diff u \Big)\\
&+  \diff \left[ C^g,  \frac{1}{\beta^{dg}} \right ]_u
\end{split}\eeqs
where we used \eqref{eq:QCovOfFVProcess}. Now, 
\bes
Y&:=&\int_0^t \frac{X^f_{u-} }{B_u}\diff \Pi_u^f + \sum_{0<u\leq t} \frac{\Delta X^f_u \Delta \Pi^f_u }{B_u} \\
&=& \int_0^t \frac{X^f_{u-} }{B_u}(\diff \Phi_u^f +\psi^f_u\diff u) + \sum_{0<u\leq t} \frac{\Delta X^f_u \Delta \Phi^f_u }{B_u} \\
&=& \int_0^t \frac{X^f_{u} }{B_u} \,\psi^f_u\,\diff u + \sum_{0<u\leq t} \frac{\Delta \Phi^f_u  ( X^f_{u-} + \Delta X^f_u) }{B_u} \\
&=& \int_0^t \frac{X^f_{u} }{B_u} \,\psi^f_u\,\diff u + \sum_{0<u\leq t} \frac{X^f_{u}}{B_u} \,\Delta \Phi^f_u    \\
&=& \int_0^t \frac{X^f_{u} }{B_u} \diff \Pi_u^f
\ees
where the third equality by Proposition \ref{prop:LebesgueIntegralDiscontinuities}. 
Moreover, using Proposition \ref{prop:LinearityOfQuadraticCovariation}, and that $B$ is continuous with finite variation (see \eqref{eq:QCovOfFVProcess}):
\bes
\diff \left[ C^g,  \frac{1}{\beta^{dg}} \right ]_u &=& \diff \left[ C^g,   X^g B^{-1} \right ]_u \\
&=& \left[ \diff  C^g_u,   B^{-1}_u \diff X^g_u + X^g_{u-} \diff (B^{-1})_u \right ]\\
&=& B^{-1}_u \left[ \diff  C^g_u,    \diff X^g_u \right ]\\
&=& \frac{X^g_{u-} }{B_u} \left[\diff  C^g,   \frac{\diff  X^g }{X^g_{u-} }\right]\\
\ees
and, collecting the last three equations results, we end up with the definition of $\widetilde{G}^{\textit{V-C}}$ and $D^{\textit{V-C}}$.\\
Now, in order to derive $V^f$, we will use the fact that $\widetilde{G}^{\textit{V-C}}$ is a $\q$-martingale, in the same spirit as the derivation of \eqref{eq:SexpectedValueWithGenMeas}. Before, from a straightforward application of \eqref{eq:ItoWithJumpsOnProduct},
\bes
\diff \widetilde{C}_u &=& (X^g C^g)_{u-} \diff (B^{-1})_u + B^{-1}_u \diff (X^g C^g)_u\\
&=& B^{-1}_u \left\{ -r_u X^g_{u-} C^g_{u-} \diff u +  X^g_{u-}\diff C^g_u + C^g_{u-}\diff X^g_u + \diff [X^g, C^g]_u\right\}\\
&=& \frac{ X_{u-}^g}{B_u}\left\{\diff C_u^g - r_u  C^g_{u-} \diff u + C^g_{u-} \,\frac{\diff X_u^g}{X_{u-}^g} + \left[ \diff C^g_u,   \frac{\diff  X^g_u }{X^g_{u-} } \right ] \right\}
\ees
hence, from this result and from \eqref{eq:diffDVMinusC}
\be\label{eq:diffAOverB}
\frac{\diff D_u^{\textit{V-C}}}{B_u}  = \frac{X^f_{u} }{B_u}\diff \Pi_u^f +  \diff \widetilde{C}_u+\frac{ X_{u-}^g C^g_{u-}}{B_u}\left\{ (r_u-c^g_u)   \diff u- \frac{\diff X_u^g}{X_{u-}^g} \right\}
\ee
which can be substituted in \eqref{eq:defGTilde}:
\beqs
\widetilde{G}_t^{\textit{V-C}} = \widetilde{V}_t -\widetilde{C}_0  +  \int_0^t \frac{X^f_{u} \diff \Pi_u^f}{B_u} + \int_0^t \frac{X^g_{u-} C^g_{u-}}{B_u}\left\{ (r_u - c_u^g)  \diff u- \frac{\diff X_u^g}{X_{u-}^g}  \right\}.
\eeqs
Using the above expression twice in the equality $\e_t[\widetilde{G}_T^{\textit{V-C}}]=\widetilde{G}_t^{\textit{V-C}}$, one obtains
\bes
\widetilde{V}_t -\widetilde{C}_0 = \e_t\left[ \widetilde{V}_T -\widetilde{C}_0  +  \int_t^T \frac{X^f_{u} }{B_u}\diff \Pi_u^f  +\int_t^T \frac{X^g_{u-} C^g_{u-}}{B_u}\left\{  (r_u - c_u^g)  \diff u- \frac{\diff X_u^g}{X_{u-}^g} \right\} \right]
\ees
from which we have the thesis (using Proposition \ref{prop:LebesgueIntegralDiscontinuities}), since from \eqref{eq:DynXg},
\bes
\e_t\left[ \int_t^T \frac{C^g_{u-}}{B_u}   \diff X_u^g   \right] = \e_t\left[ \int_t^T \frac{C^g_{u-}}{B_u}  \,\mu_u^g  X_{u-}^g \diff u\right] + \e_t\left[ \int_t^T \frac{ C^g_{u-}}{B_u}  \diff M_u^g \right]
\ees
and the second addend is equal to zero by Property 10 or 11 of Proposition \ref{prop:SemiMgIntegralProperties} in case the integrand is well behaved: the proof of the pricing formula \eqref{eq:DerValueWithFXandCollateral} 
is concluded thanks to Proposition \ref{prop:LebesgueIntegralDiscontinuities}. The last statement is a direct application of the definition of $\widetilde{G}$ to the pricing formula.\\

\subsubappendix{Proof of Proposition \ref{prop:FXSDEInQf}\label{App:proofOfFXSDEInQf}}

\noindent We change the measure from $\q^{fb}$ to $\q$, recalling \eqref{eq:defBetaBankAccount}:
\[
L_t := \e_t\left[\frac{\diff \q^{fb}}{\diff\q\phantom{f}}\right] = \frac{ \beta^{fd}_t }{B_t} \,\frac{B_0}{\beta^{fd}_0 }=\frac{ B^{fb}_t X^f_t (B^{-1})_t}{ X^f_0}
\]
We have, under $\q$, since all bank accounts are continuous with finite variation,  
\bes
\diff L_t &=&    \frac{1}{X^f_0} \left\{ \beta^{fd}_{t-} \diff ((B)^{-1})_t + ((B)^{-1})_t \diff \beta^{fd}_t \right\}\\
&=&    \frac{1}{X^f_0 B_t} \left\{ - r_t\, \beta^{fd}_{t-}  \diff t + B^{fb}_t \diff  X^f_t + X^f_{t-} \diff B^{fb}_t\right\}\\
&=&L_{t-} \left\{(r^{fb}_t- r_t )\diff t +  \frac{\diff X^f_t}{X^f_{t-}} \right\}\\
&=&L_{t-} \left\{ \frac{\diff M^f_t}{X^f_{t-}} \right\}=L_{t} \left\{ \frac{\diff M^f_t}{X^f_{t}} \right\}
\ees
where the last equality is since the FX rate is the only jump component of $L$: we obtained that $L$ is a $\q$-martingale as expected. 
We now apply  the Girsanov Theorem \ref{th:GeneralizedGirsanov} (and use Proposition \ref{prop:LinearityOfQuadraticCovariation}): 
\bes
\diff M^{x,fb}_t &=& \diff M^x_t - \frac{\diff [M^x, L]_t}{L_t}\\
&=& \diff M^x_t - \frac{\diff [M^x, M^f]_t}{X^f_t}
\ees   
is a local martingale under $\q^{fb}$. In particular, from \eqref{eq:JumpsOfSemiMgIntegral} and \eqref{eq:JumpsOfQuadraticCov},
\beq\label{eq:JumpsOfMxfb}
\Delta M^{x,fb} = \Delta M^x_t - \frac{\Delta M^x \Delta M^f}{X^f_t}
\eeq
Therefore, recalling \eqref{eq:DynXg},
\bes
\diff X^x_t &\stackrel{\q}{=}& \mu^x_t X^x_{t-} \diff t + \diff M^x_t\\
&\stackrel{\q^{fb}}{=}&  \mu^x_t X^x_{t-} \diff t + \left(\diff M^{x,fb}_t + \frac{\diff [M^x, L]_t}{L_t}\right)
\ees
where we used Proposition \ref{prop:LinearityOfQuadraticCovariation}: the first result is proved. Observing \eqref{eq:DynOfFXUnderQfb}, under $\q^{fb}$,
\beqs
\Delta X^x_t =  \frac{\Delta [M^x, M^f]_t}{ X^f_t }+  \Delta M^{x,fb}_t = \frac{\Delta [M^x, M^f]_t}{ X^f_t } + \Delta M^x_t - \frac{\Delta M^x \Delta M^f}{X^f_t} = \Delta M^x_t 
\eeqs
so the FX jumps do not change measure.\\
For the second result, given \eqref{eq:DynOfFXUnderQfb} for $x=f$, we apply the Itô formula \eqref{eq:ItoWithJumpsMultiDim_diff} to $X^{df}=(X^{f})^{-1}$:
\bes
\diff X^{df} &=& -(X^f)^{-2}_- \diff X^f + (X^f)^{-3} _-\diff [X^f]^c + (X^f)^{-1} - (X^f)^{-1}_- +(X^f)^{-2}_- \,\Delta X^f \\
&=& -(X^{df})^{2}_- \diff X^f + (X^{df})^{3} _-\diff [X^f]^c + (X^f)^{-1} - (X^f)^{-1}_- +(X^{df})^{2}_- \,\Delta X^f\\
&=& (X^{df})^{2}_- \left\{  - \mu^f X^f_- \diff t -\frac{\diff [M^f]}{X^f}-  \diff M^{f,fb} + X^{df}_-\diff [M^f]^c +\Delta M^f\right\}  - \frac{\Delta X^f}{X^f X^f_-} \\
&=& (X^{df})^{2}_- \left\{  - \frac{\mu^f}{X^{df}_-} \diff t -X^{df} \diff [M^f]-  \diff M^{f,fb} + X^{df}_-\diff [M^f]^c +\Delta M^f - \frac{X^{df}}{X^{df}_-} \Delta M^f \right\}
\ees
Now, using \eqref{eq:CovariationFormula1}, Proposition \ref{prop:IntegralWithPureJumpIntegrand} and that the Quadratic Covariation has finite variation:
\begin{gather*}
-X^{df} \diff [M^f] + X^{df}_-\diff [M^f]^c +\Delta M^f - \frac{X^{df}}{X^{df}_-} \Delta M^f\\
=  - (X^{df}_-+\Delta X^{df})  \Big\{\diff [M^f]^c+ (\Delta M^f)^2\Big\} + X^{df}_-\diff [M^f]^c +\Delta M^f\left( 1  - \frac{X^f_-}{X^f}\right)\\
= - X^{df}_-(\Delta M^f)^2 -\Delta X^{df}(\Delta M^f)^2 +\Delta M^f \frac{\Delta X^f}{X^f} \\
= (\Delta M^f)^2 \Big\{ - X^{df}_- -\Delta X^{df} +X^{df} \Big\}=0
\end{gather*}
that completes the second result. For the last result, under $\q^{fb}$,
\be
\diff X^{xf} &:=&\diff ( X^x X^{df}) = X^x_{-} \diff X^{df} + X^{df}_{-} \diff X^x + \diff [ X^x, X^{df}]\0\\
&=& X^x_{-} \left\{ - \mu^f X^{df}_- \diff t  - (X^{df})^{2}_{-} \diff M^{f,fb} \right\} 
+ X^{df}_{-} \left\{  \mu^x X^x_-  \diff t + X^{df} \diff [M^x, M^f] +  \diff M^{x,fb} \right\} + \diff [ X^x, X^{df}]\0\\
&=& X^x_{-} X^{df}_- \left\{ (\mu^x- \mu^f)   \diff t  -  X^{df}_{-} \diff M^{f,fb}  + \frac{\diff M^{x,fb}}{X^x_{-}}
  \right\} + X^{df}_{-}\,X^{df} \diff [M^x, M^f] + \diff [ X^x, X^{df}]\label{eq:diffXxfIntermidiate}
\ee
Now, from Proposition \ref{prop:LinearityOfQuadraticCovariation} and \eqref{eq:JumpsOfMxfb},
\bes
\diff [ X^x, X^{df}] &=& [X^{df} \diff [M^x, M^f] +  \diff M^{x,fb}, - (X^{df})^{2}_{-} \diff M^{f,fb}] \\
&=& -(X^{df})^{2}_{-} \Big\{ X^{df} \, \Delta M^x \, \Delta M^f\,\Delta M^{f,fb}  + \diff[ M^{x,fb}, M^{f,fb}]\Big\}\\
&=& -(X^{df})^{2}_{-} \Big\{ X^{df} \, \Delta M^x \, \Delta M^f\,\left(\Delta M^f_t - \frac{\Delta M^f \Delta M^f}{X^f_t}\right)  + \diff[ M^{x,fb}, M^{f,fb}]\Big\}\\
&=& -(X^{df})^{2}_{-} \Big\{ X^{df} \, \Delta M^x \, (\Delta M^f)^2\,\left(1 - X^{df} \Delta M^f \right)  + \diff[ M^{x,fb}, M^{f,fb}]\Big\}.
\ees
Using again Proposition \ref{prop:LinearityOfQuadraticCovariation}:
\bes
\diff[ M^{x,fb}, M^{f,fb}]&=& \left[\diff M^x - \frac{\diff [M^x, M^f]}{X^f}, \diff M^f - \frac{\diff [M^f, M^f]}{X^f}\right]\\
&=& \diff [M^x, M^f] - X^{df}\left[\diff M^x, \diff [M^f, M^f]\right] - X^{df}\left[\diff [M^x, M^f], \diff M^f\right]\\
&\phantom{=}&+ (X^{df})^2 \left[\diff [M^x, M^f], \diff [M^f, M^f]\right]\\
&=& \diff [M^x, M^f] - 2 \,X^{df}\Delta M^x\,(\Delta M^f)^2  + (X^{df})^2 \Delta M^x\,(\Delta M^f)^3    
\ees
then
\bes
\diff[  X^x, X^{df}] &=& -(X^{df})^{2}_{-} \Big\{ -X^{df} \, \Delta M^x \, (\Delta M^f)^2 + \diff [M^x, M^f] \Big\}
\ees
and, recalling \eqref{eq:diffXxfIntermidiate} and using \eqref{eq:CovariationFormula1} and Proposition \ref{prop:IntegralWithPureJumpIntegrand},
\begin{gather*}
X^{df}_{-}\,X^{df} \diff [M^x, M^f] + \diff [ X^x, X^{df}]\\
= X^{df}_{-}\,(X^{df}_-+\Delta X^{df}) \diff [M^x, M^f]-(X^{df})^{2}_{-} \Big\{ -X^{df} \, \Delta M^x \, (\Delta M^f)^2 + \diff [M^x, M^f] \Big\}\\
= X^{df}_{-}\Big\{\Delta X^{df}(\diff [M^x, M^f]^c+\Delta M^x\Delta M^f)+ X^{df}_{-}\, X^{df} \, \Delta M^x \, (\Delta M^f)^2 \Big\}\\
= X^{df}_{-}\Big\{\Delta X^{df} \Delta M^x\Delta M^f +  X^{df}_{-}\, X^{df} \, \Delta M^x \, (\Delta M^f)^2 \Big\}\\
= X^{df}_{-}\left\{\frac{-\Delta M^f}{X^f X^f_-}\,\Delta M^x\Delta M^f +  X^{df}_{-}\, X^{df} \, \Delta M^x \, (\Delta M^f)^2 \right\}=0
\end{gather*}
so we have the thesis.\\

\subsubappendix{Proof of Proposition \ref{prop:VfCollateralPricing}\label{App:VfCollateralPricing}}
 
\noindent Recalling \eqref{eq:defBetaBankAccount}, we have:
\bes
\widetilde{G}^{\textit{f,V-C}}_t  &:=& \frac{ V_t^f }{B^{fb}_t} -\frac{ C_t^g}{\beta^{fg}_t} + 
\int_0^t \frac{\diff \Pi_u^f}{B^{fb}_{u-}}  + \int_0^t \frac{\diff C_u^g - c_u^g\, C_{u-}^g \diff u}{\beta^{fg}_{u-}} +\diff \left[ C^g,  \frac{1}{\beta^{fg}} \right ]_u \\
&=&  \widetilde{V}^f_t  - \widetilde{C}^f_t + \int_0^t \frac{\diff \Pi_u^f}{B^{fb}_{u}}+ \int_0^t \frac{X^{gf}_{u-}}{ B^{fb}_u}\Big(\diff C_u^g - c_u^g\, C_{u-}^g \diff u \Big)+  \diff \left[ C^g,  \frac{1}{\beta^{fg}} \right ]_u
\ees
using that $B^{fb}$ is continuous with finite variation
\bes
\diff \left[ C^g,  \frac{1}{\beta^{fg}} \right ]_u &=& \diff \left[ C^g,   X^{gf} (B^{fb})^{-1} \right ]_u \\
&=& (B^{fb})^{-1}_u \left[ \diff  C^g_u,    \diff X^{gf}_u \right ]\\
&=& \frac{X^{gf}_{u-} }{B^{fb}_u} \left[\diff  C^g_u,   \frac{\diff  X^{gf}_u }{X^{gf}_{u-} }\right]
\ees
which, together with the previous equation, recalling \eqref{eq:GTildef}, gives the definition of $D^{\textit{f,V-C}}$. The easiest way for obtaining the pricing equation is to directly change measure from \eqref{eq:DerValueWithFXandCollateral} to measure $\q^{fb}$, i.e.~from numéraire
$B\mapsto\beta^{fd}:=B^{fb}X^f$ using \eqref{eq:SexpectedValueWithGenMeas}:
\beqs\begin{split}
V_t^f =&\, \frac{\beta^{fd}_t}{X^{f}_t}\,\e_t^{fb}\Bigg[  \frac{\phi_T^f \,X_T^f}{\beta^{fd}_T} + 
\int_t^T \frac{X^f_{u} \diff \Pi_u^f}{\beta^{fd}_{u-}} + X^f_{u} \diff \left[\Pi^f, \frac{1}{\beta^{fd}}\right]_u- \int_t^T \frac{X^g_{u-} C^g_{u-}}{\beta^{fd}_{u-}} \, (c_u^g - r_u^{gb}) \diff u \Bigg]
\end{split}
\eeqs
and, using the fact that $\Pi^f$ has finite variation and \eqref{eq:QCovOfFVProcess}, recalling also Proposition \ref{prop:LebesgueIntegralDiscontinuities},
\bes
A &:=&\int_t^T \frac{X^f_{u} \diff \Pi_u^f}{\beta^{fd}_{u-}} + X^f_{u} \diff \left[\Pi^f, \frac{1}{\beta^{fd}}\right]_u\\
&=&\int_t^T \frac{X^f_{u} (\psi^f\diff u + \diff \Phi_u^f)}{B^{fb}_u X^f_{u-}} + \sum_{t<u\leq T} \frac{X^f_{u}}{B^{fb}_u}\, \Delta \Pi^f_u \left(\frac{1}{X^f_u}-\frac{1}{X^f_{u-}}\right)\\
&=&\int_t^T   \frac{\psi^f\diff u}{B^{fb}_u} + \sum_{t<u\leq T} \frac{X^f_u\,\Delta \Pi^f_u}{B^{fb}_u}\,  \left(\frac{1}{X^f_{u-}} + \frac{1}{X^f_u}-\frac{1}{X^f_{u-}}\right)\\
&=&\int_t^T   \frac{\diff \Pi_u^f}{B^{fb}_u}
\ees
so that the pricing equation is confirmed. The last statement is a direct application of the $\widetilde{G}$ definition to the pricing formula and the proof is concluded. One could also obtain the analogous of \eqref{eq:diffAOverB}, to be used in the next proposition: we have,
\bes
\diff \widetilde{C}^f_t &:=& \diff \left(C^{g} X^{gf} (B^{fb})^{-1}\right)_t  = (C^{g} X^{gf})_{u-} \diff ((B^{fb})^{-1})_u + ((B^{fb})^{-1})_u \diff (C^{g} X^{gf})_u\\
&=& ((B^{fb})^{-1})_u  \left\{ -r_t^{fb} X^{gf}_{u-} C^g_{u-} \diff u +  X^{gf}_{u-}\diff C^g_u + C^g_{u-}\diff X^{gf}_u + \diff [X^{gf}, C^g]_u\right\}\\
&=& \frac{ X_{u-}^{gf}}{B^{fb}_u}\left\{\diff C_u^g - r_u^{fb}  C^g_{u-} \diff u + C^g_{u-} \,\frac{\diff X_u^{gf}}{X_{u-}^{gf}} + \left[ \diff C^g_u,   \frac{\diff  X^{gf}_u }{X^{gf}_{u-} } \right ] \right\}
\ees
so
\be\label{eq:diffAOverBforeign}
\frac{\diff D^{\textit{f,V-C}}_u}{B^{fb}_u} = \frac{\diff \Pi_u^f}{B^{fb}_u} + \diff \widetilde{C}^f_t + \frac{ X_{u-}^{gf}\,C^g_{u-}}{B^{fb}_u} \left\{ (r_u^{fb}- c_u^g)   \diff u - \frac{\diff X_u^{gf}}{X_{u-}^{gf}} \right\}
\ee
which will be useful afterwards.\\

\subsubappendix{Proof of Proposition \ref{prop:DynamicsOfVf}}\label{app:ProofOfDynamicsOfVf}

\noindent The dynamics of $(V^f-C^f)$ is a direct consequence of the fact that $\widetilde{G}_t^{\textit{f,V-C}}$ is a $\q^{fb}$-martingale, see Proposition \ref{prop:RiskNeutralDrift} from which one also obtains that 
$\diff \widetilde{G}_t^{f,\textit{V-C}} = (B_t^{fb})^{-1} \diff \mathcal{M}^{f,\textit{V-C}}_t$. Moreover, by \eqref{eq:GTildef},
\[
\diff \widetilde{G}_t^{f,\textit{V-C}} = \diff \widetilde{V}_t^f -\diff \widetilde{C}^f_t + \frac{\diff D_t^{f,\textit{V-C}} }{B_t^{fb}}
\]
so
\bes
\diff \widetilde{V}^f_t &=& \diff \widetilde{G}_t^{f,\textit{V-C}} -\left( \frac{\diff D_t^{f,\textit{V-C}} }{B_t^{fb}} - \diff \widetilde{C}_t^f \right)\\
&=& \frac{1}{B_t^{fb}}\left\{ \diff \mathcal{M}^{f,\textit{V-C}}_t - \diff \Pi_t^f -   X_{t-}^{gf}\,C^g_{t-}  \left\{ (r_t^{fb}- c_t^g)   \diff t - \frac{\diff X_t^{gf}}{X_{t-}^{gf}}  \right\}\right\}
\ees
where we used \eqref{eq:diffAOverBforeign} at the second line. Now, using \eqref{eq:diffFXxf},
\bes
\diff V_t^f &=& \diff ( B^{fb} \widetilde{V}^f)_t = B_t^{fb} \diff \widetilde{V}_t^f  + \widetilde{V}_{t-}^f\diff B_t^{fb}\\
&=&  \diff \mathcal{M}^{\textit{f,V-C}}_t -  \diff \Pi_t^f  - X_{t-}^{gf}\,C^g_{t-}  \left\{ (r_t^{fb}- c_t^g)   \diff t - 
(\mu^g_t- \mu^f_t)   \diff t  +  \frac{\diff M^{f,fb}_t}{X^{f}_{t-}}  - \frac{\diff M^{g,fb}}{X^g_{t-}} \right\}+ r_t^{fb} V_{t-}^f \diff t\\
&=& r_t^{fb} V_{t-}^f \diff t +  \diff \mathcal{M}^{\textit{f,V-C}}_t -  \diff \Pi_t^f  - X_{t-}^{gf}\,C^g_{t-}  \left\{ (r_t^{gb} - c_t^g ) \diff t  +  \frac{\diff M^{f,fb}_t}{X^{f}_{t-}}  - \frac{\diff M^{g,fb}}{X^g_{t-}} \right\} 
\ees
that can be rearranged to obtain the dynamics of $V^f$.\\
We then change the measure from $\q^{fb}$ to $\q$, where
\[
L_t := \e_t^{fb}\left[\frac{\diff \q\phantom{fb}}{\diff\q^{fb}}\right] = \frac{\beta^{df}_t }{B^{fb}_t} \,\frac{B^{fb}_0}{\beta^{df}_0}=\frac{\beta^{df}_t}{B^{fb}_t\, \beta^{df}_0}
\]
We have, using Proposition \ref{prop:LebesgueIntegralDiscontinuities},
\bes
\diff L_t &=&    \frac{1}{\beta^{df}_0} \left\{\beta^{df}_{t-} \diff ((B^{fb})^{-1})_t + ((B^{fb})^{-1})_t \diff \beta^{df}_t \right\}\\
&=&L_t \left\{- r^{fb}_t \diff t +  \frac{\diff \beta^{df}_t}{\beta^{df}_{t}} \right\}.
\ees
We now apply the Girsanov Theorem \ref{th:GeneralizedGirsanov}: first we observe that $D^f$ is a semimartingale with zero local martingale part, so its dynamics does not change
with the change of measure, see Corollary \ref{cor:GirsanovForFVProc}. Moreover,
\[
\diff \mathcal{M}^{\textit{fqd}} _t = \diff \mathcal{M}^f_t - \frac{\diff [\mathcal{M}^f, L]_t}{L_t}
\]   
is a local martingale under $\q$. Therefore
\bes
\diff V^f_t &\stackrel{\q^{fb}}{=}& r^{fb}_t\,V_{t-}^f \diff t + \diff \mathcal{M}_t^f - \diff D^f_t\\
&\stackrel{\q}{=}&  r^{fb}_t\,V_{t-}^f \diff t + \left\{ \diff \mathcal{M}^{\textit{fqd}} _t +\frac{\diff [\mathcal{M}^f, L]_t}{L_t}\right\}- \diff D^f_t
\ees
so we have to calculate the Covariation term: using Proposition \ref{prop:LinearityOfQuadraticCovariation} and that $B$ is continuous with finite variation
\bes
\frac{\diff [\mathcal{M}^f, L]_t}{L_t} &=&  \left[\diff\mathcal{M}^f_t, - r^{fb}_t \diff t +  \frac{\diff \beta^{df}_t}{\beta^{df}_t}\right] = \frac{ \left[\diff\mathcal{M}^f_t, \diff\beta^{df}_t\right]}{\beta^{df}_t} \\
&=& \frac{\left[\diff \mathcal{M}^f_t, B_t \diff X^{df}_t +  X^{df}_t \diff B_t\right]}{\beta^{df}_t}=\frac{\diff \left[\mathcal{M}^f,  X^{df}  \right]_t}{ X^{df}_t}
\ees
Moreover, defining $g(x):=x^{-1}$ and using \eqref{eq:QuadraticCovarianceOfFctOfX}
\bes
\diff \left[\mathcal{M}^f,  g(X^f)\right] &=& g'(X_{-}^f) \diff[\mathcal{M}^f, X^f]^c + \Big\{g(X^f) - g(X_{-}^f)  \Big\}\Delta \mathcal{M}^f\\
&=& -\frac{1}{(X_{-}^f)^2}\, \diff[\mathcal{M}^f, X^f]^c + \left\{\frac{1}{X^f} - \frac{1}{X_{-}^f}\right\}\Delta \mathcal{M}^f\\
&=& -\frac{1}{(X_{-}^f)^2}\, \diff[\mathcal{M}^f, X^f]^c -  \frac{\Delta X^f\,\Delta \mathcal{M}^f}{X^f X_{-}^f} \\
&=& -\frac{1}{X_{-}^f }\,\left\{\frac{1}{X_{-}^f } \diff[\mathcal{M}^f, X^f]^c +  \frac{\Delta X^f\,\Delta \mathcal{M}^f}{X^f}\right\}
\ees
Therefore,  using Proposition \ref{prop:IntegralWithPureJumpIntegrand} and the fact that the Quadratic Covariation has finite variation,
\bes
\frac{\diff [\mathcal{M}^f, L]}{L} &=&  X^f \diff \left[\mathcal{M}^f,  g(X^f)\right]_t \\
&=& -\frac{1}{X_{-}^f }\,\left\{\frac{X_{-}^f +  \Delta X^f}{X_{-}^f } \diff[\mathcal{M}^f, X^f]^c +  \Delta X^f\,\Delta \mathcal{M}^f \right\}\\
&=& -\frac{1}{X_{-}^f }\,\left\{  \diff[\mathcal{M}^f, X^f]^c +  \Delta X^f\,\Delta \mathcal{M}^f \right\}\\
&=&-\frac{\diff[\mathcal{M}^f, X^f]}{X_{-}^f}
\ees
which concludes the proof.

\end{document}